\documentclass[11pt,a4paper]{article}

\usepackage{amsmath,amsfonts,amssymb,amsthm,framed}
\usepackage[linesnumbered,algosection,noline]{algorithm2e}

\usepackage{algorithmic}
\newcommand{\alg}[1]{\mbox{\sf #1}}

 \usepackage[dvipsnames,svgnames,x11names]{xcolor}
  \definecolor{shadecolor}{gray}{0.9}
\usepackage{boxedminipage}
\usepackage{mdwlist}
\newtheorem*{theorem*}{Theorem}
\newtheorem*{lemma*}{Lemma}
\newtheorem*{definition*}{Definition}

\usepackage{todonotes,soul}
\usepackage{microtype}
\usepackage[dvipsnames,svgnames,x11names]{xcolor}

\newcommand{\spn}{\textrm{\rm span}}

\DeclareMathOperator{\operatorClassNP}{NP}
\newcommand{\classNP}{\ensuremath{\operatorClassNP}}

\DeclareMathOperator{\operatorClassFPT}{FPT}
\newcommand{\classFPT}{\ensuremath{\operatorClassFPT}}
\DeclareMathOperator{\operatorClassW}{W}
\newcommand{\classW}[1]{\ensuremath{\operatorClassW[#1]}}
\DeclareMathOperator{\operatorClassParaNP}{Para-NP}
\newcommand{\classParaNP}{\ensuremath{\operatorClassParaNP}}

\newcommand{\OO}{{\mathcal O}}
\DeclareMathOperator{\spanS}{span}
\DeclareMathOperator{\rank}{rank}
\DeclareMathOperator{\type}{type}
\DeclareMathOperator{\sumS}{sum}
\DeclareMathOperator{\edges}{edges}
\DeclareMathOperator{\expens}{expensive}
\DeclareMathOperator{\cheap}{cheap}
\DeclareMathOperator{\disR}{disr}
\DeclareMathOperator{\disC}{disc}
\DeclareMathOperator{\cont}{contribute}

\DeclareMathOperator{\children}{children}
\DeclareMathOperator{\subtree}{subtree}
\DeclareMathOperator{\image}{image}
\DeclareMathOperator{\domain}{domain}
\DeclareMathOperator{\nil}{nil}

\newcommand{\SSp}{\textsc{Space Cover}\xspace}
\newcommand{\probSteiner}{\textsc{Steiner Tree}\xspace}
\newcommand{\probSteinerF}{\textsc{Steiner Forest}\xspace}
\newcommand{\rrper}{$r$-rank perturbed\xspace}

\newcommand{\defproblemu}[3]{
  \vspace{1mm}
\noindent\fbox{
  \begin{minipage}{0.95\textwidth}
  #1 \\
  {\bf{Input:}} #2  \\
  {\bf{Question:}} #3
  \end{minipage}
  }
  \vspace{1mm}
}

\newcommand{\defproblemuTask}[3]{
  \vspace{1mm}
\noindent\fbox{
  \begin{minipage}{0.95\textwidth}
  #1 \\
  {\bf{Input:}} #2  \\
  {\bf{Task:}} #3
  \end{minipage}
  }
  \vspace{1mm}
}

\newcommand{\defparproblem}[4]{
  \vspace{1mm}
\noindent\fbox{
  \begin{minipage}{0.96\textwidth}
  \begin{tabular*}{\textwidth}{@{\extracolsep{\fill}}lr} #1  & {\bf{Parameter:}} #3 \\ \end{tabular*}
  {\bf{Input:}} #2  \\
  {\bf{Question:}} #4
  \end{minipage}
  }
  \vspace{1mm}
}

\usepackage[margin=1in]{geometry}


\newcommand{\Oh}{\mathcal{O}}

\newcommand{\bran}[1]{branchable\xspace}

\newcommand{\myparagraph}[1]{\smallskip\noindent{\textbf{\sffamily #1}}}

\newtheorem{theorem}{Theorem}
\newtheorem{lemma}{Lemma}[section]

\newtheorem{corollary}{Corollary}
\newtheorem{definition}{Definition}[section]
\newtheorem{observation}{Observation}[section]
\newtheorem{proposition}{Proposition}[section]

\theoremstyle{definition}

\usepackage{framed}

\usepackage{amsthm}
\usepackage{thmtools}

\usepackage{thm-restate}

\usepackage{thmtools}
\usepackage{thm-restate}

\usepackage{booktabs}

 \usepackage[ pdftex, plainpages = false, pdfpagelabels, 
                 bookmarks=false,
                 bookmarksopen = true,
                 bookmarksnumbered = true,
                 breaklinks = true,
                 linktocpage,
                 pagebackref,
                 colorlinks = true,  
                 linkcolor = blue,
                 urlcolor  = blue,
                 citecolor = red,
                 anchorcolor = green,
                 hyperindex = true,
                 hyperfigures
                 ]{hyperref} 
\usepackage{cleveref}

\usepackage{todonotes}

\newcommand{\spcmain}{{\sc Space Cover}\xspace}
\newcommand{\spcgraphlong}{{\sc Space Cover on Perturbed Graphic Matroid}\xspace}
\newcommand{\spcgraph}{{\sc Space Cover on PGM}\xspace}
\newcommand{\spcduallong}{{\sc Space Cover on Dual of Perturbed Graphic Matroid}\xspace}
\newcommand{\spcdual}{{\sc Space Cover on Dual-PGM}\xspace}
\newcommand{\pattern}{{\sc Pattern Cover}\xspace}
\newcommand{\edc}{{\sc Edge-Set Cover}\xspace}
\newcommand{\aedc}{{\sc Annotated Edge-Set Cover}\xspace}

\makeatletter

\makeatother


\pagestyle{plain}

\begin{document}

\title{Covering Vectors by Spaces in Perturbed Graphic Matroids and Their Duals}

\author{
Fedor V. Fomin\thanks{
Department of Informatics, University of Bergen, Norway. {\ttfamily fomin@ii.uib.no}}
\and
Petr A. Golovach\thanks{
Department of Informatics, University of Bergen, Norway. {\ttfamily petr.golovach@uib.no}}
\and 
 Daniel Lokshtanov\thanks{University of California Santa Barbara, USA. { \ttfamily daniello@ucsb.edu}}
 \and
Saket Saurabh\thanks{Institute of Mathematical Sciences, HBNI, Chennai, India. {\ttfamily saket@imsc.res.in}}
\and
Meirav Zehavi\thanks{Ben-Gurion University, Israel. {\ttfamily meiravze@bgu.ac.il}}
}

\date{}

\maketitle

 \begin{abstract}
Perturbed graphic matroids are binary  matroids that can be obtained from a graphic matroid by  adding a noise of small rank. More precisely,  \rrper graphic matroid $M$ is a binary  matroid that can be  represented in the form $I +P$, where $I$ is the incidence matrix of some graph  and $P$ is a binary matrix of rank at most $r$. Such matroids naturally appear in a number of  theoretical and  applied settings. The main motivation behind our work is an attempt to understand which   parameterized  algorithms for various problems on graphs could be lifted to perturbed graphic matroids.  
 
We study the parameterized complexity of a natural generalization (for matroids) of   the following  fundamental problems on graphs: \textsc{Steiner Tree} and \textsc{Multiway Cut}. In this generalization, called the  \SSp problem, we are given a   binary matroid $M$ with a ground set $E$, a set of \emph{terminals} $T\subseteq E$, and a non-negative integer $k$. The task is to decide whether  $T$ can be spanned by a subset of $ E\setminus T$ of size at most $k$.

We prove  that  
on graphic matroid perturbations,  for every fixed $r$, \SSp is fixed-parameter tractable parameterized by $k$. On the other hand,  the problem becomes \classW1-hard when  parameterized by $r+k+|T|$ and it is \classNP-complete for $r\leq 2$ and $|T|\leq 2$.

On cographic matroids, that are the duals of graphic matroids,  \SSp  generalizes another fundamental and well-studied problem, namely \textsc{Multiway Cut}.
We show that on the duals of perturbed graphic matroids the  \SSp  problem is fixed-parameter tractable parameterized by $r+k$. 
\end{abstract}

 \section{Introduction}\label{sec:intro}
In this paper we develop parameterized algorithms on low-rank perturbations of graphic matroids and their duals. These matroids   and their matrices naturally appear in various settings. For example, 
in the emerging  Matroid Minors Project of   Geelen,   Gerards,   and Whittle~\cite{GeelenB14}, perturbed matroids play a  significant role  in  characterization of   proper minor-closed classes  of binary matroids. More precisely, for each proper minor-closed class $\mathcal{M}$ of binary matroids, there exist nonnegative integer  $r$ such that  every sufficiently highly connected matroid $M \in \mathcal{M}$,
is either a perturbation of graphic or cographic matroid. In other words, 
there exist matrices  $I,P \in \mbox{GF(2)}^{\ell\times n}$ such that $I$ is the incidence matrix of a graph, the rank of $P$ is at most $r$, and either $M$ or its dual $M^*$ is represented by $I + P$.  
Another example of closely related concept is the 
 robust Principal Component Analysis (PCA),  a popular approach to 
 robust subspace learning and tracking by decomposing the data matrix  into low-rank and sparse matrices. 
 Here  data  matrix $M$ is assumed to be a superposition of a low-rank perturbation component $P$ and a sparse component $I$, that
 is, $M=I+P.$ See   
Cand{\`{e}}s et al.   \cite{CandesLMW11}, Wright et al.  \cite{WrightGRPM09}, and 
 Chandrasekaran et al. \cite{ChandrasekaranSPW11} for further references on robust PCA. In particular, one of the well-studied, see e.g.  
 \cite{ShahidPKPV16,ZhaoKVRM18}, of the variants of robust PCA is when the structure of the sparse matrix $I$ is imposed from the structure of some graph.    
   Perturbed matroids  also come naturally in the settings when a structural  input is corrupted by a noise. 
  In graph algorithms, one of the questions studied in the literature about corrupted inputs is---what happens to   special graph classes when they are perturbed adversarially?
  For example, Magen and Moharrami \cite{MagenM09}, and  Bansal, Reichman, Umboh~\cite{Bansal0U17}, studied approximation algorithms on
  noisy minor-free graphs,  which are the graphs obtained from minor-free graphs by corrupting  a fraction of edges and vertices.

  \medskip\noindent\textbf{Our results.} We work with the following classes of binary matroids. 
 A binary matroid  $M$ such that   $M$ 
  can be   represented in the form 
$I +P$, where $I$ is the incidence matrix of some graph and $P$ is a binary matrix of rank at most $r$, is called 
   the \emph{\rrper graphic matroid}.  Similarly, when the  dual $M^*=I +P$ for some incidence matrix $I$ and $r$-rank matrix $P$, we refer to $M$ as to  \emph{\rrper cographic matroid}.
   
In this paper we study parameterized complexity  on binary perturbed matroids  of the following generic problem.
  Let us remind that in a matroid  $M$,  a set $F$ spans $T$, denoted by  $T\subseteq \spn(F)$, if the 
  sets $F$ and $T\cup F$ are of the same rank.
 
\defproblemu{\spcmain}%
{A binary matroid $M$ with a ground set $E$, 
a set of \emph{terminals} $T\subseteq E$, and a non-negative integer $k$.}%
{Is there a set $F\subseteq E\setminus T$ with $|F|\leq k$ such that $T\subseteq \spn(F)$?}

In other words,  \SSp is  the problem of covering a given set of vectors $T$  over GF(2) by   a minimum-dimension subspace of the space generated by vectors from  $E\setminus T$.
\SSp encompasses various problems   arising in different domains,  such as    coding theory,  machine learning, 
 and   graph algorithms. 
For example, 
  \spcmain {} is a natural generalization of \textsc{Matroid Girth}, the problem of finding a minimum set of dependent elements in a matroid.  
  \textsc{Matroid Girth}  can be reduced to \spcmain {}  by  computing for each element $t$ of $M$  a minimum set of elements of the remaining part of the matroid that covers $T=\{t\}$.

  On graphs (equivalently, special classes of binary matroids, namely graphic and cographic matroids),  \SSp
 generalizes    well-studied optimization problems      \probSteiner and  \textsc{Multiway Cut}. Various algorithmic techniques were developed for these problems, see e.g. \cite{CyganFKLMPPS15},  and it is very interesting to see which of these techniques, if any, can be lifted to matroids.

We obtain the following results about the complexity of \SSp  on {\rrper matroids}. (In all these results we assume  that   representation of \rrper matroid in the form $I+P$ is given.)
\begin{itemize}

\item Our first main algorithmic result  (Theorem~\ref{thmgraphic})  states the following: On \rrper graphic matroids, 
for every fixed $r$, \SSp  is fixed-parameter tractable (\classFPT)  when parameterized by $k$. 
\item We also show that a ``weaker'' parameterization makes the problem intractable. More precisely, we prove that on \rrper graphic matroids, \SSp is \classW{1}-hard when parameterized by $r+k+|T|$ (Theorem~\ref{thm:w-hard}).  We also  prove that the problem is \classNP-complete for $r\leq 2$ and $|T|\leq 2$ (Theorem~\ref{thm:paraNP-hard}). 
\item Our second main algorithmic result  (Theorem~\ref{thmdual}) concerns   \rrper cographic matroids. This theorem states that  \SSp is 
 \classFPT{} on \rrper cographic matroids when parameterized by $r+k$.  We find it a bit surprising that the parameterized complexity of  \SSp  is different  on  \rrper graphic and cographic matroids.
\end{itemize}

 \medskip\noindent\textbf{Previous work.}
   Geelen and Kapadia \cite{GeelenK15} studied the problem of computing the girth  of  a binary \rrper matroid. 
  (The girth of a matroid is the length of its shortest circuit.) 
   Geelen and Kapadia have proved that the girth of an  
  \rrper matroid is fixed-parameter tractable being parameterized by $r$.  Let us note that while \SSp generalizes  \textsc{Matroid Girth}, 
  our results are incomparable. In our FPT result for \rrper graphic matroids the parameter is $k$ while the parameter $r$ should be fixed. As Theorems~\ref{thm:w-hard} and~\ref{thm:paraNP-hard} show, the requirement that $r$ should be fixed and that $k$ should be  the parameter are, most likely,  unavoidable. 
  For binary matroids, \textsc{Matroid Girth} has several equivalent formulations. For example, it is equivalent to 
 the \textsc{Minimum Distance} problem from coding theory, which  ask for a minimum dependent set of columns in
a matrix  over GF(2). The complexity of this problem was open till 1997,   
  when Vardy showed it to be NP-complete~\cite{Vardy97}. On the other hand, Geelen, Gerards and Whittle in \cite{GeelenGW15} conjecture that 
  for any proper minor-closed class $\mathcal{M}$ of binary matroids, there is a polynomial-time algorithm for computing the girth of matroids in $\mathcal{M}$. 
 The parameterized version of  the 
   problem, namely \textsc{Even Set},  asks whether there is a dependent set $F\subseteq X$ of  size at most $k$. The parameterized complexity of  \textsc{Even Set} was a long-standing open question in the area, see e.g.  \cite{DowneyFbook13},  whose complexity was resolved only recently  \cite{BonnetEM16}.

 \SSp on  graphic and cographic matroids is a  generalization of  \probSteiner and  \textsc{Multiway Cut}, two very well-studied problems on graphs. 
 By the classical result of Dreyfus and Wagner \cite{DreyfusW71}, \probSteiner is 
   fixed-parameter tractable (FPT) parameterized by the  number of terminals $T$. Similar approach can be used to show that \SSp is FPT on graphic matroids. 
   On cographic matroids \SSp is equivalent to the \textsc{Restricted Edge-Subset Feedback Edge Set} introduced by Xiao and Nagamochi~\cite{XiaoN12} who also showed that  
the   problem is FPT parameterized by $k$. Due to its connection to \textsc{Multiway Cut}, the 
  \classNP-completeness result of   Dahlhaus et al.~\cite{DahlhausJPSY94} for \textsc{Multiway Cut} with three terminals implies that 
  \SSp
   is \classNP{}-hard even if  $|T|=3$ on  cographic matroids.  
Fomin et al. in~\cite{FominGLS17plus} extended the results for \SSp on graphic and cographic matroids to a more general class of binary matroids, namely, regular matroids, by providing an algorithm of running time 
  $2^{\Oh(k)}\cdot  ||M|| ^{\Oh(1)}$. While the class of regular matroids is a proper minor-class of binary matroids, this class of matroids is incomparable to the class of perturbed matroids.  It is also known that \SSp is hard on general class of binary matroids: 
By the result of    Downey et al.~\cite{DowneyFVW99}, \SSp is  
 \classW{1}-hard on  binary matroids when parameterized by $k$ even if restricted to the inputs with one terminal.

\section{Overview of Algorithmic Theorems}\label{sec:overview}
In this section, we give short descriptions of both of our algorithmic results. 
The formal definitions of  graph and matroid-related terms that are used in this section are given in Section~\ref{sec:prelims}. 

\subsection{Perturbed Graphic Matroids}
In this section, we give an overview of the proof of the first main result of the paper. Full proof is given in Section~\ref{section:graphic}.

In this case,  \rrper  matroid  $M$ is represented by the perturbed incidence matrix $I(G)$ of a (multi) graph $G$.
Formally we define the following problem. 
\medskip
 
\defproblemu{\spcgraphlong~(\spcgraph)}%
{A (multi) graph $G$ with $n$ vertices and $m$ edges, an $(n\times m)$-matrix $P$ over $GF(2)$ with $\rank(P)\leq r$, a set of 
\emph{terminals} $T\subseteq E$ where $E$ is the set of columns of the matrix $A=I(G)+P$,
and a non-negative integer $k$.}%
{Is there a set $F\subseteq E\setminus T$ with $|F|\leq k$ such that $T\subseteq \spn(F)$ in the binary matroid $M$ represented by $A$?}

 \begin{theorem} \label{thmgraphic}
 For any fixed constant $r$, \spcgraph{} is solvable in time $k^{\OO(k)}\cdot (n+m)^{\OO(1)}$. In particular, \spcgraph{} is \classFPT{} when parameterized by $k$ whenever $r$ is a constant.
 \end{theorem}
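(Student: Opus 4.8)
The plan is to convert the spanning condition in the perturbed matroid into a purely combinatorial condition on the graph $G$ carrying a rank-$\le r$ linear side constraint, and then to solve the resulting ``perturbed Steiner forest'' problem by a Dreyfus--Wagner-style dynamic program. I begin with normalization. Since the representation $A=I(G)+P$ is given, compute a factorization $P=BC$ with $B\in GF(2)^{n\times r}$ of full column rank and $C\in GF(2)^{r\times m}$, and write $c_e\in GF(2)^r$ for the column of $C$ indexed by $e\in E$. Delete loops of $M$ (zero columns of $A$) from both $E$ and $T$, and keep only one element of each parallel class of $M$ that meets $T$; then the surviving $T$ consists of pairwise non-parallel non-loops, and since any solution $F$ with $|F|\le k$ satisfies $T\subseteq\operatorname{cl}_M(F)$ while $\operatorname{cl}_M(F)$ contains fewer than $2^k$ such elements, I may assume $|T|<2^k$ (otherwise answer \no). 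Finally, precompute the set $\mathcal{D}=\{d\in GF(2)^r\colon Bd$ lies in the span of the columns of $I(G)\}$ of \emph{admissible patterns}: note $|\mathcal{D}|\le 2^r$, and for $d\in\mathcal{D}$ put $W_d:=\operatorname{supp}(Bd)$, which has even intersection with every connected component of $G$.

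Next I would rewrite the problem. Expanding $A_t=\sum_{f\in F_t}A_f$ over $GF(2)$ and substituting $A_e=I(G)_e+Bc_e$ shows that, for $F\subseteq E\setminus T$, one has $T\subseteq\spn_M(F)$ if and only if for every terminal $t$ there are $d_t\in\mathcal{D}$ and $F_t\subseteq F$ such that (i) the subgraph $F_t\cup\{t\}$ of $G$ has odd-degree vertex set exactly $W_{d_t}$ --- equivalently $F_t$ has odd-degree vertex set $\operatorname{supp}(I(G)_t)\,\triangle\,W_{d_t}$, i.e.\ it is a join in $(V,F)$ for that even vertex set --- and (ii) $\sum_{f\in F_t}c_f=c_t+d_t$ in $GF(2)^r$. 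The case $d_t=0$ is the classical graphic requirement that $F$ connect the endpoints of $t$; the other admissible patterns model terminals ``covered through the perturbation.'' The key point is that whether such an $F_t$ exists inside a given $F$ depends on $F$ only through (a) the partition of $V(F)$ into the connected components of $(V,F)$ --- condition (i) amounts to $\operatorname{supp}(I(G)_t)\triangle W_{d_t}$ meeting every component evenly --- and (b) for each component, the coset of $GF(2)^r$ of $c$-weights realized by its sub-joins, which is a coset of the $c$-image of that component's cycle space and hence of dimension at most $r$.

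Now solve the combinatorial problem. A minimal solution is a subgraph of $G$ with at most $k$ edges, hence on at most $2k$ vertices, and I would search for it by adapting the known fixed-parameter algorithms for \probSteiner{} and Steiner Forest parameterized by the solution size: the natural state of a Dreyfus--Wagner-style dynamic program records a partition of a bounded frontier into the components it will eventually lie in, and there are $k^{\OO(k)}$ such partition-states, which is the source of the $k^{\OO(k)}$ factor. The two new ingredients --- the at most $2^r$ covering patterns and the $GF(2)^r$ weight constraint --- are threaded through by enriching each block of the partition with a rank-$\le r$ linear sketch recording the achievable $c$-weights of sub-joins across that block (so condition (ii) can be tested) and by a preliminary branching over $2^{\OO(r)}$ alignment choices; once a candidate component structure is fixed, every terminal's coverage is checked using $\mathcal{D}$ and the component data in polynomial time. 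For fixed $r$ all of this bookkeeping is absorbed into $k^{\OO(k)}$ and polynomial factors, giving total running time $k^{\OO(k)}\cdot(n+m)^{\OO(1)}$ and hence \classFPT{} membership in $k$.

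The main obstacle is disentangling the perturbation from the graph. Because the columns of $P$ may have arbitrarily large support, ``covering a terminal through the perturbation'' looks a priori like a global parity condition on $F$ rather than a local connectivity statement, so it is not obvious that it is captured by any bounded object. The reduction in the second step is what resolves this: only the $\le 2^r$ patterns $W_d$ lying in the span of the columns of $I(G)$ can ever occur, each $W_d$ is a fixed vertex set, and ``$F$ contains a sub-join with odd-degree set $\operatorname{supp}(I(G)_t)\triangle W_d$'' is equivalent to that set meeting every component of $(V,F)$ evenly --- a condition on the bounded-size component structure of the solution, not on its embedding. A secondary obstacle is keeping the running time at $k^{\OO(k)}$: one must avoid guessing a pattern $d_t$ independently for each of the up to $2^k$ terminals, which is achieved by deciding the coverage of each terminal only after the component structure and the per-component rank-$\le r$ $c$-data have been fixed by the dynamic program.
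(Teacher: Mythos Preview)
Your structural characterization is correct and essentially matches the paper's: writing $A_e=I(G)_e+Bc_e$, the condition $A_t\in\spn(F)$ is equivalent to the existence of $F_t\subseteq F$ whose odd-degree set in $G$ equals $\operatorname{supp}(I(G)_t)\triangle W_d$ for some $d\in GF(2)^r$ with $Bd\in\operatorname{colspace}(I(G))$, together with the weight condition $\sum_{f\in F_t}c_f=c_t+d$. The paper arrives at the same reformulation (its sets $V_W$ are your $W_d$'s), and the bound $|T|\le k$ via a basis reduction is sharper than your $|T|<2^k$ but either suffices for the target running time.

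The genuine gap is in the algorithm. You assert a ``Dreyfus--Wagner-style dynamic program'' whose state is ``a partition of a bounded frontier into the components it will eventually lie in,'' but you never say what the frontier is or how the DP is organized. The difficulty is that condition~(i) is not a statement about the abstract component structure of $F$: it asks whether the \emph{specific} vertex set $W_d\triangle\{\text{endpoints of }t\}\subseteq V(G)$ meets each component of $(V(G),F)$ evenly. Checking this requires knowing, for each vertex of each relevant $W_d$, which component of $F$ it lies in---information that depends on the actual embedding of $F$ in $G$, not just on its isomorphism type or on an abstract partition. Moreover $V(F)$ will in general contain Steiner vertices that belong to no $W_d$ and to no terminal, so even after fixing which vertices of $\bigcup_d W_d$ lie in $V(F)$ and how they are grouped, one still has to \emph{find} a $\le k$-edge subgraph of $G$ realizing that data. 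Classical Dreyfus--Wagner is parameterized by the number of terminals, not by the solution size, and does not by itself solve this embedding problem; you give no mechanism (color coding, representative families, or otherwise) that does.

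The paper closes exactly this gap by a two-stage reduction. First it guesses the isomorphism type $H$ of the solution (at most $2^{\OO(k)}$ choices since a minimal solution has at most $2^t$ cycles), the edge labels $\ell_H$, and an injection $f$ from a bounded set $U\subseteq V(H)$---containing the preimages of all vertices in the relevant $W_d$'s---into $V(G)$. This turns the parity constraints into hard placement constraints on $U$. What remains is the \textsc{Pattern Cover} problem: find a labelled, partially-anchored copy of the forest $H$ in $G$. That problem is then solved by color coding plus a routine tree DP. Your proposal correctly identifies the first conceptual step (translating span into odd-degree/parity constraints) but stops short of the second, which is where the algorithmic work lies.
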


We underline that $r$ is a constant here, that is, the constants hidden behind the big-O notation in the running time depend on $r$.

Before proceedings with an overview,  it  is useful to discuss how \spcgraph 
 is  solvable when $r=0$, i.e. on graphic  matroids and what are the main challenges for solving the problem for $r>0$. 
On graphic matroids \SSp corresponds to the following problem. 
Given a set of terminal edges   $T=\{e_1, e_2, \dots, e_s\}$, we want to find a set of at most $k$ edges $F\subseteq E\setminus T$ such that for every $e_i$, graph  $G[F\cup e_i]$ has a cycle containing  $e_i$. This can be seen as a variant of  the \probSteiner, and more generally, of the 
\probSteinerF problem. Here  we are given a  graph $G$, 
a collection of pairs of distinct non-adjacent terminal vertices   $\{x_1,y_1\},\ldots,\{x_s,y_s\}$ of $G$, and a non-negative integer $k$.
The task is to decide whether there is 
  a set $F\subseteq E(G)$ with $|F|\leq k$ such that for each  $i\in\{1,\ldots,s\}$, graph $G[F]$ (which we can be assumed to be a forest) contains an $(x_i,y_i)$-path.
  The special case   when  $x_1=x_2=\cdots=x_s$, i.e. when edge set $F$ is a tree spanning  all demand vertices, is  the \probSteiner problem.
  To see that \probSteinerF is a special case of  \SSp,  we construct the following graph: For each $i\in\{1,\ldots,s\}$, we add a new edge $x_iy_i$ to $G$. Denote by $G'$ the obtained graph and let $T$ be the set of added edges and let $M(G')$ be the graphic matroid associated with $G'$. Then a set of edges $F\subseteq E(G)$  forms a graph containing all $(x_i,y_i)$-paths if an only if $T\subseteq \spn(F)$
 in $M(G')$.

 Similar to \probSteiner,   \probSteinerF is  fixed-parameter tractable  parameterized by the  number of terminals. This can be shown by applying a dynamic programming algorithm similar to the classical algorithm of  Dreyfus and Wagner \cite{DreyfusW71}. Notice that by Theorem~\ref{thm:paraNP-hard}, 
\spcgraph{} is \classNP-complete when restricted to the instances with $r\leq 2$ and $|T|\leq 2$. This  shows that for our problem the parameterization just by the number of terminals $|T|$ will not work; it also indicates that for matroids we should try a different approach. 
 To  show that  \probSteinerF   is FPT parameterized by the size $k$ of the forest $F$, one can use the following idea. Since the size of $F$ is at most $k$, there are $2^{\OO(k)}$ non-isomorphic forests, so we can 
 guess the structure of $F$. In other words, we can guess a forest $H$ on at most $k$ edges such that the solution $F$ to    \probSteinerF  is isomorphic to $H$. 
Thus for each guess of $H$,  the task is reduced to the following constraint variant of \textsc{Subgraph Isomorphism}: For given graph $G$ and forest $H$, decide whether $G$ contains a forest isomorphic to $H$ and spanning all terminal vertices of $G$ in the prescribed way.  This problem can be solved by combining a color coding technique of Alon, Zwick, and Yuster~\cite{AlonYZ} with dynamic programming. 

This is exactly the approach we want to push forward for  $r>0$. However in this case reduction to constraint  \textsc{Subgraph Isomorphism} is way more difficult. 
First,    while perturbation matrix $P$ is of  bounded rank, adding it to $I(G)$ can change an  unbounded number of its elements.    On the other hand, since the rank of perturbation matrix $P$ is bounded, we know that matrix $P$ contains only a small number of different columns. Thus while 
adding $P$ to $I(G)$ changes many elements of $I(G)$, the variety of these changes is bounded. We exploit this in order to   guess the structure of a solution.
Second,  for graphic matroids, the way a forest $H$ should be mapped into $G$  is very clear---for every terminal element $t$, adding $t$ to the solution should  create a cycle containing $t$. This defines the constrains how the edges of the guessed solution should be connected to terminal edges and allows us to reduce the problem to a constraint variant of \textsc{Subgraph Isomorphism}. For $r>0$,  adding $P$ to $I(G)$ completely destroys this nice property of the solution. Interestingly, the bounded rank of perturbation still allows us to establish the constrains expressed as parities of vertex degrees of a small number of vertices in $G$,  coloring of edges of $G$,  and some additional mappings. As a result,    by a sequence  of   reductions, we succeed in reducing the original problem to a   version of constraint  \textsc{Subgraph Isomorphism}. Due to the nature of constrains, the solution to this problem also requires new ideas on top of color coding and dynamic programming.

\medskip
We proceed with an overview of the proof of Theorem \ref{thmgraphic}. The proof consists of two main parts. The first part is an FPT-Turing reduction from \SSp to the following version of  \textsc{Subgraph Isomorphism}, which we call \pattern.

\defproblemu{\pattern}%
{A (multi) graph $G$ with $n$ vertices and $m$ edges, a non-negative integer $t$ that is a fixed constant, a function $\ell_G: E(G)\rightarrow \{1,2,\ldots,t\}$, a non-negative integer $k$, a forest $H$ with $k$ vertices, a function $\ell_H: E(H)\rightarrow \{1,2,\ldots,t\}$, a set $U\subseteq V(H)$ and an injective function $f: U\rightarrow V(G)$.}%
{
Is there an injective homomorphism 
 $g: V(H)\cup E(H)\rightarrow V(G)\cup E(G)$ such that  (i) for all $e\in E(H)$, it holds that $\ell_H(e)=\ell_G(g(e))$, and (ii) for all $v\in U$, it holds that $g(v)=f(v)$?}

In other words, we give a reduction that for an  input  $(G,P,T,k)$ of  \spcgraph  in time  $k^{\OO(k)}\cdot (n+m)^{\OO(1)}$ constructs $k^{\OO(k)}\cdot (n+m)^{\OO(1)}$ instances of \pattern such that  $(G,P,T,k)$ is a yes-instance if and only if at least one of the instances of \pattern is.

The second part of the proof is an algorithm solving \pattern in time $k^{\OO(k)}\cdot (n+m)^{\OO(1)}$. The combination of the two parts provides the proof of the theorem. 

\medskip 
In what follows, we provide a brief description of the FPT-Turing reduction. 
The reduction is done by a sequence of steps.
For simplicity, here we explain how to construct a reduction in time $2^{\OO(k^2)}\cdot (n+m)^{\OO(1)}$; in Section~\ref{section:graphic} we provide more precise arguments that allow to reduce the running time.

We start by bounding $|T|$ by $k$. 
In case the columns in $T$ are not linearly independent, we let $T'$ denote a basis of $T$, and else we denote $T'=T$. We remove the columns in $T\setminus T'$ from $I(G)$ and $P$, and let $(G',P',T',k)$ denote the resulting instance. Clearly, $(G,P,T,k)$ is a yes-instance if and only if $(G',P',T',k)$ is a yes-instance. Moreover, given a set $X$ of size $t$ of linearly independent vectors, for some $t\in\mathbb{N}$, there does not exist any set $Y$ of vectors of size smaller than $t$ such that $X\subseteq\spanS(Y)$. Thus, in case $|T'|>k$, the input instance is a no-instance. Therefore, from now onwards we implicitly assume that $|T|\leq k$.
We use the term {\em solution} to refer to any set $F\subseteq E\setminus T$ with $|F|\leq k$ such that $T\subseteq \spn(F)$ in the binary matroid $M$ represented by $A$.

Recall that $\disC(P)$ is defined as the {\em set} of the distinct vectors that correspond to the columns in $\{P^e: e\in E(G)\}$. Let us denote $|\disC(P)|=t$. Since the rank of $P$ is $r$, it is easy to see that it has at most  $ 2^r$ different columns, thus  $t\leq 2^r$.
 We say that an edge $e\in E(G)$ is of {\em type} $i$, $1\leq i\leq t$, if $P^e=C^i$ (as vectors). Given an edge $e\in E(G)$, we let $\type(e)$ denote its type. Given a set of edges $E'\subseteq E(G)$, we denote $\type(E',i)=|\{e\in E': \type(e)=i\}|\mod 2$. Towards to constructing the reduction to \pattern, we define $\ell_G\colon E(G)\rightarrow \{1,\ldots,t\}$ by setting $\ell_G(e)=\text{type}(e)$.

 We proceed by identifying a small graph that we can guess, and which will guide us how to find a solution. 
Let $F$ be an inclusion-wise minimal solution;  note that the minimality of $F$ implies that $F$ is an independent set. 
Consider the graph $H=G[\edges(F)]$. The crucial structural lemma that we use states that 
$H$ is ``almost'' a forest. More precisely, we show that $H$ has at most $2^t$ cycles.  
To see it, assume that $H$ has at least $2^t+1$ cycles. 
There are at most $t$ edge types in $H$. Hence
 by the pigeonhole principle, there are distinct sets of edges $C_1$ and $C_2$ of $R$ that 
 compose cycles and such that $\type(C_1,i)=\type(C_2,i)$ for  all $i\in\{1,\ldots, t\}$. Then for the symmetric difference $C=C_1\bigtriangleup C_2$, we obtain that $\type(C,i)=0$ for $i\in\{1,\ldots,t\}$. Thus the sum of the columns of $P$ corresponding to edges of $C$ is the zero-vector.  
  Notice that 
 since $C$ is the union of cycles of $H$, the sum of the columns of matrix $I(G)$ corresponding to its edges is also the zero-vector.  Hence, the sum of the corresponding vectors of $A$ is also zero; and thus the correspondent set of columns of $A$,  
  $\{A^e\mid e\in C\}\subseteq F$ is not independent. But this  contradicts  the minimality of $F$.

Let $\cal H$ denote the set of all non-isomorphic graphs with at most $k$ edges, at most $2^t$ cycles, and no isolated vertices. 
Thus  $(G,P,T,k)$ is a yes-instance of \spcgraph  if and only if  $(G,P,T,k)$ has a solution isomorphic to some    $H\in {\cal H}$.
It is possible to show that all non-isomorphic graphs in 
$\cal H$ can be enumerated within time  $ 2^{\OO(k)}$. 
Therefore, we may explicitly examine each graph $H\in \cal H$ and check whether we have a solution 
$F$ with subgraph of $G$, $G[\edges(F)]$, isomorphic to $H$.  In other words, we are looking for an injective homomorphism 
  $g: V(H)\cup E(H)\rightarrow V(G)\cup E(G)$\footnote{Since we handle {\em multi} graphs, we define the domain and image of $g$ to include edge-sets.} 
 such that $F=\{A^e\mid e\in g(E(H))\}$ is a solution. 
  This is an FPT-Turing reduction which reduces in time $ 2^{\OO(k)}$ the solution of  the original problem to the solution of  $ 2^{\OO(k)}$ new problems. We will use a less formal term \emph{guess} to refer to such type of reductions.
 So we guess graph $H$.

Next, we observe that we can guess the types of edges of $H$. Since $H$ has at most $k$ edges, there are at most $t^k=2^{\OO(k)}$ distinct functions $\ell_H\colon E(H)\rightarrow\{1,\ldots,t\}$. 
Then for 
each
 guess of function $\ell_H$, we want to decide  whether there 
 is an injective homomorphism $g$ such that $\ell_G(g(e))=\ell_H(e)$ for every $e\in E(H)$ and such that the set of columns $F$ of $A$ corresponding to the  image of $g$, which is  $F=\{A^e\mid e\in g(E(H))\}$,  is a solution. 

By definition, if $F=\{A^e\mid e\in g(E(H))\}$ is a solution, then  for each $W\in T$, there is $F_W\subseteq F$ such that 
\begin{equation}\label{eq:spnW}
W=\sum_{e\in F_W}A^e.
\end{equation}
(The summations here are modulo $2$.)
We denote by $E_W=g^{-1}(\edges(F_W))$ the   edge subset of  $H$ corresponding to $F_W$. Then by \eqref{eq:spnW}, 
\begin{equation*}\label{eq:spnWpr1}
W=\sum_{e\in g(E_W)}(I^e(G)+P^e)=\sum_{e\in g(E_W)}I^e(G)+\sum_{e\in g(E_W)}P^e .
\end{equation*}
Each column $P^e$ is equal to   vector $C^{\ell_H(e)}$  from partition $\disC(P)$. 
Thus 
 \begin{equation}\label{eq:spnWpr}
W= \sum_{e\in g(E_W)}I^e(G)+\sum_{e\in E_W}C^{\ell_H(e)}.
\end{equation}

Let $W'=W+\sum_{e\in E_W}C^{\ell_H(e)}$. 
 The rows of matrix $I(G)$ and thus the elements of $W'$ are indexed by  the vertices of $G$. For $v\in V(G)$, we denote by 
 $w_v$ the element of $W'$ indexed by $v$. Note that $w_v$ is either $0$ or $1$.  Let $V_{W}=\{v\in V(G)\mid w_v=1\}$. Observe that $V_W$ is uniquely defined by the choice of  $W$ and $E_W$.
 The crucial insight, which proof is given in Section~\ref{section:graphic}, that 
 \eqref{eq:spnWpr} and, therefore, \eqref{eq:spnW} holds if and only if $g$ acts as a bijection between $V_W$ and  vertices of $H[E_W]$ of odd degrees.
 This is the most important part of the reduction;  it allows to reduce the algebraic requirement  that every terminal vector should be in the span of the solution to 
 constrains in the form of  bijections, which can be guessed efficiently.

We exploit this property for the next set of  guesses. For each $W\in T$, we guess a set $E_W\subseteq E(H)$ and construct $V_W$ as described above. Since $|T|\leq k$ and $|E(H)|\leq k$, we have at most $2^{k^2}$ possible choices of the sets $E_W$. 
Then we find the set $U_W\subseteq V(H[E_W]))$ of vertices that have odd degrees in  $H[E_W]$. If $|V_W|\neq|U_W|$, we discard the choice. Otherwise, we set $U=\cup_{W\in T}U_W$. Notice that if our guesses correspond to a (potential) solution $F$, we have that corresponding injective homomorphism $g$ should map $U$ to $V'=\cup_{W\in T}V_W$ bijectively and, moreover, $g$ should act as bijection between each $U_W$ and $V_W$. We make all possible guesses  of  a bijection $f\colon U\rightarrow U'$.
Since $|U|\leq 2k$, we have at most $(2k)^{2k}$ possible choices.
Then for each $U$ and $f$, we are searching for  
an  injective homomorphism 
 $g: V(H)\cup E(H)\rightarrow V(G)\cup E(G)$ such that  (i) for all $e\in E(H)$,  $\ell_H(e)=\ell_G(g(e))$, and (ii) for each $v\in U$,   $g(v)=f(v)$.

Now we are ready for the final step of our reduction. Recall that  $H$ in the statement of \pattern is required to be a forest. The graph $H$ that was guessed so far does not have   this property, but it is  ``almost'' a forest, that is, it has at most $2^t$ cycles. To fix it, we guess a set of edges $S\subseteq E(H)$ of size at most $2^t$ such that the graph obtained from $H$ by the deletion of $S$ is a forest and set $H=H-S$. Since $|S|\leq 2^t$, and $t$ is a constant depending on $r$ only, 
we can make a polynomial number of  guesses how solution $g$ could map $S$ to $E(G)$; we have at most $|E(G)|^{2^t}=m^{\OO(1)}$ possibilities for such partial mappings. For each guess of mapping  $h: S\rightarrow V(G)$, we modify $U$ and $f$ respectively. Namely, we set $U=U\cup V(H[S])$ and define $f(v)=h(v)$ for $v\in V(H[S])$ as prescribed by our choice of the mapping $h$ of $S$.

This concludes the description of the construction of an instance of \pattern. It is possible to show that
 $(G,P,T,k)$ is a yes-instance of  \spcgraph if and only if for at least one of the described guesses of a
   forest $H$,   functions $\ell_H$, $\ell_G$,    set $U\subseteq V(H)$ and  function $f: U\rightarrow V(G)$, the instance  of \pattern with these parameters is a yes-instance.
Since the total number of guesses we make is 
$2^{\OO(k^2)}\cdot (n+m)^{\OO(1)}$, 
our construction is the required FPT-Turing reduction.

\medskip

In order to solve \pattern, and to complete
the proof of Theorem~\ref{thmgraphic},  we still have to solve \pattern. This is done by 
  a non-trivial application of  the color coding technique    combined with dynamic programming. We postpone all the details till Section~\ref{section:graphic}.

\subsection{Duals of Perturbed Graphic Matroids}\label{subsec:dual}
In this section, we give an overview of the proof of our second main  result. The detailed  proof of the theorem is given in Section~\ref{section:dual}. 

Formally, we define the following problem. 
\medskip

\defproblemu{\spcduallong (\spcdual)}%
{A (multi) graph $G$ with $n$ vertices and $m$ edges, an $(n\times m)$-matrix $P$ over $GF(2)$ with $\rank(P)\leq r$, a set of 
\emph{terminals} $T\subseteq E$ where $E$ is the set of columns of the matrix $A=I(G)+P$,
and a non-negative integer $k$.}%
{Is there a set $F\subseteq E\setminus T$ with $|F|\leq k$ such that $T\subseteq \spn(F)$ in the dual $M^*$ of the binary matroid $M$ represented by $A$?}

\smallskip

\begin{theorem}
\label{thmdual}
\spcdual{} is solvable in time $2^{2^{\OO((2^r+k^2)k)}}\cdot (n+m)^{\OO(1)}$. In particular, \spcdual{} is \classFPT{} when parameterized by $r+k$. 
\end{theorem}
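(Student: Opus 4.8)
The plan is to mimic the dual of the graphic-matroid argument. First I would make the problem concrete by translating spanning in the dual matroid $M^*$ into a condition on the primal $M$ represented by $A = I(G)+P$. Recall that in any matroid, a set $F$ spans $T$ in $M^*$ if and only if $E \setminus (F)$ does not contain any cocircuit of $M^*$ meeting $T$ nontrivially; dually, $T \subseteq \spn_{M^*}(F)$ iff $M \setminus (T)$ and $M \setminus (F \cup T)$ have the same corank, i.e. deleting $F$ from $M$ does not increase the rank "across" $T$. Concretely, $t \in \spn_{M^*}(F)$ iff every circuit of $M$ through $t$ meets $F$. So the task becomes: find $F \subseteq E \setminus T$, $|F| \le k$, hitting, for each terminal $t \in T$, every circuit of $M$ containing $t$. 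Since circuits of $M = M(I(G)+P)$ are governed by the graph $G$ up to the bounded-rank perturbation $P$, a set $F$ "breaks" $t$ exactly when $G - F$ (plus some low-rank bookkeeping) has no cycle through $t$. When $r=0$ this is precisely \textsc{Restricted Edge-Subset Feedback Edge Set}, which is FPT in $k$ by Xiao and Nagamochi~\cite{XiaoN12}; the job is to lift that to $r > 0$.

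Next I would exploit the bounded rank of $P$ exactly as in the graphic case: $P$ has at most $t \le 2^r$ distinct columns, giving an edge-typing $\type: E(G) \to \{1,\ldots,t\}$, and the sum of the $A$-columns over an edge set $C$ equals $\sum_{e\in C} I^e(G) + \sum_{e \in C} C^{\type(e)}$. A subset $C \subseteq E(G)$ is a circuit of $M$ iff it is inclusion-minimal with $\sum_{e\in C} I^e(G) = \sum_{e\in C} C^{\type(e)}$ in $GF(2)$; the right-hand side lives in an $r$-dimensional space, so the "deviation from being an edge-disjoint union of cycles" is controlled by a vector in $GF(2)^r$ together with the parities $\type(C,i)$. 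I expect that the circuits of $M$ are therefore obtained from subgraphs of $G$ that are unions of a bounded number of cycles plus a bounded "correction", matching the structural lemma already proved for the primal problem (the $\le 2^t$-cycles lemma). Using this, I would argue that a minimal solution $F$ — or rather, the relevant part of the structure that $F$ must interact with — can be captured by guessing a bounded-size "pattern": which of the $\le 2^r$ types are involved, the $O(1)$-dimensional algebraic data, and a feedback-type skeleton, so that "$F$ breaks all circuits through $t$" reduces to a statement about $G - F$ in the spirit of a multiway-cut / restricted feedback edge set instance, annotated with parity constraints on $O(2^r + k^2)$ marked vertices (the analogue of the sets $V_W$, $U_W$ from the primal reduction). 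This is where the doubly-exponential bound $2^{2^{\OO((2^r+k^2)k)}}$ comes from: one level of exponential for enumerating the annotated structure, a second for the branching/recursive solution of the resulting annotated cut problem.

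I would then carry out the algorithm in two phases. Phase one: an FPT-Turing reduction that, given $(G,P,T,k)$, produces $2^{2^{\OO((2^r+k^2)k)}}\cdot(n+m)^{\OO(1)}$ instances of an \emph{annotated} edge-deletion problem — essentially \aedc / \edc as foreshadowed by the macros in the preamble — where we must delete $\le k$ edges to destroy, for every terminal, all cycles of a prescribed type through it, subject to fixed parities of the degrees of a bounded set of distinguished vertices and a fixed coloring of a bounded set of distinguished edges. The correctness of this reduction rests on the same "parity of vertex degrees $=$ membership in $V_W$" equivalence used in the graphic case, now applied to cocircuit conditions. Phase two: solve each annotated instance. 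With $r$ and $k$ fixed, the parity/coloring annotations involve only $\OO((2^r + k^2)k)$ objects, so one can branch on how a minimal solution interacts with the bounded "non-forest" part and then reduce to a polynomial-time-solvable core (a matroid union / Gammoid-style flow argument, or the Xiao–Nagamochi machinery with the annotations bolted on). The main obstacle, I expect, is Phase one: specifically, proving the clean equivalence between "$F$ spans $T$ in $M^*$" and a finite list of annotated combinatorial conditions on $G - F$ — because, unlike the primal case where "adding $t$ creates a cycle through $t$" is transparent, in the dual one must argue about \emph{all} circuits through $t$ simultaneously after the low-rank perturbation has scrambled the incidence structure, and control that the number of genuinely different "circuit shapes through $t$" that a size-$\le k$ solution must worry about is bounded by a function of $r$ and $k$ alone. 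Getting that bound tight enough to keep the annotation size at $\OO((2^r+k^2)k)$, rather than something unbounded, is the crux and is what ultimately produces the stated running time.
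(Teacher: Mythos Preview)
Your initial duality observation---that $t\in\spn_{M^*}(F)$ iff every circuit of $M$ through $t$ meets $F$---is correct, but from that point on the proposal diverges from the paper in a way that leaves a real gap.

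First, the reduction. You propose to type \emph{edges} by the distinct columns of $P$ and to analyse circuits of $M$ as ``graph cycles plus bounded correction''. The paper does the opposite: it uses the characterisation that $t\in\spn_{M^*}(F)$ iff some \emph{cocycle} of $M$ lies in $F\cup\{t\}$, and invokes the fact that cocycles of a binary matroid are exactly the sets whose characteristic vector is in the \emph{row} span of $A$. Writing such a vector as $\sum_{v\in X}A_v$ immediately gives, for each terminal, a vertex bipartition $(X_e,\bar X_e)$ of $G$; the perturbation then enters through the $\le 2^r$ distinct \emph{rows} of $P$, yielding a vertex-type partition $(V_1,\dots,V_t)$ and parity constraints $|X_e\cap V_i|\bmod 2$. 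The resulting \edc{} problem asks for cuts satisfying these vertex-parity constraints plus an ``edge contributes'' rule---it is \emph{not} a cycle-destroying problem with degree parities as you describe. Your column-type route does not obviously reach this formulation, and your Phase~1 sketch gives no mechanism to bound the family of ``circuit shapes through $t$'' that $F$ must hit; the paper sidesteps that entirely by working with one explicit cocycle per terminal.

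Second, and more seriously, your Phase~2 is where the approach would fail. You propose to solve the annotated problem by ``Xiao--Nagamochi machinery with the annotations bolted on'' or a flow/Gammoid argument. The paper explicitly addresses this: the important-separator technique underlying Xiao--Nagamochi does not extend even to $r=1$, because once $P\neq 0$ the solution together with $T$ is no longer a union of edge-cuts of $G$, so important separators in $G$ cannot be bounded by a function of $k$. Instead the paper solves \edc{} by the recursive-understanding framework of Chitnis et al.: either the graph admits a $(q,p)$-good edge separation (recurse on one side, compress, repeat), or it is $(q,p)$-unbreakable, in which case one first finds an $e$-preliminary partition for each terminal by a reduction to \textsc{Edge Bipartization}, argues any two such partitions are close, and then uses universal sets plus dynamic programming over small components to locate an optimal solution. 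None of recursive understanding, unbreakability, or the \textsc{Edge Bipartization} subroutine appears in your plan, and these are the load-bearing ideas behind the doubly-exponential bound.
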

As in the case with graphic matroids, it is useful to recall how \spcdual 
   is solvable for $r=0$, i.e. on cographic matroids. In a cographic matroid a circuit corresponds to a cut in the underlying graph $G$. 
   In this case the solution set $F$  should satisfy the following property: for every terminal element 
     $e\in T$ there is a partition (or a cut) $(X_e, \overline{X}_e)$ of the vertex set of $G$ such that this cut, i.e. the set of  edges between   $X_e $ and $\overline{X}_e$,
     is of the form $\{e\}\cup F_e$, where $F_e\subseteq F$. Thus $e$ is the only edge in the cut from $T$ and all other edges are from $F$.
  In graph theory this problem is known under name 
  \textsc{Edge Subset Feedback Edge Set}.   Xiao and Nagamochi~\cite{XiaoN12}  showed that this  problem is FPT parameterized by $k=|F|$.  
The algorithm for solving \textsc{Edge Subset Feedback Edge Set}, as well as its special case \textsc{Multiway Cut}, uses
the technique   of Marx  based on important separators~\cite{Marx06}.  The essence of this technique is that all required information about the cuts in a graph can 
be extracted from a carefully selected set of separators of size at most $k$. 
 However, we do not see how this approach can be shifted to more general matroids, even when the rank of perturbation matrix is $1$. The difficulty  in this case is that 
solution $F$ together with $T$ cannot be represented as the union of the sets of edges of cuts in $G$ anymore, and thus  
the sizes of important separators in $G$ cannot be bounded by a function of $k$ only. In order to overcome this challenge,  we have to apply   more powerful method of recursive understanding \cite{DBLP:journals/siamcomp/ChitnisCHPP16}.

On a general level, the structure of the proof of Theorem~\ref{thmdual} is similar to the structure of the proof of  
 Theorem~\ref{thmgraphic}. It consists of two parts. In the first part we give FPT-Turing reduction to a  cut problem on graphs and in the second part we use the method of recursive understanding to solve the problem. But here the similarities end. While on perturbation of graphic matroids \SSp is about subgraph isomorphisms, on perturbation of cographic matroids it is about collections of  cuts in graphs. This makes 
 both parts of the proof  of   Theorem~\ref{thmdual} much more challenging than in Theorem~\ref{thmgraphic}. 
  In order to introduce the graph-cut problem we reduce to, we need several definitions.

 \smallskip
\noindent 
{\bf Graph problem.} 
Let  $G$ be a graph with $n$ vertices and $m$ edges given together with  a set of terminal edges $T$ and a partition of  $V(G)=(V_1,V_2,\ldots,V_t)$.
In addition,   for every   $e\in T$ graph $G$ is provided with  a function $f_e: E(G)\rightarrow \{0,1\}$ and a binary vector $B^e=(b^e_1,b^e_2,\ldots,b^e_t)$.

For terminal edge $e\in T$ and a partition $(X,\bar{X})$ of $V(G)$, we say that an edge $e'\in E(G)$ {\em contributes to $(e,(X,\bar{X}))$} (with respect to $f_e$) if one of the  following conditions holds
\begin{enumerate}
\setlength{\itemsep}{-2pt}
\item Both endpoints of $e'$ belong to $X$ and $f_{e}(e')=1$.
\item Both endpoints of $e'$ belong to $\bar{X}$ and $f_{e}(e')=1$.
\item Exactly one of the endpoints of $e'$ belongs to $X$ and $f_{e}(e')=0$.
\end{enumerate}
Accordingly, we define $\cont(e,X)$ as the set of edges that contribute to $(e,(X,\bar{X}))$.

For partition $(X,\bar{X})$   of $V(G)$, and terminal edge $e\in T$, 
we say that $(X,\bar{X})$ {\em almost fits $e$} (with respect to $f_e$)
 if $T\cap\cont(e,X)=\{e\}$. Moreover, if $(X,\bar{X})$ almost fits $e$ and for all $1\leq i\leq t$, it holds that $|X\cap V_i|=b^e_i\mod 2$, then we say that $(X,\bar{X})$ fits $e$ (with respect to $f_e$ and $B^e$).

We are now ready to define our graph problem.

\defproblemu{\edc}%
{A (multi) graph $G$ with $n$ vertices and $m$ edges, non-negative integers $k$ and $t$, a partition $(V_1,V_2,\ldots,V_t)$ of $V(G)$, 
a set $  T\subseteq E(G) $, a binary vector $B^e=(b^e_1,b^e_2,\ldots,b^e_t)$ for $e\in T$, and a function $f_e: E(G)\rightarrow \{0,1\}$ for $e\in T$.}%
{Is there a set $F\subseteq E(G)\setminus T$ with $|F|\leq k$ such that for each $e\in T$, there exists a partition $(X_e,\bar{X}_e)$ of $V(G)$ that fits $e$ and such that $\cont(e,X_e)\setminus\{e\}\subseteq F$?}

In other words,  we a looking for a set of edges $F$ of size $k$, such that for every terminal edge $e$, there is a cut  $(X_e,\bar{X}_e)$ such that (i) the parities of the intersections of $X_e$ with sets $V_i$ constitute vector $B^e$, (ii) $e$ is the only terminal edge contributing to the cut and all other edges contributing to the cut are from $F$.

In the first part of the proof we  give a reduction that for an  input  $(G,P,T,k)$ of  \spcdual  in time   
$2^{\OO(k2^r)}\cdot (n+m)^{\OO(1)}$ constructs  $2^{\OO(k2^r)}\cdot (n+m)^{\OO(1)}$ 
instances of \edc such that  $(G,P,T,k)$ is a yes-instance if and only if at least one of the instances of \edc is.

\medskip
As in the case of perturbed graphic matroids, we can assume that $|T|\leq k$. Recall that $\disR(P)$ is the set of the distinct vectors corresponding the rows of $P$, and denote $|\disR(P)|=t$. 
  Since the rank of $P$ is $r$, it  has at most  $ 2^r$ different rows, hence  $t\leq 2^r$. 
 Moreover, denote $\disR(P)=\{R_1,R_2,\ldots,R_t\}$. Accordingly, we say that a vertex $v\in V(G)$ is of {\em type} $i$, $1\leq i\leq t$, if $P_v=R_i$. Given a vertex $v\in V(G)$, we let $\type(v)$ denote its type. For $i\in\{1,\ldots,t\}$, we denote by $V_i$ the set of vertices of type $i$.

\smallskip
\noindent
{\bf Characterization of solutions.}
For \spcdual, we use the term {\em solution} to refer to a set $F\subseteq E\setminus T$ with $|F|\leq k$ such that $T\subseteq \spn(F)$ in the dual $M^*$ of the binary matroid $M$ represented by $A$. Let $I$ be a binary vector with $m$ elements. Recall that given $F\subseteq E$, $\edges(F)$ denotes the set of all edges $e\in E(G)$ such that $A^e\in F$. Now, given a set $F\subseteq E$, we say that $I$ is the {\em characteristic vector} of $F$ if the $i^\mathrm{th}$ entry of $I$ is 1 if and only if $F$ contains the  $i^\mathrm{th}$ column of $A$. Moreover, a set $F\subseteq E$ is a {\em cocyle} in $M$ if and only if it is a cycle in ${M}^*$. We need the following folklore result (see, e.g., \cite{GeelenK15}) characterizing cocycles of binary matroids.

\begin{proposition}\label{lem:cocyclesmall}
Let $M$ be a binary matroid represented by an $(n\times m)$-matrix $A$, and let $F$ be a subset of $E$, where $E$ is the set of columns of $A$. Then, $F$ is a cocycle in $M$ if and only if the characteristic vector of $F$ belongs to $\spanS(V)$, where $V$ is the set of rows of $A$.
\end{proposition}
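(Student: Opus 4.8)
\medskip\noindent\textbf{Proof plan.} The plan is to reduce the statement to two standard structural facts about binary matroids together with one elementary fact from linear algebra over $\mathrm{GF}(2)$, by passing to characteristic vectors throughout. Thus I would identify the ground set $E$ with the coordinates of $\mathrm{GF}(2)^m$, so that each $F\subseteq E$ is identified with its characteristic vector, and then the claim to be shown is that $F$ is a cocycle in $M$ if and only if this vector lies in $\spanS(V)$, the row space of $A$.

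First I would recall the description of the cycle space of a binary matroid: for $M=M[A]$, the characteristic vectors of the cycles of $M$ (the disjoint unions of circuits of $M$, equivalently the $\mathrm{GF}(2)$-combinations of circuits) are exactly the vectors of $\ker A=\{x\in\mathrm{GF}(2)^m:Ax=0\}$. Indeed, by definition of the matroid represented by $A$, the circuits of $M$ are precisely the minimal nonempty supports of vectors in $\ker A$; since $\ker A$ is a linear subspace it is closed under symmetric difference, so every cycle has characteristic vector in $\ker A$, and conversely every nonzero $x\in\ker A$ decomposes into a disjoint union of circuits by repeatedly removing a circuit contained in its support. This is textbook, so I would simply cite it.

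Next I would invoke the standard representation of the dual matroid: if $B$ is any matrix whose rows form a basis of $\ker A$, then $B$ represents $M^{*}$. By definition a cocycle of $M$ is a cycle of $M^{*}=M[B]$, so applying the previous paragraph with $B$ in place of $A$, the cocycles of $M$ are exactly the subsets whose characteristic vector lies in $\ker B$. Now $\ker B$ consists of those vectors orthogonal, under the standard dot product $\langle x,y\rangle=\sum_i x_iy_i$ over $\mathrm{GF}(2)$, to every row of $B$, hence to the whole of $\ker A=\spanS(\{\text{rows of }B\})$; that is, $\ker B=(\ker A)^{\perp}$. Finally, the elementary identity $(\ker A)^{\perp}=\spanS(V)$ (the inclusion $\spanS(V)\subseteq(\ker A)^{\perp}$ is immediate from $Ax=0$, and equality follows by the dimension count $\dim\spanS(V)=\rank A=m-\dim\ker A=\dim(\ker A)^{\perp}$) closes the chain: $F$ is a cocycle of $M$ iff its characteristic vector lies in $\ker B=(\ker A)^{\perp}=\spanS(V)$.

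The only point requiring a little care — and the closest thing to an obstacle, though it is not a genuine one — is the legitimacy of the orthogonal-complement manipulations over $\mathrm{GF}(2)$: one should note that the standard bilinear form is nondegenerate, so that $\dim W+\dim W^{\perp}=m$ for every subspace $W\subseteq\mathrm{GF}(2)^m$, even though the form is alternating in characteristic $2$ and some subspaces are self-orthogonal; with this in hand the dimension count above is valid. After that observation, the proposition is merely a chaining of folklore facts, which is why it is stated without a detailed proof.
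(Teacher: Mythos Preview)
Your argument is correct and complete: the chain $\{\text{cocycles of }M\}=\{\text{cycles of }M^*\}\leftrightarrow\ker B=(\ker A)^{\perp}=\spanS(V)$ is the standard way to see this, and you correctly flag that the dimension identity $\dim W+\dim W^{\perp}=m$ over $\mathrm{GF}(2)$ follows from nondegeneracy of the form rather than from any positivity.

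There is nothing to compare against, however: the paper does not prove this proposition at all. It is stated as a folklore result with a reference to \cite{GeelenK15}, so you have in fact supplied more than the paper does.
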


Note that a set $F\subseteq E\setminus T$ is a solution if and only if for each terminal $W\in T$, there exists a subset $F_W\subseteq F$ such that $F_W\cup\{W\}$ is a cocycle in $M$. Thus, in light of Proposition~\ref{lem:cocyclesmall}, we can think of a solution as follows:

\begin{observation}\label{lem:incidencesmall}
A set $F\subseteq E\setminus T$ is a solution if and only if $|F|\leq k$ and for each terminal $W\in T$, there exists a subset $F_W\subseteq F$ such that the characteristic vector of $F_W\cup\{W\}$ belongs to $\spanS(V)$, where $V$ is the set of rows of $A$.
\end{observation}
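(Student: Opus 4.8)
The plan is to unwind the definitions and chain together the two facts already available: the definition of a solution for \spcdual, and the folklore characterization of cocycles in Proposition~\ref{lem:cocyclesmall}. First I would recall what it means for $F\subseteq E\setminus T$ to be a solution: by definition $|F|\leq k$ and $T\subseteq\spanS(F)$ in the dual matroid $M^*$. Since in $M^*$ the rank of $F$ equals the rank of $F\cup T$, and since in a binary matroid an element $W$ lies in the span of a set $F$ precisely when there is a circuit of $M^*$ through $W$ using only elements of $F\cup\{W\}$ (equivalently, $\{W\}$ is dependent together with some subset of $F$ in $M^*$), I would argue that $T\subseteq\spanS(F)$ in $M^*$ is equivalent to: for each $W\in T$ there exists $F_W\subseteq F$ such that $F_W\cup\{W\}$ is a circuit of $M^*$, i.e.\ a cocycle of $M$. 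This is exactly the sentence already asserted immediately before the observation, so the only real content is to make the last step explicit.

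Next I would apply Proposition~\ref{lem:cocyclesmall} with the roles played here: $M$ is the binary matroid represented by the $(n\times m)$-matrix $A$, and $F_W\cup\{W\}$ is a subset of the column set $E$. The proposition says $F_W\cup\{W\}$ is a cocycle in $M$ if and only if its characteristic vector lies in $\spanS(V)$, where $V$ is the set of rows of $A$. Substituting this equivalence into the previous characterization immediately yields the statement: $F$ is a solution if and only if $|F|\leq k$ and for each $W\in T$ there is $F_W\subseteq F$ with the characteristic vector of $F_W\cup\{W\}$ in $\spanS(V)$. Both directions follow symmetrically since every step is an ``if and only if''.

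The one point that needs a little care — and which I expect to be the main (minor) obstacle — is the reduction from the span condition $T\subseteq\spanS(F)$ in $M^*$ to the existence of the subsets $F_W$ with $F_W\cup\{W\}$ dependent in $M^*$. One must note that $W\notin F$ (since $F\subseteq E\setminus T$ and $W\in T$), so $W$ is genuinely outside $F$, and then use the standard matroid fact that for $x\notin F$, $x\in\spanS(F)$ iff $F\cup\{x\}$ contains a circuit through $x$, iff there is a minimal dependent $F_W\cup\{x\}$ with $F_W\subseteq F$; one can of course take $F_W$ to be this circuit minus $x$. Handling each terminal independently is legitimate because the condition ``$W\in\spanS(F)$'' is a per-element condition. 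Everything else is purely definitional bookkeeping, so the observation follows directly.
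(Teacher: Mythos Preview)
Your proposal is correct and mirrors the paper's own treatment: the paper simply notes, immediately before the observation, that $F$ is a solution iff for each $W\in T$ there is $F_W\subseteq F$ with $F_W\cup\{W\}$ a cocycle of $M$, and then invokes Proposition~\ref{lem:cocyclesmall}. One small terminological point: a circuit of $M^*$ is a \emph{cocircuit} of $M$, not a cocycle; since Proposition~\ref{lem:cocyclesmall} characterizes cocycles (cycles of $M^*$, i.e.\ disjoint unions of circuits), your backward direction needs the extra remark that if $F_W\cup\{W\}$ is merely a cycle of $M^*$ then $W$ lies in one of its constituent circuits $C$, giving $W\in\spanS(C\setminus\{W\})\subseteq\spanS(F)$---an easy fix, but not literally covered by your ``every step is an iff'' claim.
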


Let $F$ be a solution. For each $W\in T$, denote by $e(W)$ the edge of $G$ corresponding to the terminal $W$. 
By Observation~\ref{lem:incidencesmall}, for each $W\in T$, there is $F_W\subseteq F$ such that the characteristic vector $I_W$ of $F_W'=F_W\cup\{W\}$ belongs to $\spanS(V)$.
It means that there is a set of vertices $X_{e(W)}\subseteq V(G)$ such that $I_W=\sum_{v\in X_{e(W)}}A_v$. Hence, for each $W\in T$, we have the corresponding partition $(X_{e(W)},\bar{X}_{e(W)})$ of $V(G)$, and the solution can be represented as a collection of cuts $\{(X_{e(W)},\bar{X}_{e(W)})\mid W\in T\}$ of $G$.

For each $W\in T$ and $i\in \{1,\ldots,t\}$, we guess the parity of $|X_{e(W)}\cap V_i|$ and define the vector $B^{e(W)}=(b^{e(W)}_1,\ldots,b^{e(W)}_t)$ respectively by setting $b^{e(W)}_i=|X_{e(W)}\cap V_i|\mod 2$. Notice that we have at most $2^{tk}$ choices for $B^{e(W)}$, because $|T|\leq k$. For each guess, we now looking for a solution represented by a 
collection of cuts $\{(X_{e(W)},\bar{X}_{e(W)})\mid W\in T\}$ of $G$ such that $|X_{e(W)}\cap V_i| \mod 2 =b^{e(W)}_i$ for $W\in T$ and $i\in\{1,\ldots,t\}$.

Let  $I_W=\sum_{v\in X_{e(W)}}A_v$ and denote by $i^{W}_e$ for $e\in E(G)$ the elements of $I_W$. 
We have that 
\begin{equation}\label{eq:cv}
I_W=\sum_{v\in X_{e(W)}}(I_v(G)+P_v)=\sum_{v\in X_{e(W)}}I_v(G)+\sum_{v\in X_{e(W)}}P_v.
\end{equation}
Let $P_W=\sum_{v\in X_{e(W)}}P_v$.
Since  $|X_{e(W)}\cap V_i| \mod 2 =b^{e(W)}_i$ for $W\in T$ and $i\in\{1,\ldots,t\}$, we obtain that 
$P_W=\sum_{i=1}^tb_i^{e(W)} R_i$. Notice  that   vector $P_W$ is uniquely defined by the  choice of $B^{e(W)}$.
We define $f_{e(W)}\colon E(G)\rightarrow\{0,1\}$, by setting $f_{e(W)}(e)$ to be equal to the element of $P_W$ corresponding to $e$.

Recall that $I_W$ is the characteristic vector of the cocycle $F_W'$. It means that $A^e\in F_W'$ if and only if $i^W_e=1$. By  
  making use of \eqref{eq:cv}, we are able to show that for each edge $e\in E(G)$, $A^e\in F_W'$ if and only if one of the following holds:
\begin{itemize}
\setlength{\itemsep}{-2pt}
\item Both endpoints of $e$ belong to $X_{e(W)}$ and $f_{e(W)}(e)=1$.
\item Both endpoints of $e$ belong to $\bar{X}_{e(W)}$ and $f_{e(W)}(e)=1$.
\item Exactly one of the endpoints of $e$ belongs to $X_{e(W)}$ and $f_{e(W)}(e)=0$.
\end{itemize}

We have that $W\in F_W'$ and $W$ is the unique element of $T$ in this set. It means that for   edge $e(W)$, cut $(X_{e(W)},\bar{X}_{e(W)})$ almost fits $e(W)$ with respect to $f_{e(W)}$. Since 
$|X_{e(W)}\cap V_i| \mod 2 =b^{e(W)}_i$ for each $W\in T$ and $i\in\{1,\ldots,t\}$, we have that $(X_{e(W)},\bar{X}_{e(W)})$  fits $e(W)$ with respect to $f_{e(W)}$ and $B^{e(W)}$.
Moreover, we prove that  for  each $W\in T$ and $i\in\{1,\ldots,t\}$ the following are equivalent
\begin{itemize}
\setlength{\itemsep}{-2pt}
\item  $F_W\rq{}$ is a cocycle of $M$ such that $|X_{e(W)}\cap V_i| \mod 2 =b^{e(W)}_i$ and such that 
  its characteristic vector is expressible as  $I_W=\sum_{v\in X_{e(W)}}A_v$;
  \item  
Cut $(X_{e(W)},\bar{X}_{e(W)})$  fits $e(W)$ with respect to $f_{e(W)}$ and $B^{e(W)}$.
\end{itemize}
We also  have that 
$F_W\rq{}= \cont(e(W),X_{e(W)})$.

Now we can complete the reduction to \edc. We consider the partition $(V_1,\ldots,V_t)$ of $V(G)$ and the set of terminal edges $T_G=\{e(W)\mid W\in T\}$. For each $W\in T$, we have a binary vector $B^{e(W)}$ and  a function $f_{e(W)}$. Together, all these parameters compose an instance of \edc.

\smallskip
\noindent
{\bf Solving \edc.} The algorithm for \edc is the most technical part of the paper. Here we  briefly highlight the approach.  On a high-level, we use the method of recursive understanding \cite{DBLP:journals/siamcomp/ChitnisCHPP16}, in which we incorporate various new, delicate subroutines. Informally, this means that at the basis, we are going to deal with a ``highly-connected'' or a small graph, and at each step where our graph is not highly-connected, we will break it using a very small number of edges into two graphs that are both neither too small nor too large.

Let $G$ be a connected graph, and let $p$ and $q$ be positive integers. A partition $(X,Y)$ of $V(G)$ is called {\em $(q,p)$-good edge separation}  if
  $|X|,|Y|>q$,  $|E(X,Y)|\leq p$, and 
 $G[X]$ and $G[Y]$ are connected graphs.

Roughly speaking, a graph $G$ is unbreakable if every partition of $V(G)$ with few edges going across must contain a large chunk of $V(G)$ in one of its two sets. Intuitively, this means that $G$ is ``highly-connected'': any attempt to ``break'' it severely by using only few edges is futile. Formally,  
A graph $G$ is {\em $(q,p)$-unbreakable} if it does not have a {\em $(q,p)$-good edge separation}.

If a graph $G$ is not $(q,p)$-unbreakable, we say that it is $(q,p)$-breakable. Chitnis et al.~\cite{DBLP:journals/siamcomp/ChitnisCHPP16} proved the following result.

\begin{proposition}[\cite{DBLP:journals/siamcomp/ChitnisCHPP16}]\label{prop:goodSep-p}
There exists a deterministic algorithm that given a connected graph $G$ along with integers $q$ and $p$, in time $\OO(2^{\min\{q,p\}\cdot\log(q+p)}\cdot (n+m)^3\log(n+m))$ either finds a $(q,p)$-good edge separation, or correctly concludes that $G$ is $(q,p)$-unbreakable.
\end{proposition}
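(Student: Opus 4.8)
\textbf{Proof proposal.} Proposition~\ref{prop:goodSep-p} is, verbatim, a result of Chitnis et al.~\cite{DBLP:journals/siamcomp/ChitnisCHPP16}, so the most direct route is simply to invoke their statement; for completeness I would sketch the algorithm behind it and why the running time is attained. The plan is a \emph{randomized contraction} argument followed by a derandomization. First I would isolate the randomized core: design an algorithm that, whenever $G$ admits some $(q,p)$-good edge separation $(X,Y)$, outputs a (possibly different) $(q,p)$-good edge separation with probability at least $2^{-\OO(\min\{q,p\}\cdot\log(q+p))}$ and never outputs an invalid one (validity — sizes $>q$, at most $p$ crossing edges, both sides connected — is checkable in $(n+m)^{\OO(1)}$ time, giving one-sided error). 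Running this $2^{\OO(\min\{q,p\}\cdot\log(q+p))}\cdot\OO(\log n)$ times then yields a high-probability procedure: if no trial succeeds, declare $G$ to be $(q,p)$-unbreakable. Amplifying and then derandomizing is the last step.

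For the randomized core I would colour the edges of $G$ with two colours and inspect the connected components of the subgraph that keeps one colour class, the idea being that we do not need to recover the specific witness $(X,Y)$ but only \emph{some} good separation, so the components can be reassembled greedily into two sides as long as the totals exceed $q$ on both sides with at most $p$ edges crossing. The quantitative heart — the reason the exponent is $\min\{q,p\}\cdot\log(q+p)$ rather than something linear in $n$ — is that among all the random choices only $\OO(\min\{q,p\})$ are genuinely \emph{critical}: when $p\le q$ one reasons about the $\le p$ crossing edges of the witness cut, while when $q<p$ one instead reasons about a bounded-size ``core'' of the smaller witness side (a constant number of representative vertices/edges surviving the contraction of that side), so that a random $(q+p)$-way refinement isolates these critical objects with the stated probability. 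Enumerating a function from a set of size $\OO(\min\{q,p\})$ into a palette of size $q+p$ costs $(q+p)^{\OO(\min\{q,p\})}=2^{\OO(\min\{q,p\}\cdot\log(q+p))}$ attempts; each attempt is a connected-components computation plus bookkeeping, and the remaining polynomial factor $(n+m)^3\log(n+m)$ absorbs the amplification and the loop over which colour class (and which reassembly) to keep.

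The main obstacle — exactly where the real work of \cite{DBLP:journals/siamcomp/ChitnisCHPP16} lies — is this last point: proving that preserving connectivity of \emph{both} sides of the separation under the random colouring, simultaneously with isolating the $\le p$ crossing edges, can be forced while controlling only $\OO(\min\{q,p\})$ independent random choices, not $\Omega(n)$ of them. Once that structural reduction is in place the derandomization is routine: replace the random colourings by an $(n,\OO(\min\{q,p\}))$-universal set (equivalently a suitable splitter family) of size $2^{\OO(\min\{q,p\})}\cdot\mathrm{poly}(\log(q+p))\cdot\log n$ built via the Naor--Schulman--Srinivasan / Alon--Yuster--Zwick construction, which guarantees that some member of the family realizes the needed pattern on the critical coordinates; iterating deterministically over all members in time $\OO(2^{\min\{q,p\}\cdot\log(q+p)}\cdot(n+m)^3\log(n+m))$ either exhibits a $(q,p)$-good edge separation or certifies that none exists, i.e.\ that $G$ is $(q,p)$-unbreakable.
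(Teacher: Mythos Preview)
Your proposal is correct: the paper does not prove this proposition at all but simply imports it verbatim from Chitnis et al.~\cite{DBLP:journals/siamcomp/ChitnisCHPP16}, exactly as you note in your first sentence. Your additional sketch of the randomized-contraction idea and its derandomization via universal sets is a reasonable outline of the original argument and goes beyond what the paper itself provides.
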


In our case, we set $p=2(k+1)$ and $q=2^{2^{\lambda(t+k^2)|T|}}$ for some appropriate constant  $\lambda$. 
To apply the method of recursive understanding, we introduce a special variant of \edc called \aedc (see Section~\ref{section:dual} for the formal definition) that is tailored to apply recursion. We show that we can assume that the input graph $G$ is connected.  
If $G$ has bounded (by some function of $r$ and $k$) size, we solve \aedc directly.  Otherwise, we use Proposition~\ref{prop:goodSep-p} to check whether $G$ is $(q,p)$-unbreakable. 

If $G$ is not $(q,p)$-unbreakable, we find a $(q,p)$-good separation $(X,Y)$ of $G$. Then we solve a special  instance of \aedc for one of the graphs $G[X]$ and $G[Y]$ recursively. We use the obtained solution to construct a new instance of the problem for a graph $G'$ that has less vertices than $G$. Then we call our algorithm for this smaller instance.  

If $G$ is $(q,p)$-unbreakable, we obtain the crucial basic case that we briefly discuss here. For simplicity, we consider this case for \edc. 

Recall that in the definition of \edc, we ask about a set $F\subseteq E(G)\setminus T$ with $|F|\leq k$ such that for each $e\in T$, there exists a partition $(X_e,\bar{X}_e)$ of $V(G)$ that fits $e$ and such that $\cont(e,X_e)\setminus\{e\}\subseteq F$. We relax these conditions and look for a collection of partitions $\{(Y_e,\bar{Y}_e)\mid e\in T\}$ such that $(Y_e,\bar{Y}_e)$ almost fits $e$  and 
 $|\cont(e,Y_e)\setminus\{e\}|\leq k$ for $e\in T$. Then we can find such an auxiliary collection of partitions $\{(Y_e,\bar{Y}_e)\mid e\in T\}$ by reducing the relaxed problem to at most $k$ instances of  the {\sc Edge Odd Cycle Transversal} problem (also known as {\sc Edge Bipartization}). The latter problem could be solved by the results of  Guo et al.~\cite{GuoGHNW06}. 
Finally we  use  auxiliary  partitions  $\{(Y_e,\bar{Y}_e)\mid e\in T\}$ to construct the required 
 collection of partitions   
$\{(X_e,\bar{X}_e)\mid e\in T\}$ and a set $F$ of size at most $k$.  The final construction heavily exploit the  high connectivity of $G$  which allows to search only a ``small neighborhood'' of 
 $(Y_e,\bar{Y}_e)$.

\section{Preliminaries}\label{sec:prelims}

For standard graph and matroid-related terms that are not explicitly defined here, we refer to the books by Diestel \cite{Diestel} and Oxley \cite{oxley2006matroid}. Given two sets $A$ and $B$, we denote their symmetric difference by $A\bigtriangleup B$, that is, the set consisting of every element that is present in either $A$ or $B$, but not in both.

\paragraph{Graphs} Given a graph $G$, we let $V(G)$ and $E(G)$ denote its vertex set and edge set, respectively. Given a set of edges $E\subseteq E(G)$, we let $G[E]$ denote the subgraph of $G$  whose vertex-set consists of the endpoints of the edges in $E$ and whose edge-set is $E$. Moreover, $G-E$ denote the subgraph of $G$ obtained by deleting the edges in $E$ from $G$. Given a set of vertices $V\subseteq V(G)$, we let $G[V]$ denote the graph induced by $V$, that is, the graph whose vertex-set is $V$ and whose edge set consists of the edges in $E(G)$ with both endpoints in $V$. Given two vertices $u,v\in V(G)$, a {\em $(u,v)$-path} is a path in $G$ from $u$ to $v$.We say that two graphs, $G$ and $H$, are {\em isomorphic} is there exists a bijective function $f: V(G)\rightarrow V(H)$ such that $uv\in E(G)$ if and only if $f(u)f(v)\in E(H)$.

\paragraph{Vectors and Matrices.} Let $F$ be a set of vectors $V_1,V_2,\ldots,V_n$. We say that $F$ is {\em linearly dependent} over a field $\mathbb{F}$ if there exist $\lambda_1,\lambda_2,\ldots,\lambda_n\in\mathbb{F}$, not all equal $0$, such that $\sum_{i=1}^n\lambda_iV_i=0$. If $F$ is not linearly dependent, then it is called {\em linearly independent}. The {\em linear span of $F$} is the set of all finite linear combinations of the vectors in $F$, that is, $\spn(F)=\{\sum_{i=1}^n\lambda_i V_i: \lambda_1,\ldots,\lambda_n\in\mathbb{F}\}$. A {\em basis} of $\spn(F)$ is a subset $F'\subseteq \spn(F)$ that is linearly independent and such that $\spn(F')=\spn(F)$. The {\em rank} of a matrix is equal to the maximum number of linearly independent columns of the matrix (which is equal to the maximum number of linearly independent rows of the matrix). 

Let $G$ be a graph $G$. The {\em incidence matrix} of $G$, denoted by $I(G)$, is the binary matrix such that rows of $I(G)$ correspond to the vertices of $G$ and the columns of $I(G)$ correspond to the edges of $G$, and for all $v\in V(G)$ and $e\in E(G)$, the entry of $I(G)$ indexed by $v$ and $e$ is $1$ if $v$ is incident to $e$ and $0$ otherwise. When $G$ is clear from context, given any $(|V(G)|\times|E(G)|)$-matrix $P$, we index the elements of $P$ by the vertices and edges of $G$, and denote the elements by $p_{v,e}$ for $v\in V(G)$ and $e\in E(G)$. Moreover, we index the rows of $P$ by vertices and denote them by $P_v$ for $v\in V(G)$, and we index the columns of $P$ by edges and denote them by $P^e$ for $e\in E(G)$. The same notation is used also if the matrix is denoted by a different letter. For example, for a $(|V(G)|\times|E(G)|)$-matrix $A$, we denote the elements by $a_{v,e}$ for $v\in V(G)$ and $e\in E(G)$. Given a $(|V(G)|\times|E(G)|)$-matrix $A$ whose column set is denoted by $E$ and a subset $F\subseteq E$, we let $\edges(F)$ denote the set edges $e\in E(G)$ such that $A^e\in F$.

Given a $(|V(G)|\times|E(G)|)$-matrix $P$, we define $\disC(P)$ as the set of the distinct vectors that correspond to the columns in $\{P^e: e\in E(G)\}$. We stress that here, $\disC(P)$ is a set and not a multiset, and thus it contains each vector $W$ for which there exists $e\in E(G)$ such that $W=P^e$ exactly once (even if there exists more than one $e\in E(G)$ such that $V=P^e$). Note that generally, when we refer to a set $F$ of columns, it might contain several columns that correspond to the same vector (in other words, we treat columns as indexed vectors, and thus two columns that have different indices, although they might correspond to the same vector, are assumed to be distinct). Similarly, we define $\disR(P)$ as the set of the distinct vectors corresponding the {\em rows} in $\{P_v: v\in V(G)\}$.

\paragraph{Matroids} Let us begin by defining the notion of a matroid.

\begin{definition}\label{def:matroid}
A pair $M=(E,{\cal I})$, where $E$ is a ground set and $\cal I$ is a family of subsets (called {\em independent sets}) of $E$, is a {\em matroid} if it satisfies the following conditions, called {\em matroid axioms}:
\begin{enumerate}
\item[\rm (I1)]  $\phi \in \cal I$. 
\item[\rm (I2)]  If $A' \subseteq A $ and $A\in \cal I$ then $A' \in  \cal I$. 
\item[\rm (I3)] If $A, B  \in \cal I$  and $ |A| < |B| $, then there is $ e \in  (B \setminus A) $  such that $A\cup\{e\} \in \cal I$.
\end{enumerate}
\end{definition}
An inclusion-wise maximal set in $\cal I$ is a {\em basis} of the matroid $M$. Using axiom (I3) it is easy to show that all the bases of a matroid  are of  the same size. This size is called the {\em rank} of the matroid $M$, and it is denoted by $\rank(M)$. The {\em rank} of a subset $A\subseteq E$ is the maximum size of an independent set contained in $A$.
An inclusion-wise minimal non-independent set is called a \emph{circuit}. A \emph{cycle} is either the empty set or the union of pairwise disjoint circuits.

Let $A$ be a matrix over an arbitrary field $\mathbb F$, and let $E$ be the set of columns of $A$. We associate a matroid $M=(E,{\cal I})$ with $A$ as follows. A set $X \subseteq E$ is independent (that is, $X\in \cal I$) if the columns  in $X$ are linearly independent over $\mathbb F$. The matroids that can be defined by such a construction are called {\em linear matroids}, and if a matroid can be defined by a matrix $A$ over a field $\mathbb F$, then we say that the matroid is representable over $\mathbb F$. That is, a matroid $M=(E,{\cal I})$ of rank $d$ is representable over a field $\mathbb F$ if there exist vectors in $\mathbb{F}^d$  corresponding to the elements in $E$ such that  linearly independent sets of vectors  correspond to independent sets of the matroid. Then, a matroid $M=(E,{\cal I})$  is called {\em representable} or {\em linear} if it is representable over some field~$\mathbb F$. 
   
Given a graph $G$, a \emph{graphic matroid} $M=(E,{\cal I})$ is defined by setting the elements in $E$ to be the edges of $G$ (that is, $E=E(G)$), where $F\subseteq E(G)$ is in $\cal I$ if $G[F]$ is a forest.  The graphic matroid is representable over any field of size at least $2$. In particular,  consider the incidence matrix $I(G)$  of $G$ over $GF(2)$. This is precisely a representation of the graphic matroid of $G$ over $GF(2)$~\cite{oxley2006matroid}.

The {\em dual} of a matroid $M=(E,{\cal I})$ is the matroid $M^\star=(E,{\cal I}^\star)$ where a subset $A\subseteq E$ belongs to ${\cal I}^\star$ if and only if $M$ has a basis disjoint from $A$. Alternatively, the dual of a matroid $M$ is the matroid $M^\star$ whose basis sets are the complements of the basis sets of $M$. Note that duality is closed under involution, that is, $(M^{\star})^{\star}=M$. Bases, circuits and cycles of $M^\star$ are called \emph{cobases}, \emph{cocircuits} and \emph{cocycles} of $M$ respectively. 

To give the definition of a co-graphic matroid based on duality, let us consider a graph $G$. The {\em co-graphic matroid} corresponding to $G$, which is the dual of the graphic matroid corresponding to $G$, is defined as the matroid $M=(E,{\cal I})$ where $E=E(G)$ and ${\cal I} = \{F \subseteq E~\mid~G-F\ \mbox{is connected}\}$.

\paragraph{Parameterized Complexity.} A {\em parameterization} of a problem is the association of an integer $k$ with each input instance, which results in a {\em parameterized problem}. Let us first present the notion of {\em fixed-parameter tractability (\classFPT)}. Here, a parameterized problem $\Pi$ is said to be \classFPT{} if there is an algorithm that solves it in time $f(k)\cdot |I|^{\OO(1)}$, where $|I|$ is the size of the input and $f$ is a function that depends only on $k$. Such an algorithm is called a {\em parameterized algorithm}. In other words, the notion of \classFPT{} signifies that it is not necessary for the combinatorial explosion in the running time of an algorithm for $\Pi$ to depend on the input size, but it can be confined to the parameter $k$.
Parameterized Complexity also provides tools to refute the existence of parameterized algorithms for certain problems (under plausible complexity-theoretic assumptions), in which context the notion of \classW{1}-hard is a central one. It is widely believed that a problem that is \classW{1}-hard is unlikely to be \classFPT, and we refer the reader to the books \cite{DowneyFbook13,CyganFKLMPPS15} for more information. 

\section{FPT on Perturbed Graphic Matroids}
\label{section:graphic}
In this section, we prove Theorem~\ref{thmgraphic}. 
The proof  is done by a chain of statements, each adding constraints on the structures of the solutions we seek. Here, we give the technical details of the approach described in Section \ref{sec:overview}.

\subsection{Bounding $|T|$}

In case the columns in $T$ are not linearly independent, we let $T'$ denote a basis of $T$, and else we denote $T'=T$. We remove the columns in $T\setminus T'$ from $I(G)$ and $P$, and let $(G',P',T',k)$ denote the resulting instance. Clearly, $(G,P,T,k)$ is a yes-instance of \spcgraph{} if and only if $(G',P',T',k)$ is a yes-instance of \spcgraph{}. Moreover, given a set $X$ of size $t$ of linearly independent vectors, for some $t\in\mathbb{N}$, there does not exist any set $Y$ of vectors of size smaller than $t$ such that $X\subseteq\spanS(Y)$. Thus, in case $|T'|>k$, the input instance is a no-instance. Therefore, we have the following observation. 

\begin{observation}\label{obs:boundT}
Given an instance $(G,P,T,k)$ of \spcgraph{}, an equivalent instance of $(G',P',T',k)$ \spcgraph{} such that $|T'|\leq k$ can be computed in polynomial time.
\end{observation}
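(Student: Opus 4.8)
The plan is to first replace $T$ by a maximal linearly independent subset and then observe that the required bound either already holds or exposes a no-instance. Running Gaussian elimination over $GF(2)$ on the submatrix of $A$ formed by the columns in $T$, in polynomial time we extract a maximal linearly independent subset $T'\subseteq T$ (taking $T'=T$ if $T$ is already independent). Let $G'$ be obtained from $G$ by deleting the edges whose columns lie in $T\setminus T'$, let $P'$ be $P$ with those columns removed, and set $A'=I(G')+P'$; its column set is $E'=E\setminus(T\setminus T')$, so that $E'\setminus T'=E\setminus T$ and the two instances have exactly the same family of candidate solution sets.

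Next I would verify that $(G,P,T,k)$ and $(G',P',T',k)$ are equivalent. The binary matroid $M'$ represented by $A'$ is the restriction of $M$ to $E'$, hence for all $F,Z\subseteq E'$ the ranks of $F$ and of $F\cup Z$ agree in $M$ and in $M'$; it therefore suffices to show that a set $F\subseteq E\setminus T$ satisfies $T\subseteq\spanS(F)$ in $M$ if and only if $T'\subseteq\spanS(F)$ in $M$. The forward direction is immediate because $T'\subseteq T$. For the converse, $T'\subseteq\spanS(F)$ implies $\spanS(T')\subseteq\spanS(\spanS(F))=\spanS(F)$, and since $T'$ is a maximal linearly independent subset of $T$ (equivalently, a basis of $T$ in $M$) we have $T\subseteq\spanS(T')$; combining the two inclusions yields $T\subseteq\spanS(F)$. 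So the two instances are equivalent for every value of $k$.

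Finally, if $|T'|\le k$ we return $(G',P',T',k)$, which is equivalent to the input and satisfies the size bound. If instead $|T'|>k$, then since $T'$ is linearly independent no set $Y$ with $|Y|<|T'|$ can satisfy $T'\subseteq\spanS(Y)$; in particular no candidate solution $F$ of size at most $k<|T'|$ spans $T'$, so $(G',P',T',k)$---and hence the input---is a no-instance, which we simply report (or emit any fixed trivial no-instance). Each step---the single Gaussian elimination, the deletion of the columns, and the comparison of $|T'|$ with $k$---runs in polynomial time. There is no genuine obstacle here: the only point needing a word of care is that $\spanS(\cdot)$ denotes matroid closure, so one should note it is unaffected by deleting columns outside $T\cup F$, which is precisely the fact that restricting a matroid preserves the ranks of subsets of the restricted ground set.
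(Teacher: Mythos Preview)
Your proposal is correct and follows essentially the same approach as the paper: replace $T$ by a basis $T'$, delete the columns in $T\setminus T'$, and declare a no-instance when $|T'|>k$. You supply more detail than the paper does (which simply says the equivalence is clear), in particular the explicit justification that $E'\setminus T'=E\setminus T$ and that restricting the matroid leaves ranks of subsets of $E'$ unchanged, but the argument is the same.
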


From now on, we implicitly assume that $|T|\leq k$.

\subsection{Assigning Colors to Edges}

Note that we can assume w.l.o.g.~that the columns of $A$ outside $T$ are distinct. We remind that given $F\subseteq E$ (recall that $E$ is the set of columns of $A$), we let $\edges(F)$ denote the set of each edge $e\in E(G)$ such that $A^e\in F$.

We start by associating a color with each column of $P$. To this end, we need the following folklore result.

\begin{proposition}[folklore]\label{lem:rank}
Let $A$ be a binary matrix. Then, the number of distinct rows in $A$ is bounded by $2^{\rank(A)}$. Similarly, the number of distinct columns in $A$ is bounded by $2^{\rank(A)}$.
\end{proposition}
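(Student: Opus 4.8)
The plan is to work with the row space of $A$ over $GF(2)$ and exploit the fact that over a two-element field a span is a small finite set. First I would recall that $\rank(A)$, by definition in the preliminaries, equals the maximum number of linearly independent rows of $A$; so fix such a maximal collection of rows $R_1,\ldots,R_\rho$ with $\rho=\rank(A)$. By maximality, these rows form a basis of the row space $\spn(\{R_1,\ldots,R_\rho\})$, and every row of $A$ lies in this span: if some row $R$ were not in the span, then $\{R_1,\ldots,R_\rho,R\}$ would be linearly independent, contradicting the choice of $\rho$ as the maximum number of linearly independent rows.

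Next I would count the elements of this span. Since we are over $GF(2)$, every vector in $\spn(\{R_1,\ldots,R_\rho\})$ is of the form $\sum_{i=1}^{\rho}\lambda_i R_i$ with each $\lambda_i\in\{0,1\}$, and there are exactly $2^{\rho}$ choices of the tuple $(\lambda_1,\ldots,\lambda_\rho)$. Hence $|\spn(\{R_1,\ldots,R_\rho\})|\leq 2^{\rho}$, and since every row of $A$ is one of these vectors, the number of distinct rows of $A$ is at most $2^{\rho}=2^{\rank(A)}$. For the statement about columns, I would simply apply the already-established row bound to the transpose matrix $A^{\top}$, together with the standard fact (implicit in the definition of rank in the preliminaries) that $\rank(A^{\top})=\rank(A)$; the distinct columns of $A$ are precisely the distinct rows of $A^{\top}$, so their number is at most $2^{\rank(A^{\top})}=2^{\rank(A)}$.

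There is essentially no obstacle here: the result is folklore and the only point requiring the slightest care is making explicit that ``maximum number of linearly independent rows'' yields a spanning set for all rows, and that over $GF(2)$ the coefficient vector ranges over $\{0,1\}^{\rho}$ so the span has size exactly $2^{\rho}$. No color coding, pigeonhole over types, or matroid machinery is needed for this particular statement; it is a pure linear-algebra fact that will be used downstream to bound the number $t=|\disC(P)|$ (respectively $|\disR(P)|$) by $2^{r}$.
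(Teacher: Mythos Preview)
Your argument is correct and is exactly the standard proof of this folklore fact: every row lies in the row space, which over $GF(2)$ has at most $2^{\rank(A)}$ elements, and the column statement follows by transposition. The paper itself gives no proof of this proposition---it is simply stated as folklore---so there is nothing further to compare.
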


Recall that $\disC(P)$ is defined as the {\em set} of the distinct vectors that correspond to the columns in $\{P^e: e\in E(G)\}$.
Let us denote $|\disC(P)|=t$. By Proposition \ref{lem:rank}, it holds that $t\leq 2^r$. Moreover, denote $\disC(P)=\{C^1,C^2,\ldots,C^t\}$. Accordingly, we say that an edge $e\in E(G)$ is of {\em type} $i$, $1\leq i\leq t$, if $P^e=C^i$ (as vectors). Given an edge $e\in E(G)$, we let $\type(e)$ denote its type. Given a set of edges $E'\subseteq E(G)$, we denote $\type(E',i)=|\{e\in E': \type(e)=i\}|\mod 2$.

\subsection{Backbone}

We remark that here, we use the term {\em solution} to refer to any set $F\subseteq E\setminus T$ with $|F|\leq k$ such that $T\subseteq \spn(F)$ in the binary matroid $M$ represented by $A$. We proceed by identifying a small graph that we can guess, and which will guide us how to find a solution as it will form its ``backbone''. We remind that given a set of edges $E'\subseteq E(G)$, $G[E']$ denotes the graph whose vertex-set contains all endpoints of the edges in $E'$ and whose edge-set is $E'$.

\begin{lemma}\label{lem:fewCyc}
If the input instance is a yes-instance, then there exists a solution $F$ such that $G[E_F]$, where $E_F=\edges(F)$, has at most $2^t$ cycles.\footnote{In the context of graphs, we use the term cycle to refer to a simple cycle, where two parallel edges are also considered to be a cycle.}
\end{lemma}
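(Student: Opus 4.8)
The plan is to take an inclusion-wise minimal solution $F$ and show that the graph $G[E_F]$ already has few cycles, so that no modification of $F$ is needed. The key observation, already sketched in the overview, is that minimality of $F$ forces $F$ to be an independent set of columns of $A$, and this independence is incompatible with $G[E_F]$ having too many cycles. So first I would invoke Observation~\ref{obs:boundT} to assume $|T|\le k$, fix a solution witnessing the yes-instance, and pass to an inclusion-wise minimal subset $F$ of it that still spans $T$; minimality guarantees $F\in\mathcal I$ in the matroid $M$ represented by $A$, i.e.\ the columns $\{A^e : e\in E_F\}$ are linearly independent over $\mathrm{GF}(2)$.

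Next I would argue by contradiction: suppose $G[E_F]$ has at least $2^t+1$ distinct (simple) cycles $Z_1,\dots,Z_{2^t+1}$, each viewed as its edge set. To each cycle $Z_j$ associate the vector $\bigl(\type(Z_j,1),\dots,\type(Z_j,t)\bigr)\in\mathrm{GF}(2)^t$ recording the parity of the number of edges of each type in $Z_j$. There are only $2^t$ possible such vectors, so by pigeonhole two distinct cycles $Z_a,Z_b$ receive the same vector. Set $C=Z_a\bigtriangleup Z_b$, the symmetric difference of their edge sets. Then for every type $i$ we have $\type(C,i)=\type(Z_a,i)+\type(Z_b,i)=0$ in $\mathrm{GF}(2)$, so $\sum_{e\in C}P^e=\mathbf 0$ (each $P^e$ equals $C^{\type(e)}$, and each $C^i$ appears an even number of times). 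On the other hand, both $Z_a$ and $Z_b$ are cycles of $G$, hence $\sum_{e\in Z_a}I^e(G)=\sum_{e\in Z_b}I^e(G)=\mathbf 0$ (every vertex has even degree in a cycle), so $\sum_{e\in C}I^e(G)=\mathbf 0$ as well. Adding, $\sum_{e\in C}A^e=\sum_{e\in C}\bigl(I^e(G)+P^e\bigr)=\mathbf 0$. Since $Z_a\ne Z_b$, the set $C$ is nonempty, so this is a nontrivial linear dependence among the columns $\{A^e : e\in C\}\subseteq\{A^e : e\in E_F\}$, contradicting the independence of $F$. (Here I should double-check the harmless point that distinct edges of the multigraph give distinct columns of $A$ outside $T$, which we already assumed, so that $C\ne\emptyset$ really yields a genuine dependence; parallel edges forming a ``digon'' are among the cycles counted, which is fine.)

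The main obstacle, such as it is, is being careful about the bookkeeping: precisely which set we take the symmetric difference of (edge sets, not vertex sets), why $C\neq\emptyset$ implies a real linear dependence in the matroid sense (distinctness of columns, treating columns as indexed so that $C$ nonempty is enough), and the fact that ``cycle'' here means simple cycle including two parallel edges. None of these is deep, but the statement hinges on the clean identity $\sum_{e\in C}A^e=\mathbf 0$, so I would make sure the modular-2 arithmetic and the definition of $\type(E',i)$ line up exactly. Once the contradiction is reached, we conclude $G[E_F]$ has at most $2^t$ cycles, and since $F$ was obtained from an arbitrary witnessing solution by deleting elements, $F$ is itself a solution with $|F|\le k$, which proves the lemma.
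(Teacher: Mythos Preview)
Your proposal is correct and matches the argument sketched in the paper's overview (Section~2.1): take an inclusion-wise minimal solution, observe that minimality forces the columns $\{A^e:e\in E_F\}$ to be linearly independent, then use pigeonhole on type vectors of cycles to find two cycles whose symmetric difference $C$ satisfies $\sum_{e\in C}A^e=0$, contradicting independence. The paper's formal proof in Section~4.3 takes a slightly more roundabout route---rather than invoking independence of $F$ directly, it picks a specific edge $e\in E(C)\setminus E(C')$, notes that $F\setminus\{A^e\}$ fails to span some terminal $Q$, and then uses the cycle relations to exhibit $Q$ in $\spn(F\setminus\{A^e\})$ after all---but this is just an unwound version of the same linear-algebra fact, and your cleaner formulation is entirely fine.
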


\begin{proof}
Suppose that the input instance is a yes-instance, and let $F$ be a {\em minimal} solution (i.e., no proper subset of $F$ is a solution). Suppose, by way of contradiction, that $G[E_F]$ contains more than $2^t$ cycles. By the pigeonhole principle, $G[E_F]$ contains two distinct cycles, say $C$ and $C'$, such that for all $1\leq i\leq t$, it holds that $\type(E(C),i)=\type(E(C'),i)$. Without loss of generality, we choose some edge $e\in E(C)\setminus E(C')$ (arbitrarily), and denote $W=A^e$. Note that $F\setminus\{W\}$ is not a solution, else we contradict the minimality of~$F$. 

Since $F\setminus\{W\}$ is not a solution, there exists a terminal $Q\in T$ such that $Q\notin\spanS(F\setminus\{W\})$. Thus, since $Q\in\spanS(F)$, there exists a subset $F'\subseteq F$ such that $W\in F'$ and $\sum_{W'\in F'}W'=Q\mod 2$. Let $E_C$  and $E_{C'}$ denote the set of columns of $A$ indexed by the edges in $E(C)$ and $E(C')$, respectively. Since $C$ and $C'$ are cycles, it holds that $\sum_{e\in E(C)}A^e=\sum_{e\in E(C)}P^e\mod 2$, and $\sum_{e\in E(C')}A^e=\sum_{e\in E(C')}P^e\mod 2$. Moreover, since for all $1\leq i\leq t$, it holds that $\type(E(C),i)=\type(E(C'),i)$, we have that $\sum_{e\in E(C)}P^e=\sum_{e\in E(C')}P^e\mod 2$. Therefore, $\sum_{W'\in E_C}W'=\sum_{W'\in E_{C'}}W'\mod 2$. We deduce that
\[\sum_{W'\in F'}W' + \sum_{W'\in E_C}W' + \sum_{W'\in E_{C'}}W' =Q\mod 2.\]
Let us denote $\widehat{E}=(E_C\setminus E_{C'})\cup (E_{C'}\setminus E_C)$. Note that $W\in\widehat{E}$.
We get that $\sum_{W'\in F'}W' + \sum_{W'\in \widehat{E}}W'=Q\mod 2$. Therefore, $Q\in\spanS(\widehat{F})$ where $\widehat{F} = (F'\setminus\widehat{E})\cup (\widehat{E}\setminus F')$. Note that $\widehat{F}\subseteq F$. Moreover, since $W\in F'\cap\widehat{E}$, we have that $W\notin \widehat{F}$. Thus, $Q\in\spanS(F\setminus\{W\})$, which is a contradiction.
\end{proof}

Let $\cal H$ denote the set of all non-isomorphic graphs with at most $k$ edges, at most $2^t$ cycles, and no isolated vertices. Note that each graph in $\cal H$ is not complicated in the following sense:

\begin{observation}\label{obs:decompH}
Given $H\in{\cal H}$, let $S$ be a subforest on $V(H)$ of maximum number of edges (spanning forest) of $H$. Then, $V(H)=V(S)$ and $|E(H)\setminus E(S)|\leq 2^t$.
\end{observation}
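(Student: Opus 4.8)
\textbf{Proof plan for Observation~\ref{obs:decompH}.}
The statement asserts two things about an arbitrary $H\in\mathcal H$: first, that a spanning forest $S$ of $H$ (a subforest with the maximum possible number of edges) actually spans all of $V(H)$; and second, that only at most $2^t$ edges of $H$ lie outside $S$. The first claim is a standard graph-theoretic fact and the second follows almost immediately from the membership condition defining $\mathcal H$ together with the cycle-space (cycle-rank) interpretation of the extra edges.

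For the first part I would argue as follows. Let $S$ be a subforest of $H$ with the maximum number of edges among all subforests of $H$. Since $H$ has no isolated vertices (by the definition of $\mathcal H$), every vertex $v\in V(H)$ is incident to at least one edge $e$ of $H$. If $v\notin V(S)$, then in particular neither endpoint-side of $e$ lies in $V(S)$ on the $v$-side, so adding $e$ to $S$ cannot create a cycle in $S$ (a cycle through $e$ would force $v$ to already have degree at least one in $S$), hence $S\cup\{e\}$ is still a forest with strictly more edges, contradicting maximality of $S$. (More carefully: if $v$ is incident to $e=vu$ and $v\notin V(S)$, then $S+e$ has no cycle through $e$ because any such cycle would need a second edge at $v$ in $S$; so $S+e$ is a forest, contradiction.) Therefore every vertex of $H$ lies in $V(S)$, i.e.\ $V(H)=V(S)$.

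For the second part, consider the edge set $E(H)\setminus E(S)$. Since $H$ has at most $k$ edges and at most $2^t$ cycles by definition of $\mathcal H$, and since $S$ is a spanning forest of $H$, the number $|E(H)\setminus E(S)|$ equals the first Betti number (cyclomatic number) of $H$ on each connected component, namely $|E(H)|-|V(H)|+c$ where $c$ is the number of connected components of $H$. The key point is that this quantity is bounded by the number of (simple) cycles of $H$: indeed, adding each edge of $E(H)\setminus E(S)$ back to $S$ one at a time creates, at the moment it is added, at least one new simple cycle not present before (the fundamental cycle of that edge with respect to $S$), and these fundamental cycles are pairwise distinct since each contains an edge of $E(H)\setminus E(S)$ not contained in the others. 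Hence $|E(H)\setminus E(S)|$ is at most the total number of simple cycles of $H$, which is at most $2^t$ by the definition of $\mathcal H$. This gives $|E(H)\setminus E(S)|\leq 2^t$, completing the proof.

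The main (and really the only) subtlety is making the counting argument in the second part precise: one must be careful that ``number of simple cycles'' genuinely upper-bounds the cyclomatic number, which is true because the $|E(H)|-|V(H)|+c$ fundamental cycles with respect to any spanning forest are distinct simple cycles. Everything else is routine, so I expect no real obstacle; the observation is essentially a bookkeeping consequence of the constraints built into $\mathcal H$ (at most $k$ edges, at most $2^t$ cycles, no isolated vertices) and is used later only to justify guessing the $\leq 2^t$ ``non-forest'' edges separately, as outlined in Section~\ref{sec:overview}.
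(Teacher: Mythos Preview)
Your proof is correct. The paper states this as an observation without proof, and your argument supplies exactly the routine details one would expect: the no-isolated-vertices condition forces a maximum subforest to span, and the fundamental-cycle correspondence shows that the number of non-forest edges is bounded by the number of simple cycles, hence by $2^t$.
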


Note that $|{\cal H}|=2^{\OO(k)}$. Indeed, there are at most $2^{\OO(k)}$ distinct (up to isomorphism) forests on at most $k$ vertices, and by Observation \ref{obs:decompH}, each graph in ${\cal H}$ can be obtained by adding at most $2^t$ to one of these forests (so overall there are $2^{\OO(k)}$ options to examine). Therefore, we may explicitly examine each graph in $\cal H$.

\begin{definition}
Given $H\in {\cal H}$, we say that a solution $F$ is {\em compatible with $H$} if $G[E_F]$ is isomorphic to $H$ where $E_F=\edges(F)$.
\end{definition}

In light of Lemma \ref{lem:fewCyc}, to prove Theorem \ref{thmgraphic}, it is sufficient to prove the following lemma, on which we will focus next.

\begin{lemma}\label{lem:sufficeH}
Given $H\in {\cal H}$, it is possible to decide in time $k^{\OO(k)}\cdot (n+m)^{\OO(1)}$ whether there exists a solution that is compatible with $H$. 
\end{lemma}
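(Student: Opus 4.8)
The strategy is to transform the problem ``is there a solution compatible with $H$?'' into a bounded number of instances of \pattern, by explicitly guessing all the combinatorial data that, by the structural analysis sketched in Section~\ref{sec:overview}, a compatible solution induces. Fix $H\in{\cal H}$. A solution $F$ compatible with $H$ corresponds to an injective homomorphism $g: V(H)\cup E(H)\to V(G)\cup E(G)$ with $F=\{A^e : e\in g(E(H))\}$. The plan is to first guess the type function $\ell_H: E(H)\to\{1,\ldots,t\}$, recording for each edge $e\in E(H)$ what the type of $g(e)$ should be; there are at most $t^k=k^{\OO(k)}$ such guesses, and we set $\ell_G(e')=\type(e')$ for $e'\in E(G)$. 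Requirement (i) of \pattern then exactly enforces that $g$ respects types.

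\textbf{Reducing the span condition to bijections.} The heart of the argument is to translate the algebraic requirement ``$W\in\spanS(F)$ for every $W\in T$'' into combinatorial constraints that \pattern can express. For each terminal $W\in T$ we guess the subset $E_W\subseteq E(H)$ with $g(E_W)=g^{-1}$-image of the edges whose columns sum to $W$ (at most $2^k$ choices per terminal, hence $2^{k^2}$ total since $|T|\le k$). Given $E_W$ and $\ell_H$, the vector $W'=W+\sum_{e\in E_W}C^{\ell_H(e)}$ is determined, hence so is the set $V_W=\{v\in V(G): W'_v=1\}$. The key claim, whose proof I would give in detail, is that \eqref{eq:spnWpr} holds precisely when $g$ restricts to a bijection between $V_W$ and the odd-degree vertices of $H[E_W]$: indeed, $\sum_{e\in g(E_W)}I^e(G)$ is the characteristic vector of the odd-degree vertices of $G[g(E_W)]$, and since $g$ is an injective homomorphism these are exactly $g(U_W)$ where $U_W$ is the odd-degree vertex set of $H[E_W]$. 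So I would compute $U_W$, discard the guess unless $|U_W|=|V_W|$, set $U=\bigcup_{W\in T} U_W$ and $V'=\bigcup_{W\in T} V_W$, and then guess an injection $f: U\to V'$ agreeing with a bijection $U_W\to V_W$ for each $W$; since $|U|\le 2k$ there are at most $(2k)^{2k}=k^{\OO(k)}$ such guesses. Requirement (ii) of \pattern with this $f$ forces $g$ to match $f$ on $U$, which by the claim is equivalent to all span conditions.

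\textbf{Making $H$ a forest.} The remaining discrepancy is that $H\in{\cal H}$ need not be a forest, whereas \pattern demands a forest. Using Observation~\ref{obs:decompH}, fix $S\subseteq E(H)$ with $|S|\le 2^t$ and $H-S$ a forest. For a compatible solution, $g$ maps the (at most $2\cdot 2^t$) endpoints of $S$ to some vertices of $G$ and each $e\in S$ to an edge of $G$ of the right type joining the images of its endpoints; since $t$ is a constant depending only on $r$, there are at most $(n+m)^{\OO(1)}$ such partial maps $h$ for $V(H[S])\cup S$. For each $h$, augment $U\mathrel{+}=V(H[S])$ and extend $f$ by $h$ on $V(H[S])$, and replace $H$ by the forest $H-S$ (note $h$ on the edges $S$ need not be recorded in the \pattern instance, since deleting those edges and instead pinning their endpoints via $f$ already captures the constraint that they could be realized — one must check the image vertices chosen for $V(H[S])$ are distinct and admit type-correct edges realizing $S$, which is a finite check done while enumerating $h$). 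This yields an instance $(G,\ell_G,t,k,H-S,\ell_H|_{E(H-S)},U,f)$ of \pattern.

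\textbf{Correctness and the main obstacle.} I would then prove the equivalence: there is a solution compatible with $H$ iff for some choice of $\ell_H$, of $(E_W)_{W\in T}$, of $f$, and of $h$, the resulting \pattern instance is a yes-instance. The forward direction extracts $g$ from the solution and reads off all the guesses; the backward direction takes an injective homomorphism $g$ from a yes-instance of \pattern, re-attaches the edges $S$ via $h$ (using that $f$ pinned their endpoints correctly) to get an injective homomorphism of $H$, and then verifies via the bijection characterization that $F=\{A^e: e\in g(E(H)\cup S)\}$ is a solution with $|F|\le k$. The total number of guesses is $k^{\OO(k)}\cdot(n+m)^{\OO(1)}$, and each resulting \pattern instance is solved in time $k^{\OO(k)}\cdot(n+m)^{\OO(1)}$ by the algorithm of the second part, giving the claimed bound. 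The main obstacle is the characterization claim linking \eqref{eq:spnWpr} to a bijection between $V_W$ and the odd-degree vertices of $H[E_W]$: one must carefully handle that $g$ is defined on $V(H)\cup E(H)$ with multigraph edges, that injectivity of $g$ is what lets the $I^e(G)$-sum collapse to $g$ applied to odd-degree vertices of $H[E_W]$ rather than to some larger incidence pattern in $G$, and that after deleting $S$ and pinning $V(H[S])$ via $f$ nothing about the span conditions is lost — these conditions involve $E_W$ which may intersect $S$, so the bookkeeping between ``$H$ before deleting $S$'' and ``the forest in the \pattern instance'' has to be done with care.
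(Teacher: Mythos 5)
Your overall strategy and the key characterization (the algebraic condition $W\in\spanS(F)$ is equivalent to $g$ restricting to a bijection between $V_W$ and the odd-degree vertices of $H[E_W]$) are correct and match the paper's. However, there is a concrete gap in the running-time accounting. You explicitly guess the entire tuple $(E_W)_{W\in T}$, noting this costs ``$2^{k^2}$ total since $|T|\le k$'', and then assert that ``the total number of guesses is $k^{\OO(k)}\cdot(n+m)^{\OO(1)}$.'' But $2^{k^2}$ is $2^{\Theta(k^2)}$, which strictly dominates $k^{\OO(k)}=2^{\OO(k\log k)}$, so the bound you claim does not follow from your enumeration. You are effectively reproducing the simplified reduction from Section~\ref{sec:overview}, which the paper itself flags as yielding only $2^{\OO(k^2)}\cdot(n+m)^{\OO(1)}$ and postpones the tighter argument to Section~\ref{section:graphic}.

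The missing idea that closes this gap is that $V_W$ does not depend on the choice of $E_W$ itself, only on its \emph{parity vector} $(\type(E_W,1),\dots,\type(E_W,t))\in\widehat{B}$. Indeed $W'=W+\sum_{e\in E_W}C^{\ell_H(e)}=W+\sum_{i=1}^t \type(E_W,i)\cdot C^i$, so $W'$ (and hence $V_W=h^*(W)$) is determined by $h(W)\in\widehat{B}$ alone. Therefore one should only guess a \emph{parity restriction} $h:T\to\widehat{B}$, which costs $(2^t)^{|T|}=2^{\OO(k)}$ since $t\le 2^r$ is a constant, and then guess a target vertex set $D\subseteq V(H)$ together with an injection $f^*:D\to V^*=\bigcup_W h^*(W)\cup\image(f)$ (cost $4^k\cdot(2k)^{2k}=k^{\OO(k)}$). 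Whether some family $(E_W)_{W\in T}$ of edge subsets exists that realizes these guesses is then a \emph{verification} problem: for each $W\in T$ separately, one asks whether there \emph{exists} an $E_W\subseteq E(H)$ with the correct type-parities whose odd-degree set $O_W$ satisfies $O_W\subseteq D$ and $f^*(O_W)=h^*(W)$. This is checked by scanning all $2^k$ subsets of $E(H)$ once for each of the $\le k$ terminals, so the check costs $k\cdot 2^k\cdot(n+m)^{\OO(1)}$ per guess rather than incurring a $2^{k|T|}$ product. With this re-organization the total guess count stays within $k^{\OO(k)}\cdot(n+m)^{\OO(1)}$, and the remainder of your argument (the type labeling $\ell_H$, pinning the at most $2^t$ edges $S$ and their endpoints via $(n+m)^{\OO(1)}$ partial maps, and passing to the forest $H-S$ as an instance of \pattern) then goes through as you describe.
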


From now on, we fix $H\in{\cal H}$. We also let $\widehat{H}$ denote a spanning forest of $H$.

\subsection{Enriching the Backbone}

Denote $\widetilde{E}=E(H)\setminus E(\widehat{H})$, and recall that $|\widetilde{E}|\leq 2^t$.
We start by guessing how we should map the edges in $\widetilde{E}$ to edges in $E(G)$. Let $\widetilde{V}$ be the set of each vertex in $E(H)$ that is an endpoint of at least one edge in $\widetilde{E}$. Let ${\cal F}$ be the set of all injective functions $f: \widetilde{V}\rightarrow V(G)$. Given a function $f\in{\cal F}$, let ${\cal F}_E$ denote the set of injective functions $f_E: \widetilde{E}\rightarrow E(G)\setminus\edges(T)$ that assign each edge $\{v,u\}\in\widetilde{E}$ an edge whose endpoints are $f(v)$ and $f(u)$ (since $G$ may contain parallel edges, ${\cal F}_E$ may contain more than one function). Now, we also label the edges in $E(\widehat{H})$. For this purpose, let ${\cal L}_E$ denote the set of all functions $\ell: E(\widehat{H})\rightarrow \{1,2,\ldots,t\}$. Accordingly, we define a notion of compatibility.

\begin{definition}
Given functions $f\in{\cal F}$, $f_E\in{\cal F}_E$ for the suitable set ${\cal F}_E$ and $\ell\in {\cal L}_E$, we say that a solution $F$ is {\em compatible with $(H,f,f_E,\ell)$} if there is an isomorphism $g: V(H)\cup E(H)\rightarrow V(G[E_F])\cup E_F$ between $H$ and $G[E_F]$,\footnote{Since we handle {\em multi} graphs, we define the domain and image of $g$ to include edge-sets.} where $E_F=\edges(F)$, such that the following conditions are satisfied.
\begin{itemize}
\item For all $e\in \widetilde{E}$, it holds that $g(e)=f_E(e)$; for all $v\in \widetilde{V}$, it holds that $g(v)=f(v)$.
\item For all $e\in E(\widehat{H})$, it holds that $\type(f_E(e))=\ell(e)$.
\end{itemize}
\end{definition}

Note that $|{\cal F}|,|{\cal F}_E|\leq (n+m)^{2\cdot 2^t} = (n+m)^{\OO(1)}$, $|{\cal L}|\leq t^k = 2^{\OO(k)}$, and for each solution $F$ compatible with $H$, there exists a triple $(H,f,f_E,\ell)$ with whom $F$ is compatible. Recall that we have argued that to prove Theorem \ref{thmgraphic}, it is sufficient to prove Lemma \ref{lem:sufficeH}. By examining every triple of a function in ${\cal F}$, a function in the suitable set ${\cal F}_E$ and a function in ${\cal L}$, we obtain that in order to prove Lemma \ref{lem:sufficeH}, it is actually sufficient to prove the following lemma, on which we focus next.

\begin{lemma}\label{lem:sufficeTriple}
Given $H\in {\cal H}$, $f\in{\cal F}$, $f_E\in{\cal F}_E$ for the suitable set ${\cal F}_E$ and $\ell\in {\cal L}$, it is possible to decide in time $k^{\OO(k)}\cdot (n+m)^{\OO(1)}$ whether there exists a solution that is compatible with $(H,f,f_E,\ell)$.
\end{lemma}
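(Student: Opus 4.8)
\textbf{Proof plan for Lemma~\ref{lem:sufficeTriple}.}

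The plan is to reduce, via the final ``backbone'' analysis of Section~\ref{sec:overview}, to an instance of \pattern, and then invoke the algorithm for \pattern. First I would set up the algebraic bookkeeping that was sketched in the overview. Fix a hypothetical solution $F$ compatible with $(H,f,f_E,\ell)$ together with the witnessing isomorphism $g: V(H)\cup E(H)\rightarrow V(G[E_F])\cup E_F$. For each terminal $W\in T$ we know $W\in\spanS(F)$, so there is $F_W\subseteq F$ with $W=\sum_{e\in F_W}A^e$; pulling this back through $g$ gives a set $E_W=g^{-1}(\edges(F_W))\subseteq E(H)$. Using $A^e=I^e(G)+P^e$ and the fact that $P^e=C^{\type(e)}=C^{\ell_H(e)}$ depends only on the (already-guessed) label of the edge, rewrite the equation as
\begin{equation*}
W'_{E_W}:=W+\sum_{e\in E_W}C^{\ell_H(e)}=\sum_{e\in g(E_W)}I^e(G).
\end{equation*}
The right-hand side is, over $GF(2)$, the characteristic vector (indexed by $V(G)$) of the set of vertices of odd degree in the subgraph $G[g(E_W)]$, which by the isomorphism $g$ is exactly $g$ applied to the odd-degree vertices of $H[E_W]$. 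Hence the algebraic span condition for $W$ is equivalent to: $g$ maps the odd-degree vertices $U_W$ of $H[E_W]$ bijectively onto the fixed vertex set $V_W:=\{v\in V(G): (W'_{E_W})_v=1\}$, which is determined once $W$ and $E_W$ (and $\ell_H$) are chosen. I would state and prove this equivalence as the crucial sublemma (this is the ``most important part'' referred to in the overview); the forward direction is the computation above, and the reverse direction observes that if $g$ realizes this bijection then summing $I^e(G)$ over $g(E_W)$ yields exactly the characteristic vector of $V_W$, which is $W'_{E_W}$, so $W=\sum_{e\in F_W}A^e$.

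With the sublemma in hand, the algorithm is a guessing procedure: for each $W\in T$ enumerate all subsets $E_W\subseteq E(H)$ (there are at most $2^k$ per terminal, so $2^{k|T|}\le 2^{k^2}$ in total); compute $U_W$ and $V_W$; discard the guess unless $|U_W|=|V_W|$; set $U=\bigcup_{W\in T}U_W$ and enumerate all bijections $f'$ from $U$ to $U':=\bigcup_{W\in T}V_W$ that restrict to a bijection $U_W\to V_W$ for every $W$ (at most $(2k)^{2k}$ of these, since $|U|\le 2k$). Also fold in the already-fixed values of $f$ and $f_E$ on $\widetilde V$ and $\widetilde E$: set $U\gets U\cup\widetilde V\cup V(H[\widetilde E])$ and extend $f'$ by $f$ there, and discard if any of these partial assignments conflict. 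Now $H':=\widehat H$ is a forest on $k$ vertices, $\ell_H:=\ell$ (extended to all of $E(H)=E(\widehat H)$ by the already-chosen labels, and note $\ell_G=\type$), $U\subseteq V(H')$, and $f':U\to V(G)$ is injective; these constitute an instance of \pattern with the constant $t\le 2^r$. One checks that a solution compatible with $(H,f,f_E,\ell)$ exists if and only if for at least one of these guesses the corresponding \pattern instance is a yes-instance: an injective homomorphism $g$ of \pattern is precisely an isomorphism of $\widehat H$ onto an edge-labeled subgraph of $G$ respecting the prescribed images, and combined with $f_E$ it reconstitutes a full $g$ on $H$, while the label constraint $\ell_H(e)=\ell_G(g(e))$ says $\type(g(e))=\ell(e)$, making $F:=\{A^e: e\in g(E(H))\}$ a solution by the sublemma. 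The number of guesses is $2^{\OO(k^2)}\cdot(n+m)^{\OO(1)}$, and each resulting \pattern instance is solved in time $k^{\OO(k)}\cdot(n+m)^{\OO(1)}$, giving the claimed bound.

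The main obstacle is the crucial sublemma — establishing that the purely algebraic requirement ``$W\in\spanS(F)$ for all $W$'' is equivalent to the combinatorial requirement that $g$ act as a prescribed family of bijections between odd-degree vertex sets of $H[E_W]$ and the fixed sets $V_W$. The forward direction requires being careful that $E_W$ is well-defined (it is the $g$-preimage of $\edges(F_W)$, and we are quantifying over all valid choices of $F_W$), and the backward direction requires checking that the bijection condition, together with the fixed labels $\ell_H$, forces $\sum_{e\in F_W}A^e=W$ exactly — in particular that the $I(G)$-part contributes precisely the characteristic vector of $V_W$ and the $P$-part contributes precisely $\sum_{e\in E_W}C^{\ell_H(e)}$, so that they sum to $W$. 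A secondary but routine concern is bounding the number of guesses and verifying that multi-edges (handled by letting $g$ act on edge sets and by the set ${\cal F}_E$) do not cause the homomorphism count or the \pattern reduction to blow up; these are accounted for by the $(n+m)^{\OO(1)}$ factors since $|\widetilde E|\le 2^t$ is a constant.
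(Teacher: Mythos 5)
Your key sublemma --- that the algebraic condition $W\in\spanS(F)$ is equivalent to the witness isomorphism $g$ acting as a bijection between the odd-degree vertex set $U_W$ of $H[E_W]$ and the fixed set $V_W$ --- is precisely the paper's crucial step (Lemma~\ref{lem:existDf*} and the excellent-subgraph lemmata encapsulate the same observation), and the plan of reducing to \pattern{} is correct. But your enumeration does not achieve the bound the lemma claims: you enumerate the tuple $(E_W)_{W\in T}$ at the top level, which gives $2^{k|T|}\le 2^{k^2}$ branches, and $2^{k^2}$ is \emph{not} $k^{\OO(k)}$ (the exponent $k^2$ eventually dominates $ck\log k$ for every constant $c$). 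This is exactly the $2^{\OO(k^2)}$ reduction that the overview in Section~\ref{sec:overview} presents as a simplification to be tightened later, so your closing assertion that this ``gives the claimed bound'' is incorrect. The paper reaches $k^{\OO(k)}$ by decoupling the two pieces of information that $E_W$ actually supplies: it first enumerates a \emph{parity restriction} $h$ (only $(2^t)^{|T|}=2^{\OO(k)}$ choices, since $t\le 2^r$ is a constant), from which $V_W=h^*(W)$ and $V^*$ are determined; then it enumerates a single pair $(D,f^*)$ --- a set $D\subseteq V(H)$ and one injective map $D\to V^*$ --- of which there are at most $4^k\cdot(2k)^{2k}=k^{\OO(k)}$. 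Once $(h,D,f^*)$ are fixed, the required odd-degree set $O_W=(f^*)^{-1}(h^*(W))$ is forced, and the existence of a suitable $E_W$ becomes a \emph{per-terminal} existential check (Definition~\ref{def:interestingW}) carried out inside the algorithm at total cost $2^{\OO(k)}$, rather than a joint top-level guess across all terminals. To repair your write-up you would need to observe that all the algorithm needs from $E_W$ is its type-parity vector (constant-sized) and its odd-degree set, and guess those rather than $E_W$ itself.

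A secondary defect: you instantiate \pattern{} on $G$ itself, whereas the paper works in $G'=G\setminus(\image(f^*_E)\cup\edges(T))$. Without excluding $\image(f_E)$, the injective homomorphism on $\widehat{H}$ returned by \pattern{} may map an edge of $\widehat{H}$ onto some $f_E(e')$ --- this happens precisely when $H$ has parallel edges between a pair of vertices of $\widetilde{V}$, one in $\widehat{H}$ and one in $\widetilde{E}$ --- so ``reconstituting a full $g$ on $H$'' by combining with $f_E$ need not be injective on $E(H)$. Without excluding $\edges(T)$, the output $F=\{A^e: e\in g(E(H))\}$ may contain terminal columns, violating $F\subseteq E\setminus T$. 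Both exclusions are needed for the reverse direction of your reduction to go through.
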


From now on, we fix $f\in{\cal F}$, $f_E\in{\cal F}_E$ for the suitable set ${\cal F}_E$ and $\ell\in {\cal L}$.

\subsection{Parities}

The ``guesses'' performed so far are insufficient for us to be able to detect a solution $F$. We need to define another set of functions, which captures information relevant to each terminal in $T$. This information is necessary since we need it to validate that each terminal indeed belongs to the span of the columns we are about to select. For this purpose, define $\widehat{B}=\{B=(b_1,b_2,\ldots,b_t): b_1,b_2,\ldots,b_t\in \{0,1\}\}$. That is $\widehat{B}$ is the set of binary vectors with $t$ elements. The size of this set is $2^t$. Moreover, we call a function $h: T\rightarrow \widehat{B}$ a {\em parity restriction}. That is, a parity restriction maps each terminal $W\in T$ to a vector $B_W\in \widehat{B}$. Note that there exist only $(2^t)^{|T|}=2^{\OO(k)}$ parity restrictions. To exploit  a parity restriction $h$, we need the following definitions.

\begin{definition}
Given a terminal $W\in T$ and a vector $B\in \widehat{B}$, we say that a solution $F$ is {\em compatible with $(W,B)$} if there exists a subset $X\subseteq F$ such that $W+\sum_{W'\in X}W'=0\mod 2$ and for all $1\leq i\leq t$, it holds that $\type(\edges(X),i)=b_i\mod 2$.
\end{definition}

\begin{definition}
Given a parity restriction $h$, we say that a solution $F$ is {\em compatible with $h$} if for all $W\in T$, it holds that $F$ is compatible with $(W,h(W))$.
\end{definition}

Note that for each solution, there exists a parity restriction with whom it is compatible. Indeed, consider some solution $F$. Then, for each $W\in T$, the fact that $W\in\spanS(F)$ implies that there exists $X_W\subseteq F$ such that $W+\sum_{W'\in X_W}W'=0\mod 2$. Thus, for each $W\in T$, we set $h(W)$ to be the vector $(b_1,b_2,\ldots,b_t)\in\widehat{B}$ such that $\type(\edges(X_W),i)=b_i\mod 2$. Now, note that there exist only $(2^t)^{|T|}=2^{\OO(k)}$ parity restrictions. Recall that we have already argued that to prove Theorem \ref{thmgraphic}, it is sufficient to prove Lemma \ref{lem:sufficeTriple}. Thus, we have that it is sufficient to show that given a parity restriction $h$, we can decide in time $k^{\OO(k)}\cdot (n+m)^{\OO(1)}$ whether there exists a solution that is compatible with both $(H,f,f_E\ell)$ and $h$. Therefore, from now on, we fix a parity restriction $h$. However, we are interested in imposing additional restrictions, after which we summarize (in a lemma) the objective of the rest of this section. For this purpose, we first need to introduce additional notation.

Given a vector $B\in\widehat{B}$, we let $\sumS(B)$ denote the binary vector with $n$ elements whose $i^{\mathrm{th}}$ element is $(\sum_{j=1}^tb_i\cdot c_{i,j})\mod 2$ where $c_{i,j}$ is the $j^{\mathrm{th}}$ element of $C_i$. Moreover, we index $\sumS(B)$ by the vertices in $V(G)$ (in the manner compatible with the one in which we index the rows of $P$). Let $\widehat{h}$ be the function whose domain is $T$, and for each $W\in T$, it is defined by $\widehat{h}(W)=\sumS(h(W))+W\mod 2$. Now, let $h^*$ be the function whose domain is $T$, and for each $W\in T$, it assigns the set of the vertices that are indices of the elements of $\widehat{h}(W)$ that equal 1.

Denote $V^*=(\bigcup_{W\in T}h^*(W))\cup \image(f)$. Moreover, denote ${\cal D}=\{D\subseteq V(H): \widetilde{V}\subseteq D, |D|=|V^*|\}$. Note that in case $|V^*|>|V(H)|$, it holds that ${\cal D}=\emptyset$. Given $D\in{\cal D}$, let ${\cal F}^*_D$ be the set of injective functions $f^*: D\rightarrow V^*$ such that for all $v\in \widetilde{V}$, it holds that ${f^*}(v)=f(v)$. Let $E^*_D$ denote the set of edges in $E(H)$ with both endpoints in $D$.
Given a function $f^*\in{\cal F}^*_D$, let $f^*_{E}: E^*_D\rightarrow E(G)\cup\{\nil\}$ denote the injective function that assigns each edge
$e\in E^*$, with endpoints $v$ and $u$, the edge between $f^*(v)$ and $f^*(u)$ in $E(G)$ of type $\ell(e)$ if one exists
and $\nil$ otherwise (since the columns of $A$ are distinct, there can be at most one such edge between $f^*(v)$ and $f^*(u)$ in $E(G)$ of type $\ell(e)$, and thus the function is well defined). We say that $f^*_E$ is {\em feasible} if $\image(f^*_{E})\subseteq E(G)\setminus\edges(T)$ and for all $e\in\widetilde{E}$, it holds that $f^*_E(e)=f_E(e)$.

Roughly speaking, the two following definitions capture the notion of compatibility which is associated with the notation we have just presented. Here, vertices of odd degree play a central role, which will be cleared at the proof of Lemma \ref{lem:existDf*} below. We remark that here, a type of an edge $e\in E(H)$ is defined as follows: if $e\in E(\widehat{H})$, then the type is $\ell(e)$, and otherwise the type is the same as the one of $f_E(e)$. Accordingly, we may use the notation $\type(E',i)=b_i$ also for a set $E'\subseteq E(H)$.

\begin{definition}\label{def:interestingW}
Given $W\in T$, $D\in{\cal D}$, $f^*\in{\cal F}^*_D$, we say that $(D,f^*)$ is {\em interesting with respect to $W$} if there exists $E_W\subseteq E(H)$ such that the following conditions are satisfied.
\begin{itemize}
\item Let $O_W$ be the set of vertices of odd degree in $H[E_W]$. Then, $O_W\subseteq\domain(f^*)$ and $f^*(O_W)=h^*(W)$.
\item Denote $h(W)=(b_1,b_2,\ldots,b_t)$. Then, for all $1\leq i\leq t$, it holds that $\type(E_W,i)=b_i$.
\end{itemize}
\end{definition}

\begin{definition}\label{def:interestingPair}
Given $D\in{\cal D}$ and $f^*\in{\cal F}^*_D$, we say that $(D,f^*)$ is {\em interesting} if $f^*_E$ is feasible and for all $W\in T$, it holds that $(D,f^*)$ is interesting with respect to $W$.
\end{definition}

Next, we give one definition which captures the notion of compatibility that we need, and subsumes previous such definitions.

\begin{definition}
Given $D\in{\cal D}$, $f^*\in{\cal F}^*_D$, we say that a solution $F$ is {\em compatible with $(H,\ell,D,f^*)$} if there is an isomorphism $g: V(H)\cup E(H)\rightarrow E(G[E_F])\cup E_F$ where $E_F=\edges(F)$ between $H$ and $G[E_F]$, such that the following conditions are satisfied.
\begin{itemize}
\item  For all $e\in E^*_D$, it holds that $g(e)=f^*_E(e)$; for all $v\in D$, it holds that $g(v)=f^*(v)$
\item For all $e\in E(\widehat{H})$, it holds that $\type(g(e))=\ell(e)$.
\end{itemize}
\end{definition}

Informally, we are now ready to prove that by using this notion of compatibility and also further restricting ourselves by Definition \ref{def:interestingPair}, we do not discard all solutions from the set of those which we currently seek (that is, the set of solutions compatible with both $(H,f,f_E,\ell)$ and $h$).

\begin{lemma}\label{lem:existDf*}
If there exists a solution $F$ that is compatible with both $(H,f,f_E,\ell)$ and $h$, then there exist $D\in{\cal D}$ and $f^*\in{\cal F}^*_D$ such that $(D,f^*)$ is interesting and $F$ is compatible with $(H,\ell,D,f^*)$.
\end{lemma}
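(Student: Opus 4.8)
The statement is essentially a bookkeeping lemma: it asserts that the "interesting" restriction (Definition~\ref{def:interestingPair}) together with the notion of compatibility-with-$(H,\ell,D,f^*)$ is liberal enough to retain at least one member of the currently-sought solution set. So the plan is to start with a solution $F$ compatible with both $(H,f,f_E,\ell)$ and $h$, read off the isomorphism $g$ witnessing compatibility with $(H,f,f_E,\ell)$, and use it to \emph{define} the candidate pair $(D,f^*)$, then verify that this pair is interesting and that $g$ itself (restricted/relabelled appropriately) witnesses compatibility with $(H,\ell,D,f^*)$.

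First I would fix $g: V(H)\cup E(H)\to V(G[E_F])\cup E_F$ as guaranteed by compatibility with $(H,f,f_E,\ell)$. I set $D = g^{-1}(V^*)\cap V(H)$; since $\image(f)\subseteq V^*$ and $g(\widetilde{V})=f(\widetilde{V})=\image(f)$, we get $\widetilde{V}\subseteq D$, and I must check $|D|=|V^*|$, i.e. that $V^*\subseteq\image(g)=V(G[E_F])$. This is the first point where I need to \emph{use} that $F$ is compatible with $h$: for each $W\in T$ compatibility with $(W,h(W))$ gives $X_W\subseteq F$ with $W+\sum_{W'\in X_W}W'=0\bmod 2$ and $\type(\edges(X_W),i)=b_i$, where $h(W)=(b_1,\dots,b_t)$. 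Pulling this back through $g$, set $E_W=g^{-1}(\edges(X_W))\subseteq E(H)$. Writing out $\sum_{W'\in X_W}W' = \sum_{e\in\edges(X_W)}(I^e(G)+P^e)$ and using $\sum_{e\in\edges(X_W)}P^e=\sumS(h(W))$ (this is exactly why $\sumS$ was defined, and it uses $\type(\edges(X_W),i)=b_i$), the equation $W+\sum=0$ rearranges to $\sum_{e\in\edges(X_W)}I^e(G) = W+\sumS(h(W)) = \widehat{h}(W)$. The left side is, coordinate-wise, the parity of the degree in $G[\edges(X_W)]$, so the set of coordinates equal to $1$ is precisely the set of odd-degree vertices of $G[\edges(X_W)]$, which $g$ maps bijectively from the odd-degree vertices of $H[E_W]$. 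Hence $h^*(W)$ (the support of $\widehat{h}(W)$) equals $g(O_W)$ where $O_W$ is the odd-degree set of $H[E_W]$; in particular $h^*(W)\subseteq\image(g)$. Since $\image(f)\subseteq\image(g)$ as well, $V^*=\bigcup_W h^*(W)\cup\image(f)\subseteq\image(g)$, giving $|D|=|V^*|$, so $D\in{\cal D}$.

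Next I define $f^*:D\to V^*$ as the restriction of $g$ to $D$ (which lands in $V^*$ by construction and is injective since $g$ is); it satisfies $f^*(v)=g(v)=f(v)$ for $v\in\widetilde{V}$, so $f^*\in{\cal F}^*_D$. Now I verify $(D,f^*)$ is interesting. For interestingness-with-respect-to-$W$: take the $E_W$ defined above. Its odd-degree set $O_W$ satisfies $g(O_W)=h^*(W)$; I need $O_W\subseteq D=\domain(f^*)$, which holds because $g(O_W)=h^*(W)\subseteq V^*$ and $D=g^{-1}(V^*)$, and then $f^*(O_W)=g(O_W)=h^*(W)$ as required. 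The type condition $\type(E_W,i)=b_i$ transfers directly from $\type(\edges(X_W),i)=b_i$ once I note that $g$ preserves edge types: for $e\in E(\widehat{H})$, $\type(g(e))=\ell(e)$ by compatibility and this is by definition the type of $e$; for $e\in\widetilde{E}$, $g(e)=f_E(e)$ and the type of $e$ was defined to be $\type(f_E(e))$. For feasibility of $f^*_E$: I must check $f^*_E(e)=g(e)$ for every $e\in E^*_D$, so that $\image(f^*_E)\subseteq E_F\subseteq E(G)\setminus\edges(T)$ and $f^*_E(e)=f_E(e)$ for $e\in\widetilde E$. This is where uniqueness of the edge of a given type between two prescribed vertices (columns of $A$ outside $T$ are distinct) is used: $g(e)$ is an edge between $f^*(v)=g(v)$ and $f^*(u)=g(u)$ of type $\ell(e)$, and $f^*_E(e)$ is \emph{defined} as the unique such edge, hence they coincide (in particular $f^*_E(e)\neq\nil$). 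Finally, compatibility of $F$ with $(H,\ell,D,f^*)$ is witnessed by the very same $g$: the type condition on $E(\widehat{H})$ is inherited, and the two new conditions $g(e)=f^*_E(e)$ on $E^*_D$ and $g(v)=f^*(v)$ on $D$ hold by the definitions of $f^*$ and $f^*_E$ just checked.

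\textbf{Main obstacle.} The routine parts are the definitions of $D,f^*$ and the transfer of type conditions through $g$. The genuine content — and the step I expect to take the most care — is the "degree-parity" computation showing $g$ carries the odd-degree vertices of $H[E_W]$ onto the support $h^*(W)$ of $\widehat h(W)$: one must unwind $\spanS$-membership into the identity $\sum_{e\in\edges(X_W)}I^e(G)=\widehat h(W)$, correctly separate the $I(G)$ and $P$ contributions using the parity bookkeeping of $\type(\edges(X_W),i)=b_i$ and the definition of $\sumS$, and then read the binary incidence-sum as a degree parity. This is exactly "the crucial insight" flagged in the overview (that the algebraic span condition becomes a bijection between $V_W$ and the odd-degree vertices of $H[E_W]$), so I would present this computation carefully and treat the remaining verifications as direct consequences of the definitions.
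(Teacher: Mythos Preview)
Your proposal is correct and follows essentially the same approach as the paper's proof: define $D=g^{-1}(V^*)$ and $f^*=g|_D$, use the degree-parity computation $\sum_{e\in\edges(X_W)}I^e(G)=\widehat h(W)$ to show $h^*(W)=g(O_W)\subseteq\image(g)$ (hence $D\in\mathcal D$), pull back $E_W=g^{-1}(\edges(X_W))$ to witness interestingness, and let $g$ itself witness compatibility with $(H,\ell,D,f^*)$. Your treatment of the feasibility of $f^*_E$ via the uniqueness of an edge of a given type between two prescribed vertices is in fact more explicit than the paper's, which handles that step in a single sentence.
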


\begin{proof}
Let $F$ be a solution that is compatible with both $(H,f,f_E,\ell)$ and $h$. Consider some $W\in T$, and denote $h(W)=(b^W_1,b^W_2,\ldots,b^W_t)$. Since $F$ is compatible with $h$, there exists $X_W\subseteq F$ such that $W+\sum_{W'\in X_W}W'=0\mod 2$ and for all $1\leq i\leq t$, it holds that $\type(E'_W,i)=b^W_i\mod 2$ where $E'_W=\edges(X_W)$. In particular, this means that $W+\sum_{e\in E'_W}P^e=R^*_W$ where $R^*_W=\widehat{h}(W)$. Let $O'_W$ be the set of each vertex $v\in V(G)$ that is incident to an odd number of edges in $E'_W$.
Let $\widetilde{R}(W)$ be the binary vector with $n$ elements, whose elements are indexed by vertices in the manner compatible with $A$, such that for all $v\in V(G)$, it holds that $\widetilde{r}(W)_v=1$ if and only if $v\in O'_W$. Then, $W+\sum_{W'\in X_W}W' = W+\widetilde{R}(W)+\sum_{e\in E'_W}P^e=\widetilde{R}(W)+R^*_W = 0\mod 2$. This means that every vertex which is an index of an element that is 1 in $R^*_W$ belongs to $V(G[E_F])$. Therefore, $\bigcup_{W\in T}h^*(W)\subseteq V(G[E_F])$

Since $F$ is compatible with $(H,f,f_E,\ell)$, there is an isomorphism $g: V(H)\cup E(H)\rightarrow E(G[E_F])\cup E_F$ between $H$ and $G[E_F]$, where $E_F=\edges(F)$, such that the following conditions hold.
\begin{itemize}
\item For all $e\in \widetilde{E}$, it holds that $g(e)=f_E(e)$; for all $v\in \widetilde{V}$, it holds that $g(v)=f(v)$.
\item For all $e\in E(\widehat{H})$, it holds that $\type(g(e))=\ell(e)$.
\end{itemize}

Overall, we get that $V^*\subseteq V(G[E_F])$. We define $D=\{v\in V(H): g(v)\in V^*\}$. Then, $\widetilde{V}\subseteq D$ and $|D|=|V^*|$. We now define an injective function $f^*: D\rightarrow V^*$ as follows. For all $v\in D$, we set $f^*(v)=g(v)$. Since for all $v\in \widetilde{V}$, it holds that $g(v)=f(v)$, we get that for all $v\in \widetilde{V}$, it holds that ${f^*}(v)=f(v)$. Thus, $f^*\in{\cal F}^*_D$. Since $F\cap T=\emptyset$ and $g$ is an isomorphism such that for all $e\in E(\widehat{H})$, it holds that $\type(g(e))=\ell(e)$, we also have that $f^*_E$ is feasible. Therefore, $F$ is compatible with $(H,\ell,D,f^*)$. For each terminal $W\in T$, define $E_W=\{e\in E(H): g(e)\in E'_W\}$. Now, consider some $W\in T$. Observe that for $O_W$, the set of vertices of odd degree in $H[E_W]$, it holds that $O'_W=\{g(v): v\in O_W\}$. Recall that $\widetilde{R}(W)+R^*_W = 0\mod 2$. Therefore, it holds that the element indexed by $v$ in $\widetilde{R}(W)$ is 1 (which is equivalent to $v\in f^*(O_W)$) if and only if this element is also 1 in $R^*_W$ (which is equivalent to $v\in h^*(W)$). Denote $h(W)=(b_1,b_2,\ldots,b_t)$. Then, for all $1\leq i\leq t$, it holds that $\type(E_W,i)=b_i$. Denote $h(W)=(b_1,b_2,\ldots,b_t)$. Recall that $\type(E'_W,i)=b_i\mod 2$. Therefore, for all $1\leq i\leq t$, it holds that $\type(E_W,i)=b^W_i$. We thus also conclude that $(D,f^*)$ is interesting.
\end{proof}

Note that $|{\cal D}|\leq 2^{|V(H)|}\leq 4^k$ and for each $D\in {\cal D}$, it holds that $|{\cal F}_D|\leq |D|^{|D|}\leq (2k)^{2k}$. Moreover, given a pair $(D,f^*)$, we can test whether it is interesting in time $2^{\OO(k)}\cdot (n+m)^{\OO(1)}$ as follows. First, we can clearly test whether $f^*_E$ is feasible in polynomial time. Then, for each $W\in T$, we examine each subset of edges of $E(H)$ (there are at most $2^k$ such subsets), where for each subset we check in polynomial time whether the conditions in Definition \ref{def:interestingW} are satisfied (recall that the type of each edge in $E(H)$ is either given by $\ell$ or determined by $f^*_E$). Thus, by exhaustively checking every interesting pair $(D,f^*)$, we have that in order to prove Lemma \ref{lem:sufficeTriple}, 
it is sufficient to prove the following lemma, on which we focus next.

\begin{lemma}\label{lem:sumGuesses}
Given $D\in{\cal D}$ and $f^*\in{\cal F}^*_D$ such that $(D,f^*)$ is interesting, it is possible to decide in time $k^{\OO(k)}\cdot (n+m)^{\OO(1)}$ whether there exists a solution that is compatible with $(H,\ell,D,f^*)$.
\end{lemma}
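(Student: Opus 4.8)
The plan is to reduce, in polynomial time, the question of Lemma~\ref{lem:sumGuesses} to a single instance of \pattern and then invoke the algorithm solving \pattern in time $k^{\OO(k)}\cdot(n+m)^{\OO(1)}$. The point is that all earlier ``guesses'' have already pinned everything down except how the spanning forest $\widehat H$ of $H$ embeds into $G$: the images of the non-tree edges $\widetilde E$ are fixed by $f_E$; the images of the vertices in $D$, and hence---since the columns of $A$ outside $T$ are pairwise distinct---the images of all edges of $H$ with both endpoints in $D$, are fixed by $f^*$; and the type of every tree edge is fixed by $\ell$. Concretely, I would let $G'$ be obtained from $G$ by deleting the edges in $\edges(T)\cup f_E(\widetilde E)$, define $\ell_{G'}(e):=\type(e)$ for $e\in E(G')$, and output the \pattern instance with graph $G'$, constant $t\le 2^r$ (a fixed constant since $r$ is), label function $\ell_{G'}$, integer $k$, forest $\widehat H$ (which has at most $2k$ vertices and, by Observation~\ref{obs:decompH}, satisfies $V(\widehat H)=V(H)$), label function $\ell_H:=\ell$ on $E(\widehat H)$, set $U:=D$, and injection $f:=f^*$. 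The claim is that this instance is a \yesinstance{} if and only if some solution is compatible with $(H,\ell,D,f^*)$; Lemma~\ref{lem:sumGuesses} is then immediate.

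For the forward direction, if $F$ is a solution compatible with $(H,\ell,D,f^*)$ with witnessing isomorphism $g$, then $g$ restricted to $V(\widehat H)\cup E(\widehat H)$ is an injective homomorphism that fixes $D$ through $f^*$ and respects $\ell$ on $E(\widehat H)$; its edge image lies in $E(G')$ because $E_F\cap\edges(T)=\emptyset$ (as $F\cap T=\emptyset$ and non-terminal columns are distinct) and because $g(E(\widehat H))\cap f_E(\widetilde E)=g(E(\widehat H))\cap g(\widetilde E)=\emptyset$ by injectivity of $g$. Hence it is a solution of the \pattern instance.

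For the converse, let $g'$ be a solution of the \pattern instance. For each edge $e$ of $\widehat H$ with both endpoints in $D$, $g'$ fixes both endpoints (to $f^*$) and the type (to $\ell(e)$), and such an edge of $G'$ is unique and equals $f^*_E(e)$ (feasibility rules out the deleted edges), so $g'$ agrees with $f^*_E$ there. Extend $g'$ to all of $H$ by setting it to $f^*_E$ on $\widetilde E$; the two edge images are disjoint because we removed $f_E(\widetilde E)$ from $G'$, and $V(\widehat H)=V(H)$, so the extension $g$ is an injective homomorphism from $H$ into $G$, hence an isomorphism onto $G[E_F]$ with $E_F:=g(E(H))$, and $E_F\cap\edges(T)=\emptyset$. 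Setting $F:=\{A^e\mid e\in E_F\}$ we get $|F|\le|E(H)|\le k$, $F\cap T=\emptyset$, and $F$ is compatible with $(H,\ell,D,f^*)$ by construction. It remains to check that $F$ is genuinely a solution, i.e.\ $T\subseteq\spn(F)$: fix $W\in T$, take $E_W\subseteq E(H)$ as guaranteed by interestingness of $(D,f^*)$ with respect to $W$, and let $O_W$ be the set of odd-degree vertices of $H[E_W]$. Summing the columns of $A$ over $g(E_W)$ modulo $2$ and splitting $A=I(G)+P$, the incidence part is the indicator vector of the odd-degree vertices of $G[g(E_W)]$, which is $g(O_W)=f^*(O_W)=h^*(W)$, i.e.\ it equals $\widehat h(W)=\sumS(h(W))+W$; and the perturbation part equals $\sum_{e\in E_W}C^{\type(e)}=\sum_{i}\type(E_W,i)\cdot C^{i}=\sumS(h(W))$, using $\type(E_W,i)=b_i$ together with the fact that the type of $g(e)$ equals the ($H$-)type of $e$ for every $e\in E(H)$ (by compatibility on $E(\widehat H)$ and feasibility on $\widetilde E$). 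Adding the two parts gives $\sum_{e\in g(E_W)}A^e=W$, so $W\in\spn(F)$.

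I expect the main obstacle to be precisely this last verification---making the parity data of Definition~\ref{def:interestingW} (the target set $h^*(W)$ of odd-degree images and the type parities $b_i$) interlock with the incidence structure of $G[g(E_W)]$ so that exactly the terminal vector $W$ is recovered modulo $2$. The remaining subtlety is bookkeeping: \pattern insists on a \emph{forest} pattern, so one has to peel off the $\le 2^t$ non-tree edges $\widetilde E$ and reattach them without creating edge collisions---this is why $G'$ deletes $\edges(T)\cup f_E(\widetilde E)$ rather than working inside $G$---and one must repeatedly use the assumption that non-terminal columns of $A$ are distinct (equivalently, that no two parallel non-terminal edges share a type) to make the images of $E^*_D$-edges well defined.
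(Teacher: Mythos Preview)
Your proposal is correct and follows essentially the same route as the paper: reduce to a single \pattern instance by deleting terminal edges and the already-fixed non-tree edges, then verify $T\subseteq\spn(F)$ via the interestingness data exactly as you do. The only cosmetic difference is that the paper strips \emph{all} edges of $H$ with both endpoints in $D$ (and correspondingly deletes $\image(f^*_E)$ from $G$) before passing to \pattern, whereas you keep the tree edges in $E^*_D$ inside the pattern and argue their image is forced; both work, and the heart of the argument---the $I(G)$/$P$ split recovering $W$ from $\widehat h(W)+\sumS(h(W))$---is identical.
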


Thus, from now on, we fix $D\in{\cal D}$, $f^*\in{\cal F}^*_D$ such that $(D,f^*)$ is interesting.

\subsection{\pattern: Graph Problem}

We would next like to focus on a graph problem rather than our current problem (which is the case where we seek a solution compatible with $(H,\ell,D,f^*)$). For this purpose, we need the following notation and definition.

We denote $G'=G\setminus(\image(f^*_E)\cup\edges(T))$. Moreover, denote $H'=H\setminus E^*_D$.

\begin{definition}
Let $S$ be a subpgraph of $G'$. We say that $S$ is {\em excellent} if there exists an isomorphism $g: V(H')\cup E(H')\rightarrow V(S)\cup E(S)$  between $H'$ and $S$, such that the following conditions hold.
\begin{itemize}
\item For all $v\in D$, it holds that $g(v)=f^*(v)$.
\item For all $e\in E(H')$, it holds that $\type(g(e))=\ell(e)$.
\end{itemize}
\end{definition}

Note that since $H'$ is a forest, we have that an excellent subgraph is also a forest (which also implies that it is a simple graph).

\begin{lemma}
If there exists a solution $F$ that is compatible with $(H,\ell,D,f^*)$, then $G'$ has a subgraph $S$ that is excellent.
\end{lemma}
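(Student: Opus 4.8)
The plan is to show that a solution $F$ compatible with $(H,\ell,D,f^*)$ directly yields the desired excellent subgraph $S$ of $G'$, essentially by restricting the isomorphism $g$ that certifies compatibility to the sub-forest part of $H$. First I would invoke the definition of compatibility with $(H,\ell,D,f^*)$: there is an isomorphism $g\colon V(H)\cup E(H)\rightarrow V(G[E_F])\cup E_F$ between $H$ and $G[E_F]$, with $g(e)=f^*_E(e)$ for all $e\in E^*_D$, with $g(v)=f^*(v)$ for all $v\in D$, and with $\type(g(e))=\ell(e)$ for all $e\in E(\widehat H)$. Recall $H'=H\setminus E^*_D$ and $G'=G\setminus(\image(f^*_E)\cup\edges(T))$. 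The natural candidate is $S = G[E_F\setminus g(E^*_D)]$, i.e.\ the image under $g$ of $H'=H-E^*_D$, together with the restriction $g'=g\!\restriction_{V(H')\cup E(H')}$.

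Next I would verify the three things that need checking. (1) $S$ is a subgraph of $G'$. Since $F$ is a solution we have $F\cap T=\emptyset$, so $E_F$ is disjoint from $\edges(T)$; and $g(E^*_D)=f^*_E(E^*_D)=\image(f^*_E)$ (using that $f^*_E$ and $g$ agree on $E^*_D$), so the edges of $S$ are exactly $E_F\setminus\image(f^*_E)$, which is disjoint from both $\image(f^*_E)$ and $\edges(T)$; hence every edge of $S$ lies in $E(G')$. One subtlety: the vertex set of $G[E_F\setminus g(E^*_D)]$ consists of endpoints of those edges, which are vertices of $G$; we should note these vertices survive in $G'$ (deleting edges, not vertices, from $G$ when forming $G'$ — so all of $V(G)$ remains, hence no issue). (2) $g'$ is an isomorphism between $H'$ and $S$. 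Since $g$ is an isomorphism between $H$ and $G[E_F]$ and $H' = H - E^*_D$ while $S$ is the corresponding edge-deleted subgraph under $g$, the restriction is an isomorphism onto $G[E_F]-g(E^*_D)=S$; here one should be slightly careful about the vertex set of $H'=H\setminus E^*_D$ versus $V(H)$ — as defined, $H\setminus E^*_D$ keeps all vertices of $H$, so $V(H')=V(H)$ and likewise $V(S)=V(G[E_F])$, and $g'$ is just $g$ on all vertices together with $g$ restricted to $E(H')$. (3) The two bulleted conditions: for all $v\in D$, $g'(v)=g(v)=f^*(v)$, which is immediate since $D\subseteq V(H)=V(H')$; and for all $e\in E(H')$ we need $\type(g'(e))=\ell(e)$. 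Here I would split $E(H')=E(H)\setminus E^*_D$ into its intersection with $E(\widehat H)$ and with $\widetilde E=E(H)\setminus E(\widehat H)$. On $E(\widehat H)\cap E(H')$ the condition $\type(g(e))=\ell(e)$ is given directly by compatibility. On $\widetilde E$ we need to recall that the "type" of an edge $e\in\widetilde E$ was defined in the paper to be $\type(f_E(e))$, and that the function $\ell$ was extended accordingly; combined with $g(e)=f_E(e)$ for $e\in\widetilde E$ (part of compatibility with $(H,f,f_E,\ell)$, which is subsumed — though here we are using compatibility with $(H,\ell,D,f^*)$, so I would instead use $f^*_E(e)=f_E(e)$ for $e\in\widetilde E$, which holds because $\widetilde E\subseteq E^*_D$ and on $E^*_D$ we have $g(e)=f^*_E(e)$), this gives $\type(g(e))=\ell(e)$ there as well. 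So actually $\widetilde E\subseteq E^*_D$, hence $\widetilde E\cap E(H')=\emptyset$, which makes the type bookkeeping on $E(H')$ collapse to just the $E(\widehat H)$ case — that simplification is worth stating explicitly.

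The main obstacle, such as it is, is purely bookkeeping: making sure the multigraph conventions for $H\setminus E^*_D$ and $G[E_F]-g(E^*_D)$ preserve exactly the right vertex sets, that $\widetilde E$ is contained in $E^*_D$ (so that $H'$ is genuinely the "forest part" $\widehat H$ and the edge-type conditions reduce to those coming from $\ell$ on $E(\widehat H)$), and that the sets $\image(f^*_E)$, $\edges(T)$, and $E_F$ interact as claimed (in particular $g(E^*_D)=\image(f^*_E)$, using injectivity of $f^*_E$ and that $f^*_E$ is feasible hence $\image(f^*_E)\subseteq E(G)\setminus\edges(T)$). There is no real mathematical content beyond "restrict the isomorphism and delete the already-placed part of $H$"; the lemma is a packaging step. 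I would therefore write the proof as: set $S:=G[E_F\setminus\image(f^*_E)]$ and $g':=g\!\restriction$, then check the four bullet points above in a short paragraph, flagging the $\widetilde E\subseteq E^*_D$ observation so the type condition needs checking only on $E(\widehat H)\cap E(H')$, where it is inherited verbatim from the compatibility hypothesis.
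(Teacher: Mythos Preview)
Your proposal is correct and follows essentially the same approach as the paper: define $S$ by taking $V(S)=V(G[E_F])$ and $E(S)=E_F\setminus\image(f^*_E)$, then observe that the restriction of $g$ witnesses that $S$ is excellent. You are actually more careful than the paper about the bookkeeping (notably the observation that $\widetilde{E}\subseteq E^*_D$, so $E(H')\subseteq E(\widehat H)$ and the type condition on $E(H')$ is inherited directly), and you correctly flag the vertex-set subtlety that $V(S)$ should be all of $V(G[E_F])$ rather than merely the endpoints of the remaining edges.
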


\begin{proof}
Suppose that there exists a solution $F$ that is compatible with $(H,\ell,D,f^*)$. Then, there is an isomorphism $g: V(H)\cup E(H)\rightarrow V(G[E_F])\cup E_F$ between $H$ and $G[E_F]$, where $E_F=\edges(F)$, such that the following conditions are satisfied.
\begin{itemize}
\item For all $e\in E^*_D$, it holds that $g(e)=f^*_E(e)$; for all $v\in D$, it holds that $g(v)=f^*(v)$.
\item For all $e\in E(\widehat{H})$, it holds that $\type(g(e))=\ell(e)$.
\end{itemize}

We define a graph $S$ as follows. We set $V(S)=V(G[E_F])$ and $E(S)=E_F\setminus\image(f^*_E)$. Since $F\cap T=\emptyset$, we get that $S$ is a subgraph of $G'$. By the existence of the isomorphism $g$, we conclude that $S$ is excellent.
\end{proof}

\begin{lemma}
If $G'$ has a subgraph $S$ that is excellent, then there exists a solution $F$ that is compatible with $(H,\ell,D,f^*)$.
\end{lemma}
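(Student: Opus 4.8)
The plan is to reverse the construction in the previous lemma: given an excellent subgraph $S$ of $G'$, build a candidate solution $F$ by taking the columns of $A$ indexed by the edges of $S$ together with the columns indexed by the edges in $\image(f^*_E)$, and then verify that this $F$ is indeed a solution compatible with $(H,\ell,D,f^*)$.

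First I would set $E_F = E(S)\cup\image(f^*_E)$ and $F=\{A^e : e\in E_F\}$. Using the isomorphism $g_0:V(H')\cup E(H')\to V(S)\cup E(S)$ witnessing excellence, together with the fixed data $f^*$ on $D$ and $f^*_E$ on $E^*_D$, I would glue these into a single map $g:V(H)\cup E(H)\to V(G[E_F])\cup E_F$. The gluing is well defined because $g_0$ already agrees with $f^*$ on $D$ (by the first bullet of excellence), $f^*$ is injective on $D$, $f^*_E$ is injective on $E^*_D$ with image disjoint from $E(S)\subseteq E(G')$ (since $G'$ had $\image(f^*_E)$ deleted), and $H=H'\cup H[E^*_D]$ with $H'$ a forest whose non-$D$ vertices are fresh. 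I then need to check $g$ is an isomorphism between $H$ and $G[E_F]$: injectivity on vertices and edges follows from the injectivity of the pieces and the disjointness of the two edge-images; that it preserves incidences is immediate on $H'$ and on $H[E^*_D]$ separately, and the only cross-edges are edges of $H[E^*_D]$, which land on $f^*_E$-images whose endpoints are exactly the $f^*$-images of their endpoints. The type conditions hold by the second bullet of excellence on $E(H')$ and by the definition of $f^*_E$ (which picks edges of the prescribed type $\ell(e)$) on $E^*_D$; in particular $\type(g(e))=\ell(e)$ for all $e\in E(\widehat H)$, and $g(e)=f^*_E(e)$ for $e\in E^*_D$, $g(v)=f^*(v)$ for $v\in D$. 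So $F$ is compatible with $(H,\ell,D,f^*)$ in the sense of the relevant definition, modulo showing $F$ is actually a solution.

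The main obstacle is the last point: verifying that $F$ is a \emph{solution}, i.e. that $T\subseteq\spn(F)$ and $|F|\le k$. The size bound is easy since $|E(H)|\le k$ and $g$ is a bijection onto $E_F$, so $|F|\le |E_F|=|E(H)|\le k$, and $F\cap T=\emptyset$ because $E_F\subseteq E(G')\cup\image(f^*_E)$ avoids $\edges(T)$. The span condition is where the work lies: for each terminal $W\in T$ I must exhibit a subset $F_W\subseteq F$ with $W=\sum_{e\in \edges(F_W)}A^e$. Here I would invoke that $(D,f^*)$ is interesting with respect to $W$ (Definition~\ref{def:interestingW}): it gives a set $E_W\subseteq E(H)$ whose odd-degree vertex set $O_W$ satisfies $f^*(O_W)=h^*(W)$ and whose type profile matches $h(W)=(b_1,\dots,b_t)$. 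Setting $F_W=\{A^{g(e)} : e\in E_W\}$, the sum $\sum_{e\in E_W}A^{g(e)}$ splits as $\sum I^{g(e)}(G)+\sum P^{g(e)}$; the incidence part is the characteristic vector of the odd-degree vertices of $G[g(E_W)]$, which by the isomorphism and $f^*(O_W)=h^*(W)$ equals $\widehat h(W)=\sumS(h(W))+W$; the perturbation part, by the type-profile condition, equals $\sumS(h(W))$. Adding these modulo $2$ gives $W$, as required. This is precisely the "crucial insight" flagged in the overview — that the algebraic span condition reduces to the odd-degree bijection condition — and assembling it carefully (keeping track of which edges have type given by $\ell$ versus by $f_E$, and checking the vertex indexing of $\sumS$, $\widehat h$, $h^*$ lines up) will be the delicate part of the proof. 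Once that identity is in hand, $W\in\spn(F)$ for every $W\in T$, so $F$ is a solution compatible with $(H,\ell,D,f^*)$, completing the proof.
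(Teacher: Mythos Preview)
Your proposal is correct and follows essentially the same approach as the paper: extend the excellent subgraph by adding back the edges in $\image(f^*_E)$, glue the excellence isomorphism with $f^*$ and $f^*_E$ into a single isomorphism $g$ from $H$ onto $G[E_F]$, and then verify $T\subseteq\spn(F)$ by using the interestingness of $(D,f^*)$ to produce, for each $W\in T$, a set $E_W$ whose image under $g$ sums to $W$ via the odd-degree/type-profile identity. Your computation of the incidence and perturbation parts and their sum to $W$ is exactly the argument the paper gives, and your additional care in checking injectivity of the glued map and disjointness of the edge-images is, if anything, more explicit than the paper's own treatment.
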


\begin{proof}
Suppose that $G'$ has a subgraph $S'$ that is excellent. Then, there exists an isomorphism $g': V(H')\cup E(H')\rightarrow V(S')\cup E(S')$  between $H'$ and $S'$, such that the following conditions hold.
\begin{itemize}
\item For all $v\in D$, it holds that $g(v)=f^*(v)$.
\item For all $e\in E(H')$, it holds that $\type(g(e))=\ell(e)$.
\end{itemize}

Let us extend $S'$ to $S$ by adding the edges in $\image(f^*_E)$. Accordingly, by the definitions of $f^*$ and $f^*_E$, we also extend $g'$ to $g$. Then, since $f^*_E$ is feasible, $g$ is an isomorphism between $H$ and $S$. Let $E_F=\{g(e): e\in E(H)\}$. Now, define $F=\{A^e: e\in E_F\}$.  We claim that $F$ is a solution that is compatible with $(H,\ell,D,f^*)$. By the existence of the isomorphism $g$, we deduce that if $F$ is a solution, then it is compatible with $(H,\ell,D,f^*)$. Thus, we next focus on the proof that $F$ is a solution. 

Consider some terminal $W\in T$, and denote $h(W)=(b_1,b_2,\ldots,b_t)$. We need to prove that $W\in\spanS(F)$. Along the way, we also show that $F$ is compatible with $(W,h(W))$.
For this purpose, we show that there exists $X_W\subseteq F$ such that $W+\sum_{W'\in X_W}W'=0\mod 2$, and for all $1\leq i\leq t$, it holds that $\type(\edges(X_F),i)=b_i\mod 2$. Since $(D,f^*)$ is interesting with respect to $W$, there exists $E_W\subseteq E(H)$ such that the following conditions are satisfied.
\begin{itemize}
\item Let $O_W$ be the set of vertices of odd degree in $H[E_W]$. Then, $O_W\subseteq\domain(f^*)$ and $f^*(O_W)=h^*(W)$.
\item For all $1\leq i\leq t$, it holds that $\type(E_W,i)=b_i$.
\end{itemize}

We define $X_W=\{A^{g(e)}: e\in E_W\}$ and $O'_W=\{g(v): v\in O_W\}$. From the second item, we get that for all $1\leq i\leq t$, it holds that $\type(\edges(X_F),i)=b_i\mod 2$. It remains to show that $W+\sum_{W'\in X_W}W'=0\mod 2$.

Let $\widetilde{R}$ be the binary vector with $n$ elements, whose elements are indexed by vertices in the manner compatible with $A$, such that for all $v\in V(G)$, it holds that $\widetilde{r}_v=1$ if and only if $v\in O'_W$. Then, since $F$ is compatible with $(W,h(W))$, it holds that $W+\sum_{W'\in X_W}W' = W+\widetilde{R}+\sum_{w\in\edges(X_W)}P^e=\widetilde{R}+R^*$ where $R^*=\widehat{h}(W)$. Since $O'_W=f^*(O_W)=h^*(W)$, for any vertex $v\in V(G)$, the entry indexed by $v$ in $\widetilde{R}$ is 1 (which is equivalent to $v\in O_W$) if and only if this entry is also 1 in $R^*$ (which is equivalent to $v\in h^*(W)$). Thus, we conclude that $\widetilde{R}+R^*=0\mod 2$.
\end{proof}

In other words, to prove Lemma \ref{lem:sumGuesses}, it is now sufficient that we prove the following lemma, on which we focus next.

\begin{lemma}\label{lem:newsuffice}
It is possible to decide in time $2^{\OO(k)}\cdot (n+m)^{\OO(1)}$ whether $G'$ has an excellent subgraph.
\end{lemma}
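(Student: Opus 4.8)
The plan is to recognize that deciding whether $G'$ has an excellent subgraph is a \emph{constrained subgraph-isomorphism} problem for the forest $H'$, and to solve it by color coding together with a bottom-up dynamic program over $H'$. Write $q=|V(H')|$; since $H$ has at most $k$ edges and no isolated vertices we have $q=\OO(k)$. Unfolding the definition, an excellent subgraph is precisely the image of an injective map $g$ that embeds $H'$ into $G'$, pins each $v\in D$ to the prescribed vertex $f^*(v)$, and sends every edge $e\in E(H')$ to an edge of $G'$ of type $\ell(e)$. (Here $\ell(e)$ is defined because the edges of $\widetilde{E}$ have both endpoints in $D$, hence $\widetilde{E}\subseteq E^*_D$ and $E(H')\subseteq E(\widehat H)=\domain(\ell)$.) Moreover, since the columns of $A$ outside $T$ are pairwise distinct, between any two vertices of $G'$ there is at most one edge of a given type; so an embedding is already determined by the images of the vertices, and the type constraint becomes: if a tree edge $vw$ of $H'$ is mapped by $v\mapsto u$, then $w$ must be mapped to some vertex joined to $u$ by a type-$\ell(vw)$ edge of $G'$.

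Next I would apply color coding. Take a perfect hash family $\mathcal{C}$ of colorings $c\colon V(G')\to\{1,\dots,q\}$ of size $2^{\OO(q)}\log(n+m)=2^{\OO(k)}\log(n+m)$ such that for every $q$-element subset $Z\subseteq V(G')$ some $c\in\mathcal{C}$ is injective on $Z$; such a family exists by the near-optimal splitter construction underlying the derandomization of the color-coding method of Alon, Yuster and Zwick (it is exactly this near-optimal construction, rather than the elementary $k^{\OO(k)}$ bound, that is needed to keep the number of colorings at $2^{\OO(k)}$). If $G'$ has an excellent subgraph $S$, then $|V(S)|=q$, so some $c\in\mathcal{C}$ is injective on $V(S)$, and the corresponding embedding uses each color exactly once --- it is \emph{colorful}. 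Conversely, a colorful embedding of $H'$ respecting the pinning and type constraints is automatically injective, so its image is an excellent subgraph. It therefore suffices to test, for each $c\in\mathcal{C}$, whether a colorful constrained embedding of $H'$ exists.

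For a fixed coloring $c$, root each tree of $H'$ arbitrarily and run a bottom-up dynamic program. For a node $v$ of $H'$, a vertex $u\in V(G')$ and a set $C\subseteq\{1,\dots,q\}$, let $\Phi[v,u,C]$ be true iff the subtree rooted at $v$ admits a constrained embedding mapping $v\mapsto u$ and using exactly the colors of $C$ (in particular $u=f^*(v)$ if $v\in D$, and $c(u)\in C$). Single-vertex trees and leaves form the base case; at an internal node with children $w_1,\dots,w_d$ one processes the children one at a time, combining the current partial table for $v$ with the child table by a \emph{disjoint-union} OR-convolution over color sets, where descending into child $w_i$ from $v\mapsto u$ is allowed only through a type-$\ell(vw_i)$ edge of $G'$. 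Finally, $H'$ embeds colorfully iff one can choose, one per tree of $H'$, a root image and a color set on which the root table is true, so that these color sets are pairwise disjoint and their union is $\{1,\dots,q\}$; this is once more a disjoint-union OR-convolution, this time over the at most $q$ tables (one per tree). Each such convolution over subsets of a $q$-set costs $2^{\OO(q)}=2^{\OO(k)}$, the table $\Phi$ has $\OO(q)\cdot(n+m)\cdot 2^q$ entries, and one coloring is processed in $2^{\OO(k)}\cdot(n+m)^{\OO(1)}$ time; over all $2^{\OO(k)}\log(n+m)$ colorings the total is $2^{\OO(k)}\cdot(n+m)^{\OO(1)}$.

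The conceptual ingredients are entirely standard color coding, and I expect the only real points requiring care to be bookkeeping: (i) invoking the near-optimal splitter family so that the derandomization contributes only a $2^{\OO(k)}$ factor; (ii) threading the two constraint types --- pinning $D$-vertices via $f^*$ and the edge labels $\ell$ --- through the dynamic program; and (iii) handling that $H'$ is a \emph{forest}, so the $q$ colors must be budgeted disjointly both within each tree and across the trees, which is what the two levels of disjoint-union convolution take care of. Correctness in both directions is immediate from the colorful-embedding characterization in the second paragraph.
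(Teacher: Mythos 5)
Your proposal is correct and follows essentially the same route as the paper: recast the search for an excellent subgraph as a constrained (pinned, edge-typed) embedding of the forest $H'$ into $G'$, derandomize via the $2^{\OO(k)}\log n$ hash family of Alon, Yuster and Zwick, and then run a bottom-up dynamic program over the rooted trees of $H'$ indexed by color sets, with disjoint-union combination both across children and across trees (the paper's tables M and N). Your side observations --- that $\ell$ is defined on all of $E(H')$ and that distinct columns of $A$ force at most one edge of each type between any vertex pair --- match remarks the paper makes when setting up \pattern, so there is nothing to add.
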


We are now ready to translate our problem into the following graph problem.

\defproblemu{\pattern}%
{A (multi) graph $G$ with $n$ vertices and $m$ edges, a non-negative integer $t$ that is a fixed constant, a function $\ell_G: E(G)\rightarrow \{1,2,\ldots,t\}$, a non-negative integer $k$, a forest $H$ with $k$ vertices, a function $\ell_H: E(H)\rightarrow \{1,2,\ldots,t\}$, a set $U\subseteq V(H)$ and an injective function $f: U\rightarrow V(G)$.}%
{Are there a subgraph $S$ of $G$ and an isomorphism $g: V(H)\cup E(H)\rightarrow V(S)\cup E(S)$ that satisfies (i) for all $e\in E(H)$, it holds that $\ell_H(e)=\ell_G(g(e))$, and (ii) for all $v\in U$, it holds that $f(v)=g(v)$?}

Indeed, given an instance of our current problem where we need to determine whether $G'$ has an excellent subgraph, we can construct an equivalent instance of \pattern{}, $(\widetilde{G},\widetilde{t},\ell_G,\widetilde{k},\widetilde{H},\ell_H,U,\widetilde{f})$ as follows. We set $\widetilde{G}=G'$, $\widetilde{t}=t$, $\widetilde{k}=|V(H')|$, $\widetilde{H}=H'$, $\ell_H=\ell$, $U=D$ and $\widetilde{f}=f^*$. For all $e\in E(G)$, we set $\ell_G(e)=\type(e)$.
Thus, to prove Lemma~\ref{lem:newsuffice}, we can focus on the proof of the following result.

\begin{lemma}\label{lem:pattern}
\pattern{} is solvable in time $2^{\OO(k)}\cdot (n+m)^{\OO(1)}$.
\end{lemma}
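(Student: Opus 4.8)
The plan is to solve \pattern{} by the standard color-coding-plus-dynamic-programming recipe for subgraph isomorphism when the pattern is a forest, adapted to carry the extra label and anchoring constraints. Since $H$ has $k$ vertices, a candidate image $S$ spans at most $k$ vertices of $G$. First I would randomly color $V(G)$ with $k$ colors; with probability at least $k^{-k}$ (boostable to constant by $e^k$ independent repetitions, or derandomizable via an $(n,k)$-perfect hash family of size $2^{\OO(k)}\log n$) the $k$ vertices of a hypothetical solution receive pairwise distinct colors. It then suffices to search for a \emph{colorful} copy of $H$, i.e.\ an isomorphism $g$ whose image uses each color at most once, respecting $\ell_H(e)=\ell_G(g(e))$ and $g|_U=f$. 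The anchoring is handled by first recoloring: for each $v\in U$ give $f(v)$ a private color and forbid that color on all other vertices; a colorful solution then automatically maps $v\mapsto f(v)$ provided we also record, per component, which anchored vertices it must contain.

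Second, I would process $H$ component by component (it is a forest, so each component is a tree), rooting each tree arbitrarily. For a rooted tree $T_v$ (the subtree of $H$ hanging below vertex $v$) I maintain a DP table $D[v,x,\chi]$ which is true iff there is a color-injective homomorphism of $T_v$ into $G$ sending $v$ to $x\in V(G)$, using exactly the set of colors $\chi\subseteq\{1,\dots,k\}$ on $V(T_v)$, respecting all edge labels, and consistent with the anchoring inside $T_v$ (this last is a finite check since $T_v$ has $\le k$ vertices). The leaf case is immediate; for an internal vertex with children $c_1,\dots,c_d$ one combines the children's tables by the usual subset-convolution-style merge: iterate over children, and for the edge $\{v,c_i\}$ of label $a$, combine $D[v,x,\chi_1]$ with $D[c_i,y,\chi_2]$ whenever $xy\in E(G)$, $\ell_G(xy)=a$, $\chi_1\cap\chi_2=\emptyset$; the color on $y$ must lie in $\chi_2$ and not in $\chi_1$. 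After processing all components, a solution exists iff one can pick, for the roots of the different components, vertices and \emph{disjoint} color sets witnessing each component and jointly covering all anchors — again a bounded bookkeeping since there are at most $k$ components and $k$ colors.

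For the running time: each table has $n\cdot 2^k$ entries, and the merge over a single edge costs, naively, $\sum_{\chi_1\cap\chi_2=\emptyset}1 = 3^k$ per pair of endpoints, giving $3^k\cdot m$ per edge and $2^{\OO(k)}(n+m)^{\OO(1)}$ overall after summing over the $\le k$ edges of $H$, multiplying by the $e^k$ (or $2^{\OO(k)}\log n$) color-coding repetitions, and folding in the per-component anchor bookkeeping which contributes only a $k^{\OO(1)}$ factor. Color-fixing for the anchored vertices only shrinks the search space. This matches the claimed $2^{\OO(k)}\cdot(n+m)^{\OO(1)}$ bound. (If one prefers, the $3^k$ subset-disjointness convolution can be replaced by fast subset convolution to get $2^k$, but it is not needed for the stated bound.)

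The step I expect to be the main obstacle is getting the anchoring bookkeeping clean in the presence of multiple tree components and of multigraph/parallel-edge subtleties: one must make sure that the disjoint-color constraint across components, together with the requirement that \emph{every} anchor $f(v)$ for $v\in U$ is actually hit (and hit by the right vertex of the right component of $H$), is enforced without blowing up the table. The cleanest fix is the private-color trick described above combined with tracking, in the DP state, the subset of $U$ already realized within the current subtree; since $|U|\le |V(H)|\le k$ this only multiplies the state space by $2^k$, absorbed into $2^{\OO(k)}$. The other routine-but-careful point is the correctness of derandomization via perfect hash families, which is entirely standard (Alon–Yuster–Zwick). With these in place, Lemma~\ref{lem:pattern}, and hence (chaining back through Lemmas \ref{lem:newsuffice}, \ref{lem:sumGuesses}, \ref{lem:sufficeTriple}, \ref{lem:sufficeH}) Theorem~\ref{thmgraphic}, follows.
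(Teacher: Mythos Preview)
Your proposal is correct and follows essentially the same route as the paper: color coding (derandomized via an $(n,k)$-hash family from Alon--Yuster--Zwick), followed by dynamic programming over the rooted tree components of $H$ with states indexed by a vertex of $H$, its image in $G$, and a color subset, and a final pass combining components over disjoint color sets. The only notable simplification in the paper is the handling of the anchors: rather than introducing private colors and tracking a realized-anchor subset in the state, the paper simply enforces the constraint in the base case of the DP by setting the entry to $0$ whenever $v\in U$ and $x\neq f(v)$; since every $v\in U$ is a vertex of $H$ and hence is visited by the DP, this alone guarantees $g|_U=f$ without any extra bookkeeping.
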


We next assume that $k\leq n$, else we return that the input instance (of \pattern) is a no-instance.

\subsection{Color Coding}

In what follows, we rely on the method of color coding \cite{AlonYZ}. To obtain a deterministic algorithm, we need the following definition. Here, the notation $c|_S$ refers to the restriction of the function $c$ to the domain $S\subseteq\domain(c)$.

\begin{definition}\label{def:hash}
Given integers $k'\leq n'$, a set ${\cal C}$ of functions $c: \{1,2,\ldots,n'\}\rightarrow\{1,2,\ldots,k'\}$ is an {\em $(n',k')$-family of hash functions} if for every set $S\subseteq \{1,2,\ldots,n'\}$ of size $k'$, there exists a function $c\in {\cal C}$ such that $c|_S$ is an injective function.
\end{definition}

Alon et al.~\cite{AlonYZ} proved that small hash families can be computed efficiently:

\begin{proposition}[\cite{AlonYZ}]\label{prop:hash}
Given integers $k\leq n$, there exists an $(n',k')$-family of hash functions, $\cal C$, of size $2^{\OO(k')}\cdot\log 'n$, and this family is computable in time $2^{\OO(k')}\cdot n'\log n'$.
\end{proposition}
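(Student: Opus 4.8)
I will write $n',k'$ for the two parameters, matching Definition~\ref{def:hash}. The plan is to obtain $\mathcal{C}$ by \emph{composing} two much simpler families: a ``universe reduction'' family $\mathcal{C}_1$ of functions $[n']\to[\ell]$ with $\ell=\mathrm{poly}(k')$, and a family $\mathcal{C}_2$ of functions $[\ell]\to[k']$ that deals with the now tiny universe by an essentially brute-force argument. The reason composition works is immediate: if for every $k'$-subset $S\subseteq[n']$ some $c_1\in\mathcal{C}_1$ is injective on $S$, and for every $k'$-subset $S'\subseteq[\ell]$ some $c_2\in\mathcal{C}_2$ is injective on $S'$, then for a given $k'$-subset $S$ we may first pick $c_1$ injective on $S$ (so that $c_1(S)$ has size $k'$) and then pick $c_2$ injective on $c_1(S)$, whence $c_2\circ c_1$ is injective on $S$. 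Thus $\mathcal{C}=\{c_2\circ c_1:c_1\in\mathcal{C}_1,\ c_2\in\mathcal{C}_2\}$ is an $(n',k')$-family of hash functions of size $|\mathcal{C}_1|\cdot|\mathcal{C}_2|$, computable in time bounded by the cost of producing $\mathcal{C}_1$ and $\mathcal{C}_2$ plus $\OO(|\mathcal{C}|\cdot n')$ for writing it down. So it suffices to produce $\mathcal{C}_1$ of size $\mathrm{poly}(k')\cdot\log n'$ and $\mathcal{C}_2$ of size $2^{\OO(k')}$ within the stated budgets.

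For $\mathcal{C}_1$ I would use an error-correcting code of large relative distance as a hashing device. Fix a Reed--Solomon-type code $C\colon[n']\to\Sigma^{L}$ over an alphabet $\Sigma$ with $L=\OO((k')^2\log n')$ coordinates, whose any two distinct codewords agree on at most a $\tfrac{1}{(k')^2}$-fraction of the coordinates; take $\mathcal{C}_1=\{x\mapsto C(x)_i:i\in[L]\}$. For a fixed $k'$-subset $S$, call a coordinate \emph{bad} if two elements of $S$ collide there; each of the $\binom{k'}{2}<(k')^2/2$ pairs is bad in at most $L/(k')^2$ coordinates, so a union bound leaves at least $L/2$ coordinates on which $C|_S$ is injective. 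Hence $\mathcal{C}_1$ is an $(n',k',|\Sigma|)$-splitter of size $L=\mathrm{poly}(k')\cdot\log n'$, computable in time $\mathrm{poly}(k')\cdot n'\log n'$. The range $|\Sigma|$ still carries a $\mathrm{polylog}\,n'$ factor; one removes it by repeating the same construction $\OO(\log^* n')$ more times, each round shrinking the universe and multiplying the (sub-logarithmic) numbers of functions, ending with range $\ell=\mathrm{poly}(k')$ and $|\mathcal{C}_1|$ still $\mathrm{poly}(k')\cdot\log n'$. The precise code-based construction is the one in~\cite{AlonYZ}.

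For $\mathcal{C}_2$ the universe $[\ell]$ has size polynomial in $k'$, so a probabilistic/exhaustive argument is affordable. A uniformly random function $[\ell]\to[k']$ is injective on a fixed $k'$-set with probability $\tfrac{k'!}{(k')^{k'}}\ge e^{-k'}$; since there are only $\binom{\ell}{k'}\le\ell^{k'}=2^{\OO(k'\log k')}$ subsets to handle, a random family of $m=2^{\OO(k')}$ functions misses some subset with probability at most $(1-e^{-k'})^{m}\cdot 2^{\OO(k'\log k')}<1$, so an $(\ell,k')$-family of size $2^{\OO(k')}$ exists. To get it explicitly, replace the fully random choice by a $k'$-wise independent sample space (the injectivity event on a $k'$-set depends on only $k'$ coordinates, so the probability bound is unchanged) and run the method of conditional expectations / greedy set cover, adding one function at a time to cover the still-uncovered $k'$-subsets of $[\ell]$. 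Composing $\mathcal{C}_1$ with $\mathcal{C}_2$ gives a family of size $2^{\OO(k')}\cdot\log n'$, and the total running time is dominated by writing down $\mathcal{C}$, i.e.\ $2^{\OO(k')}\cdot n'\log n'$.

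The step I expect to be the genuine obstacle is producing $\mathcal{C}_2$ within time $2^{\OO(k')}$ rather than $2^{\OO(k'\log k')}$: the naive derandomization iterates over a $k'$-wise independent pool of size $\ell^{\OO(k')}=2^{\OO(k'\log k')}$, which overshoots the target. This is fixed either by the more careful recursive splitter construction of~\cite{AlonYZ} (repeatedly splitting a $k'$-subset of $[\ell]$ into two balanced halves and recursing, paying only $e^{k'}(k')^{\OO(\log k')}$ overall), or simply by settling for the marginally weaker size and time bound $e^{k'}(k')^{\OO(\log k')}\cdot\log n'$, which is still $2^{\OO(k')}\cdot\log n'$ because $(k')^{\OO(\log k')}=2^{\OO(\log^2 k')}=2^{o(k')}$, and hence already proves the proposition. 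Everything else is routine bookkeeping.
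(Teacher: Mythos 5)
The paper gives no proof of this proposition --- it is quoted directly from Alon, Yuster and Zwick \cite{AlonYZ} --- and your sketch correctly reconstructs exactly the construction of that reference (code-based universe reduction to $\mathrm{poly}(k')$ composed with a $2^{\OO(k')}$-size family on the small universe, with the $e^{k'}(k')^{\OO(\log k')}$ bound handling the derandomization). So the proposal is correct and takes essentially the same approach as the cited source.
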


We proceed by employing Proposition \ref{prop:hash} to construct an $(n,k)$-family of hash functions, $\cal C$, of size $2^{\OO(k)}\cdot\log n$. Note that the term solution now refers to a pair $(S,g)$ where $S$ is a subgraph of $G$ and $g: V(H)\cup E(H)\rightarrow V(S)\cup E(S)$ is an isomorphism that satisfies (i) for all $e\in E(H)$, it holds that $\ell_H(e)=\ell_G(g(e))$, and (ii) for all $v\in U$, it holds that $f(v)=g(v)$.

\begin{definition}
Given a function $c\in{\cal C}$, we say that a solution $(S,g)$ is {\em colorful with respect to $c$} if for all distinct $v,u\in V(S)$, it holds that $c(g(v))\neq c(g(u))$.
\end{definition}

Next, the function $c$ will be clear from context, and therefore we simply use the term colorful. By Definition \ref{def:hash}, if the input instance is a yes-instance, then there exists a function $c\in{\cal C}$ such that there exists a colorful solution. Since we can examine each function in $\cal C$ individually, to prove Lemma \ref{lem:pattern} (and thus Theorem \ref{thmgraphic}), it is sufficient to prove the following lemma, on which we focus next.

\begin{lemma}\label{lem:sumColor}
Given $c\in{\cal C}$, it is possible to decide in time $2^{\OO(k)}\cdot (n+m)^{\OO(1)}$ whether there exists a colorful solution.
\end{lemma}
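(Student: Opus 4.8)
\textbf{Proof proposal for Lemma~\ref{lem:sumColor}.}

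The plan is to solve the colorful version of \pattern{} by dynamic programming over a rooted version of the forest $H$, exploiting that a colorful solution uses each color in $\{1,\ldots,k\}$ at most once. First I would fix the function $c\in{\cal C}$ and root each tree of the forest $H$ at an arbitrary vertex; this turns $H$ into a collection of rooted trees, and I will process them one at a time (combining the partial answers for distinct trees at the end, since they map to vertex-disjoint, color-disjoint parts of $G$). For a vertex $x\in V(H)$, let $H_x$ denote the subtree of $H$ rooted at $x$. The key object is, for each vertex $x\in V(H)$, each vertex $w\in V(G)$, and each colorset $Z\subseteq\{1,\ldots,k\}$, a Boolean table entry $D[x,w,Z]$ that is \texttt{true} iff there is an isomorphic embedding $g$ of $H_x$ into $G$ with $g(x)=w$, respecting the edge-label constraint $\ell_H(e)=\ell_G(g(e))$ for all $e\in E(H_x)$, respecting $f$ on $U\cap V(H_x)$, and such that the set of colors $\{c(g(v)):v\in V(H_x)\}$ equals $Z$ (with all these colors distinct). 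This is the standard color-coding DP for subtree isomorphism, and the number of states is $k\cdot n\cdot 2^k = 2^{\OO(k)}\cdot n$.

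The recurrence proceeds bottom-up. For a leaf $x$ of $H_x$, $D[x,w,Z]$ is \texttt{true} iff $Z=\{c(w)\}$ and, if $x\in U$, also $w=f(x)$. For an internal vertex $x$ with children $y_1,\ldots,y_d$ in $H$, I would merge the children one at a time: maintain an auxiliary table $D_j[x,w,Z]$ meaning ``the first $j$ children of $x$ have been embedded, the root $x$ is placed at $w$, and the colors used so far by $x$ together with the already-processed child subtrees form exactly $Z$''. To incorporate child $y_{j+1}$, for each candidate vertex $w'\in V(G)$ that is adjacent to $w$ in $G$ via an edge whose $\ell_G$-label matches $\ell_H(xy_{j+1})$, and for each disjoint split $Z=Z'\uplus Z''$, set $D_{j+1}[x,w,Z]$ \texttt{true} whenever $D_j[x,w,Z']$ and $D[y_{j+1},w',Z'']$ are both \texttt{true}; the disjointness of $Z'$ and $Z''$ is exactly what enforces that the embedding of $H_{y_{j+1}}$ avoids colors already used, which in a colorful solution guarantees vertex-disjointness from the already-embedded part. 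Summing the subset-convolution cost over all states gives $3^k\cdot k\cdot n^2\cdot (n+m)^{\OO(1)} = 2^{\OO(k)}\cdot (n+m)^{\OO(1)}$ total time (one can replace $3^k$ by $2^k$ via fast subset convolution, but this is not needed for the claimed bound). Finally, for each tree of $H$ in turn we collect the colorsets $Z$ for which some root placement succeeds and check, across the trees, that their colorsets can be chosen pairwise disjoint and covering $\{1,\ldots,k\}$ — again a subset-sum/convolution over $2^k$ colorsets; if this succeeds for some $c\in{\cal C}$, we report \yes.

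The routine but essential points to verify are: (a) a \emph{colorful} solution $(S,g)$ indeed yields \texttt{true} entries all the way up, because distinctness of the colors $c(g(v))$ over $v\in V(S)$ makes every subtree contribute a disjoint colorset and forces the merges to go through; and conversely (b) any chain of \texttt{true} entries assembles into a genuine isomorphic embedding — here one must check that color-disjointness of the child subtrees plus color-disjointness from the root implies the embedded vertex sets are pairwise disjoint, so the union is a valid subgraph $S$ isomorphic to $H$ and $g$ is a bona fide isomorphism (not merely a homomorphism), and that the constraints (i) $\ell_H=\ell_G\circ g$ and (ii) $g|_U=f$ are preserved under the merge. I expect the main obstacle to be bookkeeping the constraint from $f$ correctly through the DP: a vertex $v\in U$ has a \emph{prescribed} image $f(v)$, so whenever the DP places such a vertex it must only allow $w=f(v)$, and one must make sure that two distinct vertices of $U$ (which have distinct prescribed images since $f$ is injective) cannot be forced to the same vertex of $G$ and that these forced placements are consistent with the adjacency/label constraints along the tree — this is handled by simply pruning table entries that violate $f$, but it needs to be threaded carefully through both the leaf base case and the child-merging step, and its correctness is what makes condition (ii) of \pattern{} hold for the reconstructed solution.
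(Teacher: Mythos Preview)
Your proposal is correct and takes essentially the same approach as the paper: a bottom-up dynamic program over the rooted forest $H$ indexed by (current $H$-vertex, its image in $G$, color set used), merging children one at a time via disjoint color-set splits, and then a second DP over the trees of $H$ to combine their color sets. The paper's table $\mathrm{M}[i,v,x,j,C]$ is exactly your auxiliary table $D_j[x,w,Z]$ (with the extra index $i$ tracking which tree of $H$ we are in), and its table $\mathrm{N}[i,C]$ is your final across-trees combination; your handling of the $f$-constraint by pruning placements of $U$-vertices also matches the paper.
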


Thus, from now on, we fix some $c\in{\cal C}$.

\subsection{Dynamic Programming}

Here, we use standard dynamic programming to prove Lemma \ref{lem:sumColor}. The details are given for the sake of completeness. We define an arbitrary order on $V(H)$. We denote the connected components of $H$ by $H_1,H_2,\ldots,H_p$ for the appropriate $p$ (the order is arbitrary). Now, we root each tree $H_i$, $1\leq i\leq p$, at some arbitrary vertex $r_i\in V(H_i)$. Given $1\leq i\leq p$ and a vertex $v\in V(H_i)$, we let $\subtree(v)$ denote the subtree of $H_i$ that is rooted at $v$. Moreover, for all $1\leq j\leq |\children(v)|$, we let $\children(v,j)$ denote the $j^{\mathrm{th}}$ child of $v$. We define $\subtree(v,0)=\subtree(v)$.
Now, for all $1\leq j\leq |\children(v)|$, we define $\subtree(v,j)$ as the tree of $\subtree(v)$ from which we remove $\bigcup_{j'\leq j}\subtree(\children(v,j'))$.

We will use two table. The first table, M, has an entry $[i,v,x,j,C]$ for all $1\leq i\leq p$, $v\in V(H_i)$, $x\in V(G)$, $0\leq j\leq |\children(v)|$ and $C\subseteq \{1,2,\ldots,k\}$. The second table will be discussed later. To explain the purpose of an entry M$[i,v,x,j,C]$, we use the following definition.

\begin{definition}
Given $1\leq i\leq p$, $v\in V(H_i)$, $x\in V(G)$, $0\leq j\leq |\children(v)|$ and $C\subseteq \{1,2,\ldots,k\}$, we say that a pair $(S,g)$ of a subgraph $S$ of $G$ and an isomorphism $g: V(\subtree(v,j))\cup E(\subtree(v,j))\rightarrow V(S)\cup E(S)$ is an {\em $(i,v,x,j,C)$-solution} if the following conditions are satisfied.
\begin{enumerate}
\item For all $e\in E(\subtree(v,j))$, it holds that $\ell_H(e)=\ell_G(g(e))$.
\item For all $u\in U\cap V(\subtree(v,j))$, it holds that $f(u)=g(u)$.
\item $g(v)=x$.
\item $c(g(V(\subtree(v,j))))=C$ and for all $u,w\in V(\subtree(v,j))$, it holds that $c(g(u))\neq c(g(w))$.
\end{enumerate}
\end{definition}

Now, M$[i,v,x,j,C]$ should store a Boolean value, 0 or 1, and it should store 1 if and only if there exists an $(i,v,x,j,C)$-solution. The computation of M is given by the following formulas, whose correctness is straightforward.

\myparagraph{Basis.} The basis corresponds to the following items.
\begin{itemize}
\item If $v\in U$ and $x\neq f(v)$, then M$[i,v,x,j,C]$ is 0.
\item Else if $c(x)\notin C$ or $|\subtree(v,j)|\neq |C|$, then M$[i,v,x,j,C]$ is 0.
\item Else if $|\subtree(v,j)|=1$, then M$[i,v,x,j,C]$ is 1.
\end{itemize}

\myparagraph{Step.} Now, assume that we handle an entry that do not handled by any base case. Here, we let $u$ denote the $j^{\mathrm{th}}$ child of $v$. Moreover, $Y$ is the set of neighbors $y$ of $x$ such that there is an edge $e\in E(G)$ between $x$ and $y$ (there may be multiple edges) that satisfies $\ell_G(e)=\ell_H(\{v,u\})$. In case $u\in U$, we update $Y$ to $Y\cap \{f(u)\}$. If $Y\emptyset$, then M$[i,v,x,j,C]$ is 0. Otherwise, the computation is given by the following formulas.

\[\displaystyle{\mathrm{M}[i,v,x,j,C] = \max_{y\in Y}\max_{C'\cap C''=\emptyset, C'\cup C''=C}\mathrm{M}[i,v,x,j+1,C']\cdot\mathrm{M}[i,u,y,0,C''] }.\]

Clearly, the table M can be computed in time $2^{\OO(k)}\cdot (n+m)^{\OO(1)}$. We proceed to define another table, N. The table N has an entry $[i,C]$ for all $1\leq i\leq p$ and $C\subseteq \{1,2,\ldots,k\}$. We let $H^i$ denote the subforest of $H$ consisting of the trees $H_j$ for all $1\leq j\leq i$. To explain the purpose of an entry N$[i,C]$, we use the following definition.

\begin{definition}
Given $1\leq i\leq p$ and $C\subseteq \{1,2,\ldots,k\}$, we say that a pair $(S,g)$ of a subgraph $S$ of $G$ and an isomorphism $g: V(H^i)\cup E(H^i)\rightarrow V(S)\cup E(S)$ is an {\em $(i,C)$-solution} if the following conditions are satisfied.
\begin{enumerate}
\item For all $e\in E(H^i)$, it holds that $\ell_H(e)=\ell_G(g(e))$.
\item For all $u\in U\cap V(H^i)$, it holds that $f(u)=g(u)$.
\item $c(g(V(H^i)))=C$ and for all $u,w\in V(H^i)$, it holds that $c(g(u))\neq c(g(w))$.
\end{enumerate}
\end{definition}

Now, N$[i,C]$ should store a Boolean value, 0 or 1, and it should store 1 if and only if there exists an $(i,C)$-solution. The computation of N is given by the following formulas, whose correctness is straightforward.

\myparagraph{Basis.} The basis corresponds to the case where $i=1$. Then, the computation is given by the following formula.
\[\mathrm{N}[i,C] = \max_{x\in V(G)}\mathrm{M}[i,r_1,x,0,C].\]

\myparagraph{Step.} Now, assume that $i >1$. Then, the computation is given by the following formula.
\[\displaystyle{\mathrm{N}[i,C] = \max_{C'\cap C''=\emptyset, C'\cup C''=C}\mathrm{N}[i-1,C']\cdot(\max_{x\in V(G)}\mathrm{M}[i,r_i,x,0,C''])}.\]

At the end, we answer that there exists a colorful solution if and only if N$[p,\{1,2,\ldots,k\}]$ is~1.

\section{FPT on Duals of Perturbed Graphic Matroids}
\label{section:dual}
In this section, we prove Theorem~\ref{thmdual}.
 The proof of Theorem \ref{thmdual} is done by a chain of statements, each adding constraints on the structures of the solutions we seek. Here, we give the technical details of the approach described in Section \ref{sec:overview}.

\subsection{Bounding $|T|$ and Assigning Colors to Vertices}

As in the case of perturbed graphic matroids, we may assume that $|T|\leq k$.  Recall that $\disR(P)$ is defined as the set of the distinct vectors corresponding the {\em rows} in $\{P_v: v\in V(G)\}$. Let us denote $|\disR(P)|=t$. By Proposition \ref{lem:rank}, it holds that $t\leq 2^r$. Moreover, denote $\disR(P)=\{R_1,R_2,\ldots,R_t\}$. Accordingly, we say that a vertex $v\in V(G)$ is of {\em type} $i$, $1\leq i\leq t$, if $P_v=R_i$. Given a vertex $v\in V(G)$, we let $\type(v)$ denote its type.

\subsection{Characterization of Solutions}

In the context of \spcdual, we use the term {\em solution} to refer to a set $F\subseteq E\setminus T$ with $|F|\leq k$ such that $T\subseteq \spn(F)$ in the dual $M^*$ of the binary matroid $M$ represented by $A$. Let $I$ be a binary vector with $m$ elements. Recall that given $F\subseteq E$, we let $\edges(F)$ denote the set of each edge $e\in E(G)$ such that $A^e\in F$. Now, given a set $F\subseteq E$, we say that $I$ is the {\em characteristic vector} of $F$ if the $i^\mathrm{th}$ entry of $I$ is 1 if and only if $F$ contains the  $i^\mathrm{th}$ column of $A$. Moreover, given a set $F\subseteq E$, we say that $F$ is a {\em cocyle} in $M$ if and only if it is a cycle in ${M}^*$. In what follows, we rely on the following folklore result (see, e.g., \cite{GeelenK15}).

\begin{proposition}\label{lem:cocycle}
Let $M$ be a binary matroid represented by an $(n\times m)$-matrix $A$, and let $F$ be a subset of $E$, where $E$ is the set of columns of $A$. Then, $F$ is a cocycle in $M$ if and only if the characteristic vector of $F$ belongs to $\spanS(V)$ where $V$ is the set of rows of $A$.
\end{proposition}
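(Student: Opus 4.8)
The plan is to derive the statement from two textbook facts — one about binary matroids, one about matroid duality under linear representation — together with the self-duality of orthogonal complements over $GF(2)$. Throughout I would identify a subset of $E$ with its characteristic vector in $GF(2)^m$, so that ``the rows of $A$'' and ``characteristic vectors of subsets of $E$'' live in the same space $GF(2)^m$, and $\spanS(V)$ is precisely the row space of $A$; this makes the question ``is the characteristic vector of $F$ in $\spanS(V)$?'' well posed.

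First I would record that the cycles of $M$ are exactly the sets whose columns sum to zero, i.e. the cycle space of $M$ (the set of characteristic vectors of cycles of $M$) equals $\ker(A)=\{x\in GF(2)^m : Ax=0\}$. The inclusion ``$\supseteq$'' is elementary: a nonzero $x$ with $Ax=0$ has support a dependent set, a minimal dependent subset of which is a circuit; peeling off circuits one at a time (using that the symmetric difference of $x$ with a circuit is again in $\ker A$) exhibits the support of $x$ as a disjoint union of circuits. The inclusion ``$\subseteq$'' — that the columns of any cycle sum to zero — is the standard characterization of binary matroids, and here I would simply cite Oxley. Applying this same fact to the dual: if $A^*$ is any $GF(2)$-matrix representing $M^*$, then the cocycles of $M$ (= cycles of $M^*$) are exactly $\ker(A^*)$.

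Next I would invoke the standard description of the dual of a represented matroid: $M^*$ is represented by any matrix over $GF(2)$ whose row space is the orthogonal complement $(\text{rowspace}\,A)^{\perp}$ of the row space of $A$ with respect to the standard dot product — e.g. the ``standard form'' construction $A=[\,I_r\mid D\,]\mapsto A^*=[\,D^{\top}\mid I_{m-r}\,]$, for which $A(A^*)^{\top}=0$ and $\rank(A^*)=m-\rank(A)$. Combining with the previous paragraph, the cocycle space of $M$ equals
\[
\ker(A^*)\;=\;\big(\text{rowspace}\,A^*\big)^{\perp}\;=\;\Big(\big(\text{rowspace}\,A\big)^{\perp}\Big)^{\perp}.
\]
Since the standard bilinear form on $GF(2)^m$ is non-degenerate and the space is finite-dimensional, the dimension count $\dim W+\dim W^{\perp}=m$ holds for every subspace $W$, hence $(W^{\perp})^{\perp}=W$; taking $W=\text{rowspace}\,A$ shows that the cocycle space of $M$ is exactly $\text{rowspace}\,A=\spanS(V)$. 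Thus $F$ is a cocycle of $M$ if and only if its characteristic vector lies in $\spanS(V)$, which is the claim.

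I do not anticipate a genuine obstacle: this is folklore. The only point demanding a moment's care is the $GF(2)$ idiosyncrasy that a vector may be orthogonal to itself, so that one does \emph{not} have $GF(2)^m=W\oplus W^{\perp}$ in general; however the argument uses only the identity $(W^{\perp})^{\perp}=W$, which is governed purely by the dimension formula and remains valid. Everything else reduces to a direct appeal to the representation of $M^*$ and to Oxley's characterization of the cycles of a binary matroid.
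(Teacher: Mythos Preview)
Your proof is correct and follows the standard route for this folklore fact. The paper does not actually prove the proposition: it states it as a known result with a citation to Geelen--Kapadia, so there is nothing to compare against beyond noting that your argument is the expected one.
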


Note that a set $F\subseteq E\setminus T$ is a solution if and only if for each terminal $W\in T$, there exists a subset $F'\subseteq F$ such that $F'\cup\{W\}$ is a cocycle in $M$. Thus, in light of Proposition~\ref{lem:cocycle}, we can think of a solution as follows:

\begin{observation}\label{lem:incidence}
A set $F\subseteq E\setminus T$ is a solution if and only if $|F|\leq k$ and for each terminal $W\in T$, there exists a subset $F'\subseteq F$ such that the characteristic vector of $F'\cup\{W\}$ belongs to $\spanS(V)$, where $V$ is the set of rows of $A$.
\end{observation}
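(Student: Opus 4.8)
The plan is to unwind the definition of a solution to \spcdual{} and reduce the condition ``$T\subseteq\spn(F)$ in $M^*$'' to a per-terminal statement about cocycles, and then to quote Proposition~\ref{lem:cocycle}. Since the conditions $F\subseteq E\setminus T$ and $|F|\le k$ occur on both sides of the claimed equivalence, the only substantive task is the following: for a set $F\subseteq E\setminus T$, show that $T\subseteq\spn_{M^*}(F)$ holds if and only if for every terminal $W\in T$ there is a subset $F'\subseteq F$ such that the characteristic vector of $F'\cup\{W\}$ belongs to $\spanS(V)$, where $V$ is the set of rows of $A$.

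First I would use the elementary fact that the closure of a set in a matroid is determined element by element: $T\subseteq\spn_{M^*}(F)$ if and only if $W\in\spn_{M^*}(F)$ for each $W\in T$. So fix $W\in T$; note that $W\notin F$ since $F\cap T=\emptyset$. The key reduction is the standard ``fundamental circuit'' argument: let $B$ be a maximal independent subset of $F$ in $M^*$, so that $\spn_{M^*}(F)=\spn_{M^*}(B)$. If $W\in\spn_{M^*}(F)$, then $B\cup\{W\}$ is dependent in $M^*$, hence contains a circuit of $M^*$; that circuit must contain $W$ because $B$ is independent, so it has the form $F'\cup\{W\}$ with $F'\subseteq B\subseteq F$. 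Conversely, removing one element from a circuit of $M^*$ leaves a set that spans that element, so whenever $F'\cup\{W\}$ is a circuit of $M^*$ with $F'\subseteq F$ we get $W\in\spn_{M^*}(F')\subseteq\spn_{M^*}(F)$. Thus $W\in\spn_{M^*}(F)$ if and only if some $F'\subseteq F$ makes $F'\cup\{W\}$ a circuit of $M^*$, i.e.\ a cocircuit of $M$.

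It then remains to replace ``cocircuit'' by ``cocycle''. A single circuit of $M^*$ is in particular a cycle of $M^*$, i.e.\ a cocycle of $M$; conversely, if $F'\cup\{W\}$ is a cocycle of $M$ then, being a disjoint union of circuits of $M^*$, it contains exactly one circuit $C$ through $W$, and $C\setminus\{W\}\subseteq F'\subseteq F$, so again $W\in\spn_{M^*}(F)$. Hence $W\in\spn_{M^*}(F)$ if and only if there is $F'\subseteq F$ with $F'\cup\{W\}$ a cocycle of $M$. Finally, Proposition~\ref{lem:cocycle} states precisely that $F'\cup\{W\}$ is a cocycle of $M$ if and only if its characteristic vector lies in $\spanS(V)$; substituting this and quantifying over $W\in T$ yields the statement.

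The argument is a chain of standard reductions, and I do not anticipate a genuine obstacle. The one point that needs a little care is the cocircuit-versus-cocycle bookkeeping, where one uses that a cycle through a fixed element decomposes into pairwise disjoint circuits exactly one of which contains that element, together with the convention (in force throughout the section) that $F\cap T=\emptyset$, which is what guarantees $W\notin F$ and hence that the circuit produced actually contains $W$. Everything else is textbook matroid theory --- closure is determined elementwise, a set that is dependent over a maximal independent subset of itself contains a circuit through the ``extra'' element, and a circuit minus a point spans that point --- while Proposition~\ref{lem:cocycle} is invoked as a black box.
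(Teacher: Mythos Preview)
Your proposal is correct and follows the same approach as the paper: reduce to the statement that $F$ is a solution iff for every $W\in T$ some $F'\subseteq F$ makes $F'\cup\{W\}$ a cocycle of $M$, then invoke Proposition~\ref{lem:cocycle}. The paper treats this as an observation and simply asserts the intermediate cocycle characterization in one sentence; you have supplied the standard fundamental-circuit justification and the cocircuit-versus-cocycle bookkeeping that the paper omits.
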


\subsection{Parities and Cuts}

Define $\widehat{B}=\{B=(b_1,b_2,\ldots,b_t): b_1,b_2,\ldots,b_t\in \{0,1\}\}$. The size of this set is $2^t$. Given a vector $B\in\widehat{B}$, we let $\sumS(B)$ denote the binary vector with $m$ elements whose $i^{\mathrm{th}}$ element is $(\sum_{j=1}^tb_i\cdot r_{i,j})$ where $r_{i,j}$ is the $j^{\mathrm{th}}$ element of $R_i$. Moreover, we index $\sumS(B)$ by $E(G)$ (in the manner compatible with the one in which we index the columns of $P$).
To exploit $\widehat{B}$, we need several definitions.

\begin{definition}
Given a vector $B\in \widehat{B}$, we say that a set $X\subseteq V(G)$ is {\em compatible with $B$} if for all $1\leq i\leq t$, it holds that $|\{v\in V(G): \type(v)=i\}|=b_i\mod 2$.
\end{definition}

In the context of cographc matroids, it is natural to discuss cuts that correspond to circuits. However, here we do not deal with cographic matroids, but with duals of {\em perturbed} graphic matroids. For this reason, we need the following definition. The necessity of this precise definition will be clear from the proofs of Lemmata \ref{lem:cut1} and \ref{lem:cut2} ahead.

\begin{definition}\label{def:expensive}
Given a vector $B\in \widehat{B}$, let $(X,\bar{X})$ be a partition of $V(G)$, $e$ be an edge in $E(G)$, and $\widetilde{r}$ be the element of index $e$ in $\sumS(B)$. We say that an edge $e\in E(G)$ is {\em expensive with respect to $(B,(X,\bar{X}))$} if one of the three following conditions is true:
\begin{enumerate}
\item Both endpoints of $e$ belong to $X$ and $\widetilde{r}=1$.
\item Both endpoints of $e$ belong to $\bar{X}$ and $\widetilde{r}=1$.
\item Exactly one of the endpoints of $e$ belongs to $X$ and $\widetilde{r}=0$.
\end{enumerate}
We say that an edge $e\in E(G)$ is {\em cheap with respect to $(B,(X,\bar{X}))$} if it is not expensive with respect to $(X,\bar{X})$. 
\end{definition}

In the context of Definition \ref{def:expensive}, we remark that if ($B,(X,\bar{X}))$ is clear from context, we simply use the terms expensive and cheap. Moreover, we let $\expens(X)$ and $\cheap(X)$ denote the sets of expensive and cheap edges, respectively. Given a set $C$ of columns of $A$, let $\edges(C)$ denote the set of each edge in $E(G)$ that is an index of a column in $C$. 

\begin{definition}\label{def:partition}
Let $(X,\bar{X})$ be a partition of $V(G)$. 
Given a vector $B\in \widehat{B}$ and an edge $e\in \edges(T)$, we say that $(X,\bar{X})$ is {\em good with respect to $(B,e)$} if $X$ is compatible with $B$, $e$ is expensive and $\edges(T)\setminus\{e\}\subseteq\cheap(X)$.
\end{definition}

We call a function $h: T\rightarrow\widehat{B}$ a {\em parity restriction}. When $h$ is clear from context, we denote $h(W)=B_W$. We are now ready to present an alternative way to view a solution. The validity of this view is given by the two lemmata that follow it.

\begin{definition}
Given a parity restriction $h$, we say that a set $S\subseteq E(G)\setminus\edges(F)$ is a {\em solution compatible with $h$} if $|S|\leq k$ and for all $e\in \edges(T)$, there exists a partition $(X_e,\bar{X}_e)$ of $V(G)$ that is good with respect to $(B_W,e)$, where $W=A^e$, and such that $\expens(X_e)\setminus\{e\}\subseteq S$.
\end{definition}

\begin{lemma}\label{lem:cut1}
If there exist a parity restriction and a solution $S\subseteq E(G)\setminus\edges(F)$ compatible with it, then the input instance is a yes-instance.
\end{lemma}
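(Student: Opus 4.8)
The plan is to reverse-engineer the reduction described in the overview: starting from a parity restriction $h$ and a compatible solution $S$, we build a genuine solution $F\subseteq E\setminus T$ for \spcdual{} of size at most $|S|\le k$, and verify via Observation~\ref{lem:incidence} that $T\subseteq\spn(F)$ in $M^*$. The natural choice is $F=\{A^e : e\in S\}$; since $S\subseteq E(G)\setminus\edges(T)$ we indeed have $F\subseteq E\setminus T$ and $|F|\le|S|\le k$, so the only real content is showing that each terminal $W\in T$ lies in $\spn(F)$ in the dual matroid. By Observation~\ref{lem:incidence}, it suffices to exhibit, for each $W\in T$ with $e=e(W)$ the corresponding edge, a subset $F_W\subseteq F$ such that the characteristic vector of $F_W\cup\{W\}$ lies in $\spanS(V)$, where $V$ is the set of rows of $A$.

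The key step is to use the partition $(X_e,\bar X_e)$ of $V(G)$ that is good with respect to $(B_W,e)$, which the definition of ``solution compatible with $h$'' guarantees. I would set $F_W=\{A^{e'} : e'\in\expens(X_e)\setminus\{e\}\}$, so that $F_W\subseteq F$ because $\expens(X_e)\setminus\{e\}\subseteq S$. The claim is then that the characteristic vector of $F_W\cup\{W\}$, i.e.\ of the edge set $\expens(X_e)$, equals $\sum_{v\in X_e}A_v$ modulo $2$. To see this, write $A=I(G)+P$ and compute $\sum_{v\in X_e}A_v=\sum_{v\in X_e}I_v(G)+\sum_{v\in X_e}P_v$. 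The first sum, restricted to a coordinate indexed by an edge $e'$, counts the number of endpoints of $e'$ in $X_e$ modulo $2$, hence is $1$ exactly when $e'$ crosses the cut $(X_e,\bar X_e)$. The second sum is, by the definition of ``$X_e$ compatible with $B_W$'' and of $\sumS$, exactly the vector $\sumS(B_W)$ whose coordinate at $e'$ is the value $\widetilde r$ appearing in Definition~\ref{def:expensive}. Adding the two vectors coordinatewise over $GF(2)$ and matching against the three cases of Definition~\ref{def:expensive} shows that coordinate $e'$ of $\sum_{v\in X_e}A_v$ is $1$ iff $e'$ is expensive with respect to $(B_W,(X_e,\bar X_e))$ --- which is precisely the characteristic vector of $\expens(X_e)$. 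Since ``good'' asserts that $e$ itself is expensive and every other terminal edge is cheap, we get $\expens(X_e)\cap\edges(T)=\{e\}$, so $\expens(X_e)=(\expens(X_e)\setminus\{e\})\cup\{e\}$ corresponds exactly to $F_W\cup\{W\}$ as a set of columns of $A$.

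Putting this together: the characteristic vector of $F_W\cup\{W\}$ equals $\sum_{v\in X_e}A_v\in\spanS(V)$, so by Observation~\ref{lem:incidence} the set $F$ is a solution, and hence the input instance of \spcdual{} is a yes-instance. I expect the main obstacle to be the careful bookkeeping in the coordinatewise $GF(2)$ computation --- in particular being precise about why $\sum_{v\in X_e}P_v=\sumS(B_W)$ uses exactly the parity condition encoded by ``$X_e$ compatible with $B_W$'', and checking that the three cases of Definition~\ref{def:expensive} really do enumerate exactly the edges where the crossing-indicator and the $\sumS(B_W)$-indicator sum to $1$. Everything else (size bound, disjointness from $T$, the reduction to Observation~\ref{lem:incidence}) is routine.
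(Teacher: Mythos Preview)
Your proposal is correct and follows essentially the same approach as the paper's proof: both invoke Observation~\ref{lem:incidence}, take $F_W$ to consist of the columns indexed by $\expens(X_e)\setminus\{e\}$, and verify that $\sum_{v\in X_e}A_v$ equals the characteristic vector of $\expens(X_e)$ by splitting $A=I(G)+P$, identifying the $I(G)$-part as the cut indicator and the $P$-part as $\sumS(B_W)$ via compatibility, and then matching against the case analysis in Definition~\ref{def:expensive}. The only cosmetic difference is that you set $F=\{A^e:e\in S\}$ while the paper takes $F=\bigcup_{W\in T}F_W$, but since each $F_W\subseteq\{A^e:e\in S\}$ both choices have size at most $k$ and the argument goes through identically.
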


\begin{proof}
Suppose that there exist a parity restriction $h$ and a solution $S\subseteq E(G)$ compatible with it. Then, it holds that $|S|\leq k$ and for all $e\in \edges(T)$, there exists a partition $(X_e,\bar{X}_e)$ of $V(G)$ that is good with respect to $(B_W,e)$, where $W=A^e$, and such that $\expens(X_e)\setminus\{e\}\subseteq S$.
By Lemma \ref{lem:incidence}, to show that the input instance is a yes-instance, it is sufficient to show that there exists $F\subseteq E\setminus T$ such that $|F|\leq k$ and for each terminal $W\in T$, there exists a subset $F_W\subseteq F$ such that the characteristic vector of $F_W\cup\{W\}$ belongs to $\spanS(V)$.

For each terminal $W\in T$, we define $F_W = \{A^e: e\in\expens_{e_W}(X)\}\setminus\{W\}$ where $W=A^{e_W}$. Accordingly, we define $F=\bigcup_{W\in T}F_W$. Since $|S|\leq k$ and for all $e\in \edges(T)$, it holds that $\expens(X_e)\setminus\{e\}\subseteq S$, we have that $|F|\leq k$. Next, fix some terminal $W\in T$, and let $I_W$ be the characteristic vector of $F_W\cup\{W\}$. To show that $I_W\in\spanS(V)$, it is sufficient to show that $\sum_{v\in X_{e_W}}A_v = I_W\mod 2$. Let $R^*$ be the binary vector with $m$ elements, whose elements are indexed by edges in the manner compatible with $A$, such that for all $e\in E(G)$, it holds that $\widetilde{r}_e=1$ if and only if exactly one of the endpoints of $e$ is in $X_{e_W}$.
Note that $\sum_{v\in X_{e_W}}A_v = R^* + \sum_{v\in X_{e_W}}P_v\mod 2$. Moreover, since $X_{e_W}\subseteq V(G)$ is compatible with $B_W$, it holds that $\widetilde{R} = \sum_{v\in X_{e_W}}P_v\mod 2$, where $\widetilde{R}=\sumS(B_W)$. Thus, to show that $\sum_{v\in X_{e_W}}A_v = I_W\mod 2$, it is sufficient to prove that $R^*+\widetilde{R}=I_W\mod 2$. Thus, we need to show that the following two conditions are satisfied.
\begin{enumerate}
\item For each column $C^e\in F_W\cup\{W\}$ of $A$, it holds that $r^*_e + \widetilde{r}_e = 1 \mod 2$.
\item For each column $C^e\notin F_W\cup\{W\}$ of $A$, it holds that $r^*_e + \widetilde{r}_e = 0 \mod 2$.
\end{enumerate}

First, let us consider a column $C^e\in F_W\cup\{W\}$ of $A$. 
If $r^*_e=1$, then exactly one of the endpoints of $e$ belongs to $X_{e_W}$. In this case, since $e$ is expensive, we have that $\widetilde{r}_e=0$, and therefore $r^*_e + \widetilde{r}_e = 1 \mod 2$. Now, if $r^*_e=0$, then both endpoints of $e$ belong to either $X_{e_W}$ or $\bar{X}_{e_W}$. In this case, since $e$ is expensive, we have that $\widetilde{r}_e=1$, and therefore $r^*_e + \widetilde{r}_e = 1 \mod 2$.

Second, let us consider a column $C^e\notin F_W\cup\{W\}$ of $A$. 
If $r^*_e=1$, then exactly one of the endpoints of $e$ belongs to $X_{e_W}$. In this case, since $e$ is cheap, we have that $\widetilde{r}_e=1$, and therefore $r^*_e + \widetilde{r}_e = 0 \mod 2$. Now, if $r^*_e=0$, then both endpoints of $e$ belong to either $X_{e_W}$ or $\bar{X}_{e_W}$. In this case, since $e$ is cheap, we have that $\widetilde{r}_e=0$, and therefore $r^*_e + \widetilde{r}_e = 0 \mod 2$.
\end{proof}

\begin{lemma}\label{lem:cut2}
If the input instance is a yes-instance, then there exist a parity restriction and a solution $S\subseteq E(G)\setminus\edges(F)$ compatible with it.
\end{lemma}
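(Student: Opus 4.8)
The plan is to reverse the construction in the proof of Lemma \ref{lem:cut1}: starting from an arbitrary solution $F$ in the matroid-dual sense, I will extract a parity restriction $h$ and a set $S$ of at most $k$ edges that is compatible with $h$. First I would invoke Observation \ref{lem:incidence}: since the input is a yes-instance, there is $F\subseteq E\setminus T$ with $|F|\le k$ such that for every terminal $W\in T$ there is $F_W\subseteq F$ with the characteristic vector $I_W$ of $F_W\cup\{W\}$ lying in $\spanS(V)$, where $V$ is the set of rows of $A$. By definition of the span, for each $W$ fix a set $X_{e_W}\subseteq V(G)$ (where $e_W\in E(G)$ is the edge with $A^{e_W}=W$) such that $\sum_{v\in X_{e_W}}A_v = I_W \bmod 2$; this gives the partition $(X_{e_W},\bar X_{e_W})$ of $V(G)$ that will ``fit'' the terminal $e_W$.

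Next I would define the parity restriction. For each $W\in T$, set $h(W)=B_W=(b^W_1,\ldots,b^W_t)$ where $b^W_i = |\{v\in X_{e_W}:\type(v)=i\}|\bmod 2$. This makes $X_{e_W}$ compatible with $B_W$ by construction. Then I would decompose $\sum_{v\in X_{e_W}}A_v$ exactly as in the proof of Lemma \ref{lem:cut1}: writing $A_v=I_v(G)+P_v$, we get $\sum_{v\in X_{e_W}}A_v = R^* + \sum_{v\in X_{e_W}}P_v \bmod 2$, where $R^*$ is the indicator vector (over $E(G)$) of edges with exactly one endpoint in $X_{e_W}$; and since $X_{e_W}$ is compatible with $B_W$, we have $\sum_{v\in X_{e_W}}P_v = \sumS(B_W)\bmod 2$. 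Combining these with $\sum_{v\in X_{e_W}}A_v=I_W$, I get $R^* + \sumS(B_W) = I_W \bmod 2$, i.e.\ for every edge $e$, the entry $r^*_e + \widetilde r_e$ (where $\widetilde r_e$ is the $e$-entry of $\sumS(B_W)$) equals 1 iff $A^e\in F_W\cup\{W\}$. Unwinding Definition \ref{def:expensive}, this says precisely that $e$ is expensive with respect to $(B_W,(X_{e_W},\bar X_{e_W}))$ iff $A^e\in F_W\cup\{W\}$; hence $\edges(F_W\cup\{W\}) = \expens(X_{e_W})$. In particular $e_W\in\expens(X_{e_W})$, and since $F_W\cap T=\emptyset$ we get $\edges(T)\cap\expens(X_{e_W}) = \{e_W\}$, so $\edges(T)\setminus\{e_W\}\subseteq\cheap(X_{e_W})$; together with compatibility this shows $(X_{e_W},\bar X_{e_W})$ is good with respect to $(B_W,e_W)$.

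Finally I would set $S=\edges(F)$. Then $|S|\le k$, $S\cap\edges(T)=\emptyset$, and for each $e_W$ we have $\expens(X_{e_W})\setminus\{e_W\} = \edges(F_W)\subseteq\edges(F)=S$, which is exactly the condition in the definition of a solution compatible with $h$. The main obstacle I anticipate is purely bookkeeping: keeping the indexing conventions straight between columns of $A$ and edges of $G$, and making sure the mod-2 identity $R^*+\sumS(B_W)=I_W$ is read off correctly edge by edge to match the three cases of Definition \ref{def:expensive} — this is the same case analysis as in Lemma \ref{lem:cut1}, just used in the reverse direction. No new combinatorial idea is needed beyond noticing that the argument of Lemma \ref{lem:cut1} is in fact an equivalence.
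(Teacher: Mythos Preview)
Your proposal is correct and follows essentially the same argument as the paper: you invoke Observation~\ref{lem:incidence} to obtain $F$ and the sets $F_W$, pick $X_{e_W}$ so that $\sum_{v\in X_{e_W}}A_v=I_W$, define $B_W$ from the type-parities of $X_{e_W}$, and then use the identity $R^*+\sumS(B_W)=I_W$ together with the case analysis of Definition~\ref{def:expensive} to conclude $\expens(X_{e_W})=\edges(F_W\cup\{W\})$, setting $S=\edges(F)$. This is exactly the paper's route, with the same bookkeeping.
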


\begin{proof}
Suppose that the input instance is a yes-instance. Then, by Lemma \ref{lem:incidence}, there exists $F\subseteq E\setminus T$ such that $|F|\leq k$ and for each terminal $W\in T$, there exists a subset $F_W\subseteq F$ such that the characteristic vector of $F_W\cup\{W\}$, denoted by $I_W$, belongs to $\spanS(V)$. Since we consider the field $GF(2)$, this implies that there exists $V_W\subseteq V$ such that $\sum_{U\in V_W}U = I_W\mod 2$. First, we define $S=\edges(F)$. For each $W\in T$, we define $S_W=\edges(F_W)\cup\{e_W\}$, where we let $e_W$ denote the edge satisfying $W=A^{e_W}$. Moreover, we let $X_W$ denote the set of vertices that are the indices of the rows in $V_W$. Accordingly, we set $\bar{X}_W=V(G)\setminus X_W$. Finally, we let $B_W$ denote the binary vector with $t$ elements whose $i^{\mathrm{th}}$ element is $|\{v\in X_W: \type(v)=i\}|\mod 2$. We have thus defined a parity restriction~$h$. 

Since $|F|\leq k$, it holds that $|S|\leq k$. It remains to show that for all $W\in T$, the partition $(X_W,\bar{X}_W)$ of $V(G)$ is good with respect to $(B_W,e_W)$ and $\expens(X_W)\setminus\{e_W\}\subseteq S$. By our definition of $h$, it is clear that for all $W\in T$, it holds that $X_W$ is compatible with $B_W$. Moreover, since $F\cap T=\emptyset$, we have that $S\cap \edges(T)=\emptyset$. Thus, it is sufficient to show that for all $W\in T$, it holds that $\expens(X_W)=S_W$.

Next, fix some terminal $W\in T$. Let $R^*$ be the binary vectors with $m$ elements, whose elements are indexed by edges in the manner compatible with $A$, such that for all $e\in E(G)$, it holds that $r^*_e=1$ if and only if exactly one of the endpoints of $e$ is in $X_{e_W}$. Moreover, let $\widetilde{R}=\sumS(B_W)$. Recall that $\sum_{U\in V_W}U = I_W\mod 2$. Now, note that $\sum_{U\in V_W}U = \sum_{v\in X_W}A_v = R^* + \sum_{v\in X_W}P_v = R^* + \widetilde{R}\mod 2$. Therefore, we have that $R^* + \widetilde{R} = I_W\mod 2$.

First, consider an edge $e\in E(G)$ with both endpoints in either $X_W$ or $\bar{X}_W$. In this case, $r^*_e=0$. On the one hand, if $e$ is expensive, then $\widetilde{r}_e=1$. In this case, since $R^* + \widetilde{R} = I_W\mod 2$, we have that $e\in S_W$. On the other hand, if $e$ is cheap, then $\widetilde{r}_e=0$. In this case, since $R^* + \widetilde{R} = I_W\mod 2$, we have that $e\notin S_W$.

Now, consider an edge $e\in E(G)$ with exactly one endpoint in $X_W$. In this case, $r^*_e=1$. On the one hand, if $e$ is expensive, then $\widetilde{r}_e=0$. In this case, since $R^* + \widetilde{R} = I_W\mod 2$, we have that $e\in S_W$. On the other hand, if $e$ is cheap, then $\widetilde{r}_e=1$. In this case, since $R^* + \widetilde{R} = I_W\mod 2$, we have that $e\notin S_W$.
\end{proof}

In other words, since there are only ${2^t}^{|T|}\leq 2^{k 2^r}$ parity restrictions, to prove Theorem \ref{thmdual}, it is sufficient to prove the following lemma, on which we focus next.

\begin{lemma}\label{lem:dual}
Given a parity restriction, it is possible to decide in time $2^{2^{\OO((2^r+k^2)k)}}\cdot (n+m)^{\OO(1)}$ whether there exists a solution that is compatible with it.
\end{lemma}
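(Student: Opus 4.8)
By Lemmata~\ref{lem:cut1} and~\ref{lem:cut2}, the input $(G,P,T,k)$ is a yes-instance iff there is \emph{some} parity restriction $h'$ and a solution compatible with it; fixing $h$ isolates one of the $2^{k2^r}$ cases. So it suffices to decide, for the fixed $h$, whether there is a set $S\subseteq E(G)\setminus\edges(T)$ with $|S|\le k$ such that for every $e\in\edges(T)$ there is a partition $(X_e,\bar X_e)$ that is good with respect to $(B_W,e)$ (where $W=A^e$) and $\expens(X_e)\setminus\{e\}\subseteq S$. I would first unpack the definition of ``good'' in terms of the data $(V_1,\dots,V_t)$, $B^e$, $f_e$ needed by \edc. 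Set $t=|\disR(P)|\le 2^r$, let $V_i$ be the set of type-$i$ vertices, put $B^e=B_W$ for $W=A^e$, and define $f_e\colon E(G)\to\{0,1\}$ by $f_e(e')=\widetilde r_{e'}$, the entry of $\sumS(B_W)$ indexed by $e'$. A direct comparison of Definition~\ref{def:expensive}/\ref{def:partition} with the ``contributes''/``fits'' definitions preceding \edc shows: $e'$ is expensive with respect to $(B_W,(X,\bar X))$ iff $e'$ contributes to $(e,(X,\bar X))$ with respect to $f_e$; ``$X$ compatible with $B_W$'' iff ``$|X\cap V_i|=b^e_i\bmod 2$ for all $i$''; and hence $(X,\bar X)$ is good with respect to $(B_W,e)$ iff $(X,\bar X)$ fits $e$ with respect to $f_e$ and $B^e$, and $\expens(X)=\cont(e,X)$. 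Therefore the instance $(G,k,t,(V_1,\dots,V_t),\edges(T),(B^e)_{e},(f_e)_e)$ of \edc is a yes-instance iff there is a solution compatible with $h$.

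**The remaining work is to solve \edc within the claimed time bound.** This is exactly the task described in the ``Solving \edc'' paragraph of Section~\ref{sec:overview}, and in the full paper it is the content of Section~\ref{section:dual} via the method of recursive understanding. In the write-up I would: (i) state the auxiliary problem \aedc tailored to recursion and reduce \edc to it; (ii) assume $G$ connected (handle each component, adjusting terminals and parity vectors appropriately, or argue a disconnected instance splits cleanly); (iii) invoke Proposition~\ref{prop:goodSep-p} with $p=2(k+1)$ and $q=2^{2^{\lambda(t+k^2)|T|}}$ for a suitable constant $\lambda$; if $G$ is $(q,p)$-breakable, recurse on one side of a good separation and replace $G$ by a strictly smaller graph $G'$ carrying a summarized interface; if $G$ is $(q,p)$-unbreakable (or already of bounded size), solve directly. (iv) For the unbreakable base case: relax ``fits'' to ``almost fits'' with $|\cont(e,Y_e)\setminus\{e\}|\le k$, find each partition $(Y_e,\bar Y_e)$ via at most $k$ calls to \textsc{Edge Odd Cycle Transversal} (\textsc{Edge Bipartization}), solved by Guo et al.~\cite{GuoGHNW06}, then repair each $(Y_e,\bar Y_e)$ into a partition that also satisfies the $t$ parity constraints, using unbreakability to confine the search to a bounded neighborhood of the cut, and finally assemble $F$ of size $\le k$. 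Summing over the $2^{k2^r}$ parity restrictions and the recursion tree yields the stated running time $2^{2^{\OO((2^r+k^2)k)}}\cdot(n+m)^{\OO(1)}$.

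**The main obstacle is the unbreakable base case**, and within it the parity-repair step. Edge Bipartization handles the ``almost fits'' / cost-$k$ conditions cleanly, but forcing $|X_e\cap V_i|\equiv b^e_i\pmod 2$ simultaneously for all $i\in\{1,\dots,t\}$ and all terminals, \emph{without} paying more than $k$ extra edges in $\cont$, is delicate: a bipartization solution need not respect the parities, and flipping a side of the cut (or a connected piece of it) changes both the parity profile and the contributing-edge set. The key technical device is that in a $(q,p)$-unbreakable graph, any two partitions with $\le p$ crossing edges agree up to a bounded-size symmetric difference on one side, so the space of candidate cuts ``near'' a given $(Y_e,\bar Y_e)$ has size bounded by a function of $q$ and $p$ only; one enumerates these, checks the parity and contribution conditions, and glues the per-terminal choices together consistently. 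Getting the parameters $p,q,\lambda$ to line up so that (a) the separations found by Proposition~\ref{prop:goodSep-p} are small enough for the summary to be carried through recursion and (b) the unbreakable neighborhoods are small enough to enumerate, while (c) the number of recursive calls stays bounded, is where the double-exponential blow-up enters and is the most error-prone accounting in the proof. The full details are deferred to the body of Section~\ref{section:dual}.
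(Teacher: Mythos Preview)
Your proposal is correct and follows essentially the same route as the paper: reduce the fixed-parity-restriction problem to \edc\ by matching Definitions~\ref{def:expensive}/\ref{def:partition} with the ``contributes''/``fits'' definitions (this is exactly the paper's Section~5.4), and then solve \edc\ via the \aedc\ formulation and recursive understanding, with the unbreakable base case handled through \textsc{Edge Bipartization} plus a local search exploiting the closeness of preliminary partitions. One small remark: the paper implements your ``enumerate candidate cuts near $(Y_e,\bar Y_e)$'' step not by brute-force enumeration but by a color-coding argument with universal sets (highlighting the symmetric difference and its boundary) followed by dynamic programming over the resulting small components; also, the final ``summing over the $2^{k2^r}$ parity restrictions'' is not part of this lemma's bound, since $h$ is already fixed here.
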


From now on, we fix a parity restriction $h$.

\subsection{\edc: Graph Problem}

Overall, we observe that we can now translate our input instance to an instance of a graph problem. To formulate this observation precisely, assume that we are given a graph $G$ with $n$ vertices and $m$ edges, non-negative integers $k$ and $t$, a partition $V(G)=(V_1,V_2,\ldots,V_t)$, a set $T\subseteq E(G)\cup D$  where $D$ is a set of dummy elements (which are not edges in $E(G)$), a binary vector $(b^e_1,b^e_2,\ldots,b^e_t)$ for $e\in T$, and a function $f_e: E(G)\rightarrow \{0,1\}$ for $e\in T$. The necessity of introducing the dummy-set $\{1,2,\ldots,d\}$ will be cleared in Section \ref{sec:connectivity}. We need the following definitions, translating previous definitions to the given simpler setting.

\begin{definition}
Given $e\in T$ and a partition $(X,\bar{X})$ of $V(G)$, we say that an edge $e'\in E(G)$ {\em contributes to $(e,(X,\bar{X}))$} if one of the three following conditions is true:
\begin{enumerate}
\item Both endpoints of $e'$ belong to $X$ and $f_{e}(e')=1$.
\item Both endpoints of $e'$ belong to $\bar{X}$ and $f_{e}(e')=1$.
\item Exactly one of the endpoints of $e'$ belongs to $X$ and $f_{e}(e')=0$.
\end{enumerate}
Accordingly, we define $\cont(e,X)$ as the set of edges that contribute to $(e,(X,\bar{X}))$.
\end{definition}

\begin{definition}
Let $(X,\bar{X})$ be a partition of $V(G)$, and let $e\in T$.
We say that $(X,\bar{X})$ {\em almost fits $e$} if $T\cap\cont(e,X)=\{e\}$. Moreover, if $(X,\bar{X})$ almost fits $e$ and for all $1\leq i\leq t$, it holds that $|X\cap V_i|=b^e_i\mod 2$, then we say that $(X,\bar{X})$ fits $e$.
\end{definition}

We are now ready to define our graph problem.

\defproblemu{\edc}%
{A (multi) graph $G$ with $n$ vertices and $m$ edges, non-negative integers $k$ and $t$, a partition $(V_1,V_2,\ldots,V_t)$ of $V(G)$, a set $D\subseteq T\subseteq E(G)\cup D$ where $D$ is a set of dummy elements, a binary vector $B^e=(b^e_1,b^e_2,\ldots,b^e_t)$ for $e\in T$, and a function $f_e: E(G)\rightarrow \{0,1\}$ for $e\in T$.}%
{Is there a set $F\subseteq E(G)\setminus T$ with $|F|\leq k$ such that for each $e\in T$, there exists a partition $(X_e,\bar{X}_e)$ of $V(G)$ that fits $e$ and such that $\cont(e,X_e)\setminus\{e\}\subseteq F$?}

Indeed, given an instance of our current problem where we seek a solution compatible with $h$, we can construct an equivalent instance of \edc{}, $(G',k',t',(V_1,V_2,\ldots,V_{t'}),T',$ $\{B^e=(b^e_1,b^e_2,\ldots,b^e_{t'})\}|_{e\in T'},\{f_e\}_{e\in T'})$, by setting $G'=G$, $k'=k$, $t'=t$, for all $1\leq i\leq t'$, $V_i=\{v\in V(G): \type(v)=i\}$, $T'=\{e\in E(G): A^e\in T\}$, for all $e\in T'$, $B^e=B_W$ for $W=A^e$, and for all $e\in T'$ and $e'\in E(G)$, $f_e(e')=\widetilde{r}$ where $\widetilde{r}$ is the element of index $e'$ in $\sumS(B_W)$ for $W=A^e$. Thus, to prove Lemma \ref{lem:dual}, we can next focus on the proof of the following result.

\begin{lemma}\label{lem:edc}
\edc{} is solvable in time $2^{2^{\OO((t+k^2)|T|)}}\cdot (n+m)^{\OO(1)}$. 
\end{lemma}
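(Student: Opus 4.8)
The plan is to solve \edc{} via the \emph{recursive understanding} framework of Chitnis et al.~\cite{DBLP:journals/siamcomp/ChitnisCHPP16}. First I would set $p=2(k+1)$ and $q=2^{2^{\lambda(t+k^2)|T|}}$ for a suitable constant $\lambda$ (so that $q$ dominates the number of ``types'' of partial solutions one needs to remember across a small separator), and reduce to the case where $G$ is connected: each connected component can be handled almost independently, the only interaction being through the global parity constraints encoded by the vectors $B^e$, which can be distributed among the components by $2^{t|T|}$ guesses. To make recursion possible I would introduce the annotated variant \aedc{}, in which the instance additionally carries a bounded set of ``border'' vertices together with a description of how a partial solution on the rest of the graph may look when restricted to the border --- concretely, for each terminal $e$ a list of the possible values of $(\cont(e,X_e)$ restricted to border edges, $X_e$ restricted to border vertices, the residual parities needed on $V_1,\dots,V_t)$. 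The point of the dummy elements $D$ is exactly to allow a ``collapsed'' subgraph to still impose a cut requirement on the remaining graph even though the original edge was swallowed by the recursion.

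The recursion itself goes as follows. If $|V(G)|$ is bounded by some function of $t,k,|T|$, solve \aedc{} by brute force over all relevant partitions and edge subsets. Otherwise, apply Proposition~\ref{prop:goodSep-p} with parameters $q,p$. If it returns a $(q,p)$-good edge separation $(X,Y)$, recurse: pick the side, say $G[X]$, which is ``small enough'' in the sense that a bounded-size separator (of size $\le p$) detaches it, build an \aedc{} instance on $G[X]$ whose border is the (at most $p=2(k+1)$) endpoints of $E(X,Y)$, solve it recursively to obtain, for every feasible ``border signature'', a minimum partial solution realizing it, and then replace $G[X]$ in $G$ by a bounded-size gadget (on the border vertices, using dummy terminals from $D$) that faithfully records all these signatures. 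The resulting graph $G'$ has strictly fewer vertices, and we recurse on $G'$. The number of border signatures is bounded by roughly $2^{2^{\Oh((t+k^2)|T|)}}$ (this is where the double exponential in the running time comes from), and one checks that each recursive call shrinks the instance, giving the claimed running time after accounting for the branching.

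If instead $G$ is $(q,p)$-unbreakable, we are in the \emph{base case}, which is the technical heart. Here I would first \emph{relax} the problem: instead of looking for a single set $F$ of size $\le k$ serving all terminals, look for a collection of partitions $\{(Y_e,\bar Y_e)\mid e\in T\}$ where each $(Y_e,\bar Y_e)$ merely \emph{almost fits} $e$ (the parity conditions and the size-$k$ budget are temporarily dropped) and $|\cont(e,Y_e)\setminus\{e\}|\le k$. Each such requirement is, after contracting $e$'s side appropriately and using the function $f_e$ to decide which ``same side'' / ``across'' edges are forbidden, an instance of \textsc{Edge Odd Cycle Transversal} (equivalently \textsc{Edge Bipartization}), solvable by Guo et al.~\cite{GuoGHNW06}; iterating over the at most $k$ ``other'' terminal edges that may or may not be separated gives at most $k$ such instances per terminal. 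Finally, I would use the $(q,p)$-unbreakability of $G$ to upgrade the auxiliary partitions $(Y_e,\bar Y_e)$ to genuine fitting partitions $(X_e,\bar X_e)$ with $\cont(e,X_e)\setminus\{e\}$ all contained in a single global set $F$ of size $\le k$: unbreakability guarantees that any two ``reasonable'' partitions differ only within a bounded region, so one searches a bounded-radius neighborhood of each $Y_e$ (bounded by a function of $q,p,k$) to simultaneously fix the parities and align the contributing edge sets.

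The main obstacle I anticipate is precisely this last alignment step in the unbreakable base case: showing that one can simultaneously (i) correct all the parity constraints $|X_e\cap V_i|\equiv b^e_i$, (ii) keep each $(X_e,\bar X_e)$ almost fitting $e$ (no other terminal creeps into $\cont(e,X_e)$), and (iii) make $\bigcup_{e}(\cont(e,X_e)\setminus\{e\})$ fit inside one set of size $\le k$ --- all at once, using only local surgery justified by unbreakability. This requires a careful exchange argument: each local modification near one terminal's cut must be shown not to blow up the contributing sets of the other terminals, which is where the high connectivity of $G$ and the bound $p=2(k+1)$ on the separation size are essential. Correctly bookkeeping the border signatures in \aedc{} so that the gadget replacement in the breakable case is faithful (and the dummy elements behave correctly) is the second delicate point, but it is more routine once the base case is in hand.
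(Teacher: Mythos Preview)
Your proposal is essentially the paper's own approach: it sets the same parameters $p=2(k+1)$ and $q=2^{2^{\lambda(t+k^2)|T|}}$, reduces to the connected case by distributing the parities $B^e$ over components, introduces the annotated variant \aedc{} to enable recursion across a small border, and splits into the breakable case (recurse on one side, collect all border signatures, replace that side by a bounded-size gadget) and the unbreakable base case (compute for each terminal an $e$-preliminary partition via a reduction to \textsc{Edge Odd Cycle Transversal}, then use unbreakability to show any two such partitions differ only on a region of size at most $q$).

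The one place where the paper is more specific than your sketch is the ``local surgery'' in the unbreakable case. You describe it as searching a bounded neighborhood of each $Y_e$; what the paper actually does is: (i) prove that any $e$-preliminary partition is $e$-close (aligned or opposite) to the target $(X_e,\bar X_e)$, (ii) guess the alignment direction, (iii) use an $(n,(q+p)|T|,p|T|)$-universal set to \emph{color} the small symmetric-difference region and its boundary so that it is isolated from the rest of the graph, (iv) observe that after removing the ``boundary'' color class the graph falls into components, of which only the small ones (size $\le q|T|$) can carry any of the symmetric difference, and (v) solve each small component by the brute-force base case and combine the answers by a dynamic program over components and residual parities. This is exactly the mechanism that simultaneously handles your worries (i)--(iii): the parities, the ``almost fits'' constraints, and the shared budget $|F|\le k$ are all threaded through the DP rather than fixed by a direct exchange argument. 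So your plan is right; just be aware that the alignment step is implemented via color coding and DP over small components rather than an ad hoc local search.
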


We remark that in our input instance, it holds that $|T|\leq k$. However, in instances that we construct later, $k$ may become smaller, and therefore it may no longer hold that $|T|\leq k$.

\subsection{Disconnected Graphs}\label{sec:connectivity}

In what follows, it would be convenient to assume that $G$ is a connected graph.
In case $G$ is not a connected graph, we let ${\cal C}=\{C_1,C_2,\ldots,C_q\}$ for the appropriate integer $q$ denote the set of connected components of $G$, and we perform the following procedure. 
Let ${\cal G}$ be the set of all mappings $g$ that assign each $e\in T$ a binary vector $({b^e}'_1,{b^e}'_2,\ldots,{b^e}'_t)$. Note that $|{\cal G}|\leq 2^{t|T|}$.  Then, for each $C\in{\cal C}$, $k'\leq k$ and $g\in{\cal G}$, we define the instance $I(C,k',g)=(C,k',t'=t,(V_1\cap V(C),V_2\cap V(C),\ldots,V_t\cap V(C)),T,\{{B^e}'=g(e)\}|_{e\in T},\{f_e\}_{e\in T})$. Here, the dummy elements are the edges in $T$ that do not belong to $E(C)$. Assuming that we can solve these instances (which is the objective of the rest of this section), we show that we can solve the original instance. For this purpose, we only need the computation based on dynamic programming that is described next.

Let $y(C,g)$ be the smallest $k'\leq k$ such that $I(C,k',g)$ is a yes-instance, where if such $k'$ does not exist, set $y(C,g)=k+1$. Moreover, given $g,g',g''\in{\cal G}$, we denote $g=g'+g''$ if for all $e\in T$, it holds that $g(e)=g'(e)+g''(e)\mod 2$.
We construct a matrix M with the entry $[i,g]$ for all $1\leq i\leq q$ and $g\in{\cal G}$. Now, we compute M$[i,g]$ as follows. At the basis, M$[1,g]=y(C_1,g)$. Now, suppose that $2\leq i\leq q$. Then, M$[i,g] = \min_{g',g''\in {\cal G}, g=g'+g''}\mathrm{M}[i-1,g'] + y(C_i,g'')$. It is straightforward to verify that the solution is yes if and only if M$[q,g]\leq k$ where $g$ is the function mapping each $e$ to $B^e$.

From now on, we implicitly assume that $G$ is a connected graph. Moreover, all of the problems instances we will construct later will correspond to connected graphs. In other words, the above process that handles disconnected graphs is only done once.

\subsection{The Recursive Procedure}

To solve \edc, we intend to apply a recursive procedure. When constructing a solution to the current instance, we need to integrate solutions for subproblems returned by recursive calls. However, to be able to integrate such solutions, we cannot just use arbitrary solutions, but we need to have solutions of certain kinds.
To this end, we need to restrict the solutions for our subproblems, or in other words, to solve a more general problem that allows us to impose such restrictions by specifying them as part of the input. For this purpose, we define the \aedc{} problem. The output will be a set of ``solutions'', one for ``restriction''. First, we formally define the meaning of a solution. Here, recall that $\widehat{B}$ is the set of all binary vectors with $t$ elements, and that a function $h:T\rightarrow\widehat{B}$ is called a parity restriction.

\begin{definition}
Let $G$ be a graph, $k$ and $t$ be two non-negative integers, $(V_1,V_2,\ldots,V_t)$ be a partition of $V(G)$, $D\subseteq T\subseteq E(G)\cup D$ be a set where $D$ is a set of dummy elements, $f_e: E(G)\rightarrow \{0,1\}$ be a function for $e\in T$, $W$ be a set, and $W^e_1,W^e_2\subseteq V(G)$ be two sets for $e\in T$.

Then, given a parity restriction $h$, together with set of partitions, $\{(L^e,R^e)\}|_{e\in T}$, of $W$, we say that a set $F\subseteq E(G)\setminus T$ with $|F|\leq k$ and a partition $(X_e,\bar{X}_e)$ of $V(G)$ for $e\in T$ is an {\em $(h,\{(L^e,R^e)\}|_{e\in T})$-solution} if for each $e\in T$, it holds that $(X_e,\bar{X}_e)$ fits $e$, $L^e\subseteq X_e$, $R^e\subseteq \bar{X}_e$, $\cont(e,X_e)\setminus\{e\}\subseteq F$, $W^e_1\subseteq X_e$ and $W^e_2\subseteq \bar{X}_e$.
We say that an $(h,\{(L^e,R^e)\}|_{e\in T})$-solution $(F,\{(X_e,\bar{X}_e)\}|_{e\in T})$ is {\em optimal} if there is no $(h,\{({L^e},{R^e})\}|_{e\in T})$-solution $(F',\{(X'_e,\bar{X}'_e)\}|_{e\in T})$ such that $|F'|<|F|$.
\end{definition}

Let us fix $p=2(k+1)$.

\defproblemuTask{\aedc}%
{A (multi) graph $G$ with $n$ vertices and $m$ edges, non-negative integers $k$ and $t$, a partition $(V_1,V_2,\ldots,V_t)$ of $V(G)$,
a set $D\subseteq T\subseteq E(G)\cup D$ where $D$ is a set of dummy elements, 
a function $f_e: E(G)\rightarrow \{0,1\}$ for $e\in T$, a set $W\subseteq V(G)$ with $|W|\leq 2p$, and sets $W^e_1,W^e_2\subseteq V(G)$ for $e\in T$.}%
{For every parity restriction $h$ and a set of partitions, $\{(L^e,R^e)\}|_{e\in T}$, of $W$, determine whether there exists an $(h,\{(L^e,R^e)\}|_{e\in T})$-solution, and if the answer is positive, return an optimal one.}

We call our recursive procedure, designed in the rest of this section, \alg{Recurs}. The form of a call to this procedure is \alg{Recurs}$(G,k,t,(V_1,V_2,\ldots,V_t),T,\{f_e\}|_{e\in T},W,\{(W^e_1,W^e_2)\}|_{e\in T})$. Before we proceed to describe a recursive call, we remark that the solution for our original problem is the answer returned by $(G,k,t,(V_1,V_2,\ldots,V_t),T,\{f_e\}|_{e\in T},\emptyset,\{(\emptyset,\emptyset)\}|_{e\in T})$ with respect to the parity restriction $h$ such that $h(e)=B^e$ for $e\in T$, where $(G,k,t,(V_1,V_2,\ldots,V_t),T,\{B^e\}|_{e\in T},\{f_e\}|_{e\in T})$ is the input instance. Let $\alpha$ be a fixed constant whose exact value is determined later (more precisely, we give several lower bounds for $\alpha$ throughout the paper, and the exact value of $\alpha$ is the maximum among them). From now on, to prove Lemma \ref{lem:edc}, we focus on the proof of the following result.

\begin{lemma}\label{lem:aedc}
\aedc{} is solvable in time $\tau(n,m,k,t,|T|)=2^{2^{\alpha(t+k^2)|T|}}\cdot (n+m)^{\alpha}$.
\end{lemma}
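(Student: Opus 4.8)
The plan is to prove Lemma~\ref{lem:aedc} by the method of recursive understanding, following the high-level recipe of Chitnis et al.~\cite{DBLP:journals/siamcomp/ChitnisCHPP16}. First I would dispose of the trivial cases: if $G$ has bounded size (say $|V(G)|+|E(G)|\le q$ for the threshold $q=2^{2^{\lambda(t+k^2)|T|}}$ chosen below), then there are only $2^{\OO((t+k^2)|T|)}$ choices of $(h,\{(L^e,R^e)\}_{e\in T})$ and, for each, a brute-force enumeration over all partitions $(X_e,\bar X_e)$ of $V(G)$ and all candidate sets $F$ of size $\le k$ solves the problem directly within the claimed running time. By Section~\ref{sec:connectivity} we may assume $G$ is connected, and all recursively generated instances will remain connected. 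For the non-trivial case, I would invoke Proposition~\ref{prop:goodSep-p} with $p=2(k+1)$ and $q$ as above: it either reports that $G$ is $(q,p)$-unbreakable, which is the base case handled separately, or it returns a $(q,p)$-good edge separation $(A,B)$ with $|E(A,B)|\le p$ and $|A|,|B|>q$.

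In the separation case, the key step is the \emph{recursive understanding} reduction. Let $Z=E(A,B)$ be the (small) cut and let $N$ be the set of endpoints of $Z$; note $|N|\le 2p$, so $N$ can legitimately play the role of the ``boundary set'' $W$ in \aedc. I would recurse on the side, say $G[A]$ together with the relevant restrictions: form the instance of \aedc{} on $G[A]$ with boundary $W'=N\cap A$ (plus the already-imposed $W$-constraints projected onto $A$, which is why the problem statement carries the $W,\{(W^e_1,W^e_2)\}$ parameters), dummy-ify the terminals lying on the $B$-side, and take the partition $(V_i\cap A)_{i}$. The recursive call returns, for \emph{every} parity restriction and every partition of the (bounded) boundary, an optimal solution inside $A$. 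The crucial combinatorial lemma I must establish is the \emph{replacement / equivalence} lemma: any global solution can be assumed to interact with $G[A]$ only through the behaviour on the boundary, so we may replace the portion of the solution inside $A$ by one of the returned optimal boundary-indexed solutions without changing feasibility and without increasing cost. Using this, I would contract/replace $G[A]$ by a gadget of size bounded by a function of $p,t,|T|$ (encoding the finitely many optimal boundary solutions and their costs), obtaining an equivalent instance $G'$ with $|V(G')|<|V(G)|$, and then recurse on $G'$ (which is still connected, smaller, and has the same $k,t,|T|$). The measure $|V(G)|$ strictly decreases, so the recursion terminates; the branching factor at each separation step is polynomial (number of choices of which side to recurse on, projections of $W$, etc.), so the recursion tree contributes only a $(n+m)^{\OO(1)}$ overhead, and the dominant cost $2^{2^{\alpha(t+k^2)|T|}}$ comes from the base-case routine.

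For the base case — $G$ is $(q,p)$-unbreakable — I would follow the sketch in Section~\ref{subsec:dual}. Fix a target restriction $(h,\{(L^e,R^e)\}_{e\in T})$. First \emph{relax}: instead of demanding $\cont(e,X_e)\setminus\{e\}\subseteq F$ with a global $F$ of size $\le k$, look for a collection of partitions $\{(Y_e,\bar Y_e)\}_{e\in T}$ such that $(Y_e,\bar Y_e)$ almost fits $e$ (with the prescribed boundary/parity side-constraints) and $|\cont(e,Y_e)\setminus\{e\}|\le k$ for each $e$. Each such relaxed requirement — ``find a cut under which all but one prescribed terminal edge is cheap, few other edges are expensive, and the parities/side-constraints hold'' — can be encoded, for each of the $\le k$ admissible guesses of $|\cont(e,Y_e)|$, as an instance of {\sc Edge Odd Cycle Transversal} / {\sc Edge Bipartization} on a suitably recoloured auxiliary graph (the ``cheap vs.\ expensive'' dichotomy of Definition~\ref{def:expensive} is exactly a bipartiteness condition once the $f_e$-colouring is folded into the edges), solvable by Guo et al.~\cite{GuoGHNW06}. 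Having obtained an auxiliary $\{(Y_e,\bar Y_e)\}_{e\in T}$, I would then \emph{repair} it into a genuine solution $\{(X_e,\bar X_e)\}_{e\in T}$ with a common $F$ of size $\le k$: because $G$ is $(q,p)$-unbreakable and $q$ is astronomically larger than any quantity depending on $k,t,|T|$, each $Y_e$ differs from the ``right'' $X_e$ only within a $\OO(k)$-edge neighbourhood of the cut, so one can search exhaustively over all $2^{2^{\OO((t+k^2)|T|)}}$ ways of locally rerouting the $|T|$ cuts within that small neighbourhood and of choosing the shared edge set $F$. This exhaustive local search over all restrictions is exactly what produces the double-exponential $2^{2^{\alpha(t+k^2)|T|}}$ factor.

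I expect the main obstacle to be the replacement lemma in the separation case: one must show that ``what happens inside $A$'' is fully captured by a bounded amount of boundary data — the parity of $|X_e\cap V_i\cap A|$ for each $i$ and each terminal, the partition of $N\cap A$ induced by the cuts, and the minimum number of $F$-edges inside $A$ consistent with these — and that gluing an optimal $A$-side solution for this data to any compatible $B$-side behaviour yields a global solution of no larger cost. The delicate points are (i) that the parity constraints $|X_e\cap V_i|\equiv b^e_i \pmod 2$ split additively over the bipartition $A\uplus B$, so they can be threaded through the recursion by enumerating the $A$-side parity contribution (this is why \aedc{} returns an answer for \emph{every} parity restriction rather than a single one), and (ii) that the ``almost fits'' condition $T\cap\cont(e,X_e)=\{e\}$ is preserved under replacement, which requires that no terminal edge inside $A$ changes its cheap/expensive status — guaranteed because that status depends only on the side-assignment of its (two, internal-to-$A$) endpoints and on $f_e$, both of which are part of the boundary data we index on. Getting the bookkeeping of dummy elements, the $W$-side constraints, and the size bound on the replacement gadget mutually consistent — so that $k,t,|T|$ do not grow and $|V(G)|$ strictly drops — is where the bulk of the technical care will go; everything else is a routine instantiation of the recursive-understanding template plus the bipartization subroutine.
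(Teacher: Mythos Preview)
Your overall architecture matches the paper: recursive understanding, with the separation branch (recurse on one side, prove a replacement lemma, shrink that side to a bounded gadget, recurse on the reduced graph) and an unbreakable base case seeded by preliminary partitions obtained from \textsc{Edge Bipartization}. The replacement-lemma discussion, including the additive splitting of the parity constraints over $A\uplus B$ and the bookkeeping of dummy terminals and boundary data, is essentially what the paper carries out in Sections~\ref{sec:collecting}--\ref{sec:solveEnd}.

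The genuine gap is in your brute-force steps, which as stated yield triple-exponential rather than the required double-exponential time. In your small-graph case you propose to enumerate all partitions $(X_e,\bar X_e)$ of $V(G)$; with $|V(G)|$ up to $q=2^{2^{\lambda(t+k^2)|T|}}$ this already costs $2^{q}$ per terminal. The same issue recurs in the unbreakable case: closeness of two $e$-preliminary partitions (the paper's Lemma~\ref{lem:partitClose}) only bounds the symmetric difference $X_e\bigtriangleup Y_e$ by $q$ \emph{vertices} with an $O(k)$-edge boundary, so ``exhaustive local rerouting'' over that region is again $2^{q}$, not $2^{2^{\OO((t+k^2)|T|)}}$. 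The paper avoids both overruns by never enumerating partitions of a $q$-sized set directly. For the small case it first guesses $F$ (only $m^{\le k}$ choices after Lemma~\ref{lem:multiplicity} bounds edge multiplicities), after which each of the $\le k+|T|+1$ components of $G\setminus(F\cup T)$ has its partition \emph{forced} once one fixes the side of a single representative vertex (Lemma~\ref{lem:onePartition}), giving a $2^{(k+|T|)|T|}$ factor instead of $2^{q}$. For the unbreakable case it uses an $(n,(q+p)|T|,p|T|)$-universal set to colour the $\le p|T|$ boundary vertices of the difference region, removes them, solves each resulting small component (size $\le q|T|$) via the small-graph routine, and recombines the answers by dynamic programming over parity restrictions. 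Without these two ingredients your running time is off by one exponential level.
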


At a given call, we will ensure that each recursive call that is made corresponds to a graph smaller than the current one (therefore, the depth of the recursion will be bounded by $(n+m)^{O(1)}$). In particular, this means that to prove that the procedure is correct and runs in the desired time (according to Lemma \ref{lem:aedc}), at each given call, we assume (as the inductive hypothesis), that the solutions returned by further recursive calls are correct and that these additional calls are performed in the desired time.

\subsection{Good Separation}

On a high-level, we use the method of recursive understanding \cite{DBLP:journals/siamcomp/ChitnisCHPP16}, in which we incorporate various new, delicate subroutines. Informally, this means that at the basis, we are going to deal with a ``highly-connected'' or a small graph, and at each step where our graph is not highly-connected, we will break it using a very small number of edges into two graphs that are both neither too small nor too large.

\begin{definition}
Let $G$ be a connected graph. A partition $(X,Y)$ of $V(G)$ is called {\em $(q,p)$-good edge separation} if
\begin{itemize}
\item $|X|,|Y|>q$.
\item $|E(X,Y)|\leq p$.
\item $G[X]$ and $G[Y]$ are connected graphs.
\end{itemize}
\end{definition}

Roughly speaking, a graph $G$ is unbreakable if every partition of $V(G)$ with few edges going across must contain a large chunk of $V(G)$ in one of its two sets. Intuitively, this means that $G$ is ``highly-connected'': any attempt to ``break'' it severely by using only few edges is futile. Formally, an unbreakable graph is defined as follows.

\begin{definition}\label{def:unbreakable}
A graph $G$ is {\em $(q,p)$-unbreakable} if it does not have a {\em $(q,p)$-good edge separation}.
\end{definition}

If a graph $G$ is not $(q,p)$-unbreakable, we say that it is $(q,p)$-breakable. Chitnis et al.~\cite{DBLP:journals/siamcomp/ChitnisCHPP16} proved the following result for the appropriate constant $\beta$.

\begin{proposition}[\cite{DBLP:journals/siamcomp/ChitnisCHPP16}]\label{prop:goodSep}
There exists a deterministic algorithm that given a connected graph $G$ along with integers $q$ and $p$, in time $\beta\cdot 2^{\min\{q,p\}\cdot\log(q+p)}\cdot (n+m)^3\log(n+m)$ either finds a $(q,p)$-good edge separation, or correctly concludes that $G$ is $(q,p)$-unbreakable.
\end{proposition}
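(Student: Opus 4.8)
The plan is to reproduce the argument of Chitnis et al.~\cite{DBLP:journals/siamcomp/ChitnisCHPP16}. First I dispose of a trivial case: if $q\geq n$ then two disjoint vertex sets of size more than $q$ cannot both fit inside $V(G)$, so no $(q,p)$-good edge separation exists and I report that $G$ is $(q,p)$-unbreakable; hence I may assume $q<n$. Next I reduce to a \emph{rooted} version of the search: it suffices to decide, for each ordered pair $(u,v)$ of distinct vertices of $G$ (there are $\OO(n^2)$ of them), whether $G$ has a good edge separation $(X,Y)$ with $u\in X$ and $v\in Y$, outputting the first separation found, or declaring unbreakability if all $\OO(n^2)$ searches fail. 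This is correct because any $(q,p)$-good edge separation $(X,Y)$ has $X,Y\neq\emptyset$, so it is discovered when $(u,v)$ is chosen with $u\in X$ and $v\in Y$.

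Fix a pair $(u,v)$; I search for a partition $(X,Y)$ of $V(G)$ with $u\in X$, $v\in Y$, with $G[X]$ and $G[Y]$ connected, with $|E(X,Y)|\leq p$, and with $|X|,|Y|>q$. Two observations make this tractable. \textbf{Connectivity comes for free.} Given \emph{any} edge cut $(A,B)$ with $u\in A$, $v\in B$ and $|E(A,B)|\leq p$, let $A_u$ be the connected component of $G[A]$ containing $u$ and then let $C_v$ be the connected component of $G[V\setminus A_u]$ containing $v$. The partition $(V\setminus C_v,C_v)$ still separates $u$ from $v$, still has at most $p$ edges across it (edges leaving $A_u$ form a subset of $E(A,B)$, and edges leaving $C_v$ within $G[V\setminus A_u]$ form a subset of those), and \emph{both} of its sides are connected --- here one uses that $G$ is connected to argue that every other component of $G[V\setminus A_u]$ has an edge into $A_u$. \textbf{Cuts of bounded size are enumerable.} Using $\OO(p)$ max-flow computations one can, in time $2^{\OO(p)}\cdot(n+m)^{\OO(1)}$, enumerate a family of $2^{\OO(p)}$ candidate $u$--$v$ edge cuts of size at most $p$ (the inclusion-wise ``extremal'' ones, obtained from min-cut/residual-graph computations, together with the surgeries above) such that whenever a good separation $(X,Y)$ with $u\in X$, $v\in Y$ exists, one member of the family has its $u$-side containing $X$; a symmetric enumeration rooted at $v$ produces cuts whose $v$-side contains $Y$.

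It remains to enforce both size bounds $|X|,|Y|>q$ at once with a small connected cut, and this is where the parameter $\min\{q,p\}$ enters and where the main difficulty lies. When $p=\min\{q,p\}$, the candidate cuts above already have inclusion-maximal sides, so one applies the connectivity surgery to each and checks whether both sides exceed $q$; an uncrossing/exchange argument based on submodularity of the cut function shows that if a good separation exists then some candidate survives the check. When $q=\min\{q,p\}$, the size bound is the binding constraint: here I additionally pin down sets $Q_X\ni u$ and $Q_Y\ni v$ of $q+1$ vertices that witness $|X|>q$ and $|Y|>q$, which I can afford by iterating over a $(2q+2)$-universal family of $2$-colorings of $V(G)$ of size $2^{\OO(q)}\log n$ (so that for some coloring $Q_X$ lands on one side and $Q_Y$ on the other), contracting the monochromatic parts to obtain a bounded-size instance, and again reducing to a small-cut computation separating $Q_X$ from $Q_Y$. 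In every branch the enumeration phase costs $2^{\OO(\min\{q,p\}\log(q+p))}$ and each candidate is processed with max-flow and bookkeeping in $(n+m)^{\OO(1)}$ time, so the total running time is within the claimed bound $\beta\cdot 2^{\min\{q,p\}\cdot\log(q+p)}\cdot(n+m)^3\log(n+m)$. The hard part, and the only place where real care is needed, is the reconciliation step: guaranteeing that the bounded family of candidates always contains a \emph{single} partition that is simultaneously connected on both sides, has at most $p$ crossing edges, and has both sides of size more than $q$ --- the three requirements interact with one another, and showing that they can be met together requires the careful uncrossing/highlighting analysis of Chitnis et al.~\cite{DBLP:journals/siamcomp/ChitnisCHPP16}.
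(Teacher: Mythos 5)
This statement is not proved in the paper at all: it is Proposition~\ref{prop:goodSep}, imported verbatim from Chitnis et al.~\cite{DBLP:journals/siamcomp/ChitnisCHPP16}, and the paper's entire ``proof'' is the citation. So the relevant question is whether your sketch would stand on its own as a proof, and it does not. You correctly handle the easy reductions (the trivial case $q\geq n$, the rooted search over $\OO(n^2)$ ordered pairs, and the connectivity surgery that replaces an arbitrary small $u$--$v$ cut by one with both sides connected without increasing the number of crossing edges). But the entire content of the lemma lies in the step you explicitly defer: producing a family of $2^{\OO(\min\{q,p\}\log(q+p))}$ candidate partitions such that, whenever a $(q,p)$-good edge separation exists, \emph{one single candidate} simultaneously satisfies all three constraints --- connectivity of both sides, at most $p$ crossing edges, and both sides of size more than $q$. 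You assert that ``an uncrossing/exchange argument based on submodularity shows that some candidate survives'' and that the universal-set branch ``again reduces to a small-cut computation,'' and then you state outright that verifying these claims ``requires the careful uncrossing/highlighting analysis of Chitnis et al.'' That is the theorem; everything before it is routine preprocessing.

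A concrete symptom of the gap: your connectivity surgery can shrink one side below $q+1$ (you move every component of $G[V\setminus A_u]$ other than $C_v$ over to the $u$-side, but nothing prevents $C_v$ itself from being small even when the original $v$-side $Y$ was large), and your important-cut enumeration guarantees only that some candidate's $u$-side \emph{contains} $X$ --- which makes the $v$-side smaller, again threatening the bound $|Y|>q$. Reconciling these is exactly where the parameter $\min\{q,p\}$ and the universal-set machinery are needed, and that reconciliation is not carried out. As a description of how the cited algorithm works your outline is faithful in spirit, but as a blind proof it leaves the decisive lemma unproven.
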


Next, we denote $q=2^{2^{\lambda(t+k^2)|T|}}$ where $\lambda$ is fixed in Section \ref{sec:replace}, and recall that $p=2(k+1)$. The reason for the choice of these values will be cleared in the rest of the design of \alg{Recurs}. Notice that $\beta\cdot 2^{\min\{q,p\}\cdot\log(q+p)}=2^{2^{\widehat{\lambda}(t+k^2)|T|}}$ for the appropriate constant $\widehat{\lambda}$.

\subsection{Small Graphs}\label{sec:small}

Denote $s=q^4$. Here we handle the case where $n\leq s$. This is the base case of our recursion. For this base case, we prove the following result, where $\eta$ is a constant (to be determined in this section).

\begin{lemma}\label{lem:small}
\aedc{} where $n\leq s$ is solvable in time $\tau_\mathrm{s}(n,m,k,t,$ $|T|)=2^{2^{\eta(t+k^2)|T|}}\cdot (n+m)^{\eta}$.
\end{lemma}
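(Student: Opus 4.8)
The plan is to solve \aedc{} in the base case $n\le s$ essentially by brute force, with enough care that the running time is single-exponential in $2^{\eta(t+k^2)|T|}$ rather than worse. First I would observe that since $G$ has at most $s=q^4$ vertices and no parallel-edge blow-up matters for the size bound beyond a polynomial factor (we may clean multi-edges of the same behaviour, or simply carry $m$ along as a polynomial factor), the number of partitions $(X,\bar X)$ of $V(G)$ is at most $2^n\le 2^s=2^{q^4}$, and $q=2^{2^{\lambda(t+k^2)|T|}}$, so $2^{q^4}=2^{2^{4\lambda(t+k^2)|T|+O(1)}}$, which is of the desired form $2^{2^{\eta(t+k^2)|T|}}$ for $\eta$ a suitable constant multiple of $\lambda$. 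This is the key arithmetic point: a doubly-exponential search over subsets of $V(G)$ is affordable precisely because $n$ is bounded by a function that is itself doubly-exponential in the parameters.

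Next I would unfold the definition of what \aedc{} must output. For each parity restriction $h\colon T\to\widehat B$ — there are at most $(2^t)^{|T|}=2^{t|T|}$ of these — and each tuple of partitions $\{(L^e,R^e)\}_{e\in T}$ of $W$ — since $|W|\le 2p=O(k)$, there are at most $(2^{2p})^{|T|}=2^{O(k|T|)}$ such tuples — we must decide whether an $(h,\{(L^e,R^e)\}_{e\in T})$-solution exists and, if so, return one of minimum $|F|$. For a fixed $h$ and a fixed family $\{(L^e,R^e)\}_{e\in T}$, I would iterate over all choices of a partition $(X_e,\bar X_e)$ of $V(G)$ for each $e\in T$ independently (so $(2^n)^{|T|}\le 2^{s|T|}$ choices in total), and for each such choice check in polynomial time the conditions defining a solution: that $(X_e,\bar X_e)$ fits $e$ (i.e. $(X_e,\bar X_e)$ almost fits $e$ and the parity conditions $|X_e\cap V_i|\equiv b^e_i\pmod 2$ hold, reading $b^e_i$ off $h(e)$), that $L^e\subseteq X_e$, $R^e\subseteq\bar X_e$, $W^e_1\subseteq X_e$, $W^e_2\subseteq\bar X_e$, and that $\cont(e,X_e)\setminus\{e\}$ has been computed. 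Then set $F=\bigcup_{e\in T}\big(\cont(e,X_e)\setminus\{e\}\big)$; this is forced, it is the minimum possible $F$ for this choice of cuts, and it lies in $E(G)\setminus T$ automatically because each $(X_e,\bar X_e)$ almost fits $e$ (so no terminal other than $e$ contributes). We accept the tuple of cuts if $|F|\le k$, and among all accepted tuples we keep one minimizing $|F|$; this is the optimal solution to return. If no tuple is accepted, we report that no $(h,\{(L^e,R^e)\}_{e\in T})$-solution exists.

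For the running-time bookkeeping I would multiply: $2^{t|T|}$ parity restrictions, times $2^{O(k|T|)}$ families $\{(L^e,R^e)\}_{e\in T}$, times $2^{s|T|}$ choices of cut-tuples, times $(n+m)^{O(1)}$ for the per-choice verification. The dominant factor is $2^{s|T|}=2^{q^4|T|}=2^{2^{4\lambda(t+k^2)|T|+\log_2|T|+O(1)}}$, and since $\log_2|T|\le t+k^2$ (say) this absorbs into $2^{2^{\eta(t+k^2)|T|}}$ for $\eta$ large enough relative to $\lambda$; the other exponential factors $2^{t|T|}$ and $2^{O(k|T|)}$ are only singly exponential and are swallowed likewise, and the polynomial verification cost contributes the $(n+m)^\eta$ factor. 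This yields $\tau_\mathrm{s}(n,m,k,t,|T|)=2^{2^{\eta(t+k^2)|T|}}\cdot(n+m)^{\eta}$, as claimed.

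The main obstacle, such as it is, is not conceptual but careful accounting: one has to be sure that the number of vertices really is bounded by $s=q^4$ and not merely the number of edges (multi-edges could otherwise blow up $m$, but $m$ only enters polynomially, so this is fine), and one has to double-check that choosing the $(X_e,\bar X_e)$ independently across $e\in T$ and then taking $F$ to be the union of the forced contributing sets indeed reproduces exactly the definition of an $(h,\{(L^e,R^e)\}_{e\in T})$-solution — in particular that optimality over all such tuples gives a globally optimal $F$, which holds because for any genuine solution $(F,\{(X_e,\bar X_e)\}_{e\in T})$ the set $\bigcup_e(\cont(e,X_e)\setminus\{e\})$ is a subset of $F$ that still witnesses the same cuts, so replacing $F$ by it can only decrease $|F|$. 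Everything else is a routine enumeration.
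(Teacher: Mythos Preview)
Your approach has a fatal arithmetic error in the key running-time estimate. You write that since $q=2^{2^{\lambda(t+k^2)|T|}}$, it follows that $2^{q^4}=2^{2^{4\lambda(t+k^2)|T|+O(1)}}$. This is one level of exponentiation off. Writing $L=\lambda(t+k^2)|T|$, we have $q=2^{2^L}$, hence $q^4=2^{4\cdot 2^L}=2^{2^{L+2}}$, and therefore
\[
2^{s}=2^{q^4}=2^{2^{2^{L+2}}},
\]
which is \emph{triply} exponential in $(t+k^2)|T|$, not doubly. Enumerating all $(2^n)^{|T|}\le 2^{s|T|}$ tuples of partitions thus blows past the target bound $\tau_\mathrm{s}=2^{2^{\eta(t+k^2)|T|}}\cdot(n+m)^{\eta}$; no choice of the constant $\eta$ can repair this.

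The paper's proof avoids the blow-up by never enumerating partitions of $V(G)$ directly. Instead it first bounds $|E(G)\setminus T|\le k2^{|T|}s^2$ (after cleaning parallel edges with identical $f_e$-profiles), then enumerates the candidate solution set $F\subseteq E(G)\setminus T$ with $|F|\le k$; there are only $(k2^{|T|}s^2)^k=2^{2^{O((t+k^2)|T|)}}$ such sets. The crucial structural point is that once $F$ is fixed, $G\setminus(F\cup T)$ has at most $|T|+k+1$ connected components (because $G$ is connected), and for each $e\in T$ the partition $(X_e,\bar X_e)$ is \emph{uniquely determined} on each component once you specify which side a single representative vertex lies on. So per terminal you have only $2^{|T|+k+1}$ choices rather than $2^n$, and the remaining verification is polynomial. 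Your argument is missing exactly this ``fix $F$ first, then propagate the partition through components'' idea, which is what brings the enumeration down from $2^{n}$ to $s^{O(k)}\cdot 2^{O((|T|+k)|T|)}$.
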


Given two vertices $v,u\in V(G)$ and a function $g: T\rightarrow\{0,1\}$, let $E^g(v,u)$ denote the set of all edges $e'$ in $E(G)\setminus T$ between $v$ and $u$ such that for all $e\in T$, it holds that $f_e(e')=g(e)$. We start with the following lemma. This lemma does not rely on the assumption $|V(G)|\leq s$, and therefore will also be able to use it at a later point.

\begin{lemma}\label{lem:multiplicity}
Let $v,u\in V(G)$ and $g: T\rightarrow\{0,1\}$ such that $|E^g(v,u)|>k$. Then, by removing an edge $e$ from $E^g(v,u)$ from $G$ (updating the instance accordingly), we obtain an equivalent instance.
\end{lemma}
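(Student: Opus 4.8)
The plan is to argue that when there are more than $k$ parallel edges between $v$ and $u$ all carrying the same ``$f$-profile'' $g$ (meaning $f_e(e')=g(e)$ for every terminal $e\in T$), then in any solution $F$ at least one such edge is unused, and moreover any unused edge of this bundle is interchangeable with any other. First I would fix a prospective solution $(F,\{(X_e,\bar X_e)\}_{e\in T})$ of the instance and observe that since $|F|\le k$ and $|E^g(v,u)|>k$, there is some edge $e^\star\in E^g(v,u)\setminus F$. The claim I want is that if $e$ is any edge of $E^g(v,u)$, then deleting $e$ from $G$ (and from every $f_{e'}$, leaving $T$, the $V_i$'s, the $B^{e'}$'s, $W$ and the $W^{e'}_1,W^{e'}_2$ untouched) yields an equivalent instance: the answer for every parity restriction $h$ and every partition family $\{(L^{e'},R^{e'})\}$ of $W$ is unchanged, and optimal solution sizes are preserved.

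The key step is the following interchange observation. Consider two edges $e,e'\in E^g(v,u)$. For any partition $(X,\bar X)$ of $V(G)$ and any terminal $t\in T$, whether $e$ contributes to $(t,(X,\bar X))$ depends only on (a) the ``side pattern'' of its endpoints $v,u$ with respect to $X$ (both in $X$, both in $\bar X$, or split) and (b) the value $f_t(e)$; since $e$ and $e'$ share the same endpoints $v,u$ and $f_t(e)=g(t)=f_t(e')$ for all $t\in T$, we get that $e$ contributes to $(t,(X,\bar X))$ if and only if $e'$ does, for every $t$. Hence for every partition $(X,\bar X)$ and every terminal $t$, the sets $\cont(t,X)$ restricted to the bundle $E^g(v,u)$ are ``all or nothing'': either all of $E^g(v,u)$ lies in $\cont(t,X)$ or none of it does. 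In particular, the notions ``$(X,\bar X)$ almost fits $t$'', ``$(X,\bar X)$ fits $t$'' (the parity conditions $|X\cap V_i|=b^t_i$ do not involve edges at all), and ``$\cont(t,X)\setminus\{t\}\subseteq F$'' behave symmetrically under swapping $e$ for $e'$ inside $F$.

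With this in hand the equivalence is routine. For the forward direction, if $(F,\{(X_e,\bar X_e)\})$ is an $(h,\{(L^e,R^e)\})$-solution of the original instance and $e$ is the edge being deleted: pick $e^\star\in E^g(v,u)\setminus F$ as above (possible since $|F|\le k<|E^g(v,u)|$); if $e\notin F$ we are done immediately (the same $F$ works in $G-e$), and if $e\in F$ replace $e$ by $e^\star$ in $F$ — by the interchange observation each partition $(X_t,\bar X_t)$ still fits $t$ and still satisfies $\cont(t,X_t)\setminus\{t\}\subseteq (F\setminus\{e\})\cup\{e^\star\}$, while $|F|$ is unchanged, $F\cap T=\emptyset$ is preserved, and the $L^e,R^e,W^e_1,W^e_2$ constraints are on vertices only. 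For the reverse direction, any solution in $G-e$ is immediately a solution in $G$ since $G-e$ is a subgraph on the same vertex set with a subset of the edges and $e\notin T$ (the bundle edges are in $E(G)\setminus T$ by hypothesis). Optimality is preserved because the transformation never changes $|F|$. The one point requiring a little care — and the main obstacle — is bookkeeping the ``dummy'' terminals and confirming that the definition of $\cont$, of fitting, and of an $(h,\{(L^e,R^e)\})$-solution genuinely never references the identity of a non-terminal edge beyond its endpoints and its $f$-values; once that is checked, nothing else uses $e$, so the two instances are literally equivalent and the lemma follows.
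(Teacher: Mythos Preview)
Your all-or-nothing interchange observation is exactly the crux, and it matches the paper. The problem is that you have the two directions reversed in terms of which one is trivial.

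For the forward direction (a solution in $G$ yields one in $G-e$), the paper simply takes $F\setminus\{e\}$ together with the same partitions: since $\cont$ computed in $G-e$ is contained in $\cont$ computed in $G$ and $e\notin E(G-e)$, one has $\cont^{G-e}(t,X_t)\setminus\{t\}\subseteq F\setminus\{e\}$ for every $t\in T$ automatically. Your swap argument here is correct but unnecessary.

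The genuine gap is your reverse direction. The claim that ``any solution in $G-e$ is immediately a solution in $G$ since $G-e$ is a subgraph'' is false: the monotonicity goes the other way. When the deleted edge $e$ is reinstated, it may well lie in $\cont^{G}(e',X_{e'})$ for some terminal $e'$ while not lying in $F$ (indeed $e\notin E(G-e)\supseteq F$), so $\cont^{G}(e',X_{e'})\setminus\{e'\}\not\subseteq F$ and the pair fails to be a solution in $G$. This is precisely where the paper uses your all-or-nothing observation together with a counting step: if $e$ contributed to $(e',X_{e'})$ then every edge of $E^g(v,u)$ would contribute, forcing $E^g(v,u)\subseteq\cont(e',X_{e'})\setminus\{e'\}\subseteq F$, and since $|E^g(v,u)|>k\ge |F|$ this is impossible. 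In short, the swap/pigeonhole idea you put in the forward direction is really what is needed in the reverse direction; as written, your reverse argument does not go through.
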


\begin{proof}
Consider some parity restriction $h$ and a set of partitions $\{(L^{e'},R^{e'})\}|_{e'\in T}$ of $W$. On the one hand, suppose that there exists an $(h,\{(L^e,R^e)\}|_{e\in T})$-solution for the original instance, and consider some such solution $(F,\{(X_{e'},\bar{X}_{e'})\}|_{e'\in T})$. Then, it is clear that $(F\setminus\{e\},\{(X_{e'},\bar{X}_{e'})\}|_{e'\in T})$ is a solution to the new instance.

On the other hand, suppose that there exists an $(h,\{(L^{e'},R^{e'})\}|_{e'\in T})$-solution for the new instance, and consider some such {\em optimal} solution $(F,\{(X_{e'},\bar{X}_{e'})\}|_{e'\in T})$. If is sufficient to prove that $e\notin F$, since then $(F,\{(X_{e'},\bar{X}_{e'})\}|_{e'\in T})$ is also a solution to the original instance. Suppose, by way of contradiction, that $e\in F$. Then, since $(F,\{(X_{e'},\bar{X}_{e'})\}|_{e'\in T})$ is optimal, there exists $e'\in T$ such that $e\in\cont(e',X_{e'})\setminus\{e'\}$. However, for all $\widehat{e}\in E^g(v,u)$, it holds that $f_{e'}(\widehat{e})=f_{e'}(e)$. Thus, $E^g(v,u)\subseteq \cont(e',X_{e'})\setminus\{e'\}$. However, $\cont(e',X_{e'})\setminus\{e'\}\subseteq F$, $|E^g(v,u)|>k$ and $|F|\leq k$, and therefore we have reached a contradiction.
\end{proof}

In particular, by exhaustive application of Lemma \ref{lem:multiplicity}, we have that between every pair of vertices there are at most $k2^{|T|}$ non-terminal edges. Since $n\leq s$, we have the following observation (for the current graph).

\begin{observation}
$|E(G)\setminus T|\leq k2^{|T|}\cdot s^2$.
\end{observation}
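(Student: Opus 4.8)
The plan is to bound $|E(G)\setminus T|$ by first normalizing the edge multiplicities using Lemma~\ref{lem:multiplicity}, and then counting the surviving non-terminal edges pair-by-pair. We are in the base case where $n=|V(G)|\le s$. For a fixed (unordered, possibly degenerate) pair of vertices $v,u\in V(G)$ and a function $g\colon T\to\{0,1\}$, Lemma~\ref{lem:multiplicity} tells us that whenever $|E^g(v,u)|>k$ we may delete one edge of $E^g(v,u)$ and stay with an equivalent instance. Since every such deletion strictly decreases $|E(G)\setminus T|$, exhaustively applying this rule terminates, and in the resulting equivalent instance (which is ``the current graph'' referred to in the statement) every class $E^g(v,u)$ has size at most $k$.

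Next I would observe that the non-terminal edges between a fixed pair $v,u$ are partitioned according to their $f$-signature: each $e'\in E(G)\setminus T$ with endpoints $v$ and $u$ determines the function $g_{e'}\colon T\to\{0,1\}$ defined by $g_{e'}(e)=f_e(e')$, and then $e'\in E^{g_{e'}}(v,u)$. Conversely the sets $E^g(v,u)$ for the various $g\in\{0,1\}^T$ are pairwise disjoint and exhaust all non-terminal edges with endpoints $v,u$. Hence the number of non-terminal edges between $v$ and $u$ is $\sum_{g\in\{0,1\}^T}|E^g(v,u)|\le 2^{|T|}\cdot k$. Since $n\le s$, there are at most $n^2\le s^2$ such pairs (the square, rather than $\binom{n}{2}$, accounts for loops of the multigraph); summing the per-pair bound over all pairs yields $|E(G)\setminus T|\le k\,2^{|T|}\,s^2$, which is exactly the claimed bound.

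The statement is not hard, so there is no genuine obstacle; the only points requiring a line of care are that the reduction terminates (immediate, as each step removes an edge) and that the signature-based partition of the non-terminal parallel edges is both disjoint and exhaustive, so that no non-terminal edge is double-counted or omitted. The rest is arithmetic. The observation then feeds directly into the analysis of the small-graph base case of \alg{Recurs}, where a bound on $|E(G)|$ in terms of $q$ (and hence in terms of the intended running time) is needed.
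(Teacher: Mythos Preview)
Your proposal is correct and follows exactly the same reasoning as the paper: exhaustively apply Lemma~\ref{lem:multiplicity} so that each class $E^g(v,u)$ has at most $k$ edges, note that the $2^{|T|}$ signature classes partition the non-terminal edges between any pair, and multiply by the at most $s^2$ vertex pairs. If anything, you spell out the termination and partition arguments more carefully than the paper, which dispatches the observation in a single sentence.
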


Next, let ${\cal F}$ be the family of all subsets of $E(G)\setminus T$ of size at most $k$. Notice that $|{\cal F}|\leq (k2^{|T|}\cdot s^2)^k$. Thus, we can examine each set $F\in{\cal F}$. In what follows, we also examine each pair of a parity restriction $h$ and a set partitions $\{(L^e,R^e)\}|_{e\in T}$ of $W$ separately. This means that we can next fix some $F\in{\cal F}$, parity restriction $h$ and a set of partitions $\{(L^e,R^e)\}|_{e\in T}$ of $W$ (there are at most $(k2^{|T|}\cdot s^2)^k\cdot 2^{t|T|+2p|T|}$ such triples), and to prove Lemma \ref{lem:small}, it is sufficient that we next prove the following result, where $\eta'$ is a constant implicitly fixed by the discussion below (according to which the value of $\eta$ is fixed -- that is, given the following result, it is clear that we can choose a large enough $\eta$ so that Lemma \ref{lem:small} is proved).

\begin{lemma}\label{lem:fSuffice}
It is possible to determine in time $2^{2^{\eta'(t+k^2)|T|}}\cdot (n+m)^{\eta'}$ whether there exists an $(h,\{(L^e,R^e)\}|_{e\in T})$-solution $(F',\{(X'_e,\bar{X}'_e)\}|_{e\in T})$ such that $f'=F$, and if the answer is positive, return such a solution.
\end{lemma}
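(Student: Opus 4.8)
The plan is to exploit that $F$ is already fixed. By the definition of an $(h,\{(L^e,R^e)\}|_{e\in T})$-solution, the pair $(F,\{(X_e,\bar X_e)\}|_{e\in T})$ (with first coordinate equal to our prescribed $F$, which already satisfies $F\subseteq E(G)\setminus T$ and $|F|\le k$) is such a solution if and only if, for \emph{each} $e\in T$ separately, the partition $(X_e,\bar X_e)$ of $V(G)$ satisfies: (i) $(X_e,\bar X_e)$ fits $e$; (ii) $\cont(e,X_e)\setminus\{e\}\subseteq F$; and (iii) $L^e\cup W^e_1\subseteq X_e$ and $R^e\cup W^e_2\subseteq\bar X_e$. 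Since $F$, $h$ and the partitions $(L^e,R^e)$ of $W$ are fixed, these conditions do not couple across terminals: a suitable family exists iff the single-terminal task ``find $X_e\subseteq V(G)$ satisfying (i)--(iii)'' is solvable for every $e\in T$ individually, in which case we output $F$ together with the partitions found. So I fix one $e\in T$. Note that, using $F\cap T=\emptyset$, condition (ii) already forces $\cont(e,X_e)\cap T\subseteq\{e\}$, so within (i) the ``almost fits $e$'' part reduces to requiring that $e$ itself contributes; dummy terminals are handled by the obvious degenerate reading of this (their contribution constraint being vacuous).

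First I would reduce conditions (i)--(ii) to a consistency check on a signed graph. Viewing a partition $(X,\bar X)$ of $V(G)$ as a $2$-colouring, the definition of $\cont$ says that an edge $e'=uv$ of $G$ fails to contribute to $(e,(X,\bar X))$ precisely when either $f_e(e')=0$ and $u,v$ are on the same side, or $f_e(e')=1$ and $u,v$ are on opposite sides. Taking the contrapositive of (ii), condition (ii) is equivalent to: every $e'\in E(G)\setminus(F\cup\{e\})$ does not contribute, i.e.\ imposes an equality constraint on its endpoints if $f_e(e')=0$ and an inequality constraint if $f_e(e')=1$. The residual requirement from (i), that $e$ contributes, is symmetrically an equality constraint between the endpoints of $e$ if $f_e(e)=1$ and an inequality constraint if $f_e(e)=0$ (self-loops being handled by the degenerate reading, which may also directly certify infeasibility). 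Condition (iii) is encoded by merging $L^e\cup W^e_1$ into a single vertex $x$ and $R^e\cup W^e_2$ into a single vertex $y$ (returning ``no'' at once if these two sets intersect) and adding an inequality constraint between $x$ and $y$. We thus obtain a graph with ``$=$''/``$\neq$''-labelled edges; it admits a valid $2$-colouring iff no connected component contains a conflicting cycle, which is decidable in $\Oh(n+m)$ time by a BFS/DFS parity check. If some component is inconsistent, the terminal $e$ fails and we report ``no''.

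Otherwise the set of valid colourings is described componentwise: in each connected component the colouring is determined up to a global flip, and the flippable components are exactly those avoiding $x$ and $y$ (isolated vertices counting as flippable singletons). Enumerate these as $Z_1,\dots,Z_\ell$. Any base colouring yields a partition with parity vector some $\pi^{(0)}\in\{0,1\}^t$, $\pi^{(0)}_i=|X\cap V_i|\bmod 2$, and flipping $Z_j$ changes $|X\cap V_i|$ by exactly $|Z_j\cap V_i|\bmod 2$ \emph{regardless of the base colouring}, since $|Z_j\cap V_i\cap X|+|Z_j\cap V_i\cap\bar X|=|Z_j\cap V_i|$. Hence, with $\delta_j=(|Z_j\cap V_1|,\dots,|Z_j\cap V_t|)\bmod 2$, the parity clause $|X_e\cap V_i|=b^e_i\bmod 2$ of (i) becomes: find $S\subseteq\{1,\dots,\ell\}$ with $\sum_{j\in S}\delta_j=(b^e_1,\dots,b^e_t)+\pi^{(0)}$ over $\mathrm{GF}(2)$. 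This is an affine feasibility question in $\mathrm{GF}(2)^t$ (with $t\le 2^r$), solved, and a witness $S$ extracted when feasible, by Gaussian elimination in $2^{\Oh(r)}\cdot(n+m)^{\Oh(1)}$ time; if infeasible, $e$ fails, otherwise flipping the components in $S$ produces the desired $X_e$. Running this over all $e\in T$ proves Lemma~\ref{lem:fSuffice} in total time $|T|\cdot 2^{\Oh(r)}\cdot(n+m)^{\Oh(1)}$, comfortably within $2^{2^{\eta'(t+k^2)|T|}}\cdot(n+m)^{\eta'}$ for a suitable $\eta'$. The step I expect to require the most care, hence the main obstacle, is establishing the exact equivalence between the contribution conditions (i)--(ii) and the signed-graph constraints, and the precise characterisation of the valid colourings --- in particular the flip-invariance of $\delta_j$, which is what lets the parity requirement collapse to a single $\mathrm{GF}(2)^t$ system instead of interacting with the colour choices.
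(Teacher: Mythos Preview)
Your approach is correct and genuinely different from the paper's. The paper first passes to the connected components $C_1,\ldots,C_r$ of $G\setminus(F\cup T)$ (there are at most $k+|T|+1$ of them since $G$ is connected), then brute-forces over all functions $g\colon T\times\{1,\ldots,r\}\to\{0,1\}$ prescribing, for each terminal $e$ and each component $C_i$, which side a fixed representative $v_i$ lies on. For each of the $2^{r|T|}$ choices of $g$, a propagation argument (their Lemma on ``one partition'') shows that the induced partition of each component is unique, so one just checks whether the resulting $(F,\{(X_e,\bar X_e)\})$ is a solution. This costs $2^{(k+|T|+1)|T|}\cdot(n+m)^{\Oh(1)}$.

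You instead treat each terminal $e$ separately, encode the non-contribution requirement for every edge outside $F\cup\{e\}$ as a parity (equality/inequality) constraint, and characterise the valid side assignments as a base $2$-colouring plus independent flips of the free components; the parity clause of ``fits $e$'' then becomes a single affine system over $\mathrm{GF}(2)^t$ solved by Gaussian elimination. This is cleaner and strictly faster --- $|T|\cdot 2^{\Oh(r)}\cdot(n+m)^{\Oh(1)}$ rather than $2^{\Oh((k+|T|)|T|)}\cdot(n+m)^{\Oh(1)}$ --- and in particular avoids the footnoted caveat in the paper about $|T|$ possibly exceeding $k$. Two small points you should make explicit in the write-up: when one or both of $L^e\cup W^e_1$ and $R^e\cup W^e_2$ are empty there is no supervertex to anchor the labelling, so say how you fix which side is $X_e$ in that case (all components become flippable, and the $\mathrm{GF}(2)$ system still handles the parity); and note that you are keeping the terminal edges of $T\setminus\{e\}$ in the constraint graph (whereas the paper deletes all of $T$), which is fine since those edges must also not contribute.
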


Since $G$ is a connected graph and $|F|\leq k$, we have that $G\setminus (F\cup T)$ contains at most $|T|+k+1$ connected components, which we denote by $C_1,C_2,\ldots,C_r$ for the appropriate $r$. Moreover, for all $1\leq i\leq r$, arbitrarily choose a vertex $v_i\in V(C_i)$. Let ${\cal G}$ be the family of functions $g: T\times\{1,2,\ldots,r\}\rightarrow\{0,1\}$. Now, we need the following definition.

\begin{definition}
Given a function $g\in{\cal G}$, we say that an $(h,\{(L^e,R^e)\}|_{e\in T})$-solution, $(F',\{(X'_e,$ $\bar{X}'_e)\}|_{e\in T})$, is {\em nice} if $F'=F$ and for all $e\in T$ and $1\leq i\leq r$, it holds that if $g(e,i)=0$ then $v_i\in X'_e$ and otherwise $v_i\in \bar{X}'_e$.
\end{definition}

Since $|{\cal G}|\leq 2^{r|T|}\leq 2^{(k+|T|)|T|}$,\footnote{We remark that here, since it may be true $|T|$ is significantly larger than $k$, the bound may be larger than the running time we desire to obtain. However, we only decrease $k$ when we handle disconnected graphs, and thus by defining the problem as an optimization problem a priori, it is possible to assume that $|T|\leq k$. Therefore, for the sake of clarity, we ignore this issue.} we can examine every function $g\in{\cal G}$. Thus, we can fix some $g\in{\cal G}$, and to prove Lemma \ref{lem:fSuffice}, we can next focus on proving the following lemma.

\begin{lemma}\label{lem:sufficeNice}
It is possible to determine whether there exists a nice $(h,\{(L^e,R^e)\}|_{e\in T})$-solution, and if the answer is positive, return such a solution, in polynomial time.
\end{lemma}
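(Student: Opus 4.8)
The plan is to turn the question, for the currently fixed data $F,h,\{(L^e,R^e)\}|_{e\in T},g$, into a collection of linear systems over $\mathrm{GF}(2)$ and to solve them by Gaussian elimination. Since being \emph{nice} forces $F'=F$, a nice $(h,\{(L^e,R^e)\}|_{e\in T})$-solution is nothing but a family of partitions $\{(X'_e,\bar{X}'_e)\}|_{e\in T}$ of $V(G)$ in which each $(X'_e,\bar{X}'_e)$, taken on its own, must: fit $e$ with respect to $f_e$ and $B_W=h(e)$; satisfy $\cont(e,X'_e)\setminus\{e\}\subseteq F$; satisfy $L^e\cup W^e_1\subseteq X'_e$ and $R^e\cup W^e_2\subseteq\bar{X}'_e$; and be consistent with $g$, i.e.\ $v_i\in X'_e$ iff $g(e,i)=0$. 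The conditions on $(X'_e,\bar{X}'_e)$ for distinct $e$ do not interact, so it suffices to decide, for each $e\in T$ separately, whether a partition meeting these requirements exists (and to produce one); the nice solution is then assembled from the pieces, or reported not to exist if some $e$ fails.

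For a fixed $e$, introduce a variable $x_v\in\mathrm{GF}(2)$ for each $v\in V(G)$, with the reading $v\in X'_e\iff x_v=0$. A short case check against the definition of ``contributes'' shows that an edge $e'=uv$ contributes to $(e,(X'_e,\bar{X}'_e))$ iff $x_u+x_v=1+f_e(e')\pmod 2$, and fails to contribute iff $x_u+x_v=f_e(e')\pmod 2$. Hence all requirements become $\mathrm{GF}(2)$-linear: the conjunction of $\cont(e,X'_e)\setminus\{e\}\subseteq F$ and $T\cap\cont(e,X'_e)=\{e\}$ is equivalent to the equations $x_u+x_v=f_e(e')$ for every edge $e'=uv$ with $e'\notin F\cup T$ (equivalently, for every edge lying inside a component $C_i$ of $G\setminus(F\cup T)$) and for every $e'=uv\in T\setminus\{e\}$, together with $x_u+x_v=1+f_e(e)$ for the edge $e=uv$ itself when $e$ is a genuine edge (for a dummy terminal this last equation is dropped, with $\{e\}$ read as $\emptyset$ throughout); the parity part of ``fits'' becomes $\sum_{v\in V_i}x_v=|V_i|+b^e_i\pmod 2$ for $1\le i\le t$; and the remaining requirements give the single-variable equations $x_v=0$ for $v\in L^e\cup W^e_1$, $x_v=1$ for $v\in R^e\cup W^e_2$, and $x_{v_i}=g(e,i)$ for $1\le i\le r$. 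Edges in $F$ and dummy terminals impose no equation.

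This is a system of $O(n+m+t)$ linear equations in $n$ variables over $\mathrm{GF}(2)$, which we solve by Gaussian elimination in polynomial time. If it is infeasible, no partition for $e$ exists, so no nice solution exists and we report this. Otherwise any solution $x$ gives $X'_e=\{v:x_v=0\}$, $\bar{X}'_e=\{v:x_v=1\}$, which by the equivalences above meets all four requirements. Iterating over $e\in T$ and combining, we either output a nice $(h,\{(L^e,R^e)\}|_{e\in T})$-solution $(F,\{(X'_e,\bar{X}'_e)\}|_{e\in T})$ or conclude correctly that none exists; the whole computation is polynomial.

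The only step needing real care is the claim that the cut/fit constraints are captured \emph{exactly}, with nothing lost, by the linear system. One checks that, because each $C_i$ is connected and carries a representative $v_i$, the within-component equations together with $x_{v_i}=g(e,i)$ already pin down $x$ on all of $V(C_i)$, so every solution of the full system is forced and asking only for \emph{some} solution loses nothing; one also has to fix the (minor) bookkeeping for dummy terminals. No combinatorial obstacle beyond this encoding arises --- in particular, neither the connectivity of $G$ nor the bound $n\le s$ is used in this lemma; they only serve to bound the running time of the enumeration in which this lemma is invoked.
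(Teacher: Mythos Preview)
Your proof is correct and takes a genuinely different route from the paper. The paper argues combinatorially: for each $e\in T$ and each component $C_i$ of $G\setminus(F\cup T)$, it does a spanning-tree traversal rooted at $v_i$ to propagate the side of each vertex (this is the content of their Lemma~\ref{lem:onePartition}), obtaining the \emph{unique} partition of $V(C_i)$ compatible with $g$ and the within-component ``does not contribute'' constraints; it then glues the pieces together and verifies the remaining conditions (parity, $L^e,R^e,W^e_1,W^e_2$, and the behaviour on terminal edges) by direct inspection. You instead encode every constraint for a fixed $e$ simultaneously as a single $\mathrm{GF}(2)$ linear system and solve by Gaussian elimination.

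Both approaches are polynomial and equivalent in outcome. Your encoding is cleaner and more uniform: all the side conditions become linear equations on the same footing, and feasibility/infeasibility is decided in one shot rather than by propagate-then-verify. The paper's approach makes more explicit the structural fact you mention in your last paragraph---that the within-component equations plus $x_{v_i}=g(e,i)$ already pin down $x$ on each $C_i$---and is marginally faster (linear-time propagation versus cubic-time elimination), but this is irrelevant for a polynomial-time claim. Your handling of dummy terminals (drop the ``$e$ contributes'' equation and read $\{e\}$ as $\emptyset$) is the right reading of the paper's definitions.
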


Now, by the definition of an $(h,\{(L^e,R^e)\}|_{e\in T})$-solution, we have the following observation.

\begin{observation}\label{obs:contrNice}
Let $(F,\{(X_e,\bar{X}_e)\}|_{e\in T})$ be a nice solution. Then, for all $e\in T$, it holds that $\cont(e,X_e)\setminus\{e\}\subseteq F$.
\end{observation}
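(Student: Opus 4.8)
The plan is to observe that the statement is an immediate unwinding of the relevant definitions, so essentially no new work is required. Recall that by definition a nice $(h,\{(L^e,R^e)\}|_{e\in T})$-solution is a pair $(F,\{(X_e,\bar{X}_e)\}|_{e\in T})$ that is, first and foremost, an $(h,\{(L^e,R^e)\}|_{e\in T})$-solution in the sense of the preceding definition, and only additionally satisfies the side constraints placing each representative vertex $v_i$ on the prescribed side of $(X_e,\bar{X}_e)$ according to $g(e,i)$. Those extra constraints restrict the class of solutions but do not drop or weaken any of the defining clauses of an $(h,\{(L^e,R^e)\}|_{e\in T})$-solution. (The only bookkeeping point is to note that "nice solution" in the statement of the observation is shorthand for "nice $(h,\{(L^e,R^e)\}|_{e\in T})$-solution", with $F$ written for the set previously called $F'$ and $X_e$ for $X'_e$.)

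With that in hand, I would simply invoke the defining property of an $(h,\{(L^e,R^e)\}|_{e\in T})$-solution: among the conditions required of $(F,\{(X_e,\bar{X}_e)\}|_{e\in T})$ is precisely that, for every $e\in T$, $(X_e,\bar{X}_e)$ fits $e$, $L^e\subseteq X_e$, $R^e\subseteq\bar{X}_e$, and $\cont(e,X_e)\setminus\{e\}\subseteq F$ (together with the $W^e_1,W^e_2$ containments). Since a nice solution is in particular such a solution, the clause $\cont(e,X_e)\setminus\{e\}\subseteq F$ holds verbatim for each $e\in T$, which is exactly the assertion of the observation. There is no genuine obstacle here; the statement is recorded only to make this consequence explicit for later use in the design of \alg{Recurs}.
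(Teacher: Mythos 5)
Your proposal is correct and matches the paper, which states this observation with no proof beyond the remark that it follows from the definition of an $(h,\{(L^e,R^e)\}|_{e\in T})$-solution: the clause $\cont(e,X_e)\setminus\{e\}\subseteq F$ is one of the defining conditions, and a nice solution is by definition such a solution with $F'=F$. Your bookkeeping note about the identification of $F'$ with the fixed set $F$ is exactly the right point to make explicit.
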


We also need the following insight.

\begin{lemma}\label{lem:onePartition}
Given $e\in T$ and $1\leq i\leq r$, there exists at most one partition $(X_e,\bar{X}_e)$ of $V(C_i)$ such that $\cont(e,X_e)\setminus\{e\}\subseteq F$ and if $g(e,i)=0$ then $v_i\in X_e$ and otherwise $v_i\in \bar{X}_e$. Moreover, in polynomial time we can detect whether such a partition exists, and if the answer is positive, return such a partition.
\end{lemma}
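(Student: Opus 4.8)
The plan is to argue that inside a single connected component $C_i$ of $G\setminus(F\cup T)$, the constraint ``$\cont(e,X_e)\setminus\{e\}\subseteq F$'' together with the ``seed'' $v_i\in X_e$ (or $v_i\in\bar X_e$, depending on $g(e,i)$) propagates uniquely to every vertex of $C_i$, so that $X_e\cap V(C_i)$ is forced. First I would observe that every edge $e'\in E(C_i)$ is a non-terminal edge and, by construction of the $C_i$'s, does not lie in $F$. Hence, for the partition $(X_e,\bar X_e)$ to satisfy $\cont(e,X_e)\setminus\{e\}\subseteq F$, no edge of $C_i$ may contribute to $(e,X_e)$: if some $e'\in E(C_i)$ contributed, it would have to belong to $F$, contradicting $e'\notin F$ (note $e'\neq e$ since $e\in T$ and $e'\notin T$). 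So the requirement is equivalent to: \emph{no edge of $C_i$ contributes to $(e,X_e)$}.

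Next I would turn this ``no contributing edge'' condition into a rule that forces the side of each vertex once one vertex's side is known. Fix an edge $e'=\{x,y\}\in E(C_i)$ and recall the definition of contribution with respect to $f_e$: if $f_e(e')=1$, then $e'$ does \emph{not} contribute exactly when $x$ and $y$ lie on opposite sides of $(X_e,\bar X_e)$; if $f_e(e')=0$, then $e'$ does \emph{not} contribute exactly when $x$ and $y$ lie on the same side. Thus along each edge of $C_i$ the relative sides of the two endpoints are completely determined by $f_e(e')$. Since $C_i$ is connected, propagating this rule from $v_i$ along a spanning tree of $C_i$ determines the side of every vertex of $C_i$; this shows there is \emph{at most one} candidate partition of $V(C_i)$ consistent with the seed at $v_i$.

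To finish, I would describe the polynomial-time check. Run a BFS/DFS from $v_i$ in $C_i$, assigning $v_i$ to the side prescribed by $g(e,i)$ and, for each tree edge $e'=\{x,y\}$ with $x$ already placed, placing $y$ on the side forced by the rule above (same side if $f_e(e')=0$, opposite side if $f_e(e')=1$). This produces the unique candidate partition $(X_e,\bar X_e)$ of $V(C_i)$; then for every non-tree edge of $C_i$ verify that it too is non-contributing under this assignment, and reject if some edge fails. If all edges pass, return the partition; otherwise report that no valid partition exists. The whole procedure is linear in $|V(C_i)|+|E(C_i)|$, hence polynomial. I do not expect a genuine obstacle here; the only point requiring mild care is making sure the argument ``$e'\in E(C_i)\Rightarrow e'\notin F$ and $e'\neq e$'' is stated cleanly (it follows because $F$ and $T$ are exactly the edges deleted to form the components $C_i$), so that the reduction to ``no edge of $C_i$ contributes'' is airtight, and that parallel edges between the same pair of vertices are handled by the same rule (they impose the same constraint, or an immediately detectable contradiction).
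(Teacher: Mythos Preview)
Your approach is essentially the same as the paper's: both build a spanning tree of $C_i$, seed the side of $v_i$ from $g(e,i)$, propagate along tree edges using the rule ``same side if $f_e(e')=0$, opposite side if $f_e(e')=1$'' to obtain the unique candidate, and then verify. One small inaccuracy: your claim that the constraint is \emph{equivalent} to ``no edge of $C_i$ contributes'' is only one implication --- there may be edges of $T\setminus\{e\}$ (or of $F$) with both endpoints in $V(C_i)$ that are not in $E(C_i)$, and the terminal ones must also be non-contributing; this does not affect your uniqueness argument, but your final verification should check the full condition $\cont(e,X_e)\setminus\{e\}\subseteq F$ over all edges of $G$ with both endpoints in $V(C_i)$, exactly as the paper does, rather than only the non-tree edges of $C_i$.
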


\begin{proof}
Let us fix $e\in T$ and $1\leq i\leq r$. We compute a spanning tree of $C_i$, and root it at $v_i$. Now, we traverse this tree from the root to the leaves. Clearly, if $g(e,i)=0$ then we determine that $v_i\in X_e$ and otherwise we determine that $v_i\in \bar{X}_e$. When we reach a vertex $v$ that is not the root $v_i$, we assume that we have already correctly determined whether its parent lies in $X_e$ or $\bar{X}_e$. Now, suppose that we currently examine some vertex $v$, and let $u$ be its parent. We have the following cases.
\begin{itemize}
\item There exist two edges, $e'$ and $e''$, between $v$ and $u$ such that $f_e(e')\neq f_e(e'')$. In this case, there does not exist a partition $(X_e,\bar{X}_e)$ of $V(C_i)$ such that $\cont(e,X_e)\setminus\{e\}\subseteq F$ (recall that $C_i$ does not contains any edge from $F\cup T$).
\item For every edge $e'$ between $v$ and $u$, it holds that $f_e(e')=0$. In this case, if $u\in X_e$, then we must insert $v$ into $X_e$, and otherwise we must insert $v$ into $\bar{X}_e$.
\item For every edge $e'$ between $v$ and $u$, it holds that $f_e(e')=1$. In this case, if $u\in X_e$, then we must insert $v$ into $\bar{X}_e$, and otherwise we must insert $v$ into $X_e$.
\end{itemize}

After this process is finished, we have either obtained a partition $(X_e,\bar{X}_e)$ of $V(C_i)$ such that if $g(e,i)=0$ then $v_i\in X_e$ and otherwise $v_i\in \bar{X}_e$ or correctly determined that a partition of the desired type does not exist. Suppose that the former case is true. Then, if there exists a partition $(X'_e,\bar{X}'_e)$ of $V(C_i)$ such that $\cont(e,X'_e)\setminus\{e\}\subseteq F$ and if $g(e,i)=0$ then $v_i\in X_e$ and otherwise $v_i\in \bar{X}_e$, then it must hold that $(X'_e,\bar{X}'_e)=(X_e,\bar{X}_e)$. Thus, it remains to check whether $\cont(e,X_e)\setminus\{e\}\subseteq F$, which can be done in polynomial time.
\end{proof}

Thus, to prove Lemma \ref{lem:sufficeNice}, we apply Lemma \ref{lem:onePartition} for each $e\in T$ and $1\leq i\leq r$. By Observation \ref{obs:contrNice}, if in at least one of the applications, a partition is not found, we deduce that there does not exist a nice solution. Otherwise, by combining the partitions we obtained, we get a pair $(F,\{(X_e,\bar{X}_e)\}|_{e\in T})$, and we can clearly determine in polynomial time whether it is indeed a nice solution. This concludes the proof of Lemma \ref{lem:sufficeNice}, and therefore also of Lemma \ref{lem:small}.

\subsection{Unbreakable Graphs}\label{sec:unbrekable}

Let us first handle the case where $G$ is $(q,p)$-unbreakable. By Lemma \ref{prop:goodSep}, we can detect that $G$ is $(q,p)$-unbreakable in time $2^{2^{\widehat{\lambda}(t+p)|T|}}\cdot (n+m)^3\log(n+m)$. In this case, we only make recursive calls that are associated with instances handled in Section \ref{sec:small}. In what follows, we examine each pair of a parity restriction $h$ and a set of partitions $\{(L^e,R^e)\}|_{e\in T}$ of $W$ separately. Then, for every pair $(h,\{(L^e,R^e)\}|_{e\in T})$, by choosing a large enough $\alpha$ in advance, to handle the current case, it is sufficient that we next prove the following lemma where $\alpha^*=\alpha/2$.

\begin{lemma}\label{lem:sufficeunbreak}
In time $\tau^*(n,m,k,t,|T|)=2^{2^{\alpha^*(t+k^2)|T|}}\cdot (n+m)^{\alpha^*}$, determine whether there exists an $(h,\{(L^e,R^e)\}|_{e\in T})$-solution, and if the answer is positive, return an optimal one.
\end{lemma}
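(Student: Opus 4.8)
The plan is to reduce Lemma~\ref{lem:sufficeunbreak} to a base case of the recursion in three steps: first fix the relevant ``guesses'', then compute an auxiliary collection of partitions by a reduction to \textsc{Edge Bipartization}, and finally use the $(q,p)$-unbreakability of $G$ to argue that a genuine optimal solution can be extracted from the auxiliary partitions after modifications confined to a region on at most $s=q^4$ vertices. For the first step, recall that we are already examining a single parity restriction $h$ and a single family $\{(L^e,R^e)\}|_{e\in T}$ of partitions of $W$, so for each $e\in T$ the target parity vector $B^e=h(e)$ is fixed; combined with the given $f_e$ this determines exactly which partitions \emph{fit} $e$. It is convenient to read $f_e$ as a sign on the edges of $G$ ($f_e(e')=0$ meaning ``$e'$ wants to be cut'', $f_e(e')=1$ meaning ``$e'$ wants to stay uncut''), so that $\cont(e,X)$ is the set of edges \emph{violated} by the $2$-colouring $(X,\bar X)$, and ``$(X,\bar X)$ fits $e$'' says that $e$ is violated, no other terminal is violated, and $|X\cap V_i|\equiv b^e_i \pmod 2$ for all $i$. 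Since in a genuine optimal solution $(F^\star,\{(X^\star_e,\bar X^\star_e)\}|_{e\in T})$ we have $\cont(e,X^\star_e)\setminus\{e\}\subseteq F^\star$ and $|F^\star|\le k$, each $(X^\star_e,\bar X^\star_e)$ is in particular a partition that \emph{almost fits} $e$, that respects $L^e\subseteq X^\star_e$, $R^e\subseteq \bar X^\star_e$, $W^e_1\subseteq X^\star_e$, $W^e_2\subseteq \bar X^\star_e$, and that has $|\cont(e,X^\star_e)\setminus\{e\}|\le k$.

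The second step computes, for every $e\in T$ independently, such a partition $(Y_e,\bar Y_e)$; if for some $e$ none exists, we may safely report that there is no $(h,\{(L^e,R^e)\}|_{e\in T})$-solution. For a single $e$, forcing every terminal $e'\ne e$ to be non-violated and $e$ to be violated prescribes, for each terminal, whether the $2$-colouring must cut it or keep it uncut; identifying the vertices forced to $X$ by $L^e\cup W^e_1$ into one vertex and those forced to $\bar X$ by $R^e\cup W^e_2$ into another reduces the side constraints to two precoloured vertices (inconsistencies are detected immediately). What is left is to $2$-colour the resulting signed graph minimising the number of violated non-terminal edges, and to check whether this minimum is at most $k$; this is the problem of deleting a minimum edge set to make a signed graph balanced, which reduces to at most $O(k)$ instances of \textsc{Edge Bipartization} (for instance one per target value of the minimum), each solvable in time $2^{O(k)}\cdot(n+m)^{O(1)}$ by the algorithm of Guo et al.~\cite{GuoGHNW06}. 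By Lemma~\ref{lem:multiplicity} we may assume that between any two vertices there are only $O(k2^{|T|})$ non-terminal edges, so the instances are not blown up. Over all $e\in T$ this step costs $2^{O(k)}\cdot(n+m)^{O(1)}$.

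The third step is the technical heart, and it is where the $(q,p)$-unbreakability of $G$ enters. Assume a genuine optimal solution $(F^\star,\{(X^\star_e,\bar X^\star_e)\}|_{e\in T})$ exists. For each $e$, both $\cont(e,Y_e)\setminus\{e\}$ and $\cont(e,X^\star_e)\setminus\{e\}$ have at most $k$ edges, and a short sign-chasing argument on the graph obtained from $G$ by subdividing the ``want-to-be-cut'' edges shows that the symmetric difference $Z_e := Y_e \bigtriangleup X^\star_e$ is separated from its complement by at most $2k+1 < p = 2(k+1)$ edges of $G$, and that $G[Z_e]$ splits into only $O(k)$ connected pieces each attached to the rest of $G$ through these few edges. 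Applying Definition~\ref{def:unbreakable} and Proposition~\ref{prop:goodSep} with $q=2^{2^{\lambda(t+k^2)|T|}}$ to these pieces, and resolving the ambiguity of ``which side is the large one'' using $|W|\le 2p$ together with the constraints $L^e,R^e,W^e_1,W^e_2$ and the connectivity of $G$, one shows that an optimal solution may be chosen so that every $Z_e$ has at most $q$ vertices; hence $\bigcup_{e\in T}Z_e$ together with the $O(k|T|)$ vertices incident to its boundary edges spans a subgraph on at most $s$ vertices. All of $G$ outside this region is forced to agree with the auxiliary partitions (and inside each block delimited by the boundary edges the partition is unique once its sign pattern is fixed, by the propagation used for Lemma~\ref{lem:onePartition}), so, after contracting the determined part of $G$ with appropriate bookkeeping, the question ``does an optimal $(h,\{(L^e,R^e)\}|_{e\in T})$-solution exist, and if so produce one'' is answered by a recursive call \alg{Recurs}$(\cdot)$ on an \aedc{} instance whose graph has at most $s$ vertices --- which is exactly the base case of Section~\ref{sec:small} (so the only recursive calls we make here are of that kind). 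The main obstacle is precisely this step: identifying the bounded region \emph{without} knowing the optimal solution, and pushing the bound down to $s=q^4$ so that the leftover instance is a legitimate base case rather than merely a smaller instance; this is what dictates the doubly-exponential choice of $q$.

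Finally, for the running time, we have done the above for a single pair $(h,\{(L^e,R^e)\}|_{e\in T})$, and there are at most $2^{t|T|}\cdot 2^{2p|T|} = 2^{O((t+k)|T|)}$ such pairs. For each we spend $2^{O(k)}\cdot(n+m)^{O(1)}$ on the \textsc{Edge Bipartization} computations, polynomial time on the bookkeeping, and at most $\tau_\mathrm{s}(s,s^{O(1)},k,t,|T|) = 2^{2^{\eta(t+k^2)|T|}}\cdot s^{O(1)} = 2^{2^{O((t+k^2)|T|)}}$ on the base-case call (here $s^{O(1)} = 2^{2^{O(\lambda(t+k^2)|T|)}}$). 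Summing gives $2^{2^{O((t+k^2)|T|)}}\cdot(n+m)^{O(1)}$, which is at most $\tau^\star(n,m,k,t,|T|) = 2^{2^{\alpha^\star(t+k^2)|T|}}\cdot(n+m)^{\alpha^\star}$ once $\alpha$ (and hence $\alpha^\star=\alpha/2$) is taken large enough relative to $\lambda$ and $\eta$, which proves Lemma~\ref{lem:sufficeunbreak}.
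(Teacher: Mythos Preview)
Your outline matches the paper's strategy at a high level: compute, for each $e\in T$, an $e$-preliminary partition $(Y_e,\bar Y_e)$ via a reduction to \textsc{Edge Bipartization} (this is exactly Lemma~\ref{lem:perlimPartition}); then use $(q,p)$-unbreakability to argue that any genuine solution partition $(X^\star_e,\bar X^\star_e)$ differs from $(Y_e,\bar Y_e)$ only on a set of at most $q$ vertices with at most $2(k+1)$ boundary edges (this is Lemma~\ref{lem:partitClose} and Corollary~\ref{cor:partitClose}); and finally reduce to base-case instances of size at most~$s$.

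The gap is in your third step. You correctly identify the obstacle --- locating the bounded region $\bigcup_e Z_e$ without knowing the optimal solution --- but you do not actually supply a mechanism for it. Your appeal to ``$|W|\le 2p$ together with the constraints $L^e,R^e,W^e_1,W^e_2$ and the connectivity of $G$'' does not work: none of these data pin down which $\le q$ vertices per terminal need to be flipped, and there are far too many candidate regions to enumerate directly. The paper resolves this with an additional layer of guessing via \emph{universal sets} (Proposition~\ref{theorem:splitter}): one constructs an $(n,(q+2(k+1))|T|,2(k+1)|T|)$-universal family of $\{0,1\}$-colourings, and for the right colouring $c$ all of $\bigcup_e S_e$ is coloured~$0$ and its (small) neighbourhood is coloured~$1$. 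Removing the $1$-coloured vertices then shatters $G$ into components; the components of size $\le q|T|$ are the only places where the solution can deviate from $(Y_e,\bar Y_e)$ (Lemma~\ref{lem:onlyHandleSmall}), and each of them is solved independently as a base-case instance (Section~\ref{sec:small}). Because different components may overlap in their contributions to the parity constraints, the paper then combines these via a dynamic program over parity restrictions (Section~5.10.5), not by a single recursive call on one contracted instance as you suggest.

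Two smaller remarks. First, the paper's preliminary partitions do \emph{not} enforce the side constraints $L^e,R^e,W^e_1,W^e_2$ at the bipartization stage (Definition~\ref{def:ePrelim}); adding them as you do is harmless but unnecessary. Second, the ``aligned vs.\ opposite'' ambiguity of Definition~\ref{def:close} is handled in the paper by an explicit $2^{|T|}$-fold branching, not by an argument from $W$.
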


\subsubsection{Preliminary Partitions}\label{sec:prelimPart}

For each $e\in T$, we start by obtaining a partition $(Y_e,\bar{Y}_e)$ of $V(G)$ that has only some of the properties that we want. These properties are given by the following definition.

\begin{definition}\label{def:ePrelim}
Given $e\in T$, we say that a partition $(Y_e,\bar{Y}_e)$ of $V(G)$ is {\em $e$-preliminary} if it almost fits $e$ and $|\cont(e,Y_e)\setminus\{e\}|\leq k$.
\end{definition}

For each $e\in T$, the computation of an $e$-preliminary partition $(Y_e,\bar{Y}_e)$ (if one exists) is given by the following lemma.

\begin{lemma}\label{lem:perlimPartition}
Given $e\in T$, an $e$-preliminary partition $(Y_e,\bar{Y}_e)$ can be computed in time $\beta\cdot 2^{k}\cdot(n+m)^{2}$ (if one exists) for some fixed constant $\beta$.
\end{lemma}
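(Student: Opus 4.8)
The plan is to recast the existence of an $e$-preliminary partition (Definition \ref{def:ePrelim}) as a single instance of \textsc{Edge Bipartization} (equivalently \textsc{Edge Odd Cycle Transversal}) and invoke the algorithm of Guo et al.~\cite{GuoGHNW06}. Fix $e\in T$. If $e$ is a dummy element of $D$ then $e\notin\cont(e,Y)$ for every partition $(Y,\bar{Y})$, so no $e$-preliminary partition exists and we report this; assume from now on that $e\in E(G)$, with endpoints $x,y$. Encode a partition $(Y,\bar{Y})$ of $V(G)$ as a map $\phi\colon V(G)\to\{0,1\}$ with $Y=\phi^{-1}(1)$. The key reformulation is: an edge $e'=uv$ lies in $\cont(e,Y)$ if and only if $\phi(u)\oplus\phi(v)\oplus f_e(e')=1$ (check the three cases of the definition of $\cont$). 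Consequently $(Y,\bar{Y})$ is $e$-preliminary if and only if (a) $\phi(x)\oplus\phi(y)\oplus f_e(e)=1$, (b) $\phi(u)\oplus\phi(v)\oplus f_e(e')=0$ for every $e'=uv\in T\setminus\{e\}$, and (c) the number of non-terminal edges $e'=uv$ with $\phi(u)\oplus\phi(v)\oplus f_e(e')=1$ is at most $k$.

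Next I would enforce the hard requirements (a) and (b) by GF$(2)$ bookkeeping and vertex identifications. The constraints in (b) form a system of parity equations on the graph $(V(G),T\setminus\{e\})$; a single BFS over each connected component, propagating the forced relative value $\rho(v)\in\{0,1\}$ of every vertex with respect to a component root, either discovers that the system is infeasible (in which case no $e$-preliminary partition exists, and we stop) or lets us contract each component to one vertex while recording $\rho$. Every surviving edge $e'=uv$ (non-terminal, or $e$ itself) becomes an edge on the contracted vertex set with updated label $f'(e')=f_e(e')\oplus\rho(u)\oplus\rho(v)$, and one checks that this leaves the quantity in (c) unchanged. Requirement (a) now says that the two endpoints of $e$ must receive colors whose XOR is $f'(e)\oplus 1$; I identify these two vertices as well, with the corresponding parity twist, updating all labels once more (if they already coincide, this forces $f'(e)=1$, and if $f'(e)=0$ no $e$-preliminary partition exists). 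After these steps every $\phi$ on the final vertex set lifts, through the stored parities, to a partition of $V(G)$ satisfying (a) and (b), and (c) becomes: the number of non-terminal edges $e'=uv$ of the contracted graph $G'$ with $\phi(u)\oplus\phi(v)\neq f'(e')$ is at most $k$. Loops of $G'$ with label $1$ always contribute (delete them and decrease the budget accordingly; if they exceed $k$, stop); loops with label $0$ never contribute and are discarded.

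Finally I would reduce "minimize the count in (c) over all $\phi$'' to \textsc{Edge Bipartization}. Build $G^\ast$ from $G'$ by replacing each edge $uv$ of label $1$ by a plain edge $uv$ and each edge $uv$ of label $0$ by a length-two path joining $u$ and $v$ through a fresh internal vertex $w_{uv}$. For a fixed $\phi$ on $V(G')$, the minimum over the colors of the fresh vertices of the number of monochromatic edges of $G^\ast$ equals exactly the count in (c): a plain edge is monochromatic iff its endpoints agree, and for a subdivided pair the best choice for $w_{uv}$ yields zero monochromatic edges when $\phi(u)=\phi(v)$ and forces exactly one monochromatic edge otherwise. Since the edge-bipartization number of any multigraph equals the minimum over $2$-colorings of the number of monochromatic edges, the edge-bipartization number of $G^\ast$ equals the minimum of the count in (c) over all $\phi$. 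Hence an $e$-preliminary partition exists iff this number is at most the (updated) budget; we compute the number, and a witnessing $2$-coloring of $G^\ast$ when it is small enough, using \cite{GuoGHNW06}, then undo the subdivisions and the contractions to obtain $\phi$ on $V(G)$ and output $(Y_e,\bar{Y}_e)=(\phi^{-1}(1),\phi^{-1}(0))$.

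For the running time, the BFS, the two rounds of contraction and the construction of $G^\ast$ cost $O((n+m)^2)$, and $G^\ast$ has $O(n+m)$ vertices and $O(m)$ edges, so the algorithm of \cite{GuoGHNW06} for \textsc{Edge Bipartization} runs on it in time $2^{k}\cdot(n+m)^{O(1)}$, and with its precise bound in time $\beta\cdot 2^{k}\cdot(n+m)^{2}$ for a suitable constant $\beta$; this term dominates and yields the claim. The main obstacle, and where the actual work lies, is verifying that the reduction is exact: that the parity-contraction step faithfully encodes "every terminal but $e$ is non-contributing, and $e$ is contributing'' while leaving the count of contributing non-terminal edges invariant, and that the path-subdivision makes the edge-bipartization number coincide on the nose with the minimum contribution count, including the corner cases of loops and parallel edges produced by the contractions. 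All of these reduce to a careful, case-by-case check of the equivalence "$e'$ contributes $\iff$ $e'$ is monochromatic'' carried through each transformation.
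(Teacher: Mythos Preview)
Your proof is correct and follows the same high-level route as the paper: both recognize that $e'\in\cont(e,Y)$ is equivalent to the parity condition $\phi(u)\oplus\phi(v)\oplus f_e(e')=1$, and both reduce the resulting problem to a single \textsc{Edge Bipartization} instance, subdividing exactly the edges with label~$0$ so that the count of contributing edges coincides with the minimum number of monochromatic edges.

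The one tactical difference worth noting is how the hard constraints on terminals are enforced. The paper keeps all vertices and encodes the requirements ``$e$ contributes'' and ``every other terminal does not contribute'' by inserting $k+1$ parallel copies of an appropriate gadget (either the edge itself or a length-two path), so that any bipartization of size at most $k$ is forced to respect these constraints. You instead preprocess them away exactly: propagate the parity equations induced by $T\setminus\{e\}$ via BFS, contract each component, and then merge the two endpoints of $e$ with the required parity twist. Both devices are standard; yours produces a slightly smaller instance and makes infeasibility of the hard constraints visible before calling the bipartization algorithm, while the paper's gadget approach avoids the bookkeeping of tracking $\rho$ and the relabelling $f'(\cdot)=f_e(\cdot)\oplus\rho(u)\oplus\rho(v)$ through two rounds of contraction. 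Your explicit treatment of loops (label~$1$ loops decrement the budget, label~$0$ loops are discarded) and of the dummy case $e\in D$ is correct and slightly more careful than the paper's write-up.
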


The rest of Section \ref{sec:prelimPart} is devoted to the proof of Lemma \ref{lem:perlimPartition}. To compute $(Y_e,\bar{Y}_e)$, we construct an instance of the {\sc Edge Odd Cycle Transversal (EOCT)} problem (also known as {\sc Edge Bipartization}), which is defined as follows.

\defproblemu{{\sc Edge Odd Cycle Transversal (EOCT)}}%
{A (multi) graph $G'$ with $n'$ vertices and $m'$ edges, and a parameter $k'$.}%
{Is there a subset $S\subseteq E(G')$ of size at most $k'$ such that $G'\setminus S$ is a bipartite graph?}

We construct an instance $(G',k')$ of {\sc EOCT} as follows. First, we set $k'=k$. Now, we initialize $G'$ to be the graph $G\setminus T$, and then perform the following modifications.
\begin{itemize}
\item Define $E_0=\{e'\in E(G)\setminus T: f_e(e')=0\}$. Then, for all $e'\in E_0$, subdivide $e'$. Let $e'_1$ and $e'_2$ be the two resulting edges.
\item If $e\in E(G)$, we have two cases:
	\begin{itemize}
	\item If $f_e(e)=0$, then add $k+1$ parallel edges between the endpoints of $e$ in $G'$. Note that while $e$ does not exist in $G'$, its endpoints do exist, and therefore this operation is well-defined. Let $N_e$ denote the set of newly added edges.
	\item If $f_e(e)=1$, then add a new vertex, $v_e$, and for each endpoint of $e$, add $k+1$ parallel edges between this endpoint and $v_e$. Let $N_e$ denote the set of newly added edges.
	\end{itemize}
\item For each $e'\in (T\cap E(G))\setminus\{e\}$, we have two cases:
	\begin{itemize}
	\item If $f_e(e')=0$, then add a new vertex, $v_e$, and for each endpoint of $e$, add $k+1$ parallel edges between this endpoint and $v_e$. Let $N_{e'}$ denote the set of newly added edges.
	\item If $f_e(e')=1$, then then add $k+1$ parallel edges between the endpoints of $e$ in $G'$. Note that while $e'$ does not exist in $G'$, its endpoints do exist, and therefore this operation is well-defined. Let $N_{e'}$ denote the set of newly added edges.
	\end{itemize}
\end{itemize}
 
Let $(Y_e,\bar{Y}_e)$ be an $e$-preliminary partition. Let $U_Y$ denote the set of each vertex in $V(G')\setminus V(G)$ such that the two vertices adjacent to it in $G'$ (which by our construction belong to $V(G)$), belong to $\bar{Y}_e$. Symmetrically, let $U_Y$ denote the set of each vertex in $V(G')\setminus V(G)$ such that the two vertices adjacent to it in $G'$ belong to $Y_e$. We define a partition $(A_Y,B_Y)$ of $V(G')$ as follows. We set $A_Y=Y_e\cup U_Y$ and $B=\bar{Y}_e\cup \bar{U}_Y\cup (V(G')\setminus (V(G)\cup U_Y\cup \bar{U}_Y))$.
Now, let $S_Y$ denote the set of each edge in $E(G')$ that has both its endpoints in either $A_Y$ or $B_Y$.

\begin{lemma}\label{lem:partToEOCT}
It holds that $|S_Y|\leq k$ and the graph $G'\setminus S_Y$ is a bipartite graph.
\end{lemma}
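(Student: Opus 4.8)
The plan is to verify directly that the partition $(A_Y,B_Y)$ of $V(G')$ has the claimed properties, namely that the set $S_Y$ of "monochromatic" edges (those with both endpoints on the same side) has size at most $k$ and that removing it leaves a bipartite graph. The second claim is immediate: by construction $(A_Y,B_Y)$ is a partition of $V(G')$, and $S_Y$ consists of exactly the edges whose two endpoints lie in the same part, so $G'\setminus S_Y$ has all its edges going between $A_Y$ and $B_Y$, hence is bipartite with bipartition $(A_Y,B_Y)$. So the real work is bounding $|S_Y|$, and I would do this by analysing each type of edge of $G'$ separately, always using the fact that $(Y_e,\bar{Y}_e)$ almost fits $e$ and that $|\cont(e,Y_e)\setminus\{e\}|\leq k$.

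First I would recall how $G'$ is built from $G$: the edges $e'\in E(G)\setminus T$ with $f_e(e')=1$ stay as is; those with $f_e(e')=0$ are subdivided into $e'_1,e'_2$ via a new middle vertex; the terminal $e$ itself and every other terminal $e'$ contribute a gadget $N_e$ (resp.\ $N_{e'}$) of $k+1$ parallel edges, routed either directly between the endpoints of the terminal (when the relevant $f_e$-value makes "both endpoints same side" the contributing configuration) or through a fresh vertex $v_e$ (resp.\ $v_{e'}$). The choice of $A_Y,B_Y$ places each subdivision vertex on a side chosen precisely so that a subdivided edge $e'_1 e'_2$ becomes monochromatic iff the original $e'$ contributes to $(e,Y_e)$, i.e.\ iff the two endpoints of $e'$ were on opposite sides of $(Y_e,\bar Y_e)$ (since $f_e(e')=0$ means "exactly one endpoint in $Y_e$" is the contributing case). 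Likewise a non-subdivided edge $e'$ with $f_e(e')=1$ becomes monochromatic under $(A_Y,B_Y)$ iff both endpoints of $e'$ lie on the same side of $(Y_e,\bar Y_e)$, which again is exactly the condition for $e'$ to contribute. So for every non-terminal $e'\in E(G)$, the edge(s) of $G'$ coming from $e'$ contribute to $S_Y$ iff $e'\in\cont(e,Y_e)$; since $e'\neq e$, this means $e'\in\cont(e,Y_e)\setminus\{e\}$, and there are at most $k$ such edges by the definition of an $e$-preliminary partition.

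Next I would handle the gadget edges. For the terminal $e$ and every other terminal $e'$, the routing (direct vs.\ through $v_e$ or $v_{e'}$) and the placement of $v_e,v_{e'}$ in $B_Y$ is arranged so that the $k+1$ parallel edges of $N_{e'}$ are monochromatic iff the terminal $e'$ does \emph{not} contribute to $(e,Y_e)$ --- except for $e$ itself, where the convention is flipped, so the $N_e$ edges are monochromatic iff $e$ \emph{does} contribute. The key point is the "almost fits" hypothesis: $T\cap\cont(e,Y_e)=\{e\}$, so $e$ contributes but no other terminal $e'$ contributes. Therefore the $N_e$-edges are all monochromatic while, for $e'\in(T\cap E(G))\setminus\{e\}$, the $N_{e'}$-edges are all non-monochromatic. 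Wait --- if all $k+1$ edges of $N_e$ land in $S_Y$, that alone exceeds the budget, so the construction must place $v_e$ / the endpoints so that $N_e$ is \emph{bichromatic}; I would double-check the sign conventions in the lemma's setup (the vertex $v_e$ is added precisely in the $f_e(e)=1$ case and lands in $B_Y$, so the two bundles of $k+1$ edges each go from an endpoint of $e$ to $v_e$, and these are monochromatic iff that endpoint is in $B_Y$). The correct reading is that the gadgets are designed so that $N_e$ contributes nothing to $S_Y$ (the $k+1$ multiplicity is a \emph{lower-bound} enforcement device for the EOCT reduction in the converse direction, not something that should be cut), and similarly each $N_{e'}$ contributes nothing. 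Hence $S_Y$ is entirely accounted for by the at-most-$k$ non-terminal edges of the previous paragraph, giving $|S_Y|\leq k$.

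The main obstacle I anticipate is precisely getting the bookkeeping of the sign conventions exactly right: there are four gadget cases ($e$ with $f_e(e)=0$ or $1$; other $e'$ with $f_e(e')=0$ or $1$), plus the subdivided/non-subdivided split for non-terminals, and in each case one must check that the placement of the new vertices in $A_Y$ vs.\ $B_Y$ (dictated by the definitions of $U_Y$ and $\bar U_Y$, and by the catch-all clause $V(G')\setminus(V(G)\cup U_Y\cup\bar U_Y)\subseteq B_Y$) makes "monochromatic in $(A_Y,B_Y)$" coincide with "contributes to $(e,Y_e)$" for non-terminals and with the intended all-or-nothing behaviour for gadgets. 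I would organise this as a short case analysis, in each case writing down which side each relevant vertex sits on and reading off whether the edge is monochromatic, and then concluding: the total count of monochromatic edges equals $|\cont(e,Y_e)\setminus\{e\}|\leq k$, which proves both halves of Lemma~\ref{lem:partToEOCT}.
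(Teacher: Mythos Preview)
Your proposal is correct and follows essentially the same approach as the paper: the paper also observes bipartiteness is immediate from the definition of $S_Y$, then bounds $|S_Y|$ by showing (via four short claims) that none of the gadget edges $N_e,N_{e'}$ lie in $S_Y$ and that each subdivided pair $\{e'_1,e'_2\}$ contributes at most one edge to $S_Y$, precisely when $e'\in\cont(e,Y_e)$; it packages this as an injection $g\colon S_Y\to \cont(e,Y_e)\setminus\{e\}$. One small imprecision in your write-up: for a subdivided $e'\in E_0$ whose endpoints lie on opposite sides, it is not that ``$e'_1e'_2$ becomes monochromatic'' but that \emph{exactly one} of $e'_1,e'_2$ does (the middle vertex falls into the catch-all $B_Y$), which is what makes your final equality $|S_Y|=|\cont(e,Y_e)\setminus\{e\}|$ come out right.
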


\begin{proof}
By the definition of $S_Y$, it is clear that $G'\setminus S_Y$ is a bipartite graph. To prove that $|S_Y|\leq k$, it is sufficient to construct an injective function $g: S_Y\rightarrow \cont(e,Y_e)\setminus\{e\}$. Towards this, we first need the following claims.
\begin{enumerate}
\item\label{arg:eoct1} $S_Y\cap T=\emptyset$: Since $E(G')\cap T=\emptyset$, this claim is true.
\item\label{arg:eoct2} For all $e'\in E_0$, it holds that $|\{e'_1,e'_2\}\cap S_Y|\leq 1$, and if $|\{e'_1,e'_2\}\cap S_Y|=1$ then $e'\in\cont(e,Y_e)$: Consider some edge $e'\in E_0$. If both of its endpoints belong to $Y_e$, then both of these vertices belong to $A_Y$ and the new vertex common to $e'_1$ and $e'_2$ belongs to $\bar{U}_Y\subseteq B_Y$, and hence $\{e'_1,e'_2\}\cap S_Y=\emptyset$. Symmetrically, if both of endpoints of $e'$ belong to $Y_e$ then again $\{e'_1,e'_2\}\cap S_Y=\emptyset$. Now, suppose that one endpoint of $e'$, $x$, is in $Y_e$ and the other, $y$, is in $\bar{Y}_e$. Then, $e'\in\cont(e,Y_e)$. Moreover, the new vertex common to $e'_1$ and $e'_2$ belongs to $B_Y$, and therefore only one edge in $\{e'_1,e'_2\}\cap S_Y$ --- the one between this new vertex and $y$ --- belongs to $S_Y$.
\item\label{arg:eoct3} If $e\in E(G)$ then $S_Y\cap N_e=\emptyset$: Note that since $(Y_e,\bar{Y}_e)$ is an $e$-preliminary partition, it holds that $e\in\cont(e,Y_e)$. First, suppose that $f_e(e)=0$. Then, since $e\in\cont(e,Y_e)$, one endpoint of $e$ is in $Y_e$ while the other is in $\bar{Y}_e$, and therefore $S_Y\cap N_e=\emptyset$. Now, suppose that $f_e(e)=0$. Then, since $e\in\cont(e,Y_e)$, both endpoints of $e$ are either in $Y_e$ or in $\bar{Y}_e$. If both endpoints are in $Y_e$, then the newly added vertex is in $\bar{U}_Y\subseteq B_Y$ and therefore $S_Y\cap N_e=\emptyset$, while if both endpoints are in $\bar{Y}_e$, then the newly added vertex is in $U_Y\subseteq A_Y$ and therefore again $S_Y\cap N_e=\emptyset$.
\item\label{arg:eoct4} For all $e'\in (T\cap E(G))\setminus\{e\}$, it holds that $N_{e'}\cap S_Y=\emptyset$: Consider some edge $e'\in (T\cap E(G))\setminus\{e\}$. The proof is symmetric to the one of Claim \ref{arg:eoct3}, but we present it for the sake of completeness. Note that since $(Y_e,\bar{Y}_e)$ is an $e$-preliminary partition, it holds that $e'\notin\cont(e,Y_e)$. First, suppose that $f_e(e')=1$. Then, since $e'\notin\cont(e,Y_e)$, one endpoint of $e'$ is in $Y_e$ while the other is in $\bar{Y}_e$, and therefore $S_Y\cap N_{e'}=\emptyset$. Now, suppose that $f_e(e')=0$. Then, since $e'\notin\cont(e,Y_e)$, both endpoints of $e'$ are either in $Y_e$ or in $\bar{Y}_e$. If both endpoints are in $Y_e$, then the newly added vertex is in $\bar{U}_Y\subseteq B_Y$ and therefore $S_Y\cap N_{e'}=\emptyset$, while if both endpoints are in $\bar{Y}_e$, then the newly added vertex is in $U_Y\subseteq A_Y$ and therefore again $S_Y\cap N_{e'}=\emptyset$.
\end{enumerate}

For each edge $\widehat{e}\in S_Y$, we define $g(\widehat{e})$ as follows. The exhaustiveness of these two cases is implied by Claims \ref{arg:eoct3} and \ref{arg:eoct4}.
\begin{itemize}
\item If $\widehat{e}\in E(G)\setminus T$, then we define $g(\widehat{e})=\widehat{e}$. We claim that $g(\widehat{e})\in\cont(e,Y_e)$. Since $\widehat{e}\in E(G)\setminus T$, it holds that $f_e(\widehat{e})=1$. Since $\widehat{e}\in S_Y$, it holds that both endpoints of $\widehat{e}$ are either in $Y_e$ or $\bar{Y}_e$. In either case, we deduce that $\widehat{e}\in\cont(e,Y_e)$. 
\item If $\widehat{e}$ is an edge of the form $e'_1$ or $e'_2$ for some $e'\in E_0$, then we define $g(\widehat{e})=e'$. By Claim \ref{arg:eoct2}, $g(\widehat{e})\in\cont(e,Y_e)$.
\end{itemize}

We have thus shown that $\image(g)\subseteq \cont(e,Y_e)$. By Claim \ref{arg:eoct1}, it holds that $e\notin\image(g)$. It remains to show that $g$ is injective. In the first case, $g$ is an identity function and its image lies in $E(G)\setminus (T\cup E_0)$. In the second case, the image lies in $E_0$, and by Claim \ref{arg:eoct2}, we thus conclude that $g$ is injective.
\end{proof}

Let $S$ be a subset of $E(G')$ of size at most $k'$ such that $G'\setminus S$ is a bipartite graph, and let $(A_S,B_S)$ be a bipartition of the vertex-set of $G'\setminus S$. Then, denote $Y_e(S)=A_S\cap V(G)$ and $\bar{Y}_e=B_S\cap V(G)$.

\begin{lemma}\label{lem:eoctToPart}
The partition $(Y_e(S),\bar{Y}_e(S))$ is an $e$-preliminary partition.
\end{lemma}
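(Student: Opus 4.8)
The plan is to show that Lemma~\ref{lem:eoctToPart} is essentially the converse direction of the construction analysed in Lemma~\ref{lem:partToEOCT}: starting from a small edge bipartization set $S$ of $G'$ and a bipartition $(A_S,B_S)$ witnessing that $G'\setminus S$ is bipartite, we must verify that $(Y_e(S),\bar Y_e(S)) = (A_S\cap V(G), B_S\cap V(G))$ almost fits $e$ and that $|\cont(e,Y_e(S))\setminus\{e\}|\le k$. The two things to check, per Definition~\ref{def:ePrelim}, are (i) $T\cap\cont(e,Y_e(S))=\{e\}$ and (ii) $|\cont(e,Y_e(S))\setminus\{e\}|\le k$.

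First I would establish a dictionary between ``edge $e'$ contributes to $(e,(Y_e(S),\bar Y_e(S)))$'' and ``some gadget edge associated with $e'$ is monochromatic under $(A_S,B_S)$.'' The gadgets were built precisely so that the parity of the coloring across $e'$ flips when $f_e(e')=0$ (via the subdivision of edges in $E_0$). Concretely: for $e'\in E(G)\setminus(T\cup E_0)$ (so $f_e(e')=1$), $e'$ itself lies in $G'$, and $e'$ is monochromatic iff both endpoints are on the same side, which is exactly the condition for $e'\in\cont(e,Y_e(S))$. For $e'\in E_0$ (so $f_e(e')=0$), $e'$ is subdivided into $e'_1,e'_2$ through a fresh vertex; this path is ``odd'' and forces at least one of $e'_1,e'_2$ into $S$ exactly when the two original endpoints of $e'$ lie on the same side of $(A_S,B_S)$ — wait, here one must be careful: the subdivision vertex can be placed on either side by $(A_S,B_S)$, so the path $e'_1,e'_2$ is monochromatic-forced into $S$ (at least one edge) precisely when the endpoints of $e'$ lie on \emph{opposite} sides, matching case~3 of ``contributes.'' So in all cases, $e'\in\cont(e,Y_e(S))\setminus\{e\}$ implies at least one gadget edge of $e'$ lies in $S$, and distinct such $e'$ have disjoint gadget-edge sets, giving an injection $\cont(e,Y_e(S))\setminus\{e\}\hookrightarrow S$, hence (ii) since $|S|\le k'=k$.

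For (i), I would use the $k+1$ parallel-edge bundles $N_e$ and $N_{e'}$. These bundles cannot all be cut since $|S|\le k$, so at least one edge of each bundle survives in $G'\setminus S$ and must be bichromatic under $(A_S,B_S)$. Tracing the four sub-cases of the construction: the bundle $N_e$ (for the distinguished terminal $e$, when $e\in E(G)$) is bichromatic exactly in the configuration that makes $e\in\cont(e,Y_e(S))$ — if $f_e(e)=0$ the bundle is a direct parallel edge between $e$'s endpoints, bichromatic iff they are separated, i.e.\ case~3; if $f_e(e)=1$ the bundle goes through $v_e$, forcing both endpoints of $e$ to the side opposite $v_e$, i.e.\ both endpoints on the same side, matching cases~1--2. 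Symmetrically, for $e'\in(T\cap E(G))\setminus\{e\}$, the bundle $N_{e'}$ survives and is bichromatic exactly in the configuration that makes $e'\notin\cont(e,Y_e(S))$ (the roles of $f_e(e')=0$ and $f_e(e')=1$ are swapped relative to $N_e$, which is exactly the asymmetry built into the construction). Thus $e\in\cont(e,Y_e(S))$ and no other terminal edge contributes, which is precisely $T\cap\cont(e,Y_e(S))=\{e\}$; combined with (ii) this gives that $(Y_e(S),\bar Y_e(S))$ is $e$-preliminary.

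The main obstacle is bookkeeping the many sub-cases correctly and in particular handling the placement of the fresh subdivision/gadget vertices: for the ``parallel bundle through $v_e$'' gadgets, the side of $v_e$ in $(A_S,B_S)$ is not determined a priori, so one has to argue from the surviving bundle edge that whatever side $v_e$ takes, the two original endpoints are forced to the opposite side, which then translates into a statement purely about $Y_e(S)$; and for the $E_0$-subdivision vertices, one must note that $(A_S,B_S)$ is free to put such a vertex on the convenient side, so it never forces a spurious contribution. Dually, one should observe that dummy terminals in $T\setminus E(G)$ require no gadget at all (they are never incident to vertices of $G$), so they are vacuously handled. Once these placement arguments are pinned down, the equivalences above are immediate and the lemma follows.
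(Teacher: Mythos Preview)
Your proposal is correct and follows essentially the same approach as the paper's proof: establish $T\cap\cont(e,Y_e(S))=\{e\}$ via the uncuttable $(k{+}1)$-bundles $N_e$ and $N_{e'}$, and bound $|\cont(e,Y_e(S))\setminus\{e\}|$ by exhibiting an injection into $S$ using that each contributing non-terminal edge forces a monochromatic gadget edge (itself if $f_e(e')=1$, one of $e'_1,e'_2$ if $f_e(e')=0$). The only minor point is order of exposition---your injection argument tacitly assumes no terminal edge other than $e$ contributes, so part~(i) should logically precede part~(ii), as in the paper; once you swap the order, the write-up goes through verbatim.
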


\begin{proof}
Let use denote $(Y_e,\bar{Y}_e)=(Y_e(S),\bar{Y}_e(S))$. We first argue that if $e\in E(G)$, then $e\in\cont(e,Y_e)$. To this end, we suppose that $e\in E(G)$ and consider two cases.
\begin{itemize}
\item First, suppose that $f_e(e)=0$. Since $|S|\leq k$ and there exist $k+1$ parallel edges in $G'$ between the endpoints of $e$, there exists an edge in $G'\setminus S$ between the endpoints of $e$. Thus, one endpoint of $e$ belongs to $A_S$ while the other to $B_S$. Therefore, one endpoint of $e$ belongs to $Y_e$ while the other to $\bar{Y}_e$. We thus deduce that $e\in\cont(e,Y_e)$.
\item Second, suppose that $f_e(e)=1$. Since $|S|\leq k$ and there exist $k+1$ parallel edges in $G'$ between each of the endpoints of $e$ and the newly added vertex, there exists a path on two edges in $G'\setminus S$ between the endpoints of $e$. Therefore,  both endpoints of $e$ either belong to $A_S$ or to $B_S$, which implies that both of these endpoints either belong to $Y_e$ or to $\bar{Y}_e$. We thus deduce that $e\in\cont(e,Y_e)$.
\end{itemize}

Next, we argue that for all $e'\in T\setminus\{e\}$, it holds that $e'\notin\cont(e,Y_e)$. For this purpose, consider some $e'\in  T\setminus\{e\}$. In case $e'\notin E(G)$, it is clear that $e'\notin\cont(e,Y_e)$, and therefore we next suppose that $e'\in E(G)$. We consider two cases. The arguments are symmetric to those given above, but we present them for the sake of completeness.
\begin{itemize}
\item First, suppose that $f_e(e')=1$. Since $|S|\leq k$ and there exist $k+1$ parallel edges in $G'$ between the endpoints of $e'$, there exists an edge in $G'\setminus S$ between the endpoints of $e'$. Thus, one endpoint of $e'$ belongs to $A_S$ while the other to $B_S$. Therefore, one endpoint of $e'$ belongs to $Y_e$ while the other to $\bar{Y}_e$. We thus deduce that $e'\notin\cont(e,Y_e)$.
\item Second, suppose that $f_e(e')=0$. Since $|S|\leq k$ and there exist $k+1$ parallel edges in $G'$ between each of the endpoints of $e'$ and the newly added vertex, there exists a path on two edges in $G'\setminus S$ between the endpoints of $e'$. Therefore,  both endpoints of $e'$ either belong to $A_S$ or to $B_S$, which implies that both of these endpoints either belong to $Y_e$ or to $\bar{Y}_e$. We thus deduce that $e'\notin\cont(e,Y_e)$.
\end{itemize}

So far we have shown that $(Y_e,\bar{Y}_e)$ fits $e$. It remains to show that $|\cont(e,Y_e)\setminus\{e\}|\leq k$. For this purpose, it is sufficient to construct an injective function $g: \cont(e,Y_e)\setminus\{e\}\rightarrow S_Y$. We first claim that for each $e'\in (\cont(e,Y_e)\setminus\{e\})\cap E_0$, it holds that $S_Y\cap\{e'_1,e'_2\}\neq\emptyset$. Indeed, since $e'\in (\cont(e,Y_e)\setminus\{e\})\cap E_0$, it holds that one endpoint of $e'$ is in $Y_e$ while the other is in $\bar{Y}_e$. Since $G'\setminus S$ is a bipartite graph, and $G'$ has a path on two edges, $e'_1$ and $e'_2$, between the endpoint of $e'$, we deduce that $S_Y\cap\{e'_1,e'_2\}\neq\emptyset$.

Now, for each edge $e'\in \cont(e,Y_e)\setminus\{e\}$, we define $g(e')$ as follows.
\begin{enumerate}
\item If $e'\in E_0$, then we define $g(e')$ to be an edge in $S_Y\cap\{e'_1,e'_2\}$ (recall that we have shown that $S_Y\cap\{e'_1,e'_2\}\neq\emptyset$), where if there is more then one choice, we choose one edge arbitrarily.

\item If $e'\notin E_0$, then we define $g(e')=e'$. We claim that $g(e')\in S_Y$. Since $e'\in (\cont(e,Y_e)$ $\setminus\{e\})\setminus E_0$, it holds that both endpoints of $e'$ are either in $Y_e$ or in $\bar{Y}_e$. This implies that both endpoints of $e'$ are either in $A_S$ or in $B_S$. Since $G'\setminus S$ is a bipartite graph, we have that $g(e')\in S_Y$.
\end{enumerate}

It is clear that $g$ is an injective function, and we have argued that $\image(g)\subseteq S_Y$. Overall, we conclude that $(Y_e(S),\bar{Y}_e(S))$ is an $e$-preliminary partition.
\end{proof}

Guo et al.~\cite{GuoGHNW06} (see also Pilipczuk et al.~\cite{DBLP:journals/corr/PilipczukPW15}) showed that {\sc EOCT} can be solved in time $\OO(2^k\cdot(n'+m')^{2})$. We employ this algorithm to solve our instance $(G',k')$ of {\sc EOCT}. By Lemma \ref{lem:partToEOCT}, if the answer is no, it is correct to return no. If the answer yes, by self-reduction, we also obtain a set $S$ of size at most $k'$ such that $G'\setminus S$ is a bipartite graph. Then, by Lemma \ref{lem:eoctToPart}, $(Y_e(S),\bar{Y}_e)$ is an $e$-preliminary partition. By choosing an appropriate (large enough) $\beta$, this concludes the proof of Lemma \ref{lem:perlimPartition}.

\subsubsection{Closeness of Partitions}

By using Lemma \ref{lem:perlimPartition}, \alg{Recurs} first computes an $e$-preliminary partition $(Y_e,\bar{Y}_e)$ for $e\in T$ in time $\beta\cdot 2^k\cdot(n+m)^2$. The relevance of these partitions stems from the insight that they are quite close to the actual partitions we would like to obtain. This insight is formalized in the following definition and lemma.

\begin{definition}\label{def:close}
Let $e\in T$,  and let $(Z_e,\bar{Z}_e)$ and $(Z'_e,\bar{Z}'_e)$ be two $e$-preliminary partitions. We say that $(Z_e,\bar{Z}_e)$ and $(Z'_e,\bar{Z}'_e)$ are {\em $e$-aligned} if $|(Z_e\setminus Z'_e)\cup(\bar{Z}_e\setminus \bar{Z}'_e)|\leq q$ and $|E(Z_e\setminus Z'_e)\cup(\bar{Z}_e\setminus \bar{Z}'_e),(Z_e\cap Z'_e)\cup(\bar{Z}_e\cap \bar{Z}'_e)|\leq 2(k+1)$. Moreover, we say that $(Z_e,\bar{Z}_e)$ and $(Z'_e,\bar{Z}'_e)$ are {\em $e$-opposite} if $(\bar{Z}_e,Z_e)$ and $(Z'_e,\bar{Z}'_e)$ are $e$-aligned. Finally, we say that $(Z_e,\bar{Z}_e)$ and $(Z'_e,\bar{Z}'_e)$ are {\em $e$-close} if they are $e$-aligned or $e$-opposite (or both).
\end{definition}

Note that two $e$-preliminary partitions $(Z_e,\bar{Z}_e)$ and $(Z'_e,\bar{Z}'_e)$ are $e$-opposite if and only if $|(Z_e\cap Z'_e)\cup(\bar{Z}_e\cap \bar{Z}'_e)|\leq q$ and $|E(Z_e\setminus Z'_e)\cup(\bar{Z}_e\setminus \bar{Z}'_e),(Z_e\cap Z'_e)\cup(\bar{Z}_e\cap \bar{Z}'_e)|\leq 2(k+1)$.

\begin{lemma}\label{lem:partitClose}
Let $e\in T$. Then, any two $e$-preliminary partitions $(Z_e,\bar{Z}_e)$ and $(Z'_e,\bar{Z}'_e)$ are $e$-close.
\end{lemma}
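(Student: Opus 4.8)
The plan is to reduce the statement to two numerical inequalities about the set $D:=Z_e\bigtriangleup Z'_e$ and then verify each. First unravel the definitions: since $Z'_e=V(G)\setminus\bar Z'_e$ and $Z_e=V(G)\setminus\bar Z_e$, we get $(Z_e\setminus Z'_e)\cup(\bar Z_e\setminus\bar Z'_e)=(Z_e\cap\bar Z'_e)\cup(\bar Z_e\cap Z'_e)=D$, while $(Z_e\cap Z'_e)\cup(\bar Z_e\cap\bar Z'_e)$ is exactly $\overline D:=V(G)\setminus D$. Hence $(Z_e,\bar Z_e)$ and $(Z'_e,\bar Z'_e)$ are $e$-aligned iff $|D|\le q$ and $|E(D,\overline D)|\le 2(k+1)$, and they are $e$-opposite iff $|\overline D|\le q$ and $|E(D,\overline D)|\le 2(k+1)$ (passing to $(\bar Z_e,Z_e)$ replaces $D$ by $\overline D$; cf.\ the remark after Definition~\ref{def:close}). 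So it suffices to prove (i) $|E(D,\overline D)|\le 2(k+1)$, and (ii) $\min\{|D|,|\overline D|\}\le q$: then the two partitions are $e$-aligned if $|D|\le q$ and $e$-opposite otherwise, in either case $e$-close.

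The key step is (i). For a partition $(X,\bar X)$ of $V(G)$, write $\chi_X\in\{0,1\}^{E(G)}$ for its cut-indicator, i.e.\ $\chi_X(e')=1$ iff $e'$ has exactly one endpoint in $X$. Reading off the definition of the relation ``contributes to'', an edge $e'$ lies in $\cont(e,X)$ precisely when $\chi_X(e')\ne f_e(e')$; thus $\cont(e,X)$ is the support of $\chi_X+f_e$ over $\mathrm{GF}(2)$. Consequently
\[
\cont(e,Z_e)\bigtriangleup\cont(e,Z'_e)=\operatorname{supp}\big((\chi_{Z_e}+f_e)+(\chi_{Z'_e}+f_e)\big)=\operatorname{supp}(\chi_{Z_e}+\chi_{Z'_e}),
\]
so this symmetric difference is the set of edges that cross exactly one of the two partitions. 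A routine case check shows that an edge crosses exactly one of $(Z_e,\bar Z_e)$ and $(Z'_e,\bar Z'_e)$ iff exactly one of its endpoints lies in $Z_e\bigtriangleup Z'_e=D$, i.e.\ iff it belongs to $E(D,\overline D)$. Therefore $E(D,\overline D)=\cont(e,Z_e)\bigtriangleup\cont(e,Z'_e)$, and since both partitions are $e$-preliminary we have $|\cont(e,Z_e)|,|\cont(e,Z'_e)|\le k+1$; hence $|E(D,\overline D)|\le 2(k+1)$, which is (i).

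For (ii) I would use that we are in the case of \alg{Recurs} where $G$ is connected and $(q,p)$-unbreakable with $p=2(k+1)$. By (i), $(D,\overline D)$ is a partition with $|E(D,\overline D)|\le p$; by unbreakability it cannot be a $(q,p)$-good edge separation with both sides large, so $\min\{|D|,|\overline D|\}\le q$. The one technicality is that Definition~\ref{def:unbreakable} is phrased through \emph{connected} parts, so to turn ``$|E(D,\overline D)|\le p$'' into ``one side is small'' I would invoke the standard observation underlying recursive understanding: as $G$ is connected and only $\le p$ edges cross $(D,\overline D)$, the graphs $G[D]$ and $G[\overline D]$ have at most $p+1$ connected components in total; contracting each component yields a connected (multi)graph on $\le p+1$ vertices, and removing any edge of a spanning tree of it induces a partition of $V(G)$ into two $G$-connected parts with at most $p$ crossing $G$-edges (edges inside $D$ never join distinct components of $G[D]$, and likewise for $\overline D$); were any such partition to have both sides of size $>q$ it would contradict unbreakability, so a short weighted-centroid argument on that tree forces one of the $\le p+1$ components — hence $D$ or $\overline D$ — to be small. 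I expect this reconciliation of (ii) with the connectivity clause in the definition of good edge separation to be the only delicate point; the algebraic identity $E(D,\overline D)=\cont(e,Z_e)\bigtriangleup\cont(e,Z'_e)$ of step (i) is the heart of the argument.
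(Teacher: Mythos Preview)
Your argument follows the paper's two-step outline: first bound $|E(D,\overline D)|$ by $2(k+1)$, then invoke $(q,p)$-unbreakability. For step~(i) your treatment is cleaner than the paper's: where the paper partitions $V(G)$ into four blocks $A,B,C,D$ and performs an explicit case split over the values of $f_e$ on the edge classes between them, you use the single identity $\cont(e,X)=\operatorname{supp}(\chi_X+f_e)$ over $GF(2)$ to obtain $E(D,\overline D)=\cont(e,Z_e)\bigtriangleup\cont(e,Z'_e)$ in one line. This is correct and more transparent than the paper's computation, and it yields the same bound.

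On step~(ii) you actually go further than the paper, which simply writes ``since $G$ is $(q,p)$-unbreakable, we deduce that $|A\cup B|\le q$ or $|C\cup D|\le q$'' without addressing the connectivity clause in Definition~\ref{def:unbreakable}. Your concern is well founded, but your sketched fix does not recover the bound~$q$ either: the sink/centroid argument on the contracted tree only shows that some single component carries weight at least $n-pq$, so the side of $(D,\overline D)$ not containing that component has size at most $pq$, not $q$. Concretely, a star with centre $c$ and $2m$ leaves is $(1,p)$-unbreakable for every $p$ (any connected induced subgraph on $\ge 2$ vertices must contain $c$), yet taking $D$ to be $m$ of the leaves gives $|E(D,\overline D)|=m\le p$ while $\min\{|D|,|\overline D|\}=m$. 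Thus both the paper's assertion and your resolution establish the lemma only with $q$ replaced by something of order $pq$. This is harmless for the paper's application --- $p=2(k+1)$ and $q$ is doubly exponential, so the slack is absorbed into the constant~$\lambda$ --- but as a proof of the lemma exactly as stated, neither argument is complete.
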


\begin{proof}
Let $(Z_e,\bar{Z}_e)$ and $(Z'_e,\bar{Z}'_e)$ be two $e$-preliminary partitions. Let us denote $A=Z_e\setminus Z'_e$, $B=\bar{Z}_e\setminus \bar{Z}'_e$, $C=Z_e\cap Z'_e$ and $D=\bar{Z}_e\cap \bar{Z}'_e$. Now, for distinct $X,Y\in\{A,B,C,D\}$, we denote $E_{XY}=E(X,Y)$, $E^0_{XY}=E_{XY}\cap\{e'\in E(G): f_e(e')=0\}$ and $E^1_{XY}=E_{XY}\setminus E^0_{XY}$. Since $(Z_e,\bar{Z}_e)$ is an $e$-preliminary partition, it holds that $|\cont(e,Z_e)\setminus\{e\}|\leq k$. Note that $E^0_{AD}\cup E^1_{AC}\cup E^0_{BC}\cup E^1_{BD}\subseteq \cont(e,Z_e)$. Therefore, $|E^0_{AD}\cup E^1_{AC}\cup E^0_{BC}\cup E^1_{BD}|\leq k+1$. Moreover, since $(Z'_e,\bar{Z}'_e)$ is an $e$-preliminary partitions, it holds that $|\cont(e,Z'_e)\setminus\{e\}|\leq k$. Note that $E^1_{AD}\cup E^0_{AC}\cup E^1_{BC}\cup E^0_{BD}\subseteq \cont(e,Z_e)$. Therefore, $|E^1_{AD}\cup E^0_{AC}\cup E^1_{BC}\cup E^1_{BD}|\leq k+1$. Observe that $E(A\cup B, C\cup D) = E_{AD}\cup E_{AC}\cup E_{BC}\cup E_{BD}$. Thus, we have that $|E(A\cup B, C\cup D)|\leq 2(k+1)$. Substituting $A,B,C$ and $D$, we have that $|E(Z_e\setminus Z'_e)\cup(\bar{Z}_e\setminus \bar{Z}'_e),(Z_e\cap Z'_e)\cup(\bar{Z}_e\cap \bar{Z}'_e)|\leq 2(k+1)$. Since $G$ is $(q,p)$-unbreakable, we deduce that $|A\cup B|\leq q$ or $|C\cup D|\leq q$ (or both). Substituting $A,B,C$ and $D$, we have that $|(Z_e\setminus Z'_e)\cup(\bar{Z}_e\setminus \bar{Z}'_e)|\leq q$ or $|(Z_e\cap Z'_e)\cup(\bar{Z}_e\cap \bar{Z}'_e)|\leq q$ (or both). Therefore, $(Z_e,\bar{Z}_e)$ and $(Z'_e,\bar{Z}'_e)$ are $e$-close.
\end{proof}

From Lemma \ref{lem:partitClose} we obtain the following corollary, which provides insight how to exploit the partitions computed in Section \ref{sec:prelimPart}.

\begin{corollary}\label{cor:partitClose}
For any parity restriction $h$ and a set of partitions $\{(L^e,R^e)\}|_{e\in T}$ of $W$, if there exists an $(h,\{(L^e,R^e)\}|_{e\in T})$-solution $(F,\{(X_e,\bar{X}_e)\}|_{e\in T})$, then for all $e\in T$, it holds that $(X_e,\bar{X}_e)$ and $(Y_e,Y'_e)$ are $e$-close.
\end{corollary}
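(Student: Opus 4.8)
The plan is straightforward: the corollary should follow almost immediately by combining the characterization of solutions from earlier in the section with Lemma~\ref{lem:partitClose}. First I would fix a parity restriction $h$ and a set of partitions $\{(L^e,R^e)\}|_{e\in T}$ of $W$, and suppose that there exists an $(h,\{(L^e,R^e)\}|_{e\in T})$-solution $(F,\{(X_e,\bar{X}_e)\}|_{e\in T})$. Fix an arbitrary $e\in T$. By the definition of an $(h,\{(L^e,R^e)\}|_{e\in T})$-solution, the partition $(X_e,\bar{X}_e)$ fits $e$ (in particular it almost fits $e$) and satisfies $\cont(e,X_e)\setminus\{e\}\subseteq F$; since $|F|\leq k$, this gives $|\cont(e,X_e)\setminus\{e\}|\leq k$. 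Hence $(X_e,\bar{X}_e)$ is an $e$-preliminary partition in the sense of Definition~\ref{def:ePrelim}.

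Next I would recall that, by Lemma~\ref{lem:perlimPartition}, the partition $(Y_e,\bar{Y}_e)$ computed by \alg{Recurs} is itself an $e$-preliminary partition. Therefore both $(X_e,\bar{X}_e)$ and $(Y_e,\bar{Y}_e)$ are $e$-preliminary partitions, and Lemma~\ref{lem:partitClose} applies directly to this pair: any two $e$-preliminary partitions are $e$-close. This yields that $(X_e,\bar{X}_e)$ and $(Y_e,\bar{Y}_e)$ are $e$-close. Since $e\in T$ was arbitrary, the conclusion holds for all $e\in T$, which is exactly the statement of the corollary. (I would use the notation $(Y_e,\bar{Y}_e)$ rather than $(Y_e,Y'_e)$ from the statement, treating the latter as a typo for $\bar{Y}_e$.)

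There is essentially no obstacle here, since all the real work is done in Lemma~\ref{lem:partitClose} (whose proof uses $(q,p)$-unbreakability of $G$ to argue that the ``small side'' of the symmetric difference has at most $q$ vertices, together with a counting argument bounding the crossing edges by $2(k+1)$). The only points that require a line of justification are (i) that a solution partition $(X_e,\bar{X}_e)$ is $e$-preliminary — which is immediate from unpacking the definitions and using $|F|\leq k$ — and (ii) that the computed $(Y_e,\bar{Y}_e)$ is $e$-preliminary, which is precisely the guarantee of Lemma~\ref{lem:perlimPartition}. So the proof is a two-sentence invocation of Definition~\ref{def:ePrelim}, Lemma~\ref{lem:perlimPartition}, and Lemma~\ref{lem:partitClose}.
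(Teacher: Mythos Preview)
Your proposal is correct and matches the paper's approach exactly: the paper states the corollary as an immediate consequence of Lemma~\ref{lem:partitClose} without giving a separate proof, and your unpacking of why each $(X_e,\bar{X}_e)$ is $e$-preliminary (via the definition of a solution and $|F|\leq k$) together with the guarantee on $(Y_e,\bar{Y}_e)$ from Lemma~\ref{lem:perlimPartition} is precisely the intended justification. Your remark that $(Y_e,Y'_e)$ is a typo for $(Y_e,\bar{Y}_e)$ is also correct.
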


By exhaustive search, we can examine each set $\{(Y'_e,\bar{Y}'_e)\}|_{e\in T}$ such that for all $e\in T$, it holds that $(Y'_e,\bar{Y}'_e)\in\{(Y_e,\bar{Y}_e),(\bar{Y}_e,Y_e)\}$. Indeed, there are only $2^{|T|}$ such sets. Now, fix some such set ${\cal Y}=\{(Y'_e,\bar{Y}'_e)\}|_{e\in T}$. Accordingly, we have the following definition.

\begin{definition}
We say that an $(h,\{(L^e,R^e)\}|_{e\in T})$-solution $(F,\{(X_e,\bar{X}_e)\}|_{e\in T})$ is {\em aligned} if for all $e\in T$, it holds that $(X_e,\bar{X}_e)$ and $(Y_e,Y_e)$ are $e$-aligned.
Moreover, we say that an $(h,\{(L^e,R^e)\}|_{e\in T})$-solution $(F,\{(X_e,\bar{X}_e)\}|_{e\in T})$ is {\em alignment optimal} is there is no aligned $(h,\{(L^e,R^e)\}|_{e\in T})$-solution $(F',\{(X'_e,\bar{X}'_e)\}|_{e\in T})$ such that $|F'|<|F|$.
\end{definition}

Therefore, to solve the current instance in time $\tau(n,m,k,t,|T|)$, by choosing a large enough $\alpha$ in advance, it is sufficient that we next prove the following lemma, where $\alpha'=\alpha^*/2$.

\begin{lemma}\label{lem:alignSuffice}
The following computation can be performed in time $\tau'(n,m,$ $k,t,|T|)=2^{2^{\alpha'(t+k^2)|T|}}\cdot (n+m)^{\alpha'}$.
If there exists no $(h,\{(L^e,R^e)\}|_{e\in T})$-solution, then answer no; if there exists an aligned $(h,\{(L^e,R^e)\}|_{e\in T})$-solution, then return an alignment optimal $(h,\{(L^e,R^e)\}|_{e\in T})$-solution; else, either return no or an $(h,\{(L^e,R^e)\}|_{e\in T})$-solution.
\end{lemma}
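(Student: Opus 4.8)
The plan is to reduce the problem of finding an alignment optimal $(h,\{(L^e,R^e)\}|_{e\in T})$-solution to searching, for each terminal $e$, within a bounded-size ``correction region'' around the preliminary partition $(Y_e,\bar Y_e)$. By Corollary~\ref{cor:partitClose}, any $(h,\{(L^e,R^e)\}|_{e\in T})$-solution $(F,\{(X_e,\bar X_e)\}|_{e\in T})$ whose partitions are $e$-aligned with $(Y_e,\bar Y_e)$ for the chosen orientation ${\cal Y}$ satisfies $|(X_e\setminus Y'_e)\cup(\bar X_e\setminus\bar Y'_e)|\le q$ and the cut between the ``changed'' and ``unchanged'' parts has at most $2(k+1)=p$ edges. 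First I would make this precise: let $Z_e=(X_e\setminus Y'_e)\cup(\bar X_e\setminus\bar Y'_e)$ be the set of vertices whose side differs between $X_e$ and $Y'_e$; then $|Z_e|\le q$ and $E(Z_e,V(G)\setminus Z_e)$ has at most $p$ edges, and moreover since $G[Z_e]$ together with $G[V(G)\setminus Z_e]$ is separated by $\le p$ edges and $G$ is $(q,p)$-unbreakable, $Z_e$ lives ``near'' the boundary. The key structural observation is that $Z_e$ is contained in the union of the connected components of $G[Z_e]$, each of which has few edges to the rest of $G$; so the natural object to guess is, for each $e$, a set of at most $p$ ``boundary edges'' of the symmetric difference together with which side of each boundary vertex lies in $X_e$.

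The main steps I would carry out are as follows. (1) Fix the orientation set ${\cal Y}$; we may assume $(Y'_e,\bar Y'_e)=(Y_e,\bar Y_e)$ for all $e$ after renaming. (2) For each $e\in T$, enumerate candidate cuts $C_e\subseteq E(G)$ with $|C_e|\le p$ such that $C_e$ is exactly the set of edges crossing between $Z_e$ and its complement; this amounts to guessing the ``frontier'' of the correction. The number of such guesses is $m^{O(p)}$, which is polynomial since $p=2(k+1)$. Combined over all $|T|\le k$ terminals this is $m^{O(pk)}$. (3) Having guessed the frontier for terminal $e$, the connected components of $G-C_e$ each lie wholly in $Z_e$ or wholly in its complement; since $|Z_e|\le q$, the union of the components inside $Z_e$ must be small, and we guess which components flip. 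Because any ``large'' component cannot be inside $Z_e$, there are at most $|C_e|\le p$ components adjacent to $C_e$ that could conceivably flip, giving $2^{O(p)}$ further guesses per terminal. (4) Once $Z_e$ is determined, $X_e=Y_e\triangle Z_e$ is determined, and we verify that $(X_e,\bar X_e)$ fits $e$ with respect to $f_e$ and $B^e$ (check the parity conditions $|X_e\cap V_i|\equiv b^e_i$, and check that $e$ is the unique terminal contributing), and that $L^e\subseteq X_e$, $R^e\subseteq\bar X_e$. (5) Given valid partitions $(X_e,\bar X_e)$ for every $e\in T$, the required set $F$ is forced to contain $\bigcup_{e\in T}(\cont(e,X_e)\setminus\{e\})$; we take $F$ to be exactly this union, check $|F|\le k$, and among all choices that yield a valid solution we keep one with smallest $|F|$, which gives alignment optimality.

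The delicate point — and the main obstacle — is step (3) together with the parity constraints in step (4). Guessing the frontier edges $C_e$ does not by itself pin down $Z_e$: there may be exponentially many ways to pick which components of $G-C_e$ belong to $Z_e$, and the bound $|Z_e|\le q$ does not immediately cap the number of such components because $q$ is huge (doubly exponential in the parameters). The resolution I envision is to use unbreakability more carefully: if $Z_e$ is a union of components of $G-C_e$ with $|C_e|\le p$ and $|Z_e|\le q$, then since $|V(G)\setminus Z_e|$ is large (as $n>s=q^4$), the partition $(Z_e,V(G)\setminus Z_e)$ would be a $(q,p)$-good edge separation unless $Z_e$ fails to induce a connected graph — but we can decompose $Z_e$ into its connected components, each of which is itself separated from the rest by a subset of $C_e$, and each such component is either ``small'' (size $\le q$) forcing it into $Z_e$-candidacy, or large, forcing it out. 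Making this dichotomy quantitatively correct, and simultaneously respecting the $t$ parity constraints $|X_e\cap V_i|\equiv b^e_i\pmod 2$ — which are global and cannot be checked component-by-component independently — is where the real work lies. I expect to handle the parities by an auxiliary dynamic program over the at most $p+1$ relevant components per terminal, tracking the vector of parities in $(\mathbb{Z}/2)^t$, at a cost of $2^{O(t|T|)}$ states, which fits within the target running time $\tau'(n,m,k,t,|T|)=2^{2^{\alpha'(t+k^2)|T|}}\cdot(n+m)^{\alpha'}$ once $\alpha'$ is chosen large enough. Finally I would double-check that when no aligned solution exists the procedure is still allowed to (and does) answer correctly, since the lemma only demands an aligned solution be returned when one exists and otherwise permits either ``no'' or any solution.
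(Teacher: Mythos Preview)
Your high-level picture is right: for an aligned solution, each $X_e$ differs from the preliminary $Y'_e$ on a ``correction set'' $Z_e$ of size at most $q$ with at most $p=2(k+1)$ edges to the rest of $G$, and determining all the $Z_e$ determines the $X_e$ and hence the forced $F$. The structural observation that $G-C_e$ has at most $p+1$ components (so $Z_e$ is one of $2^{O(p)}$ unions) is also correct.

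The genuine gap is the running time, specifically the factor $m^{O(pk)}=m^{O(k^2)}$ coming from guessing the frontier edge sets $C_e$ for all $e\in T$. In the target bound $\tau'(n,m,k,t,|T|)=2^{2^{\alpha'(t+k^2)|T|}}\cdot(n+m)^{\alpha'}$, the exponent $\alpha'$ is a \emph{fixed} constant (it is defined as $\alpha^*/2=\alpha/4$, and $\alpha$ is chosen once and for all, independently of $k,t,|T|$). This is not cosmetic: the entire recursive-understanding analysis in the breakable case relies on the polynomial part being $(n+m)^\alpha$ with $\alpha$ fixed so that the recurrence $\tau(\ldots+|P|,\ldots)+\tau(|Q|,\ldots)\le\tau(n,m,\ldots)$ telescopes. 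Your $m^{O(k^2)}$ enumeration cannot be absorbed into $(n+m)^{\alpha'}$ for any constant $\alpha'$, so the approach as written does not prove the lemma.

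The paper avoids this by replacing the direct edge-guessing with color coding: it computes an $(n,(q+2(k+1))|T|,2(k+1)|T|)$-universal set of size $2^{O(k|T|\log q)}\cdot\log n$, and for each coloring highlights the vertices in $\bigcup_e Z_e$ versus their neighbors outside. Deleting the color-$1$ vertices yields components that either lie entirely inside or entirely outside each $Z_e$; the large components are forced outside, and the small ones (each of size $\le q|T|$) are handled by recursive calls to the small-graph base case, combined by a DP over parity vectors in $(\mathbb{Z}/2)^t$. The point is that the number of colorings is $2^{\mathrm{params}}\cdot\log n$ rather than $m^{\mathrm{params}}$, which keeps the $(n+m)$-exponent constant. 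Your parity DP in step~(4) is essentially the same idea as the paper's final DP, so the fix you need is only at step~(2): replace ``guess $C_e\subseteq E(G)$'' by the universal-set coloring that isolates $\bigcup_e Z_e$ from its boundary.
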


\subsubsection{Highlighting the Solution}

We are now going to color the vertices in $V(G)$ so that if an aligned $(h,\{(L^e,R^e)\}|_{e\in T})$-solution exists, it would be easy to detect an alignment optimal $(h,\{(L^e,R^e)\}|_{e\in T})$-solution. For this purpose, we need the following definition and result, which present the tool we will use in order to color the graph.

\begin{definition}
Given integers $k'\leq n'$, a set ${\cal C}'$ of functions $c: \{1,2,\ldots,n'\}\rightarrow\{1,2,\ldots,k'\}$ is an {\em $(n',k')$-family of hash functions} if for every set $S\subseteq \{1,2,\ldots,n'\}$ of size $k'$, there exists a function $c\in {\cal C}'$ such that $c|_S$ is an injective function.
\end{definition}

\begin{definition}\label{dfn:universal_set}
Let ${\cal C}'$ be a set of functions $c:\{1,2,\ldots,n'\}\rightarrow \{0,1\}$. We say that ${\cal C}'$ is an $(n',k',p')$-universal set if for every subset
$I\subseteq \{1,2,\ldots,n'\}$ of size $k'$ and a function $c':I\rightarrow\{0,1\}$ that assigns $1$ to exactly $p'$ indices, there is a function $c\in{\cal C}'$ such that for all $i\in I$, $c(i)=c'(i)$.
\end{definition}

Naor et al.~\cite{NaorSS95} (see also~\cite{DBLP:journals/siamcomp/ChitnisCHPP16}) proved that small universal sets can be computed efficiently.

\begin{proposition}[\cite{NaorSS95}]\label{theorem:splitter}
Given integers $k'\leq n'$, there exists an $(n',k',p')$-universal set, ${\cal C}'$, of size $\OO(2^{p'\log k'}\cdot\log n')$, and this family is computable in time $\OO(2^{p'\log k'}\cdot n'\log n')$.
\end{proposition}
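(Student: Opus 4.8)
The plan is to prove Proposition~\ref{theorem:splitter} in two stages: first establish that a family of (essentially) the stated size \emph{exists}, by the probabilistic method, and then outline how to make the construction deterministic and efficient via a standard two--level ``splitter'' scheme.

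\emph{Existence.} Fix $n',k',p'$; we may assume $p'\le k'/2$, since otherwise there are even fewer target patterns and the bound follows a fortiori. Draw a random $c\colon\{1,\dots,n'\}\to\{0,1\}$ by setting each $c(i)=1$ independently with probability $p'/k'$. For a fixed $I$ with $|I|=k'$ and a fixed $c'\colon I\to\{0,1\}$ with exactly $p'$ ones, we have $\Pr[c|_I=c']=(p'/k')^{p'}(1-p'/k')^{k'-p'}\ge 2^{-p'\log k'}\cdot e^{-2p'}\ge 2^{-p'\log k'-3p'}$, using $\log(k'/p')\le\log k'$ and $1-x\ge e^{-2x}$ for $0\le x\le 1/2$. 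The number of pairs $(I,c')$ that must be realized is at most $\binom{n'}{k'}\binom{k'}{p'}\le 2^{k'\log n'+p'\log k'}$, so a union bound shows that $N=2^{p'\log k'+\OO(p')}\cdot k'\log n'$ independent samples form an $(n',k',p')$-universal set with positive probability. This already matches the target up to the spurious factor $k'$ in front of $\log n'$ and a $2^{\OO(p')}$ slack in the exponent.

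\emph{Derandomization.} To remove the $k'$ factor and obtain an explicit family, I would pass through a ``separating'' hash family $\mathcal H$ of functions $\{1,\dots,n'\}\to\{1,\dots,m\}$ with $m=\Theta(k'^2)$ and $|\mathcal H|=(k')^{\OO(1)}\log n'$ such that for every $I$ with $|I|=k'$ and every $J\subseteq I$ with $|J|=p'$ there is some $h\in\mathcal H$ with $h(J)\cap h(I\setminus J)=\emptyset$; such families are provided by classical perfect--hashing constructions together with the splitter constructions of~\cite{NaorSS95}. Then output $\{\,\chi_B\circ h: h\in\mathcal H,\ B\subseteq\{1,\dots,m\},\ |B|=p'\,\}$, where $\chi_B$ is the indicator of $B$; its size is $|\mathcal H|\cdot\binom{m}{p'}=(k')^{\OO(1)}\log n'\cdot 2^{\OO(p'\log k')}=2^{\OO(p'\log k')}\log n'$, and the whole family is produced in time $2^{\OO(p'\log k')}\cdot n'\log n'$. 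For correctness, given $(I,c')$, let $J$ be its set of ones and pick $h\in\mathcal H$ separating $J$ from $I\setminus J$; choose any $p'$-set $B$ with $h(J)\subseteq B$ and $B\cap h(I\setminus J)=\emptyset$ (possible since $|h(I\setminus J)|\le k'\ll m$); then $(\chi_B\circ h)|_I=c'$, because every element of $J$ lands inside $B$ and every element of $I\setminus J$ lands outside~$B$.

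I expect the main obstacle to be the second stage: producing the explicit hash family $\mathcal H$ of near--optimal size $(k')^{\OO(1)}\log n'$ (naive perfect hashing into $\OO(k')$ buckets would already require $2^{\Omega(k')}\log n'$ functions), and, if one insists on the exponent being exactly $p'\log k'$ rather than $\OO(p'\log k')$, controlling the number of buckets $m$ --- and hence the factor $\binom{m}{p'}$ --- very tightly. This is precisely the combinatorial core of the derandomization in~\cite{NaorSS95}, which I would invoke as a black box; the probabilistic argument above is included only to make transparent why a family of this size can exist at all.
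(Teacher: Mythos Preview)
The paper does not prove this proposition at all; it is quoted as a known result from~\cite{NaorSS95} (with a pointer to~\cite{DBLP:journals/siamcomp/ChitnisCHPP16}) and used as a black box. So there is no ``paper's own proof'' to compare your attempt against.

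As a standalone sketch, your outline is reasonable but not a complete proof of the stated bound. The probabilistic existence argument is fine, and you correctly flag the leftover factor of $k'$ in front of $\log n'$. In the derandomization, however, your scheme enumerates all $p'$-subsets of a universe of size $m=\Theta(k'^2)$, which yields $\binom{m}{p'}=2^{\Theta(p'\log k')}$ with constant~$2$ in the exponent, not~$1$; together with the $(k')^{\OO(1)}$ factor from the hash family this gives only $2^{\OO(p'\log k')}\log n'$, strictly weaker than the $\OO(2^{p'\log k'}\log n')$ stated in the proposition. You acknowledge this gap and propose to close it by invoking~\cite{NaorSS95} as a black box, which is exactly what the paper itself does. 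In short: your write-up is a plausible heuristic justification, but the precise bound genuinely requires the splitter machinery of~\cite{NaorSS95}, and neither you nor the paper reproduces that argument.
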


We now define the objective we aim to achieve by coloring the vertices in $V(G)$. For this purpose, given an $(h,\{(L^e,R^e)\}|_{e\in T})$-solution 
$S=(F,\{(X_e,\bar{X}_e)\}|_{e\in T})$, for all $e\in T$, we denote $S_e=(Y_e\setminus X_e)\cup(\bar{Y}_e\setminus \bar{X}_e)$ and $N_e=N((Y_e\setminus X_e)\cup(\bar{Y}_e\setminus \bar{X}_e))$.

\begin{definition}\label{def:niceCol}
Let $c:\{1,2,\ldots,n\}\rightarrow \{0,1\}$ be a function. Then, we say that an aligned $(h,\{(L^e,R^e)\}|_{e\in T})$-solution $S=(F,\{(X_e,\bar{X}_e)\}|_{e\in T})$ is {\em colored with respect to $c$} if the two following conditions are satisfied.
\begin{itemize}
\item For all $v\in \bigcup_{e\in T} S_e$, it holds that $c(v)=0$.
\item For all $v\in (\bigcup_{e\in T}N_e)\setminus (\bigcup_{e\in T} S_e)$, it holds that $c(v)=1$.
\end{itemize}
\end{definition}

Next, the function $c$ will be clear from context, and therefore we simply use the term colored. By using Proposition \ref{theorem:splitter}, we compute an $(n,(q+2(k+1))|T|,2(k+1)|T|)$-universal set, ${\cal C}$, of size $2^{\gamma\cdot k|T|\log q}\cdot\log n$ in time $2^{\gamma\cdot k|T|\log q}\cdot n\log n$ for the appropriate constant $\gamma$.
Note that if there exists an aligned $(h,\{(L^e,R^e)\}|_{e\in T})$-solution $S=(F,\{(X_e,\bar{X}_e)\}|_{e\in T})$, then for all $e\in T$, it holds that $|S_e|\leq q$ and $|N_e|\leq 2(k+1)|T|$. This implies that $|\bigcup_{e\in T} S_e|\leq q|T|$ and $|(\bigcup_{e\in T}N_e)\setminus (\bigcup_{e\in T} S_e)|\leq 2(k+1)|T|$. Thus, by Definition \ref{dfn:universal_set}, there exists $c\in {\cal C}$ such that for all $v\in \bigcup_{e\in T} S_e$, it holds that $c(v)=0$, and for all $v\in (\bigcup_{e\in T}N_e)\setminus (\bigcup_{e\in T} S_e)$, it holds that $c(v)=1$. Therefore, there exists $c\in{\cal C}$ such that $(F,\{(X_e,\bar{X}_e)\}|_{e\in T})$ is colored. We can exhaustively examine every function in $\cal C$. Indeed, we thus only examine $2^{\gamma\cdot k|T|\log q}\cdot\log n$ functions. Therefore, we next fix some $c\in{\cal C}$. Now, we need the following definition.

\begin{definition}
We say that an $(h,\{(L^e,R^e)\}|_{e\in T})$-solution $(F,\{(X_e,\bar{X}_e)\}|_{e\in T})$ is {\em color optimal} is there is no colored aligned $(h,(L,R))$-solution $(F',\{(X'_e,\bar{X}'_e)\}|_{e\in T})$ such that $|F'|<|F|$.
\end{definition}

Therefore, to solve the current instance in time $\tau'(n,m,k,t,|T|)$, by choosing a large enough $\alpha$ in advance, it is sufficient that we next prove the following lemma, where $\widehat{\alpha}=\alpha'/2$.

\begin{lemma}\label{lem:colSuffice}
The following computation can be performed in time $\widehat{\tau}(n,m,k,t,|T|)=2^{2^{\widehat{\alpha}(t+k^2)|T|}}\cdot (n+m)^{\widehat{\alpha}}$.
If there exists no $(h,\{(L^e,R^e)\}|_{e\in T})$-solution, then answer no; if there exists a colored aligned $(h,\{(L^e,R^e)\}|_{e\in T})$-solution, then return a color optimal $(h,\{(L^e,R^e)\}|_{e\in T})$-solution; else, either return no or an $(h,\{(L^e,R^e)\}|_{e\in T})$-solution.
\end{lemma}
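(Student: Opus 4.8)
The plan is to reduce Lemma~\ref{lem:colSuffice} to finding, among the $(h,\{(L^e,R^e)\}|_{e\in T})$-solutions that are colored and aligned, one of minimum $|F|$ --- or reporting that none exists; since ``no'' is a permitted answer whenever there is no colored aligned solution, and since the outer loop over all orientations ${\cal Y}$ and all $c\in{\cal C}$ will in any case meet a run in which an optimal solution is both colored and aligned, this is enough. Recall that $(Y'_e,\bar Y'_e)$, $e\in T$, and $c$ are already fixed. Once the partitions $\{(X_e,\bar X_e)\}|_{e\in T}$ are chosen, the cheapest admissible $F$ is $F^{\star}=\bigcup_{e\in T}(\cont(e,X_e)\setminus\{e\})$, and the solution is valid iff $|F^{\star}|\le k$, so it suffices to choose the partitions. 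Write $S_e$ for the set of vertices on which $X_e$ and $Y'_e$ disagree, so that $(X_e,\bar X_e)$ is recovered from $(Y'_e,\bar Y'_e)$ and $S_e$. Alignedness gives $|S_e|\le q$ and $|E(S_e,V(G)\setminus S_e)|\le p$, and being colored gives $S_e\subseteq c^{-1}(0)$ with $N(S_e)\setminus S_e\subseteq c^{-1}(1)$; hence every $S_e$ is a disjoint union of connected components of $G[c^{-1}(0)]$, and, because $|S_e|\le q$, each such component has at most $q$ vertices. Let ${\cal K}$ be the set of connected components of $G[c^{-1}(0)]$ of size at most $q$ with at most $p$ outgoing edges; every $S_e$ is a union of members of ${\cal K}$ of total size at most $q$.

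I would then isolate the few members of ${\cal K}$ that matter for a given terminal $e$. Since $(Y'_e,\bar Y'_e)$ is $e$-preliminary, $|\cont(e,Y'_e)\setminus\{e\}|\le k$, and $\cont(e,X_e)$ differs from $\cont(e,Y'_e)$ only on edges having an endpoint in $S_e$; moreover, by Lemma~\ref{lem:multiplicity} each member of ${\cal K}$ is incident to only $q^{\OO(1)}\cdot 2^{|T|}$ edges. Call $K\in{\cal K}$ \emph{$e$-core} if it is incident to an edge of $(\cont(e,Y'_e)\setminus\{e\})\cup\edges(T)$ or if it contains a vertex that the constraints $L^e,R^e,W^e_1,W^e_2$ force onto the side of $(X_e,\bar X_e)$ opposite to its side in $(Y'_e,\bar Y'_e)$: the first two conditions yield $\OO(k+|T|)$ components, whereas the forcing condition is met by at most $q$ components in total (if more than $q$ vertices are forced across, no aligned solution exists and we stop). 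Call $K$ \emph{$e$-neutral} if it is a whole connected component of $G$ lying inside $c^{-1}(0)$ and is not forbidden from $S_e$ by the constraints: inserting $K$ into $S_e$ alters only the parity vector $(|X_e\cap V_i|\bmod 2)_{i=1}^{t}$, leaves $\cont(e,\cdot)$ and all other constraints untouched, and adds nothing to $F$. Using a short exchange argument --- switching a component incident to no currently contributing, non-terminal edge can never shrink $\cont(e,X_e)\setminus\{e\}$, so in an $|F|$-minimal solution such a component is used only to repair parity or to satisfy a forcing constraint, and each ``cheap'' component that is used charges at least one fresh edge to $F$, so at most $\OO(k)$ of them occur --- one shows that some $|F|$-minimal colored aligned solution has each $S_e$ equal to the union of its $e$-core part, some $e$-neutral components, and $\OO(k)$ further cheap components, the latter accounted for only through their parity vector and the number of $F$-edges they add.

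I would then run the following algorithm. First, for each $e\in T$ independently, enumerate a subset of the $\OO(k+|T|)$ non-forced $e$-core components to put into $S_e$ (with the forced ones always included, and discarding any choice whose total size exceeds $q$); there are $2^{\OO((k+|T|)|T|)}$ choices. For a fixed choice, compute the set $F_0$ of edges that already contribute (discard if $|F_0|>k$) and, for each $e$, the residual parity vector $d_e\in\mathrm{GF}(2)^{t}$. It remains to extend each $S_e$ by $e$-neutral components (cost $0$) and by $\OO(k)$ cheap edge-adding components so that the parity vectors of the additions sum to $d_e$, $|S_e|$ stays $\le q$, and $|F|\le k$ globally. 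Pre-selecting, for each of the $\le 2^{t}$ possible parity values, a cheapest-and-smallest witnessing component, and noticing that the same witness can serve several terminals simultaneously (so that $\spanS\{d_e : e\in T\}$ needs to be covered by one minimum-price independent set of witnesses), this reduces to a minimum-price-basis computation over $\mathrm{GF}(2)^{t}$ together with a dynamic program over states $(\text{terminals handled},\text{remaining }F\text{-budget},\text{current parity residual},\text{remaining size budget})$, running in time $2^{t}\cdot q\cdot k\cdot (n+m)^{\OO(1)}$. For each surviving configuration one checks that every $(X_e,\bar X_e)$ fits $e$, respects all constraints, and yields $|F|\le k$, and returns a configuration minimizing $|F|$, or ``no'' if none survives. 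Since $q=2^{2^{\lambda(t+k^2)|T|}}$, the overall time $2^{\OO((k+|T|)|T|)}\cdot 2^{t}\cdot q^{\OO(1)}\cdot (n+m)^{\OO(1)}$ is at most $2^{2^{\widehat\alpha(t+k^2)|T|}}\cdot (n+m)^{\widehat\alpha}$ for a sufficiently large constant $\widehat\alpha$.

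The hard part will be the exchange/localization step of the second paragraph --- proving that an $|F|$-minimal colored aligned solution can be reorganized so that each switch set $S_e$ is drawn from a bounded core plus ``parity-only'' components, and in particular that a component incident to no currently contributing edge is never needed except for parity or a forcing constraint. The real subtlety is the coupling through the single budget $|F|\le k$: a non-terminal edge may be contributed by several of the sets $\cont(e,X_e)$, so one cheap component can be worth switching for several terminals at once, and one must verify both that this sharing does not break the per-terminal independence exploited when enumerating the core (which is why the cheap and the neutral components are completed \emph{jointly} across terminals in the dynamic program rather than terminal by terminal) and that a minimum-price basis of witnesses indeed realizes an $|F|$-minimal way to repair all the parities.
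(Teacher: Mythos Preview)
Your approach diverges substantially from the paper's and, as you yourself flag, the exchange/localization step is where it breaks. The paper never attempts a per-terminal classification of components into ``core/neutral/cheap'' and never needs an exchange argument. Instead it lets ${\cal D}=\{D_1,\ldots,D_r\}$ be the small components of $G[c^{-1}(0)]$ (Lemma~\ref{lem:onlyHandleSmall} discards the large ones), and for \emph{each} $D_i$ it calls \alg{Recurs} recursively on the full \aedc{} instance restricted to $D_i$, with all terminals kept simultaneously and with the side constraints $W^e_1,W^e_2,L^e,R^e$ augmented by the fixed placement of $P$-vertices. Because $|V(D_i)|\le q|T|\le s$, these calls land in the ``small graph'' base case (Lemma~\ref{lem:small}). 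The outputs $S(i,\widehat h)$ are then stitched together by a DP over $i$ and the accumulated parity vector in $\widehat B^{T}$, plus a final step that handles the fixed part living in $P$ and in the large components. The key reason this is clean is that the local $F$-pieces coming from distinct $D_i$ are edge-disjoint, so the global cost is just a sum; no sharing analysis is needed.

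Your plan, by contrast, decouples the terminals too early. The claim that in an $|F|$-minimal colored aligned solution each $S_e$ uses only $\OO(k)$ ``cheap'' components rests on charging each such component a fresh edge of $F$; but a cheap component $K\in S_e$ may contribute only edges that are already forced into $F$ by some other terminal $e'$ (via a completely different $f_{e'}$), so nothing fresh is charged and the bound fails. The subsequent ``minimum-price basis over $\mathrm{GF}(2)^t$'' step inherits the same problem: the price of including $K$ in $S_e$ is not an intrinsic attribute of $K$ but depends on which edges the other $S_{e'}$ already put into $F$, so there is no well-defined per-component cost to minimize. Your dynamic program at the end would need its state to record, for every component, the full pattern of which $S_e$ it lies in --- and that is exactly what the paper's approach does implicitly by solving each $D_i$ as a joint instance over all $e\in T$ before combining. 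In short, the localization you call ``the hard part'' is not merely hard but, as stated, false; the paper avoids it entirely by never separating the terminals inside a component.
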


\subsubsection{Independence}\label{sec:independence}

Denote $P=\{v\in V(G): c(v)=1\}$. Moreover, denote $\widehat{G}=G\setminus P$. Now, let ${\cal D}'$ denote the set of connected components of $\widehat{G}$. Let us fix for now some hypothetical $(h,\{(L^e,R^e)\}|_{e\in T})$-solution $S=(F,\{(X_e,\bar{X}_e)\}|_{e\in T})$ that is aligned, colored and color optimal (if one exists). First, Definition \ref{def:niceCol} directly implies the correctness of the following observation.

\begin{observation}\label{obs:independence}
For all $e\in T$ and $v\in P$, it holds that $v\notin S_e$.
\end{observation}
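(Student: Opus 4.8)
\textbf{Proof proposal for Observation~\ref{obs:independence}.}

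The plan is to unwind the definitions and observe that the claim is essentially immediate from how the coloring $c$ was chosen, combined with the fact that $S$ is a \emph{colored} solution. Recall that $P=\{v\in V(G): c(v)=1\}$, so it suffices to show that no vertex of $S_e$ receives color $1$ under $c$. By the definition of $S_e=(Y_e\setminus X_e)\cup(\bar{Y}_e\setminus \bar{X}_e)$, every vertex of $S_e$ lies in $\bigcup_{e'\in T} S_{e'}$. Since $S=(F,\{(X_e,\bar{X}_e)\}|_{e\in T})$ is assumed to be \emph{colored with respect to $c$}, the first bullet of Definition~\ref{def:niceCol} guarantees that $c(v)=0$ for every $v\in\bigcup_{e'\in T} S_{e'}$. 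Hence $c(v)=0$ for every $v\in S_e$, which means $v\notin P$; equivalently, $v\notin P$ implies $v\notin S_e$ is the contrapositive, so for all $v\in P$ we get $v\notin S_e$, as claimed.

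Concretely, the steps I would carry out are: (i) fix $e\in T$ and $v\in P$, so that $c(v)=1$ by definition of $P$; (ii) suppose for contradiction that $v\in S_e$; (iii) note $S_e\subseteq\bigcup_{e'\in T}S_{e'}$, hence $v\in\bigcup_{e'\in T}S_{e'}$; (iv) invoke the first condition of Definition~\ref{def:niceCol} (valid because $S$ is colored with respect to $c$) to conclude $c(v)=0$, contradicting $c(v)=1$. This is a one-line argument once the definitions are in place, so there is really no substantive obstacle here — the only thing to be careful about is that the observation is stated for a \emph{hypothetical} solution $S$ that is aligned, colored and color optimal, so the word ``colored'' in its hypotheses is exactly what licenses the use of Definition~\ref{def:niceCol}. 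If no such $S$ exists the statement is vacuous, and otherwise the above applies verbatim.

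I expect the authors' proof to be exactly this: a direct appeal to the first bullet of Definition~\ref{def:niceCol}, phrased as ``Definition~\ref{def:niceCol} directly implies the correctness of the following observation,'' which is indeed how the excerpt introduces it. So the ``main obstacle'' is essentially nil; the observation is a bookkeeping step whose only role is to record, for use in the subsequent argument (Section~\ref{sec:independence}), that the deleted vertex set $P$ is disjoint from each symmetric-difference set $S_e$, which in turn will let one argue that within each connected component of $\widehat G=G\setminus P$ the solution's partition behaves rigidly. I would therefore keep the proof to two or three sentences and not attempt to dress it up further.
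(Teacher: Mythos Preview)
Your proposal is correct and matches the paper's own treatment exactly: the paper simply states that Definition~\ref{def:niceCol} directly implies the observation, which is precisely the first-bullet argument you spell out. There is nothing to add; your write-up is, if anything, more detailed than what the authors provide.
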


Informally, this observation implies that we can handle the components in ${\cal D}'$ in a roughly independent manner: there are no edges between the components in ${\cal D}'$, and the ``location'' (that is, the association to either $X_e$ or $\bar{X}_e$) of the vertices that have neighbors in these components but do not belong to them, is known.

The next lemma indicates that among the components in ${\cal D}'$, we only need the handle those that are small.

\begin{lemma}\label{lem:onlyHandleSmall}
Let $D\in{\cal D}'$ such that $|V(D)|>q|T|$. Then, for all $e\in T$ and $v\in V(D)$, it holds that $v\notin S_e$.
\end{lemma}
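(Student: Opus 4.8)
\textbf{Proof strategy for Lemma~\ref{lem:onlyHandleSmall}.}

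The plan is to argue by contradiction using the unbreakability of $G$ together with the definition of ``colored''. Suppose $D\in{\cal D}'$ is a component of $\widehat{G}=G\setminus P$ with $|V(D)|>q|T|$, and suppose for contradiction that there is some $e\in T$ and some $v\in V(D)$ with $v\in S_e$, where $S_e=(Y_e\setminus X_e)\cup(\bar Y_e\setminus\bar X_e)$. The key observation is that $D$ is ``monochromatic'' in the sense that it lies entirely outside $P$, and the hypothetical solution $S=(F,\{(X_e,\bar X_e)\}|_{e\in T})$ is aligned, so each $S_e$ is small: $|S_e|\le q$ by $e$-alignedness (Definition~\ref{def:close}, recalling ${\cal Y}$ is chosen so that alignment, not opposition, is the relevant case). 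Thus $|\bigcup_{e'\in T}S_{e'}|\le q|T|<|V(D)|$, so $V(D)$ is not contained in $\bigcup_{e'\in T}S_{e'}$: there is a vertex $u\in V(D)$ lying outside all $S_{e'}$.

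First I would use this to build a small edge separation of $G$. Consider the partition of $V(G)$ induced by $S_e$ restricted to $D$: set $X'=V(D)\cap S_e$ and $Y'=V(G)\setminus X'$ — or more carefully, I would look at $\bigcup_{e'\in T} S_{e'}$ intersected with $V(D)$ on one side. Since $v\in V(D)\cap S_e$ and $u\in V(D)\setminus\bigcup S_{e'}$, both sides are nonempty, and in fact $|V(D)\setminus\bigcup_{e'}S_{e'}|\ge |V(D)|-q|T|>0$. The number of edges crossing this cut within $G$ needs to be bounded by $p=2(k+1)$: edges from $S_{e'}$ to its complement are controlled because $S=(F,\{(X_e,\bar X_e)\})$ is an $e'$-aligned solution, so $|E(S_{e'}, \,\overline{S_{e'}})|$ where the complement here is with respect to the bipartition $Y_{e'}$ vs.\ $\bar Y_{e'}$ differing set — more precisely the second condition in Definition~\ref{def:close} bounds $|E((Z_e\setminus Z'_e)\cup(\bar Z_e\setminus\bar Z'_e),(Z_e\cap Z'_e)\cup(\bar Z_e\cap\bar Z'_e))|\le 2(k+1)$ with $(Z_e,\bar Z_e)=(X_e,\bar X_e)$ and $(Z'_e,\bar Z'_e)=(Y_e,\bar Y_e)$. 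Since there are no edges of $G$ between distinct components of $\widehat G$ and $P$ is separated off (neighbors of $S_{e'}$-vertices in $D$ either lie in $D$ or in $P$), I would assemble from these facts a $(q|T|,\,p)$- or $(q,p)$-good edge separation of $G$, contradicting the $(q,p)$-unbreakability of $G$ via Definition~\ref{def:unbreakable}. A slight care point: one must ensure both sides of the separation induce connected subgraphs and exceed $q$ in size; I would fix connectivity either by merging $P$ and the small side appropriately or by appealing to a component-level version, since $|V(D)\setminus\bigcup S_{e'}|$ and $|V(D)\cap\bigcup S_{e'}|$ (the latter nonempty by assumption) are the two natural pieces, and the large side together with the rest of $G$ is comfortably bigger than $q$.

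The main obstacle I anticipate is bookkeeping the edge count across the putative separation: the cut edges come from several different $S_{e'}$ simultaneously (one per terminal), and naively summing gives $2(k+1)|T|$ rather than $p=2(k+1)$. To get the bound down to $p$, I expect the argument must localize to a \emph{single} terminal $e$ — i.e.\ show directly that $S_e\cap V(D)$ together with $V(D)\setminus S_e$ (restricted to the component $D$, with all other vertices of $G$ dumped on the appropriate side and using that $D$ has no outgoing edges except into $P$ which is entirely colored $1$) already yields a good edge separation with at most $2(k+1)$ crossing edges, because $(X_e,\bar X_e)$ and $(Y_e,\bar Y_e)$ being $e$-aligned bounds exactly that one cut. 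The role of $|V(D)|>q|T|$ rather than just $q$ is presumably to leave slack: even after removing $\bigcup_{e'}S_{e'}$ (total size $\le q|T|$) the remaining part of $D$ is nonempty and large, guaranteeing the ``big side'' of the separation exceeds $q$. Then $v\in S_e\cap V(D)$ gives the nonempty ``small side''. This contradiction with unbreakability forces $V(D)\cap S_e=\emptyset$ for every $e\in T$, which is the claim. I would write the formal proof by pinning down the single-terminal cut, verifying its two size bounds and connectivity, and invoking Proposition~\ref{prop:goodSep}/Definition~\ref{def:unbreakable}.
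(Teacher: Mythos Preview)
Your approach has a genuine gap, and it stems from reaching for the wrong tool. You try to contradict $(q,p)$-unbreakability by exhibiting a $(q,p)$-good edge separation with one side essentially $S_e$ (or $S_e\cap V(D)$). But a $(q,p)$-good separation requires \emph{both} sides to have more than $q$ vertices, and you yourself note that $|S_e|\le q$ by $e$-alignedness. Saying the small side is ``nonempty'' is not enough; with $|S_e|\le q$ there is no contradiction to unbreakability, and no amount of bookkeeping on the edge count or on connectivity repairs this. The threshold $q|T|$ in the hypothesis is not ``slack for the big side'' of a separation---it is the total size bound on $\bigcup_{e'}S_{e'}$.

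The paper's proof does not use unbreakability at all; it uses only the \emph{colored} property (Definition~\ref{def:niceCol}). Assume $v\in V(D)\cap S_e$ and let $D^*$ be the connected component of $G[\bigcup_{e'\in T}S_{e'}]$ containing $v$. Since $S$ is colored, every vertex of $\bigcup_{e'}S_{e'}$ has color $0$, so $V(D^*)\subseteq V(\widehat G)$ and hence $V(D^*)\subseteq V(D)$. Again because $S$ is colored, every $G$-neighbor of a vertex in $\bigcup_{e'}S_{e'}$ that lies outside $\bigcup_{e'}S_{e'}$ has color $1$, i.e.\ lies in $P$; since $D^*$ is a component of $G[\bigcup_{e'}S_{e'}]$, this gives $N_G(V(D^*))\subseteq P$. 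Thus $V(D^*)$ has no neighbors in $\widehat G$ outside itself, so $V(D^*)=V(D)$. But $|V(D^*)|\le |\bigcup_{e'}S_{e'}|\le q|T|<|V(D)|$, a contradiction. That is the whole argument.
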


\begin{proof}
Suppose, by way of contradiction, that there exists $e\in T$ and $v\in V(D)$ such that $v\in S_e$. Let $D^*$ denote the connected component of $G[\bigcup_{e'\in T}S_{e'}]$ that contains $v$. Then, $V(D^*)\subseteq V(D)$ (since $S$ is colored). However, in $G$, $N(V(D^*))\subseteq P$ (again, since $S$ is colored). This implies that $D=D^*$. However, $|V(D^*)|\leq |\bigcup_{e'\in T}S_{e'}|\leq q|T|$, and thus we have reached a contradiction.
\end{proof}

Let ${\cal D}=\{D_1,D_2,\ldots,D_r\}$, for the appropriate $r$, denote the set of each $D\in{\cal D}'$ such that $|V(D)|\leq q|T|$. Now, for each $1\leq i\leq r$, \alg{Recurs} calls itself recursively on the instance $(D_i,k,t,(V_1\cap V(D_i),V_2\cap V(D_i),\ldots,V_t\cap V(D_i)),T,\{f_e\}|_{e\in T},\emptyset,\{(W^e_1\cup L^e\cup(P\cap Y_e))\cap V(D_i), (W^e_2\cup R^e\cup (P\cap \bar{Y}_e))\cap V(D_i))\}|_{e\in T})$. These instances are of the form solved in Section \ref{sec:small}. Thus by the Lemma \ref{lem:small}, for each triple $(i,\widehat{h},(\emptyset,\emptyset))$ for $1\leq i\leq r$ and a parity restriction $\widehat{h}$, we obtain a correct answer for the constructed instance, which we denote by $S(i,\widehat{h})$, that is either no or an optimal $(\widehat{h},\{(L^e\cap V(D_i)\cup (P\cap Y_e),R^e\cap V(D_i)\cup (P\cap \bar{Y}_e))\}|_{e\in T})$-solution. For convenience, let us denote $S(i,\widehat{h})=\nil$ if the answer is no, and otherwise $S(i,\widehat{h})=(F^{i,\widehat{h}},\{(X^{i,\widehat{h}}_e,\bar{X}^{i,\widehat{h}}_e)\}|_{e\in T})$. Moreover, by Lemma \ref{lem:small}, the time that we spent for the recursive calls is bounded by $r\cdot\tau_\mathrm{s}(n,m,k,t,|T|)$.

\subsubsection{Dynamic Programming}

By relying on Observation \ref{obs:independence} and Lemma \ref{lem:onlyHandleSmall}, we can prove Lemma \ref{lem:colSuffice} by using dynamic programming. Roughly speaking, here we need to be slightly careful as the neighborhoods of our ``independent components'' may intersect. For the sake of completeness, we briefly describe the table and the formulas that compute it. We use a table M that has an entry M$[i,h]$ for all $0\leq i\leq r$ and $\widehat{h}:T\rightarrow \widehat{B}$. To explain the purpose of this entry, we need the following definition.

\begin{definition}
Let $0\leq i\leq r$, $\widehat{h}:T\rightarrow \widehat{B}$. Then, the set ${\cal S}(i,\widehat{h})$ is defined as $(F,\{(X_e,\bar{X}_e)\}|_{e\in T})$ such that the following conditions are satisfied.
\begin{itemize}
\item For all $e\in T$ and $1\leq i\leq t$, it holds that $|X_e\cap V_i|=b^e_i\mod 2$ where $\widehat{h}(e)=(b^e_1,b^e_2,\ldots,b^e_i)$.
\item For all $1\leq j\leq o$, there exist $h^j$ and $S(j,h^j)=(F^j,\{(X^j_e,\bar{X}^j_e)\}|_{e\in T})$ such that $F=\bigcup_{1\leq j\leq i}F^j$, $X^j_e=\bigcup_{1\leq i\leq r}X^j_e$ and $\bar{X}^j_e=\bigcup_{1\leq j\leq i}\bar{X}^j_e$.
\end{itemize}
\end{definition}

The purpose of the entry M$[i,\widehat{h}]$ is to store a pair $(F,\{(X_e,\bar{X}_e)\}|_{e\in T})\in{\cal S}(i,\widehat{h})$ that minimizes $|F|$. In the basis, we initialize each entry where $i=0$: if $\image(\widehat{h})=\{(0,0,\ldots,0)\}$ then M$[i,\widehat{h}]=(\emptyset,\{(\emptyset,\emptyset)\}|_{e\in T})$ and otherwise M$[i,\widehat{h}]=\nil$. Now, suppose that $i\geq 1$. Then, we compute an entry M$[i,\widehat{h}]$ as follows.
\begin{itemize}
\item Denote $Z=V(D_i)\cap(\bigcup_{j=1}^iV(D_j))$. Now, for all $e\in T$, define $B^e=(b^e_1,b^e_2,\ldots,b^e_t)$ as the binary vector such that for all $1\leq j\leq t$, we have $b^e_j=|Z\cap V_j|\mod 2$.
\item For all $h'$ and $h''$ such that for all $e\in T$, it holds that $h'(e)+h''(e)+B^e=\widehat{h}(e)\mod 2$: Denote by $(F',\{(X'_e,\bar{X}'_e)\}|_{e\in T})$ (or $\nil$) the value in M$[i-1,h']$. If this value is $\nil$ or $S(i,h'')$ is $\nil$, then denote $M(h',h'')=\nil$. Else, let $M(h',h'')$ denote $(F^{h',h''},\{(X^{h',h''}_e,\bar{X}^{h',h''}_e)\}|_{e\in T})=(F'\cup F^{i,h''},\{(X'_e\cup X_e^{i,h''}, \bar{X}_e\cup \bar{X}_e^{i,h''})\}|_{e\in T})$ (here we use the values computed in Section \ref{sec:independence}).
\item Let M$[i,\widehat{h}]$ store a pair $M(h',h'')=(F^{h',h''},\{(X^{h',h''}_e,\bar{X}^{h',h''}_e)\}|_{e\in T})$ that minimizes $|F^{h',h''}|$ (if there are several options, the choice is arbitrary). If there is no such pair (since all values are $\nil$), let it store $\nil$.
\end{itemize}

Having computed M, we need the following computation.
\begin{itemize}
\item For all $e\in T$, denote $Z_e=(P\cap Y_e)\setminus (\bigcup_{j=1}^rV(D_i))$ and $\bar{Z}_e=(P\cap \bar{Y}_e)\setminus (\bigcup_{j=1}^rV(D_i))$
\item $F_Z=\bigcup_{e\in T}\cont(e,Z_e)$ where the operator $\cont$ refers the graph $G[Z_e\cup\bar{Z}_e]$.
\item For all $e\in T$, define $B^e=(b^e_1,b^e_2,\ldots,b^e_t)$ as the binary vector such that for all $1\leq j\leq t$, we have $b^e_j=|Z^e\cap V_j|\mod 2$.
\end{itemize}

Now, define $h':T\rightarrow\widehat{B}$ as follows. For all $e\in T$, define $h'(e)=B^e+h(e)\mod 2$. If M$[r,h']$ is $\nil$, then return $\nil$. Else, denote by $(F',\{(X'_e,\bar{X}'_e)\}|_{e\in T})$ the value in M$[r,h']$. Then, we define $F=F'\cup F_Z$ and $(\{(X_e=X'_e\cup Z_e,\bar{X}_e\cup\bar{Z}_e)\}|_{e\in T})$. If we have thus obtained an $(\widehat{h},\{(L^e,R^e)\}|_{e\in T})$-solution (which can be checked in polynomial time), then we return it, and else we return $\nil$.

Overall, by choosing a large enough $\alpha$ in advance, we obtain that the computation in Section \ref{sec:independence} together with the dynamic programming computation  in this section can be performed in the desired time $\widehat{\tau}(n,m,k,t,|T|)$.

\subsection{Breakable Graphs}

We now handle the case where $G$ is $(q,p)$-breakable. Algorithm \alg{Recurs} first uses the algorithm given by Proposition \ref{prop:goodSep} to obtain a partition $(Q,P)$ of $V(G)$ that is a $(q,p)$-good edge separation. Without loss of generality, we assume that $|Q\cap W|\leq |P\cap W|$.

\subsubsection{Collecting Solutions for the Side $Q$}\label{sec:collecting}

Let $U$ denote the set of each vertex in $Q$ that is an endpoint of an edge in $E(Q,P)$. Since $(Q,P)$ is a $(q,p)$-good edge separation, it holds that $|U|\leq p$. Roughly speaking, we first attempt to ``fully understand'' $G[Q]$. This means that for any optimal $(h,\{(L^e,R^e)\}|_{e\in T})$-solution $(F,\{(X_e,\bar{X}_e)\}|_{e\in T})$, given its restriction to $G[P]$, we will be able to extend it to a ``complete'' optimal $(h,\{(L^e,R^e)\}|_{e\in T})$-solution (which might be different than $(F,\{(X_e,\bar{X}_e)\}|_{e\in T})$).

Let us denote $W_Q=U\cup (W\cap Q)$. To ``fully understand'' $G[Q]$, \alg{Recurs} calls itself recursively on the instance $(G,k,t,(V_1\cap Q,V_2\cap Q,\ldots,V_t\cap Q),T,\{f_e\}|_{e\in T},W_Q,\{(W^e_1\cap Q,W^e_2\cap Q)\}|_{e\in T})$. Since $|Q\cap W|\leq |P\cap W|$, $|W|\leq 2p$ and $|U|\leq p$, we have that $|W_Q|\leq 2p$, and therefore the call is valid. Moreover, note that $|Q|<|V(G)|$. Thus by the inductive hypothesis, for each pair $(h,\{(L^e,R^e)\}|_{e\in T})$ of a parity restriction $h$ and a set of partitions $\{(L^e,R^e)\}|_{e\in T}$ of $W_Q$, we obtain a correct answer for the constructed instance, which we denote by $S(h,\{(L^e,R^e)\}|_{e\in T})$, that is either $\nil$ or an optimal $(h,\{(L^e,R^e)\}|_{e\in T})$-solution. For convenience, let us denote $S(h,\{(L^e,R^e)\}|_{e\in T})=\nil$ if the answer is $\nil$, and otherwise $S(h,\{(L^e,R^e)\}|_{e\in T})=(F^{h,\{(L^e,R^e)\}|_{e\in T}},\{(X_e^{h,\{(L^e,R^e)\}|_{e\in T}},\bar{X}_e^{h,\{(L^e,R^e)\}|_{e\in T}})\}|_{e\in T})$. Moreover, by the inductive hypothesis, the time that we spent so far is bounded by $\tau(|Q|,|E(G[Q])|,k,t,|T|)+(n+m+k+t+|T|)^{\delta'}$ for the appropriate constant $\delta'$. If all of the recursive calls returned $\nil$, then we can safely return $\nil$. Thus, in the following Sections \ref{sec:replace} and \ref{sec:solveEnd}, we assume that at least one call did not return $\nil$.

\subsubsection{Replacing the Side $Q$}\label{sec:replace}

Next, we would like to replace $G[Q]$ by a ``small'' graph. For this purpose, we need to identify vertices that provide redundant information. To this end, we let $V_1$ denote the set of vertices $v$ such that there exists a pair $(h,\{(L^e,R^e)\}|_{e\in T})$ of a parity restriction $h$ and a set of partitions $\{(L^e,R^e)\}|_{e\in T}$ of $W_Q$ such that  $v$ is an endpoint of an edge in $F^{h,\{(L^e,R^e)\}|_{e\in T}}$. Note that $|V_1|\leq 2k\cdot 2^{t|T|+2p|T|}$. Moreover, let $V_2$ denote the set of each vertex in $Q$ that is an endpoint of a terminal in $T$. Finally, denote $V^*=V_1\cup V_2\cup W_Q$. The following observation is immediate.

\begin{observation}\label{obs:sizeV*}
$|V^*|\leq 2k\cdot 2^{t|T|+2p} + 2(|T|+p)$.
\end{observation}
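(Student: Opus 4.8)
The plan is a direct counting argument: bound $|V_1|$, $|V_2|$ and $|W_Q|$ separately, and then apply the union bound $|V^*|\le |V_1|+|V_2|+|W_Q|$, which is legitimate since $V^*=V_1\cup V_2\cup W_Q$.

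First I would bound $|V_1|$. By definition, $V_1$ is the union, taken over all pairs $(h,\{(L^e,R^e)\}|_{e\in T})$ consisting of a parity restriction $h$ and a set of partitions of $W_Q$ for which the recursive call did not return $\nil$, of the endpoint sets of the edges of the returned solution $F^{h,\{(L^e,R^e)\}|_{e\in T}}$. The number of parity restrictions $h\colon T\to\widehat{B}$ is $|\widehat{B}|^{|T|}=(2^t)^{|T|}=2^{t|T|}$, since $|\widehat{B}|=2^t$. For each $e\in T$ there are $2^{|W_Q|}$ ordered bipartitions $(L^e,R^e)$ of $W_Q$, so the number of admissible sets $\{(L^e,R^e)\}|_{e\in T}$ is $2^{|W_Q|\cdot|T|}\le 2^{2p|T|}$, using $|W_Q|\le 2p$. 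Hence there are at most $2^{t|T|}\cdot 2^{2p|T|}=2^{t|T|+2p|T|}$ such pairs, and for each one $F^{h,\{(L^e,R^e)\}|_{e\in T}}$ has at most $k$ edges by the specification of \aedc, so it contributes at most $2k$ endpoints to $V_1$. Therefore $|V_1|\le 2k\cdot 2^{t|T|+2p|T|}$, which is exactly the bound recorded just before the observation.

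Next I would bound the remaining two sets. Every terminal in $T\cap E(G)$ is a single edge of $G$, hence has exactly two endpoints, whereas the dummy elements of $T$ (which are not edges of $G$) contribute no vertices at all; thus $|V_2|\le 2|T|$. For $W_Q=U\cup(W\cap Q)$, recall that $|U|\le p$ because $(Q,P)$ is a $(q,p)$-good edge separation, and $|W\cap Q|\le |W\cap P|$ together with $|W|\le 2p$ forces $|W\cap Q|\le p$, so $|W_Q|\le 2p$. Combining the three estimates, $|V^*|\le |V_1|+|V_2|+|W_Q|\le 2k\cdot 2^{t|T|+2p|T|}+2|T|+2p=2k\cdot 2^{t|T|+2p|T|}+2(|T|+p)$, which is the asserted bound.

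There is essentially no obstacle here; the statement is pure bookkeeping. The only points that need a moment of care are (i) that each returned solution $F^{h,\{(L^e,R^e)\}|_{e\in T}}$ is counted once per pair and is guaranteed by the problem specification to have size at most $k$, so that the product $2k\cdot 2^{t|T|+2p|T|}$ is an honest upper bound on $|V_1|$; and (ii) that terminals may be dummy elements, so that $|V_2|$ must be bounded by $2|T|$ rather than by a quantity depending on $m$. The bound $|W_Q|\le 2p$ is the one established immediately after the good edge separation $(Q,P)$ was fixed, so it may simply be quoted.
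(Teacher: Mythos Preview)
Your argument is correct and is exactly the bookkeeping the paper has in mind (the paper calls the observation ``immediate'' and records $|V_1|\le 2k\cdot 2^{t|T|+2p|T|}$ just before it). Note that the exponent you obtain is $t|T|+2p|T|$, which agrees with the paper's stated bound on $|V_1|$; the exponent $t|T|+2p$ printed in the observation appears to be a typo (the missing $|T|$ factor resurfaces in the paper's own Observation~\ref{obs:sizeFam}).
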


We also need the following definitions. Roughly speaking, these definitions capture equivalence classes of vertices such that vertices that belong to the same equivalence class supply, in some sense, redundant information.

\begin{definition}\label{def:redPair}
Let $v,u\in Q\setminus V^*$. We say that $(v,u)$ is a {\em redundant pair} if all of the following conditions are satisfied.
\begin{enumerate}
\item There exists $1\leq i\leq t$ such that $v,u\in V_i$.
\item For each terminal $e\in T$, one of the following conditions is satisfied.
	\begin{enumerate}
	\item $v,u\notin W^e_1\cup W^e_2$.
	\item $v,u\in W^e_1$.
	\item $v,u\in W^e_2$.
	\end{enumerate}
\item For each pair $(h,\{(L^e,R^e)\}|_{e\in T})$ of a parity restriction $h$ and a set of partitions $\{(L^e,R^e)\}|_{e\in T}$ of $W_Q$, and for each terminal $e\in T$, one the following conditions is satisfied.
	\begin{enumerate}
	\item $S(h,\{(L^e,R^e)\}|_{e\in T})=\nil$.
	\item $v,u\in X^{h,\{(L^e,R^e)\}|_{e\in T}}_e$.
	\item $v,u\in \bar{X}^{h,\{(L^e,R^e)\}|_{e\in T}}_e$.
	\end{enumerate}
\end{enumerate}
\end{definition}

More generally, we have the following definition.

\begin{definition}\label{def:redSet}
We say that a subset $U\subseteq Q\setminus V^*$ is a {\em redundant set} if for all $v,u\in U$, it holds that $(v,u)$ is a redundant pair.
\end{definition}

Let ${\cal U}$ denote the family of all non-empty redundant sets.
 Note that each set $U\in{\cal U}$ is characterized by a type $1\leq i\leq t$, an integer $j^e\in\{0,1,2\}$ for $e\in T$ indicating whether the vertices in $U$ belong to $Q\setminus(W^e_1\cup W^e_2)$, $W^e_1$ or $W^e_2$, and a Boolean variable $b^{h,\{(L^e,R^e)\}|_{e\in T},e'}$ for each triple $(h,\{(L^e,R^e)\}|_{e\in T},e)'$ of a parity restriction $h$, set of partitions $\{(L^e,R^e)\}|_{e\in T}$ of $W_Q$ and terminal $e'\in T$, indicating whether the vertices in $U$ belong to $X^{h,\{(L^e,R^e)\}|_{e\in T}}_{e'}$ or $\bar{X}^{h,\{(L^e,R^e)\}|_{e\in T}}_{e'}$. Thus, the next observation follows directly from Definitions \ref{def:redPair} and \ref{def:redSet}.

\begin{observation}\label{obs:sizeFam}
The set ${\cal U}$ is a partition of $Q\setminus V^*$, and $|{\cal U}|\leq t\cdot 3^{|T|}\cdot 2^{2^{t|T|+2p|T|}\cdot |T|}$.
\end{observation}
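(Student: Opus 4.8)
The plan is to verify the two claims separately: first that $\mathcal{U}$ is a partition of $Q\setminus V^*$, and then the cardinality bound. For the partition claim, I would first observe that the relation ``$(v,u)$ is a redundant pair or $v=u$'' is an equivalence relation on $Q\setminus V^*$. Reflexivity is trivial (or follows since all three conditions in Definition \ref{def:redPair} are satisfied when $v=u$); symmetry is immediate from the symmetric way conditions 1--3 are phrased in $v$ and $u$; transitivity follows because each of the three conditions partitions the relevant vertices into classes (type classes $V_i$ for condition 1; the three buckets $Q\setminus(W^e_1\cup W^e_2)$, $W^e_1$, $W^e_2$ for each $e\in T$ in condition 2; and the two buckets $X^{\cdots}_{e'}$, $\bar X^{\cdots}_{e'}$ for each relevant triple in condition 3). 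Since being in the same class in each of these partitions is transitive, so is their conjunction. A redundant set, by Definition \ref{def:redSet}, is then exactly a nonempty subset all of whose pairs are redundant, i.e.\ a nonempty subset of a single equivalence class; the maximal such sets are the equivalence classes themselves. Strictly, $\mathcal{U}$ as defined contains \emph{all} nonempty redundant sets, so to call it ``a partition'' one reads the statement (as the surrounding text does) as referring to the maximal redundant sets, i.e.\ the equivalence classes, which do partition $Q\setminus V^*$.

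For the cardinality bound, I would count the number of equivalence classes by bounding the number of possible ``signatures'' a class can have, exactly as the paragraph preceding the observation indicates. A class is determined by: (i) its type $i\in\{1,\ldots,t\}$, giving a factor of $t$; (ii) for each terminal $e\in T$, a value $j^e\in\{0,1,2\}$ recording which of the three buckets $Q\setminus(W^e_1\cup W^e_2)$, $W^e_1$, $W^e_2$ the class lies in, giving a factor of $3^{|T|}$; and (iii) for each triple $(h,\{(L^e,R^e)\}|_{e\in T},e')$ of a parity restriction, a set of partitions of $W_Q$, and a terminal $e'$, a Boolean recording membership in $X^{\cdots}_{e'}$ versus $\bar X^{\cdots}_{e'}$. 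The number of parity restrictions is at most $(2^t)^{|T|}=2^{t|T|}$, the number of sets of partitions of $W_Q$ is at most $(2^{|W_Q|})^{|T|}\le 2^{2p|T|}$ since $|W_Q|\le 2p$, and the number of terminals $e'$ is $|T|$; so the number of such triples is at most $2^{t|T|}\cdot 2^{2p|T|}\cdot|T| = 2^{t|T|+2p|T|}\cdot|T|$, and the Boolean-vector factor is $2^{\,2^{t|T|+2p|T|}\cdot|T|}$. Multiplying the three factors gives $|\mathcal{U}|\le t\cdot 3^{|T|}\cdot 2^{\,2^{t|T|+2p|T|}\cdot|T|}$, as claimed. (The dominant term swallows the $|T|$ in the exponent's multiplier, so one may absorb it; otherwise one keeps it as written.)

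I do not expect a serious obstacle here: both parts are essentially bookkeeping once Definitions \ref{def:redPair} and \ref{def:redSet} are in hand. The only mildly delicate point is the convention that $\mathcal{U}$ ``is a partition'' — since literally $\mathcal{U}$ contains every nonempty redundant set rather than only the maximal ones — so I would state explicitly at the start of the proof that we identify $\mathcal{U}$ with the set of maximal redundant sets (equivalently, the equivalence classes of the relation above), and then the partition statement is exact. The counting step then needs only that each class injects into the finite set of signatures described, which is clear because conditions 1--3 of Definition \ref{def:redPair} force two vertices of the same class to agree on every coordinate of the signature.
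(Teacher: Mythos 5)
Your proposal is correct and follows essentially the same route as the paper: the paper's argument is precisely the observation that each (maximal) redundant set is characterized by a signature consisting of its type, the bucket $j^e\in\{0,1,2\}$ for each $e\in T$, and a Boolean for each triple of a parity restriction, a set of partitions of $W_Q$, and a terminal, and then multiplying $t\cdot 3^{|T|}\cdot 2^{2^{t|T|+2p|T|}\cdot|T|}$. Your additional care in spelling out that the redundant-pair relation is an equivalence relation, and in reading ${\cal U}$ as the set of maximal redundant sets (equivalence classes), accurately reflects what the paper leaves implicit.
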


We aim to decrease the size of each set in $\cal U$. First, from each set in $\cal U$ of even size, let us remove one vertex. Let $U_R$ denote the set of vertices that were removed. Now, the size of each set in $\cal U$ is odd. For each set $U\in{\cal U}$, choose (arbitrarily) a vertex $v_U$. Denote $Z=(\bigcup_{U\in{\cal U}}U\setminus\{v_U\})$
 We need the following definition, which, informally, presents our ``compact'' instance which is the result of removal o redundant information that we managed to identify (together with the insertion of a few other ingredients to ensure equivalence).

\begin{definition}
We define the instance $I^*=(G^*,k,t,(V_1\setminus Z,V_2\setminus Z,\ldots,V_t\setminus Z),T,\{f^*_e\}|_{e\in T},W_Q,$ $\{{W^*}^e_1=W^e_1\setminus Z,{W^*}^e_2=W^e_2\setminus Z\}|_{e\in T})$ of \aedc{} as follows.
\begin{itemize}
\item $V(G^*)=V(G)\setminus Z$.
\item For every $e\in E(G)$:
	\begin{itemize}
	\item If both endpoints of $e$, $v$ and $u$, belong to $Z$: Let $U\in{\cal U}$ such that $v\in U$, and let $U'\in{\cal U}$ such that $u\in U'$. If $U\neq U'$, then insert $k+1$ new edges $e'$ between $v_U$ and $v_{U'}$ into $E(G^*)$. Moreover, in this case, for each new edge $e'$ and $\widehat{e}\in T$, define $f^*_{\widehat{e}}(e')=f_{\widehat{e}}(e)$.
	\item If exactly one endpoint of $e$ is a vertex, $v$, in $Z$: Let $u$ denote the other endpoint of $e$. Let $U\in{\cal U}$ such that $v\in U$. If $u\notin U$,
	then insert $k+1$ new edges $e'$ between $v_U$ and $u$ into $E(G^*)$. Moreover, in this case, for each new edge $e'$ and $\widehat{e}\in T$, define $f^*_{\widehat{e}}(e')=f_{\widehat{e}}(e)$.
	\item Otherwise: Insert $e$ into $E(G^*)$. Moreover, for all $\widehat{e}\in T$, define $f^*_{\widehat{e}}(e)=f_{\widehat{e}}(e)$.
	\end{itemize}
\end{itemize} 
\end{definition}

By Observations \ref{obs:sizeV*} and \ref{obs:sizeFam}, the following observation is immediate for the appropriate constant $\lambda'$.

\begin{observation}\label{obs:numVerticesRep}
$|V(G^*)\setminus P|\leq 2k\cdot 2^{t|T|+2p} + 2(|T|+p) + 2t\cdot 3^{|T|}\cdot 2^{2^{t|T|+2p|T|}\cdot |T|}\leq 2^{2^{\lambda'(t+k^2)|T|}}$.
\end{observation}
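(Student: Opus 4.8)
The plan is to bound $|V(G^*)\setminus P|$ by accounting for the three disjoint groups of vertices that survive in $G^*$ outside $P$: the vertices of $V^*$, the representatives $v_U$ chosen from each redundant set $U\in{\cal U}$, and the vertices of $U_R$ removed from even-size redundant sets. Since $V(G^*)=V(G)\setminus Z$ and $Z=\bigcup_{U\in{\cal U}}(U\setminus\{v_U\})$, every vertex of $Q$ that is not in $Z$ lies in $V^*$, in $U_R$, or is one of the chosen representatives $v_U$; here one uses Observation~\ref{obs:sizeFam}, which asserts that ${\cal U}$ partitions $Q\setminus V^*$, so that $Q\setminus V^* = U_R \,\cup\, \bigcup_{U\in{\cal U}}(U\setminus\{v_U\}) \,\cup\, \{v_U : U\in{\cal U}\}$. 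First I would therefore write
\[
|V(G^*)\setminus P| = |V^*| + |U_R| + |{\cal U}|,
\]
treating the three contributions separately.

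Next I would plug in the bounds already established. By Observation~\ref{obs:sizeV*}, $|V^*|\leq 2k\cdot 2^{t|T|+2p} + 2(|T|+p)$. For $|U_R|$, since at most one vertex is removed from each redundant set, $|U_R|\leq |{\cal U}|$, and by Observation~\ref{obs:sizeFam}, $|{\cal U}|\leq t\cdot 3^{|T|}\cdot 2^{2^{t|T|+2p|T|}\cdot |T|}$. Summing,
\[
|V(G^*)\setminus P|\leq 2k\cdot 2^{t|T|+2p} + 2(|T|+p) + 2t\cdot 3^{|T|}\cdot 2^{2^{t|T|+2p|T|}\cdot |T|},
\]
which is exactly the first inequality in the statement.

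It remains to collapse this sum into a single bound of the form $2^{2^{\lambda'(t+k^2)|T|}}$ for an appropriate constant $\lambda'$. The dominant term is the last one, $2t\cdot 3^{|T|}\cdot 2^{2^{t|T|+2p|T|}\cdot |T|}$; substituting $p=2(k+1)$ gives an exponent $2^{t|T|+2p|T|}\cdot|T| \le 2^{O((t+k)|T|)}\cdot|T|$, and since $t+k \le t + k^2$ (assuming $k\ge 1$; the case $k=0$ is handled trivially) and $|T|\le k\le k^2$, this is at most $2^{\lambda'(t+k^2)|T|}$ for a suitable $\lambda'$, absorbing the polynomial prefactors $2t\cdot 3^{|T|}$ and the two smaller summands into the same exponential by enlarging $\lambda'$. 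I would record this final estimate with a short line of arithmetic rather than spelling out every inequality. The only mildly delicate point — and the place where care is needed — is verifying that $p=2(k+1)$ and the conversion $t+k\rightsquigarrow t+k^2$ really let one fold the outer exponent $2^{t|T|+2p|T|}\cdot|T|$ into $2^{\lambda'(t+k^2)|T|}$; once the constant $\lambda'$ is chosen large enough in advance (as the paper does for the other constants), this is routine. No real obstacle is expected: the observation is a direct consequence of Observations~\ref{obs:sizeV*} and~\ref{obs:sizeFam} together with the facts that ${\cal U}$ partitions $Q\setminus V^*$ and that $|U_R|\le|{\cal U}|$.
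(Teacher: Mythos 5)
Your proposal is correct and matches the paper's intent exactly: the paper declares the observation ``immediate'' from Observations~\ref{obs:sizeV*} and~\ref{obs:sizeFam}, and your decomposition $|V(G^*)\setminus P|=|V^*|+|U_R|+|{\cal U}|$ together with $|U_R|\leq|{\cal U}|$ is precisely the accounting that justifies the factor $2$ in front of the $t\cdot 3^{|T|}\cdot 2^{2^{t|T|+2p|T|}\cdot|T|}$ term. The closing arithmetic (substituting $p=2(k+1)$ and absorbing prefactors into $\lambda'$) is routine and handled in the same spirit as the paper's other constant-fixing steps.
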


We need to prove equivalence between our new instance and the original one. First, we argue that the solutions for the original instance can be modified to obtain solutions for the new instance $I^*$ where the size $|F|$ does not ``grow'' and the ``distributions'' of $W$ among partitions remain the same. For formal argument, we need the following notation. Consider some parity restriction $h': V(G)\rightarrow\widehat{B}$ and a set of partitions $\{({L^e}',{R^e}')\}|_{e\in T}$ of $W$. Now, suppose that there exists an $(h',\{({L^e}',{R^e}')\}|_{e\in T})$-solution for the original instance, and let $(F',\{X'_e,\bar{X}'_e\}|_{e\in T})$ be such a solution. We define $h: T\rightarrow\widehat{B}$ as follows. For all $e\in T$, define $h(e)=B^e=(b^e_1,b^e_2,\ldots,b^e_t)$ such that for all $1\leq i\leq t$, we set $b^e_i=|X'_e\cap Q|\mod 2$.
For all $e\in T$, denote $L^e=W_Q\cap ({L^e}'\cup (U\cap X'_e))$ and $R^e=W_Q\cap ({R^e}'\cup (U\cap \bar{X}'_e))$. Moreover, define $F=F'\cap E(G[Q])$, and for all $e\in T$, define $X_e=X'_e\cap Q$ and $\bar{X}_e=\bar{X}'_e\cap Q$.
Denote $F''=(F'\setminus F)\cup F^{h,\{(L^e,R^e)\}|_{e\in T}}$, and for all $e\in T$, denote $X_e''=(X'_e\setminus X_e)\cup X_e^{h,\{(L^e,R^e)\}|_{e\in T}}$ and $\bar{X}_e''=(\bar{X}'_e\setminus \bar{X}_e)\cup \bar{X}_e^{h,\{(L^e,R^e)\}|_{e\in T}}$. Then, since $L^e=W_Q\cap ({L^e}'\cup (U\cap X'_e))$ and $R^e=W_Q\cap ({R^e}'\cup (U\cap \bar{X}'_e))$, we have that $(F'',\{(X_e'',\bar{X}_e'')\}|_{e\in T})$ is an $(h',\{({L^e}',{R^e}')\}|_{e\in T})$-solution to the original instance. (We remark that it also holds that $(F,\{(X_e,\bar{X}_e)\}|_{e\in T})$ is an $(h,\{(L^e,R^e)\}|_{e\in T})$-solution for the instance $I_Q$ constructed in Section \ref{sec:collecting}.) Therefore, we can assume w.l.o.g.~that $F=F^{h,\{(L^e,R^e)\}|_{e\in T}}$ and $\{(X_e,\bar{X}_e)\}|_{e\in T}=\{(X_e^{h,\{(L^e,R^e)\}|_{e\in T}},\bar{X}_e^{h,\{(L^e,R^e)\}|_{e\in T}})\}|_{e\in T}$.

Now, we construct an $(h',\{({L^e}',{R^e}')\}|_{e\in T})$-solution $(F^*,\{(X^*_e,\bar{X}^*_e)\}|_{e\in T})$ for the new instance $I^*$ as follows. (Note that the use of the term  $(h',\{({L^e}',{R^e}')\}|_{e\in T})$-solution is well-defined since $T$ and the set $W$ are the same in $I^*$ and the original instance.) First, we define $F^*=F'$. This is possible since the endpoints of the edges in $F'$ do not belong to $Z$ (since $F=F^{h,\{(L^e,R^e)\}|_{e\in T}}$), and therefore $F'\subseteq E(G^*)\setminus T$. Next, for every $e\in T$, we define $X^*_e=X'_e\setminus Z$  and $\bar{X}^*_e=\bar{X}'_e\setminus Z$.

\begin{lemma}\label{lem:equivRepOne}
It holds that $(F^*,\{(X^*_e,\bar{X}^*_e)\}|_{e\in T})$ is an $(h',\{({L^e}',{R^e}')\}|_{e\in T})$-solution for $I^*$.
\end{lemma}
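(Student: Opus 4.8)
The plan is to verify directly that $(F^*,\{(X^*_e,\bar{X}^*_e)\}|_{e\in T})$ satisfies every requirement in the definition of an $(h',\{({L^e}',{R^e}')\}|_{e\in T})$-solution for the compact instance $I^*$, using as the starting point the normalized solution $(F',\{(X'_e,\bar{X}'_e)\}|_{e\in T})$ for the original instance in which (as argued just before the lemma) we may assume $F=F'\cap E(G[Q])=F^{h,\{(L^e,R^e)\}|_{e\in T}}$ and $X_e=X'_e\cap Q=X_e^{h,\{(L^e,R^e)\}|_{e\in T}}$. The key observation that makes all checks go through is that $Z\subseteq \bigcup_{U\in{\cal U}}U$ is disjoint from $V^*$, hence disjoint from $V_1$ (so no endpoint of an edge of any collected solution $F^{h,\dots}$, and in particular no endpoint of an edge of $F'$, lies in $Z$), disjoint from $V_2$ (so no endpoint of a terminal lies in $Z$), and disjoint from $W_Q\supseteq U\cup(W\cap Q)$ (so deleting $Z$ does not touch $W$, the cut-boundary $U$, or any $W^e_i\cap Q$ relevant to the recursion). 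Consequently $F^*=F'\subseteq E(G^*)\setminus T$ is legitimate, $|F^*|=|F'|\le k$, and for each $e$ the sets $W^e_1\subseteq X^*_e$, $W^e_2\subseteq \bar{X}^*_e$, ${L^e}'\subseteq X^*_e$, ${R^e}'\subseteq\bar{X}^*_e$ still hold because $({L^e}',{R^e}',W^e_1,W^e_2)\subseteq V(G)\setminus Z$.

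Next I would check the parity condition: $(X^*_e,\bar{X}^*_e)$ must fit $e$ with respect to the vector $h'(e)$ in $I^*$, i.e. $|X^*_e\cap (V_i\setminus Z)|\equiv |X'_e\cap V_i|\bmod 2$ for all $i$. This is where the ``remove one vertex from each even-size redundant set, then pick $v_U$'' construction is used. By the defining property of a redundant set $U$ of type $i$, all vertices of $U$ lie in the same class $V_i$ and all lie in the same side ($X'_e$ or $\bar{X}'_e$) of every $(X'_e,\bar{X}'_e)$ (since we normalized to $X'_e=X_e^{h,\dots}$); removing $U_R$ (one vertex per even set) makes every class in ${\cal U}$ odd, and $Z=\bigcup_U (U\setminus\{v_U\})$ removes an even number of vertices from each odd class — so $|Z\cap U|$ is even, hence $|Z\cap V_i\cap X'_e|$ changes the count by an even amount, preserving the parity. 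I would argue this class by class over ${\cal U}$ plus the (untouched) part $V^*$.

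The remaining and most delicate point — the one I expect to be the main obstacle — is showing that $(X^*_e,\bar{X}^*_e)$ \emph{almost fits} $e$ in $G^*$ and that $\cont(e,X^*_e)\setminus\{e\}\subseteq F^*$, i.e. that contracting each redundant set $U$ to its representative $v_U$ (and rerouting the incident edges, with $k+1$ parallel copies and inherited $f_{\widehat e}$-labels) does not create or destroy any contribution. Here one must case-analyze an edge $e'$ of $G$ according to whether $0$, $1$, or $2$ of its endpoints lie in $Z$: if $e'$ has both endpoints in the same class $U$ it is deleted, and since $U$ lies wholly on one side of $(X'_e,\bar{X}'_e)$ with fixed $f_e$-value one checks $e'\notin\cont(e,X'_e)$ anyway, so nothing is lost; if $e'$ goes from $u\in U$ to $w$ (in a different class or in $V(G^*)$), then $w$ and $v_U$ lie on the same side of $(X^*_e,\bar{X}^*_e)$ as $w$ and $u$ did in $(X'_e,\bar{X}'_e)$ — because $u$ and $v_U$ share a class — and $f^*_{\widehat e}(e')=f_{\widehat e}(e')$, so the contribution status is preserved; the $k+1$-fold multiplicity is harmless since any such edge that contributes would force $|\cont(e,X^*_e)\setminus\{e\}|>k$, contradicting $|F^*|\le k$, so in fact it is never used. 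Putting these together, $\cont(e,X^*_e)$ in $G^*$ is in bijection with $\cont(e,X'_e)$ in $G$ restricted to surviving edges, terminal contributions are unchanged (terminal endpoints avoid $Z$), so $T\cap\cont(e,X^*_e)=\{e\}$ and the leftover edges all lie in $F'=F^*$. This establishes that $(F^*,\{(X^*_e,\bar{X}^*_e)\}|_{e\in T})$ is the desired solution, completing the proof of Lemma~\ref{lem:equivRepOne}.
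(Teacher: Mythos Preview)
Your approach is the same as the paper's, and most of it is correct: the observations that $Z$ is disjoint from $V_1,V_2,W_Q$, that consequently $F'\subseteq E(G^*)\setminus T$, that the $W^e_i,{L^e}',{R^e}'$ constraints survive, and the parity argument via odd-size redundant classes are all right and match the paper.

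There is one genuine gap, in the last step. Your $k{+}1$-multiplicity argument is circular in this direction. You write that if a new edge contributed then $|\cont(e,X^*_e)\setminus\{e\}|>k$, ``contradicting $|F^*|\le k$''; but $|F^*|\le k$ is not contradicted by $|\cont(e,X^*_e)\setminus\{e\}|>k$ unless you already know $\cont(e,X^*_e)\setminus\{e\}\subseteq F^*$, which is precisely what you are trying to prove. (The $k{+}1$-multiplicity \emph{is} the right tool in the converse direction, Lemma~\ref{lem:inEquivRepTwo}, not here.) The fix is a one-liner using an ingredient you already stated: you noted that no edge of $F'$ has an endpoint in $Z$ (since $Z\cap V_1=\emptyset$). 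Hence any original edge $\widehat e$ with an endpoint in $Z$ is not in $F'$; since $(F',\{(X'_e,\bar{X}'_e)\})$ is a solution for the original instance, $\cont(e,X'_e)\setminus\{e\}\subseteq F'$, so $\widehat e\notin\cont(e,X'_e)$. Then your (correct) ``contribution status is preserved'' observation gives that the new edges replacing $\widehat e$ do not contribute in $G^*$. This is exactly the paper's argument.

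A minor side remark: the ``fixed $f_e$-value'' reasoning for edges with both endpoints in the same class $U$ does not work as stated (nothing forces $f_e(e')=0$ on such edges), but that case is moot anyway --- those edges are deleted from $G^*$ and need no analysis.
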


\begin{proof}
All of the following arguments implicitly rely on the fact that $(F',\{X'_e,\bar{X}'_e\}|_{e\in T})$ is an $(h',\{({L^e}',{R^e}')\}|_{e\in T})$-solution. First, note that $|F^*|=|F'|\leq k$. Now, consider some terminal $e\in T$. Since $X^*_e\subseteq X'_e$ and $\bar{X}^*_e\subseteq \bar{X}'_e$, it holds that $\cont(e,X^*_e)\cap\{e\}=\cont(e,X'_e)\cap\{e\}=\{e\}\cap E(G^*)$, $\cont(e,X^*_e)\setminus\{e\}\subseteq F'=F^*$, $W^e_1\setminus Z\subseteq X^*_e$ and $W^e_2\setminus Z\subseteq \bar{X}^*_e$. Moreover, since $L^e\cup R^e\subseteq V(G^*)$, we deduce that $L^e\subseteq X^*_e$ and $R^e\subseteq \bar{X}^*_e$. Denote $h'(e)=(b^e_1,b^e_2,\ldots,b^e_t)$. Denote ${\cal U}=\{U\in{\cal U}: v_U\in X^*_e\}$. Notice that $X^*_e\cup(\bigcup_{U\in {\cal U}}U)=X'_e$ and that for all $U\in{\cal U}$, it holds that $X^*_e\cap U=\{v_U\}$. Thus, since the size of each set in ${\cal U}$ is odd (where vertices in the same set have the same type), we have that for all $1\leq i\leq r$, it holds tat $|X^*_e\cap V_i|=|X'_e\cap V_i|=b^e_i\mod 2$.

To conclude that the lemma is correct, it remains to show that none of the newly added edges belongs to $\cont(e,X^*_e)$, since then we get that $\cont(e,X^*_e)\setminus\{e\}\subseteq F^*$ (rather than only $(\cont(e,X^*_e)\setminus\{e\})\cap E(G)\subseteq F^*$). Consider some newly added edge $e'$. Let $\widehat{e}\in E(G)$ be the edge examined when we added $e'$ to $G^*$. We consider two cases.
\begin{itemize}
\item We have that $v\in Z$ and $u\notin Z$ are the two endpoints of $e'$. Since $\widehat{e}$ has an endpoint in $Z$, it holds that $\widehat{e}\notin\cont(e,X'_e)$. Then, the two endpoints of $e'$ are $v_U$ and $u$ for an appropriate set $U\in{\cal U}$. Notice that since $(X_e,\bar{X}_e)=(X_e^{h,\{(L^e,R^e)\}|_{e\in T}},\bar{X}_e^{h,\{(L^e,R^e)\}|_{e\in T}})$, we have that either $v,v_U\in X^*_e$ or $v,v_U\in \bar{X}^*_e$, and we also have that $u\in(X^*_e\cap X'_e)\cup(\bar{X}^*_e\cap \bar{X}'_e)$. Thus, we deduce that $e'\notin\cont(e,X'_e)$.

\item We have that $v\in Z$ and $u\in Z$ are the two endpoints of $e'$. Since $\widehat{e}$ has an endpoint in $Z$, it holds that $\widehat{e}\notin\cont(e,X'_e)$. Then, the two endpoints of $e'$ are $v_U$ and $v_{U'}$ for the appropriate sets $U,U'\in{\cal U}$ and $U\neq U'$. Notice that since $(X_e,\bar{X}_e)=(X_e^{h,\{(L^e,R^e)\}|_{e\in T}},\bar{X}_e^{h,\{(L^e,R^e)\}|_{e\in T}})$, we have that either $v,v_U\in X^*_e$ or $v,v_U\in \bar{X}^*_e$, and we also have that either $u,v_{U'}\in X^*_e$ or $u,v_{U'}\in \bar{X}^*_e$. Thus, we deduce that $e'\notin\cont(e,X'_e)$.
\end{itemize}
\end{proof}

Second, we show how to modify solutions for $I^*$ to obtain solutions for the original instance where the size $|F|$ does not ``grow'' and the ``distributions'' of $W$ among partitions remain the same. 
For formal argument, we need the following notation. Consider some parity restriction $h': V(G^*)\rightarrow\widehat{B}$ and a set of partitions $\{({L^e}',{R^e}')\}|_{e\in T}$ of $W$. Now, suppose that there exists an $(h',\{({L^e}',{R^e}')\}|_{e\in T})$-solution for $I^*$, and let $(F',\{X'_e,\bar{X}'_e\}|_{e\in T})$ be an optimal such solution. We define $h: T\rightarrow\widehat{B}$ as follows. For all $e\in T$, define $h(e)=B^e=(b^e_1,b^e_2,\ldots,b^e_t)$ such that for all $1\leq i\leq t$, we set $b^e_i=|X'_e\cap Q|\mod 2$. For all $e\in T$, denote $L^e=W_Q\cap ({L^e}'\cup(U\cap X'_e))$ and $R^e=W_Q\cap ({R^e}'\cup (U\cap \bar{X}'_e))$. Moreover, define $F=F'\cap E(G[Q])$. For all $e\in T$, define ${\cal U}_e=\{U\in{\cal U}: v_U\in X'_e\}$, $U_e=\bigcup_{U\in{\cal U}_e}U$,  $\bar{\cal U}_e=\{U\in{\cal U}: v_U\in \bar{X}'_e\}$ and $\bar{U}_e=\bigcup_{U\in\bar{\cal U}_e}U$. Accordingly, for all $e\in T$, define $X_e=(X'_e\cap Q)\cup U_e$ and $\bar{X}_e=(\bar{X}'_e\cap Q)\cup\bar{U}_e$.

\begin{lemma}\label{lem:inEquivRepTwo}
It holds that $(F,\{(X_e,\bar{X}_e)\}|_{e\in T})$ is an $(h,\{(L^e,R^e)\}|_{e\in T})$-solution for $I_Q$.
\end{lemma}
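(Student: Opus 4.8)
The plan is to verify, one property at a time, that the tuple $(F,\{(X_e,\bar X_e)\}|_{e\in T})$ satisfies every condition required of an $(h,\{(L^e,R^e)\}|_{e\in T})$-solution for the instance $I_Q$ (the instance constructed in Section~\ref{sec:collecting} on the graph $G[Q]$). The main point is that all the new edges of $G^*$ incident to the contracted representatives $v_U$ are simply faithful copies of edges of $G[Q]$ with one endpoint in a redundant set, so that ``un-contracting'' $Z$ back into $G[Q]$ preserves which edges contribute to each terminal cut; combined with the fact that each redundant set $U\in{\cal U}$ is homogeneous (all of $U$ lies on the same side of $X'_e$ as $v_U$, all of $U$ has a fixed type, and all of $U$ lies on a fixed side w.r.t.\ each $W^e_1,W^e_2$), expanding $v_U$ to all of $U$ changes none of the relevant data.

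First I would bound $|F|$: since $F=F'\cap E(G[Q])\subseteq F'$ and $|F'|\leq k$, we get $|F|\leq k$; moreover $F\subseteq E(G[Q])\setminus T$ because $F'\cap T=\emptyset$. Next, fix a terminal $e\in T$ and check that $(X_e,\bar X_e)$ fits $e$ in $G[Q]$ with respect to $f_e$ and $h(e)$: by the choice $b^e_i=|X'_e\cap Q|\bmod 2 = |X_e\cap V_i|\bmod 2$ (here one uses that each $U\in{\cal U}_e$ has all its vertices of one fixed type and that, since $U$ had odd size after the trimming step in Section~\ref{sec:replace}, replacing $v_U$ by $U$ does not change the parity of $|X_e\cap V_i|$) the parity constraints hold by construction. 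For the ``almost fits'' part I would argue that for any terminal edge $e'\in T\cap E(G[Q])$, whether $e'$ contributes to $(e,(X_e,\bar X_e))$ in $G[Q]$ is the same as whether $e'$ contributes to $(e,(X'_e,\bar X'_e))$ in $G^*$: the endpoints of $e'$ lie in $V^*\supseteq V_2$ (endpoints of terminals were put into $V^*$), hence outside $Z$, hence they are on the same side in $(X_e,\bar X_e)$ as in $(X'_e,\bar X'_e)$; so $T\cap\cont(e,X_e)=\{e\}$ follows from the corresponding property of the $I^*$-solution. Then I would show $\cont(e,X_e)\setminus\{e\}\subseteq F$: take $\widehat e\in\cont(e,X_e)\setminus\{e\}$ with endpoints $v,u$; either both endpoints avoid $Z$, in which case $\widehat e$ is an edge of $G^*$ contributing to $(e,X'_e)$, so $\widehat e\in F'$ and, being in $E(G[Q])$, lies in $F$; or some endpoint lies in a set $U\in{\cal U}$, in which case the corresponding edge $e'$ inserted into $G^*$ between $v_U$ (or $v_{U'}$) and the other endpoint has the same $f_e$-value and the same side-configuration (because $v,v_U$ and $u,v_{U'}$ are each on a common side of $X'_e$), hence $e'\in\cont(e,X'_e)\setminus\{e\}\subseteq F'$, and since $k+1$ parallel copies of $e'$ were added while $|F'|\leq k$, some copy is missing from $F'$ — contradicting that this edge was forced into the solution. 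This last multiplicity argument is the one that rules out the ``bad'' case and is the step I expect to be the main obstacle to state cleanly. Finally I would verify $L^e\subseteq X_e$, $R^e\subseteq\bar X_e$ (immediate from $L^e=W_Q\cap({L^e}'\cup(U\cap X'_e))$ together with $W_Q\subseteq Q$ and $U\subseteq Q$, so these vertices are kept and on the correct side), and $W^e_1\cap Q\subseteq X_e$, $W^e_2\cap Q\subseteq\bar X_e$ (vertices of $W^e_1,W^e_2$ that were removed into $Z$ come in homogeneous sets, and each $U\in{\cal U}$ has $v_U$ on the side dictated by $j^e$, so $U_e,\bar U_e$ recover exactly the removed vertices on the right side). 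Collecting these facts shows $(F,\{(X_e,\bar X_e)\}|_{e\in T})$ is an $(h,\{(L^e,R^e)\}|_{e\in T})$-solution for $I_Q$, as claimed.
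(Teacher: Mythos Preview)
Your overall strategy matches the paper's: verify the size bound, the parity constraints, the side constraints $L^e,R^e,W^e_1,W^e_2$, the ``almost fits'' condition via $V_2\subseteq V^*$, and then argue that any edge $\widehat e\in E(G[Q])$ contributing to $(e,X_e)$ must already lie in $F'$, using the $k+1$ parallel copies in $G^*$ as a pigeonhole device. That part is fine and is exactly what the paper does.

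There is, however, a genuine gap in your case analysis for $\cont(e,X_e)\setminus\{e\}\subseteq F$. When $\widehat e$ has both endpoints $v,u$ in the \emph{same} redundant set $U\in{\cal U}$ (this includes the sub-case $u=v_U$), the construction of $I^*$ inserts \emph{no} new edge at all: the clauses in the definition of $I^*$ only add the $k+1$ parallel copies when $u\notin U$, respectively when $U\neq U'$. So your multiplicity argument (``$k+1$ copies were added while $|F'|\leq k$'') simply does not apply to this case, and the edge $\widehat e$ has no witness in $G^*$ you can point to.

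The paper closes this gap with a different argument that you are missing: it uses the standing assumption (made at the end of Section~\ref{sec:collecting}) that at least one recursive call $S(\widehat h,\{(\widehat L^e,\widehat R^e)\})$ returned a genuine solution rather than $\nil$. In that solution, by Definition~\ref{def:redPair}, $v$ and $u$ lie on the same side of the partition for every terminal; and since $v,u\notin V^*\supseteq V_1\cup V_2$, the edge $\widehat e$ is neither a terminal nor an edge of any $F^{\widehat h,\ldots}$, hence $\widehat e\notin\cont(e,X_e^{\widehat h,\ldots})$. Together these force $f_e(\widehat e)=0$. Finally, since $v,u\in U$ are placed on the same side of $(X_e,\bar X_e)$ by construction, $\widehat e$ does not contribute. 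You need to add this case explicitly.
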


\begin{proof}
All of the following arguments implicitly rely on the fact that $(F',\{X'_e,\bar{X}'_e\}|_{e\in T})$ is an $(h',\{({L^e}',{R^e}')\}|_{e\in T})$-solution. First, note that $|F|\leq|F'|\leq k$. Now, consider some terminal $e\in T$. Since $V_2\subseteq V^*\subseteq V(G^*)$, it holds that $\cont(e,X_e)\cap\{e\}=\cont(e,X'_e)\cap\{e\}=\{e\}\cap E(G[Q])$. Moreover, by definition $L^e=W_Q\cap ({L^e}'\cup(U\cap X'_e))$ and $R^e=W_Q\cap ({R^e}'\cup (U\cap \bar{X}'_e))$, and therefore $L^e\subseteq X_e$ and $R^e\subseteq \bar{X}_e$. Denote $h'(e)=(b^e_1,b^e_2,\ldots,b^e_t)$, and consider some $1\leq i\leq t$. Then, since for all $U\in{\cal U}_e$, we have $|X_e\cap U|=1$, and since the size of each set in ${\cal U}$ is odd (where vertices in the same set have the same type), we have that $|X_e\cap V_i|=|(X_e\cup U_e)\cap V_i|=|X'_e\cap V_i|=b^e_i\mod 2$.

Next, we claim that $W^e_1\cap Q\subseteq X_e$ and $W^e_2\cap Q\subseteq \bar{X}_e$. We only show that $W^e_1\cap Q\subseteq X_e$, since the proof of the second claim is symmetric. We first note that $W^e_1\cap Q\cap V(G^*) \subseteq X'_e\cap Q\subseteq X_e$. Now, consider a vertex $v\in W^e_1\cap Q\setminus V(G^*)$. Then, there exists a set $U\in{\cal U}$ such that $v\in U$. By Definitions \ref{def:redPair} and \ref{def:redSet}, it holds that $v_U\in W^e_1$. However, $v_U\in V(G^*)$, and therefore $v_U\in X'_e$. This means that $v\in U_e$, and thus we get that $v\in X_e$.

It remains to show that $\cont(e,X_e)\setminus\{e\}\subseteq F$. Since $F=F'\cap E(G[Q])$, $X'_e\cap Q\subseteq X_e$ and $\bar{X}'_e\cap Q\subseteq \bar{X}_e$, it is sufficient to show that every edge $\widehat{e}\in E(G[Q])\setminus E(G^*)$ satisfies $\widehat{e}\notin\cont(e,X_e)$. For this purpose, consider some $\widehat{e}\in E(G[Q])\setminus E(G^*)$. Then, it has an endpoint, $v$, that belongs to $Z$. Let $u$ denote the other endpoint (which might also belong to $Z$). Let $U$ denote the set in $\cal U$ such that $v\in U$. We consider three cases.
\begin{itemize}
\item First, suppose that $u\notin Z\cup\{v_U\}$.
Then, by construction, in $E(G^*)$ there are $k+1$ edges $e''$ between $v_U$ and $u$ such that $f_e(\widehat{e})=f_e(e'')$. Since $|F'|\leq k$, for at least one edge among them, $e'$, it holds that $e'\notin\cont(e,X_e)$. Then, since either $v,v_U\in X_e$ or $v,v_U\in\bar{X}_e$, and because $u\in (X_e\cap X'_e)\cup(\bar{X}_e\cap \bar{X}'_e)$, we deduce that $\widehat{e}\notin\cont(e,X_e)$.

\item Now, suppose that $u\in U$. Recall that there exist $\widehat{h}:T\rightarrow\widehat{B}$ and a set of partitions $\{(\widehat{L}^e,\widehat{R}^e)\}|_{e\in T}$ of $W_Q$ for which $S(\widehat{h},\{(\widehat{L}^e,\widehat{R}^e)\}|_{e\in T})=(F^{\widehat{h},\{(\widehat{L}^e,\widehat{R}^e)\}|_{e\in T}},\{(X_e^{\widehat{h},\{(\widehat{L}^e,\widehat{R}^e)\}|_{e\in T}},$ $\bar{X}_e^{\widehat{h},\{(\widehat{L}^e,\widehat{R}^e)\}|_{e\in T}})\}|_{e\in T})$, that is, the answer is not $\nil$. In particular, by Definition \ref{def:redPair}, we have that $\widehat{e}\notin\cont(e,X^{\widehat{h},\{(\widehat{L}^e,\widehat{R}^e)\}|_{e\in T}}_e)$ and $v,u\in X^{\widehat{h},\{(\widehat{L}^e,\widehat{R}^e)\}|_{e\in T}}_e$. This implies that $f_e(\widehat{e})=0$. Notice that either $v,u\in X_e$ or $v,u\in\bar{X}_e$, and therefore $\widehat{e}\notin\cont(e,X_e)$.

\item Finally, suppose that $u\in Z\setminus U$. Let $U'$ be the set in ${\cal U}$ such that $u\in U$. Then, by construction, in $E(G^*)$ there are $k+1$ edges $e''$ between $v_U$ and $v_{U'}$ such that $f_e(\widehat{e})=f_e(e'')$. Since $|F'|\leq k$, for at least one edge among them, $e'$, it holds that $e'\notin\cont(e,X_e)$. Then, since either $v,v_U\in X_e$ or $v,v_U\in\bar{X}_e$, and also either $u,v_{U'}\in X_e$ or $u,v_{U'}\in\bar{X}_e$, we deduce that $\widehat{e}\notin\cont(e,X_e)$.
\end{itemize}
\end{proof}

Denote $F''=F\cup (F'\cap E(G[P\cup U]))$, and note that $|F''|\leq|F'|$ (since $|F|\leq|F'\cap E(G[Q])|$). For all $e\in T$, denote $X_e''=X_e\cup (X'_e\cap P)$ and $\bar{X}_e''=\bar{X}_e\cup (\bar{X}'_e\cap P)$. Then, by Lemma \ref{lem:inEquivRepTwo}, since $L^e=W_Q\cap ({L^e}'\cup(U\cap X'_e))$ and $R^e=W_Q\cap ({R^e}'\cup (U\cap \bar{X}'_e))$, we have the following result.

\begin{lemma}\label{lem:equivRepTwo}
It holds that $(F'',\{(X_e'',\bar{X}_e'')\}|_{e\in T})$ is an $(h',\{({L^e}',{R^e}')\}|_{e\in T})$-solution to the original instance.
\end{lemma}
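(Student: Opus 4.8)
The plan is to verify directly that $(F'',\{(X_e'',\bar{X}_e'')\}|_{e\in T})$ meets every clause in the definition of an $(h',\{({L^e}',{R^e}')\}|_{e\in T})$-solution for the original instance, handling what is supported on $G[Q]$ through Lemma~\ref{lem:inEquivRepTwo} and what is supported on $G[P\cup U]$ through the hypothesis that $(F',\{X'_e,\bar{X}'_e\}|_{e\in T})$ is such a solution for $I^*$ (on the region $P\cup U$ the graph $G^*$ coincides with $G$). The bound $|F''|\le|F'|\le k$ is already recorded, and $F''\subseteq E(G)\setminus T$ is immediate: $F''=F'\cap(E(G[Q])\cup E(G[P\cup U]))$ is a subset of $F'$ consisting of genuine edges of $G$ (the newly inserted multi-edges of $G^*$, lying in neither of these two edge sets, are simply discarded).

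The structural point that makes the gluing work is that no vertex of $Z$ is incident to an edge leaving $Q$: every edge of $(Q,P)$ has its $Q$-endpoint in $U\subseteq V^*$, while $Z\subseteq Q\setminus V^*$; and no terminal edge touches $Z$, since terminal endpoints in $Q$ lie in $V^*$. Hence $X_e''\cap Q=(X'_e\cap Q)\cup U_e=X_e$ (the set delivered by Lemma~\ref{lem:inEquivRepTwo}) and $X_e''\cap P=X'_e\cap P$, so the contribution status of an edge of $G$ with respect to $(e,X_e'')$ may be read off region by region: for an edge of $G[Q]$ it equals its status with respect to $(e,X_e)$ in $I_Q$; for an edge of $G[P]$ or a crossing edge of $(Q,P)$ (whose $Q$-endpoint is in $U$, hence undisturbed) it equals its status with respect to $(e,X'_e)$ in $I^*$, since $f^*_e$ agrees with $f_e$ on every such edge. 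Together with $L^e=W_Q\cap({L^e}'\cup(U\cap X'_e))$ and $R^e=W_Q\cap({R^e}'\cup(U\cap\bar{X}'_e))$ this yields, for each $e\in T$: $(X_e'',\bar{X}_e'')$ almost fits $e$ (no terminal besides $e$ contributes, those terminals being undisturbed as above); $\cont(e,X_e'')\setminus\{e\}\subseteq F''$ (its $G[Q]$-part lies in $F$ by Lemma~\ref{lem:inEquivRepTwo}, which in particular certifies, via the $(k+1)$-parallel-edge gadgets of $G^*$, that no $Z$-incident edge of $G$ contributes, and its $G[P\cup U]$-part lies in $F'\cap E(G[P\cup U])$ since $(F',\dots)$ solves $I^*$); and ${L^e}'\subseteq X_e''$, ${R^e}'\subseteq\bar{X}_e''$, $W^e_1\subseteq X_e''$, $W^e_2\subseteq\bar{X}_e''$ (the $Q$-parts from Lemma~\ref{lem:inEquivRepTwo} and Definitions~\ref{def:redPair}--\ref{def:redSet}, which make $v_U\in W^e_1$ force $U\subseteq W^e_1$; the $P$-parts inherited from the $I^*$-solution).

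It remains to check the parity condition $|X_e''\cap V_i|\equiv b^e_i\pmod 2$ for $h'(e)=(b^e_1,\dots,b^e_t)$: $X_e''$ is the disjoint union of $X'_e$ and $\bigcup_{U\in{\cal U}_e}(U\setminus\{v_U\})$ (disjoint because $X'_e\cap Z=\emptyset$ and $v_U\in X'_e$ for $U\in{\cal U}_e$), each redundant set $U$ is monochromatic in type, and $|U\setminus\{v_U\}|=|U|-1$ is even since the sets of $\cal U$ were made of odd size; hence $|X_e''\cap V_i|\equiv|X'_e\cap(V_i\setminus Z)|\equiv b^e_i\pmod 2$, the last congruence holding because $(F',\{X'_e\})$ solves $I^*$. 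Assembling these observations proves the lemma. I expect the only genuine work to be the bookkeeping in the second paragraph --- confirming that inflating $X_e$ to $X_e''$ on the $Q$-side and restricting $X'_e$ to $P\cup U$ on the other side leaves the contribution status of every single edge of $G$ unchanged, so that the coverage guarantees of the two sub-solutions combine without loss.
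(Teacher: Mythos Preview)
Your proposal is correct and follows the same approach as the paper, which proves the lemma in a single sentence (invoking Lemma~\ref{lem:inEquivRepTwo} together with the definitions of $L^e$ and $R^e$). You have simply unpacked that sentence into the full verification: split each edge of $G$ according to whether it lies in $E(G[Q])$ or $E(G[P\cup U])$, read off its contribution status from the $I_Q$-solution or the $I^*$-solution respectively (using that boundary vertices in $U$ lie in $V^*$ and hence are undisturbed by the $Z$-contraction), and handle the parity condition via the odd sizes and type-monochromaticity of the redundant sets. One small remark: your appeal to Definitions~\ref{def:redPair}--\ref{def:redSet} for the containment $W^e_1\subseteq X_e''$ on the $Q$-side is not needed, since Lemma~\ref{lem:inEquivRepTwo} already establishes $W^e_1\cap Q\subseteq X_e$ directly (including the part inside $Z$); but this does not affect correctness.
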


Next, by exhaustive application of Lemma \ref{lem:multiplicity}, we get that between every for of vertices there are at most $k2^{|T|}$ non-terminal edges. Thus, by Observation \ref{obs:numVerticesRep}, we have the following observation for a large enough $\lambda$ (here, $\lambda=3\lambda'$ is sufficient).

\begin{observation}\label{obs:numEdgesRep}
$|E(G^*)\setminus E(G[P])|\leq k2^{|T|}\cdot (|V(G^*)\setminus P|)^2\leq 2^{2^{\lambda(t+k^2)|T|}}$.
\end{observation}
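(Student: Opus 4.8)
The plan is to bound $|E(G^*)\setminus E(G[P])|$ by a straightforward case analysis on the endpoints of the edges of $G^*$, using only the exhaustive application of Lemma~\ref{lem:multiplicity} and Observation~\ref{obs:numVerticesRep}. Write $Q^*=V(G^*)\setminus P=Q\setminus Z$. First I would observe that by the construction of $G^*$ every edge of $G^*$ with both endpoints in $P$ is an original edge of $G$ kept verbatim (the ``otherwise'' branch), and conversely no newly inserted edge has both endpoints in $P$ (a new edge has an endpoint $v_U$, which lies in $Q^*$). Hence $E(G^*)\setminus E(G[P])$ is exactly the set of edges of $G^*$ having at least one endpoint in $Q^*$.

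Next I would argue that in fact no edge of $G^*$ joins $Q^*$ to $P$ except for surviving original cut edges. The only delicate point is a newly inserted edge between some $v_U$ and a vertex $u$ that is not of the form $v_{U'}$; such an edge arises from an edge $e=vu$ of $G$ with $v$ in a redundant set $U$ (so $v\in Z$) and $u\notin U$. If $u\in P$, then $e\in E(Q,P)$, so $v\in Q$ is an endpoint of an edge of $E(Q,P)$ and therefore $v\in W_Q\subseteq V^*$; but $Z\subseteq Q\setminus V^*$, a contradiction. Consequently every $G^*$-edge across $Q^*$ and $P$ is a surviving original edge, and all such edges belong to $E(Q,P)$, so there are at most $p=2(k+1)$ of them.

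Then I would count. By exhaustive application of Lemma~\ref{lem:multiplicity}, between any two vertices of $G^*$ there are at most $k2^{|T|}$ non-terminal edges, and there are at most $|T|$ terminal edges overall, so the number of edges of $G^*$ with both endpoints in $Q^*$ is at most $k2^{|T|}\cdot|Q^*|^2+|T|$. Adding the at most $p$ crossing edges gives $|E(G^*)\setminus E(G[P])|\le k2^{|T|}\cdot(|V(G^*)\setminus P|)^2+|T|+p$, which is the first displayed inequality up to the harmless lower-order additive slack $|T|+p$ (it is the final bound that is used downstream). Finally, plugging in Observation~\ref{obs:numVerticesRep}, namely $|V(G^*)\setminus P|\le 2^{2^{\lambda'(t+k^2)|T|}}$, and using that each of $k2^{|T|}$, $|T|$ and $p$ is at most $2^{2^{\lambda'(t+k^2)|T|}}$ for a large enough $\lambda'$, a routine manipulation of the nested exponentials (in particular $(|V(G^*)\setminus P|)^2\le 2^{2^{\lambda'(t+k^2)|T|}+1}$) yields $|E(G^*)\setminus E(G[P])|\le 2^{2^{\lambda(t+k^2)|T|}}$ with $\lambda=3\lambda'$.

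The only non-bookkeeping point — and hence the ``hard'' part, though it is quite mild here — is the structural claim of the second paragraph, that the replacement operation creates no new $Q^*$--$P$ edge, so that the multiplicity bound restricted to $Q^*$ plus the few original cut edges accounts for everything in $E(G^*)\setminus E(G[P])$; once this is in hand the rest is arithmetic with nested exponentials together with the two already-established statements Lemma~\ref{lem:multiplicity} and Observation~\ref{obs:numVerticesRep}.
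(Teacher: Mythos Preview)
Your proposal is correct and follows essentially the same approach as the paper: exhaustively apply Lemma~\ref{lem:multiplicity} to bound multiplicities by $k2^{|T|}$, then combine with the vertex bound of Observation~\ref{obs:numVerticesRep}. The paper's justification is the single sentence preceding the observation; you have simply unpacked it with an explicit case analysis, and in doing so correctly noticed that the first displayed inequality literally needs a small additive slack of $|T|+p$ for terminal edges and the surviving $E(Q,P)$ edges --- a harmless imprecision that is absorbed in the final nested-exponential bound.
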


\subsubsection{Solving the Resulting Instance}\label{sec:solveEnd}

Observe that the number of vertices of the graph associated with the instance constructed in Section \ref{sec:replace} is smaller than $|V(G)|$ since $|V(G^*)\cap Q|=q<|Q|$. Thus, \alg{Recurs} can call itself recursively to solve the instance correctly. By using Lemmas \ref{lem:equivRepOne} and \ref{lem:equivRepTwo}, we obtain a solution to the original problem. The running time of the recursive call is $\tau(|V(G^*)|,|E(G^*)|,k,t,|T|)$. Therefore, for the appropriate constants $\delta''$ and $\delta$, the total running time is bounded by
\[\begin{array}{l}
\medskip
\tau(|V(G^*)|,|E(G^*)|,k,t,|T|)+(n+m+ 2^{2^{(t+k^2)|T|}})^{\delta''}\\

\medskip
\hspace{5em} +\ \tau(|Q|,|E(G[Q])|,k,t,|T|) +(n+m+k+t+|T|)^{\delta'}\\

\leq \tau(|V(G^*)|,|E(G^*)|,k,t,|T|) + \tau(|Q|,|E(G[Q])|,k,t,|T|) + (n+m+2^{2^{(t+k^2)|T|}})^{\delta}
\end{array}
\]

By Observations \ref{obs:numVerticesRep} and \ref{obs:numEdgesRep}, the running time is bounded by $\tau(2^{2^{\lambda(t+k^2)|T|}}+|P|,2^{2^{\lambda(t+p)|T|}}+|E(G[P\cup U])|,k,t,|T|) + \tau(|Q|,|E(G[Q])|,k,t,|T|) + (n+m+2^{2^{(t+k^2)|T|}})^{\delta}$. We next show that this is bounded by $\tau(n,m,k,t,|T|)=2^{2^{\alpha(t+k^2)|T|}}\cdot (n+m)^{\alpha}$.

Notice that $|P|+|Q|=n$ and $|E(G[P\cup U])|+|E(G[Q])|=m$. Let us denote $|Q|=\widehat{n}$ and $|E(G[Q])|=\widehat{m}$. 
Therefore, we have that
\[\begin{array}{l}
\medskip
\tau(2^{2^{\lambda(t+k^2)|T|}}+|P|,2^{2^{\lambda(t+k^2)|T|}}+|E(G[P\cup U])|,k,t,|T|) + \tau(|Q|,|E(G[Q])|,k,t,|T|)\\

\medskip
\hspace{5em} +\ (n+m+2^{2^{(t+k^2)|T|}})^{\delta}\\

\medskip
= \tau(2^{2^{\lambda(t+k^2)|T|}}+n-\widehat{n},2^{2^{\lambda(t+k^2)|T|}}+m-\widehat{m},k,t,|T|) + \tau(\widehat{n},\widehat{m},k,t,|T|) + (n+m+2^{2^{(t+k^2)|T|}})^{\delta}\\

\medskip
= 2^{2^{\alpha(t+k^2)|T|}}\cdot((2\cdot 2^{2^{\lambda(t+k^2)|T|}}+n-\widehat{n}+m-\widehat{m})^{\alpha} + (\widehat{n}+\widehat{m})^{\alpha}) + (n+m+2^{2^{(t+k^2)|T|}})^{\delta}

\end{array}\]

Recall that $\min\{n,n-\widehat{n}\}\geq q+1=2^{2^{\lambda(t+k^2)|T|}}+1$, and since we ensured that our graphs is connected, it holds that  $\min\{m,m-\widehat{m}\}\geq q=2^{2^{\lambda(t+k^2)|T|}}$. Thus, we have that 
\[\begin{array}{l}
\medskip
2^{2^{\alpha(t+k^2)|T|}}\cdot((2\cdot 2^{2^{\lambda(t+k^2)|T|}}+n-\widehat{n}+m-\widehat{m})^{\alpha} + (\widehat{n}+\widehat{m})^{\alpha}) + (n+m+2^{2^{(t+k^2)|T|}})^{\delta}\\

\medskip
\leq 2^{2^{\alpha(t+k^2)|T|}}\cdot((n+m-1)^{\alpha} + (2\cdot 2^{2^{\lambda(t+k^2)|T|}}+1)^{\alpha}) + (n+m+2^{2^{(t+k^2)|T|}})^{\delta}\\

\leq \tau(n,m,k,t,|T|) - 2^{2^{\alpha(t+k^2)|T|}}\cdot((n+m-1)^{\alpha-1} - (2\cdot 2^{2^{\lambda(t+k^2)|T|}}+1)^{\alpha})  + (n+m+2^{2^{(t+k^2)|T|}})^{\delta}
\end{array}\]

It remains to show that
\[2^{2^{\alpha(t+k^2)|T|}}\cdot(2\cdot 2^{2^{\lambda(t+k^2)|T|}}+1)^{\alpha} + (n+m+2^{2^{(t+k^2)|T|}})^{\delta}\leq 2^{2^{\alpha(t+k^2)|T|}}\cdot(n+m-1)^{\alpha-1}\]

By choosing a large enough $\alpha$ in advance, we have that it is sufficient to show that
\[2^{2^{\alpha(t+k^2)|T|}}\cdot(2\cdot 2^{2^{\lambda(t+k^2)|T|}}+1)^{\alpha}\leq 2^{2^{\alpha(t+k^2)|T|}}\cdot(n+m-2)^{\alpha-2}\]

That is, it is sufficient to show that
\[(2\cdot 2^{2^{\lambda(t+k^2)|T|}}+1)^3 \leq n+m-2\]

Recall that $n>s=q^4=(2^{2^{\lambda(t+k^2)|T|}})^4$, else the current instance should have already been handled in Section \ref{sec:small}. Thus, we obtain that indeed it holds that $(2\cdot 2^{2^{\lambda(t+k^2)|T|}}+1)^2 \leq n+m-2$.
This concludes the proof of Lemma \ref{lem:aedc}, which by our earlier discussions, also concludes the proof of Theorem \ref{thmdual}. 

\section{Hardness}\label{section:hardness}
In this section we give our hardness result for \spcgraph{}. First, we show that the problem is \classW{1}-hard when parameterized by $r+k+|T|$.

\begin{theorem}\label{thm:w-hard}
\spcgraph{} is \classW{1}-hard when parameterized by $r+k+|T|$.
\end{theorem}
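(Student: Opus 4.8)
The plan is to reduce from a known \classW{1}-hard problem to \spcgraph{} while keeping $r$, $k$, and $|T|$ all bounded by a function of the source-problem parameter. A natural candidate is \textsc{Multicolored Clique} (equivalently \textsc{Clique} parameterized by solution size), which is \classW{1}-hard, since a clique of size $\ell$ sits naturally inside the incidence/adjacency structure of a graph and should be encodable as a small-rank perturbation requirement. Concretely, I would take an instance $(D,\ell)$ of \textsc{Multicolored Clique} with color classes $V_1,\dots,V_\ell$ and construct a graph $G$ together with a perturbation matrix $P$ of rank $r = O(\ell^2)$ (or $O(\ell)$), a set $T$ of $|T| = O(\ell^2)$ terminal columns, and budget $k = O(\ell^2)$, so that $T \subseteq \spanS(F)$ for some $F \subseteq E\setminus T$ with $|F|\le k$ if and only if $D$ has a multicolored clique. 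The first step is to fix the precise source problem and the target bounds on $(r,k,|T|)$; I expect \textsc{Multicolored Clique} with the three parameters all $\Theta(\ell^2)$ to be the cleanest choice.

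The core of the construction is to use the bounded-rank perturbation $P$ to "override" the graphic structure so that a terminal column $W \in T$ can only be spanned by a very specific constant-size (in terms of the clique) collection of columns of $A = I(G)+P$. The idea: for each pair of colors $\{i,j\}$ we introduce a terminal whose span-requirement forces the solution to pick columns corresponding to an edge of $D$ between $V_i$ and $V_j$; for each color $i$ we introduce consistency gadgets (again via terminals and perturbation rows) that force all the chosen edges incident to color $i$ to agree on a single vertex of $V_i$. Since $P$ has only $\le 2^r$ distinct columns, the "type" of each edge — which by Lemma~\ref{lem:fewCyc}-style reasoning is all the perturbation contributes — can encode a label from a small alphabet; but here we go the opposite direction from the algorithmic sections: rather than exploiting that types are few, we exploit that the perturbation can be chosen by the reduction to make the only valid spanning sets correspond to cliques. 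The algebra is over $\mathrm{GF}(2)$, so the vertex-degree-parity characterization used throughout Section~\ref{section:graphic} (equation~\eqref{eq:spnWpr} and the bijection insight) is exactly the tool to reason about which $F$ can satisfy $W + \sum_{e\in F_W} A^e = 0$: a subset $F_W$ spans $W$ iff the symmetric difference of incidences cancels against the (rank-$r$, hence controllable) perturbation contribution. I would design $P$ so that this cancellation condition is a system of linear constraints whose $\mathrm{GF}(2)$-solutions of weight $\le k$ are in bijection with multicolored cliques.

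After the construction, the proof splits into the two standard directions. Forward: given a multicolored clique $\{v_1,\dots,v_\ell\}$ in $D$, exhibit $F$ (the vertex-selection columns plus the $\binom{\ell}{2}$ edge columns, or whatever the gadget dictates) of size $\le k$ and verify for each terminal $W$ that the required subset $F_W \subseteq F$ has $\sum_{e\in F_W} A^e = W$ over $\mathrm{GF}(2)$ — this is a direct computation using how $P$ was defined. Backward: given $F$ with $|F|\le k$ spanning all of $T$, argue that the budget is tight enough that $F$ must consist of exactly one "vertex block" per color and one "edge block" per color-pair, and that the span conditions force these to be mutually consistent (the edge chosen for $\{i,j\}$ must have endpoints equal to the vertices chosen for $i$ and $j$); this yields a clique. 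The tightness argument is where one must be careful: I would make every terminal "expensive" in the sense that satisfying it forces $\Omega(1)$ previously-unused columns into $F$, and make $k$ exactly the sum of these amounts, so no slack remains for inconsistent choices.

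The main obstacle I anticipate is engineering the perturbation $P$ of genuinely bounded rank that simultaneously (a) leaves enough graphic structure intact that the intended solution works, and (b) is rigid enough that no unintended low-weight $\mathrm{GF}(2)$-combination of columns spans a terminal — in particular ruling out "cheating" solutions that exploit parallel edges, short cycles, or accidental column coincidences. A clean way to control (b) is to ensure all non-terminal columns of $A$ are distinct (as is assumed in Section~\ref{section:graphic}) and to let $P$'s column space be spanned by $O(\ell)$ indicator-type vectors, one per color class, plus $O(\ell^2)$ "pairing" vectors, so that $\rank(P) = O(\ell^2)$; then a weight-$\le k$ spanning set is forced to use, for each color, an odd number of columns from that color's block, and the parity bookkeeping pins down exactly one vertex per color. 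Verifying this rigidity rigorously — essentially a case analysis over which column types appear in a purported solution — is the technically delicate part, but it is the same flavor of $\mathrm{GF}(2)$ parity argument already used to prove Lemma~\ref{lem:fewCyc}, just run as a lower bound rather than an upper bound.
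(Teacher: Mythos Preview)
Your proposal is correct and follows essentially the same approach as the paper: a reduction from \textsc{Multicolored Clique} with one terminal per unordered color pair, a perturbation of rank $\Theta(\ell^2)$ built from per-color ``indicator'' rows and per-pair ``pairing'' rows, and a tight budget $k=\ell+\binom{\ell}{2}$ so that any solution must select exactly one vertex-edge per color and one edge per color pair, with the terminal equations over $\mathrm{GF}(2)$ enforcing consistency. The paper's concrete implementation realizes your indicator and pairing vectors as rows of $P$ indexed by isolated vertices $x_1,\dots,x_\ell$ and $y_{ij}$ (so their incidence rows are zero and $P$ can be chosen freely there), and takes the terminals to be the edges of a clique on hub vertices $u_1,\dots,u_\ell$, each $u_i$ adjacent to all of $X_i$; the backward-direction consistency then falls out of a short coordinate-by-coordinate analysis of the equation $A^{u_iu_j}=\sum_{e\in R}A^e$ rather than a general case analysis.
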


\begin{proof}
We reduce from the {\sc Multicolored Clique} problem:

\defparproblem{{\sc Multicolored Clique}}%
{A graph $G$ and a partition $X_1, \ldots,X_k$ of $V(G)$.}%
{$k$}
{Is there a clique $K$ of $G$ with $|K\cap X_i|=1$ for $i\in\{1,\ldots,k\}$?}

This problem is well-known to be \classW{1}-complete (see~\cite{FellowsHRV09,Pietrzak03}). Let $(G,X_1,\ldots,X_k)$ be an instance of {\sc Multicolored Clique}. We assume without loss of generality that $k\geq 2$ and each $X_i$ is an independent set. 

First, we construct the graph $G'$ as follows.
\begin{itemize}
\item Construct a copy of $G$.
\item For each $i\in \{1,\ldots,k\}$, construct a vertex $u_i$ and make it adjacent to each vertex of $X_i$.
\item Make $\{u_1,\ldots,u_k\}$ a clique.
\item Construct $k$ isolated vertices $x_1,\ldots,x_k$ and $k(k-1)/2$ isolated vertices $y_{ij}$ for $1\leq i<j\leq k$.
\end{itemize}
Denote by $S$ the set of edges $\{u_iu_j\mid 1\leq i<j\leq k\}$.

Next, we construct the $(|V(G')|\times |E(G')|)$-matrix $P$ over $GF(2)$. Recall that as the rows of $I(G')$ correspond to the vertices of $G'$ and the columns of $I(G')$ correspond to the edges of $G'$, we index the elements of $P$ by vertices and edges of $G'$ and denote the elements by $p_{v,e}$ for $v\in V(G')$ and $e\in E(G')$.
\begin{itemize}
\item For each $v\in V(G)$ and $e\in E(G')$, set $p_{v,e}=0$.
\item For $i\in \{1,\ldots,k\}$ and $e\in E(G')\setminus S$, set $p_{x_i,e}=1$ if $e=x_iz$ for $z\in X_i$ and $p_{v,e}=0$ otherwise.
\item For $i\in \{1,\ldots,k\}$ and $e=u_su_t\in S$ for $1\leq s<t\leq k$, set $p_{x_i,e}=1$ if $i=s$ or $i=t$  and $p_{x_i,e}=0$ otherwise.
\item For $1\leq i<j\leq k$ and $e=vw\in E(G)$, set $p_{y_{ij},e}=1$ if $v\in X_i$, $w\in X_j$ or $v\in X_j$, $w\in X_i$, and $p_{y_{ij},e}=0$ otherwise.  
\item For $1\leq i<j\leq k$ and $e\in E(G')\setminus (E(G)\cup T)$, set $p_{y_{ij},e}=0$.  
\item For $1\leq i<j\leq k$ and $e=u_su_t\in S$ for $1\leq s<t\leq k$, set $p_{y_{ij},e}=1$ if $i=s$, $j=t$, and $p_{y_{ij},e}=0$ otherwise.
\end{itemize}

Let $A=I(G')+P$. Recall that we denote the elements of $A$ by $a_{v,e}$ and index them by  vertices and edges of $G'$ as in $I(G')$ and $P$. Denote by $A^e$ the column of $A$ corresponding to $e\in E(G')$. We set $T=\{A^e\mid e\in S\}$. 
Finally, let $k'=k(k+1)/2=k+k(k-1)/2$. 
Let $M$ be the binary matroid represented by $A$.

We claim that $(G,X_1,\ldots,X_k)$ is a yes-instance of {\sc Multicolored Clique} if and only if $(G,P,k')$ is a yes-instance of \spcgraph{}.

Suppose that $K=\{v_1,\ldots,v_k\}$, where $v_i\in X_i$ for $i\in \{1,\ldots,k\}$, is a clique of $G$. Let $F=\{A^{u_iv_i}\mid 1\leq i\leq k\}\cup\{A^{v_iv_j}\mid 1\leq i<j\leq k\}$.   Let $1\leq i<j\leq k$. It is straightforward to verify that $A^{u_iu_j}=A^{u_iv_i}+A^{v_iv_j}+A^{u_jv_j}$. Hence, $\{A^{u_iu_j},A^{u_iv_i},A^{v_iv_j},A^{u_jv_j}\}$ is a cycle of $M$. It implies that $F$ spans each $A^{u_iu_j}\in T$ and, therefore, $F$ spans $T$ in $M$. Since $|F|=k'$,  $(G,P,k')$ is a yes-instance of \spcgraph{}.

Assume that $(G,P,k')$ is a yes-instance of \spcgraph{}. Then there is a set of columns $F$ of $A$ such that $F\cap T=\emptyset$, $|F|\leq k'$ and $F$ spans $T$ in $M$. It means that for every $A^t\in T$, there is a set columns $R\subseteq F$ such that $A^t=\sum_{A^e\in R}A^e$. Consider $A^{t}\in T$ for $t=u_iu_j\in S$. Since $a_{x_i,t}=1$, $R$ contains some $A^{e}$ with $a_{x_i,e}=1$. It implies that $R$ contains $A^e$ for $e=u_iv$ for $v\in X_i$. Because $a_{y_{ij},t}=1$, $R$ contains some $A^{e}$ with $a_{y_{ij},e}=1$ and, therefore, $R$ contains $A^e$ for $e=vw$ for $v\in X_i$ and $w\in X_j$. Since these property hold for all $t\in S$, we obtain that
\begin{itemize}
\item for every $i\in\{1,\ldots,k\}$, $F$ contains at least one column $A^e$ for $e=u_iv$ for $v\in X_i$,
\item for every $1\leq i<j\leq k$, $F$ contains at least one column $A^e$ for $e=vw$ for $v\in X_i$ and $w\in X_j$.
\end{itemize}
Since $|F|\leq k'$, 
\begin{itemize}
\item[(i)] for every $i\in\{1,\ldots,k\}$, $F$ contains exactly one column $A^e$ for $e=u_iv$ for $v\in X_i$,
\item[(ii)] for every $1\leq i<j\leq k$, $F$ contains exactly one column $A^e$ for $e=vw$ for $v\in X_i$ and $w\in X_j$.
\end{itemize}

Let $K=\{v_1,\ldots,v_k\}$ be the set of vertices of $G'$ such that $A^{u_iv_i}\in F$. We show that $K$ is a clique of $G$. Let $1\leq i<j\leq k$. We have that $F$ contains exactly one column $A^{vw}$ for some $v\in X_i$ and $w\in X_j$. Let $t=u_iu_j$. There is a set columns $R\subseteq F$ such that $A^t=\sum_{A^e\in R}A^e$. Recall that $A^t$ has the following non-zero elements:
$a_{u_i,t}$, $a_{u_j,t}$, $a_{x_i,t}$, $a_{x_j,t}$ and $a_{y_{ij},t}$, and all other elements are zeros. By (i) and (ii), we conclude that $R=\{A^{u_iv_i},A^{vw},A^{u_jv_j}\}$. Since, $A^t=A^{u_iv_i}+A^{vw}+A^{u_jv_j}$, $v_i=v$ and $v_j=w$. Hence, $v_iv_j\in E(G)$. It implies that $K$ is a clique. 

This completes the correctness proof for our  reduction.  Notice that since $P$ has $k+k(k-1)/2=k'$ non-zero rows, $\rank(P)\leq k'$. Clearly, $|T|=k(k-1)/2$. Hence, 
\spcgraph{} is \classW{1}-hard when parameterized by $r+k+|T|$.
\end{proof}

Now we prove that \spcgraph{} is \classParaNP-complete when parameterized by $r+|T|$.

\begin{theorem}\label{thm:paraNP-hard}
\spcgraph{} is \classNP-complete when restricted to the instances with $r\leq 2$ and $|T|\leq 2$.
\end{theorem}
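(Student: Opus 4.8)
The membership of \spcgraph{} in \classNP{} is immediate: a candidate solution $F\subseteq E\setminus T$ with $|F|\le k$ is a polynomial-size certificate, and for each terminal $W\in T$ one verifies $W\in\spanS(F)$ in polynomial time by Gaussian elimination over $GF(2)$. Hence the substance of the theorem is \classNP-hardness already for $r\le 2$ and $|T|\le 2$, which I would prove by a polynomial-time many-one reduction from a convenient \classNP-complete problem; for concreteness I will aim at a reduction from \textsc{3-SAT} (a variant such as \textsc{Exact Cover} or a cycle-type problem should work equally well). The key structural observation guiding the reduction is that the perturbation must carry essentially all of the hardness: for $r=0$, i.e.\ on graphic matroids, \spcgraph{} with $|T|\le 2$ amounts to finding a minimum edge set whose subgraph contains a path between the endpoints of $e_1$ and a path between the endpoints of $e_2$, i.e.\ \probSteinerF{} with two demand pairs, which is polynomial-time solvable (all four endpoints are treated as Steiner terminals, an optimal solution has at most two components, and each is a Steiner tree on at most four terminals, computable by the Dreyfus--Wagner algorithm~\cite{DreyfusW71}). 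So the reduction has to encode the whole \textsc{3-SAT} instance into the interaction between a large, specially designed incidence matrix $I(G')$, a rank-$\le 2$ perturbation $P$, and the two terminal columns.

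The construction I would attempt has the following shape. The graph $G'$ carries a ``graphic backbone'' whose circuits encode the choices to be made: a variable gadget in which a short cycle can be closed in one of two ways (the two truth values), and a clause gadget that can be traversed only through a literal edge set to ``true''. The rank-$2$ perturbation $P$ has column space spanned by two vectors $u_1,u_2\in GF(2)^{|V(G')|}$ marking two designated vertex classes; by Proposition~\ref{lem:rank} this leaves at most four edge types, and assigning types to gadget edges is precisely the mechanism that lets $P$ impose global consistency. Indeed, by the identity exploited throughout Section~\ref{section:graphic}, a subset $F_W$ together with a terminal $W=A^{e_W}$ satisfies $W=\sum_{e\in F_W}A^e$ if and only if the odd-degree vertex set of $G'[F_W\cup\{e_W\}]$ equals $\sum_{e\in F_W\cup\{e_W\}}P^e$, a vector lying in a two-dimensional space and determined only by the type-parities of $F_W$. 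I would use the two terminals $W_1=A^{e_1}$ and $W_2=A^{e_2}$ as two ``closure constraints'': covering $W_1$ should force $F$ to contain a parity-corrected circuit that runs through all variable gadgets and records, via its type-parities, the chosen assignment, while covering $W_2$ should force $F$ to contain a parity-corrected circuit that runs through all clause gadgets and is feasible precisely when every clause is entered through a ``true'' literal; sharing the literal edges between the two sub-solutions is what ties the assignment to clause satisfaction. The budget $k$ is set to the exact number of edges used by the honest solution associated with a satisfying assignment.

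For correctness, the forward direction is routine: a satisfying assignment yields the honest $F$ of size exactly $k$, and one checks directly that the two required parity-corrected circuits exist. The reverse direction is the delicate one, and I expect it to be the main obstacle: given any $F$ with $|F|\le k$ covering $T$, one must argue from the tightness of $k$ that $F$ has the honest form---that the two realizing subsets $F_{W_1},F_{W_2}\subseteq F$ traverse every variable gadget and every clause gadget in a mutually consistent way---so that reading off the local choices produces a satisfying assignment. Since a rank-$2$ perturbation together with two terminals provides only a tiny amount of global checking (in effect two $GF(2)$-parities per terminal), the gadgets and the budget must be engineered so that every form of cheating---using short local circuits, recombining gadget components, or exploiting the freedom to correct degrees at the two marked vertex classes---provably violates a parity condition or overshoots $k$. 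A secondary technical subtlety to watch is that rank-$\le 2$ perturbed incidence matrices are themselves constrained (each column is a weight-two vector plus a correction from a fixed two-dimensional space), so the target matrix $B=I(G')+P$ must be realizable in this form from the outset; this is exactly what dictates the precise shape of the gadgets and of $u_1,u_2$. Once soundness is established, the reduction is clearly polynomial and outputs instances with $r\le 2$ and $|T|\le 2$, which together with membership in \classNP{} gives the claimed \classNP-completeness.
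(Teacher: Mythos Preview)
Your proposal is a plan, not a proof: you never specify the graph $G'$, the two perturbation vectors $u_1,u_2$, the two terminal edges, or the budget, and you explicitly flag the soundness direction as ``the main obstacle'' that ``must be engineered'' without saying how. The difficulty you anticipate is real. With a rank-$2$ perturbation and two terminals you have, as you note, only a handful of global $GF(2)$-parities to work with; encoding a full \textsc{3-SAT} instance via variable and clause gadgets under a tight budget so that \emph{every} deviation either breaks a parity or overshoots $k$ is highly nontrivial, and nothing in your sketch indicates how the gadgets would be built or why local recombinations are ruled out. As written, the proposal does not establish hardness.

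The paper sidesteps all of this by reducing from \textsc{3-Dimensional Matching} rather than \textsc{3-SAT}, and the choice of source problem is the whole point. The construction is extremely clean: $G$ is the bipartite incidence graph of the 3DM instance (each triple vertex adjacent to its three elements) together with two isolated vertices $a,b$ carrying \emph{loops} $aa,bb$. Because a loop has zero incidence column over $GF(2)$, the two terminal columns $A^{aa},A^{bb}$ are determined entirely by $P$; the perturbation is zero except in these two columns, where it places $1$'s at the rows $X\cup Y$ (for $aa$) and $X\cup Z$ (for $bb$). Thus $\rank(P)\le 2$ and $|T|=2$ for free. Spanning $A^{aa}$ forces the chosen non-terminal edges to have odd degree exactly on $X\cup Y$, spanning $A^{bb}$ forces odd degree exactly on $X\cup Z$, and the budget $k=3q$ makes each element of $X\cup Y\cup Z$ incident to exactly one chosen edge; the degree parity at the triple vertices then pins the three matchings to a common $q$-subset of $S$, i.e.\ a 3-dimensional matching. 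Soundness is a short counting-and-parity argument with no gadget analysis at all. If you want to complete your own proof, I would abandon the \textsc{3-SAT} route and look for a source problem whose combinatorics are already ``one object per element'' so that a tight budget plus two global degree-parity constraints suffice; \textsc{3-Dimensional Matching} (or \textsc{Exact Cover by 3-Sets}) is the natural choice.
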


\begin{proof}
We reduce from the \textsc{3-Dimensional Matching} problem that is well-known to be \classNP-complete~\cite{GareyJ79}: 

\defproblemu{3-Dimensional Matching}%
{Set $S\subseteq  X\times Y\times Z$, where $X$, $Y$, and $Z$ are disjoint sets having the same number $q$ of elements.}%
{Does $S$ contain a \emph{matching}, i.e., a subset $S'\subseteq S$ such that $|S'|=q$ and no two elements of $S'$ agree in any coordinate?}
    
\noindent
Let  $X=\{x_1,\ldots,x_q\}$, $Y=\{y_1,\ldots,y_q\}$, $Z=\{z_1,\ldots,z_q\}$ and $M=\{m_1,\ldots,m_p\}$.  

We construct the multigraph $G$ as follows.
\begin{itemize}
\item Assume that the set $S$ contains $p$ elements, and construct 4 sets of vertices  $X=\{x_1,\ldots,x_q\}$, $Y=\{y_1,\ldots,y_q\}$, $Z=\{z_1,\ldots,z_q\}$ and $S=\{s_1,\ldots,s_p\}$.
\item For each triple $s_r\in S\subseteq  X\times Y\times Z$ such that $s_r=(x_h,y_i,z_j)$, make the vertex $s_r$ adjacent to $x_h$, $y_i$ and $z_j$. 
\item Construct 2 vertices $a$, $b$ and loops $aa$, $bb$. 
\end{itemize}

Next, we construct the $(|V(G)|\times |E(G)|)$-matrix $P$ over $GF(2)$. As before, we index the elements of $P$ by vertices and edges of $G$ and denote the elements by $p_{v,e}$ for $v\in V(G)$ and $e\in E(G)$.
\begin{itemize}
\item For each $v\in V(G)$ and $e\in E(G)\setminus \{aa,bb\}$, set $p_{v,e}=0$.
\item For each $i\in \{1,\ldots,p\}$ and $e\in \{aa,bb\}$, set $p_{s_i,e}=0$.
\item For $i\in \{1,\ldots,q\}$, set $p_{x_i,a}=1$, $p_{y_i,a}=1$ and $p_{z_i,a}=0$.
\item For $i\in \{1,\ldots,q\}$, set $p_{x_i,b}=1$, $p_{y_i,b}=0$ and $p_{z_i,b}=1$.
\end{itemize}
Notice that since $P$ has 2 non-zero columns, $rank(P)\leq 2$.

Let $A=I(G)+P$. We denote the elements of $A$ by $a_{v,e}$ and index them by  vertices and edges of $G$ as in $I(G)$ and $P$. Let also $A^e$ be the column of $A$ corresponding to $e\in E(G)$. Set  $T=\{A^{aa},A^{bb}\}$.
Finally, let $k=3q$.
Let $M$ be the binary matroid represented by $A$.

We claim that $(X,Y,Z,S)$ is a yes-instance of \textsc{3-Dimensional Matching}
 if and only if $(G,P,k)$ is a yes-instance of \spcgraph{}.

Assume that the set of triples $S$ contains a matching $S\rq{}\subseteq S$. For the corresponding set of vertices $S'$ of $G$, consider $H=G[X\cup Y\cup Z\cup S']$. Let $F=\{A^e\mid e\in E(H)\}$. We have that $T\subseteq\spn(F)$. In particular, for the sets of edges $E_X$, $E_Y$ and $E_Z$ of $H$ incident to the vertices of $X$, $Y$ and $Z$ respectively, $A^{aa}=\sum_{e\in E_X\cup E_Y}A^e$ and $A^{bb}=\sum_{e\in E_X\cup E_Z}A^e$. Since $|E(H)|=3q$,  $(G,P,k)$ is a yes-instance of \spcgraph{}.

Suppose that $(G,P,k)$ is a yes-instance of \spcgraph{}. Let $F$ be a set of columns of $A$ such that $|F|=k$ and $T\subseteq\spn(F)$. Let $R$ be the set of edges of $G$ corresponding to the columns of $F$. 
Since $p_{x_i,a}=1$, $p_{y_i,a}=1$ and $p_{z_i,b}=1$ for $i\in \{1,\ldots,q\}$, for each $i\in \{1,\ldots,q\}$, $R$ contains edges incident to $x_i$, $y_i$ and $z_i$ respectively. Because $|F|=q$, we obtain that  for each $i\in \{1,\ldots,q\}$, $R$ contains exactly one edge incident to $x_i$, exactly one edge incident to $y_i$ and exactly one edge incident to $z_i$. Denote by $E_X,E_Y,E_Z\subseteq R$ the sets of edges incident to $X$, $Y$ and $Z$ respectively. Notice that these sets are matchings in $G$. 
Let $F_{a}\subseteq F$ be such that $\sum_{e\in F_{a}}=A^{aa}$ and let $E_{a}=\{e\in R\mid e\in F_{aa}\}$. 
 Let $H_a=(V(G), E_a)$.
 Since  $p_{x_i,a}=1$ and  $p_{y_i,a}=1$ for $i\in\{1,\ldots,q\}$, $E_X\cup E_Y\subseteq E_a$.  Since  $p_{z_i,a}=0$ for $i\in \{1,\ldots,q\}$ and $p_{v,e}=0$ for 
 $v\in V(G)$ and $e\in E(G)\setminus \{aa,bb\}$, we have that the vertices of $V(G)\setminus (X\cup Y)$ have even degrees in $H_a$. This implies that the matchings $E_X$ and $E_Y$ match the vertices of $X$ and $Y$ to the same subset $S'\subseteq S$. By the same arguments for the terminal $A^{bb}$, we obtain that the  $E_Z$ matches the vertices of $Z$ to  $S'$. By the construction of $G$, we have that the corresponding set of triples $S'\subseteq S\subseteq X\times Y\times Z$ is a matching, that is, $(X,Y,Z,S)$ is a yes-instance of \textsc{3-Dimensional Matching}. 
 \end{proof}

\section{Conclusion}\label{sec:conclusion}
In this paper we established the fixed-parameter tractability of   \spcgraph{} and \spcdual. 
We also know that on the class of binary matroids \SSp is not tractable. So where lies the tractability border for \SSp? 
Our positive results on perturbed matroids, combined with the structure theorem of Geelen,   Gerards,   and Whittle~\cite{GeelenGW15}, rise a natural question: could  the tractability of \SSp  be extended to any proper minor-closed class $\mathcal{M}$ of binary matroids?
Let us note that while we formulate  \SSp only on binary matroids, it can be naturally defined on any class of  matroids. In particular, the parameterized complexity of \SSp on proper minor-closed classes of matroids representable over a finite field is open.

Finally, two concrete open questions. First, what is the parameterized complexity of \spcgraph when 
$|T|$ is a constant and the  parameter is $r+k$? Second, we know that \spcgraph is NP-complete even when $|T|=2$ and $r\leq 2$ (Theorem~\ref{thm:paraNP-hard}). On the other hand, for $r=0$ the problem is in P for any fixed number of terminals (it is actually FPT parameterized by $|T|$). What about the case   $|T|=2$ and $r=1$?

\paragraph{Acknowledgement.} We thank Jim Geelen for valuable insights regarding matroid minors.


\begin{thebibliography}{10}

\bibitem{AlonYZ}
{\sc N.~Alon, R.~Yuster, and U.~Zwick}, {\em Color-coding}, J. ACM, 42 (1995),
  pp.~844--856.

\bibitem{Bansal0U17}
{\sc N.~Bansal, D.~Reichman, and S.~W. Umboh}, {\em {LP}-based robust
  algorithms for noisy minor-free and bounded treewidth graphs}, in Proceedings
  of the 27th Annual ACM-SIAM Symposium on Discrete Algorithms (SODA), {SIAM},
  2017, pp.~1964--1979.

\bibitem{BonnetEM16}
{\sc {\'{E}}.~Bonnet, L.~Egri, and D.~Marx}, {\em Fixed-parameter
  approximability of boolean mincsps}, CoRR, abs/1601.04935 (2016).

\bibitem{CandesLMW11}
{\sc E.~J. Cand{\`{e}}s, X.~Li, Y.~Ma, and J.~Wright}, {\em Robust principal
  component analysis?}, J. {ACM}, 58 (2011), pp.~11:1--11:37.

\bibitem{ChandrasekaranSPW11}
{\sc V.~Chandrasekaran, S.~Sanghavi, P.~A. Parrilo, and A.~S. Willsky}, {\em
  Rank-sparsity incoherence for matrix decomposition}, {SIAM} Journal on
  Optimization, 21 (2011), pp.~572--596.

\bibitem{DBLP:journals/siamcomp/ChitnisCHPP16}
{\sc R.~Chitnis, M.~Cygan, M.~Hajiaghayi, M.~Pilipczuk, and M.~Pilipczuk}, {\em
  Designing {FPT} algorithms for cut problems using randomized contractions},
  {SIAM} J. Comput., 45 (2016), pp.~1171--1229.

\bibitem{CyganFKLMPPS15}
{\sc M.~Cygan, F.~V. Fomin, L.~Kowalik, D.~Lokshtanov, D.~Marx, M.~Pilipczuk,
  M.~Pilipczuk, and S.~Saurabh}, {\em Parameterized Algorithms}, Springer,
  2015.

\bibitem{DahlhausJPSY94}
{\sc E.~Dahlhaus, D.~S. Johnson, C.~H. Papadimitriou, P.~D. Seymour, and
  M.~Yannakakis}, {\em The complexity of multiterminal cuts}, {SIAM} J.
  Comput., 23 (1994), pp.~864--894.

\bibitem{Diestel}
{\sc R.~Diestel}, {\em Graph theory}, vol.~173 of Graduate Texts in
  Mathematics, Springer-Verlag, Berlin, 3rd~ed., 2005.

\bibitem{DowneyFbook13}
{\sc R.~G. Downey and M.~R. Fellows}, {\em Fundamentals of Parameterized
  Complexity}, Texts in Computer Science, Springer, 2013.

\bibitem{DowneyFVW99}
{\sc R.~G. Downey, M.~R. Fellows, A.~Vardy, and G.~Whittle}, {\em The
  parametrized complexity of some fundamental problems in coding theory},
  {SIAM} J. Comput., 29 (1999), pp.~545--570.

\bibitem{DreyfusW71}
{\sc S.~E. Dreyfus and R.~A. Wagner}, {\em The {S}teiner problem in graphs},
  Networks, 1 (1971), pp.~195--207.

\bibitem{FellowsHRV09}
{\sc M.~R. Fellows, D.~Hermelin, F.~A. Rosamond, and S.~Vialette}, {\em On the
  parameterized complexity of multiple-interval graph problems}, Theor. Comput.
  Sci., 410 (2009), pp.~53--61.

\bibitem{FominGLS17plus}
{\sc F.~V. Fomin, P.~A. Golovach, D.~Lokshtanov, and S.~Saurabh}, {\em Covering
  {V}ectors by {S}paces: {R}egular {M}atroids}, SIAM J. Discrete Math., 32
  (2018), pp.~2512--2565.

\bibitem{GareyJ79}
{\sc M.~R. Garey and D.~S. Johnson}, {\em Computers and Intractability, A Guide
  to the Theory of {NP}-Completeness}, W.H. Freeman and Company, New York,
  1979.

\bibitem{GeelenB14}
{\sc J.~Geelen, B.~Gerards, and G.~Whittle}, {\em Solving {R}ota's conjecture},
  Notices Amer. Math. Soc., 61 (2014), pp.~736--743.

\bibitem{GeelenGW15}
\leavevmode\vrule height 2pt depth -1.6pt width 23pt, {\em The {H}ighly
  {C}onnected {M}atroids in {M}inor-{C}losed {C}lasses}, Ann. Comb., 19 (2015),
  pp.~107--123.

\bibitem{GeelenK15}
{\sc J.~Geelen and R.~Kapadia}, {\em Computing girth and cogirth in perturbed
  graphic matroids}, Combinatorica, 38 (2018), pp.~167--191.

\bibitem{GuoGHNW06}
{\sc J.~Guo, J.~Gramm, F.~H{\"u}ffner, R.~Niedermeier, and S.~Wernicke}, {\em
  Compression-based fixed-parameter algorithms for feedback vertex set and edge
  bipartization}, J. Computer and System Sciences, 72 (2006), pp.~1386--1396.

\bibitem{MagenM09}
{\sc A.~Magen and M.~Moharrami}, {\em Robust algorithms for \textsc{Max
  Independent Set} on minor-free graphs based on the {S}herali-{A}dams
  hierarchy}, in Proceedings of the 12th International Workshop on
  Approximation Algorithms for Combinatorial Optimization Problems (APPROX) and
  the 13th International Workshop on Randomization and Computation (RANDOM),
  vol.~5687 of Lecture Notes in Comput. Sci., Springer, 2009, pp.~258--271.

\bibitem{Marx06}
{\sc D.~Marx}, {\em Parameterized graph separation problems}, Theor. Comput.
  Sci., 351 (2006), pp.~394--406.

\bibitem{NaorSS95}
{\sc M.~Naor, L.~J. Schulman, and A.~Srinivasan}, {\em Splitters and
  near-optimal derandomization}, in Proceedings of the 36th Annual Symposium on
  Foundations of Computer Science (FOCS), IEEE, 1995, pp.~182--191.

\bibitem{oxley2006matroid}
{\sc J.~G. Oxley}, {\em Matroid theory}, vol.~21 of Oxford Graduate Texts in
  Mathematics, Oxford university press, 2nd~ed., 2010.

\bibitem{Pietrzak03}
{\sc K.~Pietrzak}, {\em On the parameterized complexity of the fixed alphabet
  shortest common supersequence and longest common subsequence problems}, J.
  Comput. Syst. Sci., 67 (2003), pp.~757--771.

\bibitem{DBLP:journals/corr/PilipczukPW15}
{\sc M.~Pilipczuk, M.~Pilipczuk, and M.~Wrochna}, {\em Edge bipartization
  faster than 2\({}^{\mbox{k}}\)}, CoRR, abs/1507.02168 (2015).

\bibitem{ShahidPKPV16}
{\sc N.~Shahid, N.~Perraudin, V.~Kalofolias, G.~Puy, and P.~Vandergheynst},
  {\em Fast robust {PCA} on graphs}, J. Sel. Topics Signal Processing, 10
  (2016), pp.~740--756.

\bibitem{Vardy97}
{\sc A.~Vardy}, {\em The intractability of computing the minimum distance of a
  code}, IEEE Trans. Inform. Theory, 43 (1997), pp.~1757--1766.

\bibitem{WrightGRPM09}
{\sc J.~Wright, A.~Ganesh, S.~R. Rao, Y.~Peng, and Y.~Ma}, {\em Robust
  principal component analysis: Exact recovery of corrupted low-rank matrices
  via convex optimization}, in Proceedings of 23rd Annual Conference on Neural
  Information Processing Systems (NIPS), Curran Associates, Inc., 2009,
  pp.~2080--2088.

\bibitem{XiaoN12}
{\sc M.~Xiao and H.~Nagamochi}, {\em An {FPT} algorithm for edge subset
  feedback edge set}, Inf. Process. Lett., 112 (2012), pp.~5--9.

\bibitem{ZhaoKVRM18}
{\sc M.~Zhao, M.~D. Kaba, R.~Vidal, D.~P. Robinson, and E.~Mallada}, {\em
  Sparse recovery over graph incidence matrices}, in Proceedings of the 57th
  {IEEE} Conference on Decision and Control (CDC), 2018, pp.~364--371.

\end{thebibliography}
\end{document}